\documentclass[authoryear, 1p]{elsarticle} 
\usepackage{etex}
\usepackage{listings}
\usepackage[utf8]{inputenc} %if you compile by pdflatex
\usepackage{fixltx2e} %if you compile by pdflatex
\usepackage{mathtools}
\usepackage{wrapfig}
\usepackage{complexity}
\usepackage{listings}
\usepackage{balance}
\usepackage{mathptmx} 
\usepackage{mathdots}
\RequirePackage{latexsym}
\RequirePackage{mathtools}
\RequirePackage{amsfonts}
\usepackage{amsthm}
\usepackage{amssymb}
\RequirePackage{array}
\RequirePackage{xspace}
\setcounter{secnumdepth}{2}

\newcommand{\val}[2]{\texttt{val}^{#1}(#2)}
\newcommand{\suchthat}{\;\ifnum\currentgrouptype=16 \middle\fi|\;}

% %e.g. e i.e. necessitano di un aggiustamento di spaziatura.
\newcommand{\ie}{i.e., }
\newcommand{\eg}{e.g., }
\newcommand{\etal}{\textit{et al.}\xspace}

\newcommand{\wrt}{w.r.t.\ }

%For comment
\usepackage[textsize=small]{todonotes}

\def\qed{{$\hfill \Box$\\}}

\newcommand{\Q}{\mathbb{Q}\xspace}
\newcommand{\N}{\mathbb{N}\xspace}
\newcommand{\Z}{\mathbb{Z}\xspace}

\newcommand{\MPG}{MPG\xspace}
\newcommand{\EG}{EG\xspace}
\newcommand{\MCP}{MCP\xspace}

\def\C{{\cal C}}
\def\J{J}

\def\F{{\cal F}}
\def\W{{\cal W}}

\newcommand{\figref}[1]{Fig.~\ref{#1}}

\newtheorem{Thm}{Theorem}%[section]
\newtheorem{Exp}{Example}
\newtheorem{Lem}{Lemma}
\newtheorem{Prop}{Proposition}

\newtheorem{Def}{Definition}
\newtheorem{Rem}{Remark}

%algorithm
\usepackage[ruled,vlined,linesnumbered]{algorithm2e}
\newcounter{proccnt}
\newenvironment{algo-proc}[1][htb]
  {\refstepcounter{proccnt} % Update algorithm name
   \begin{algorithm}[#1]
}{\end{algorithm}\addtocounter{algocf}{-1}}
\usepackage{subfig}
\SetCommentSty{textsf}
\SetKwRepeat{DoWhile}{do}{while}
\SetAlFnt{\small} 

\makeatletter
\newcommand{\removelatexerror}{\let\@latex@error\@gobble}
\makeatother

\let\oldnl\nl
\newcommand{\nonl}{\renewcommand{\nl}{\let\nl\oldnl}}
%%%%%%%%%%%%%%%%%%%%%%%%%%%%%%%%%%%%%%%%%%%%%%%%%%%%%%%%%%%%%%%%%%%%%%%%%%%%%%%%%%%%%%%%%%%%%%%%%%%
%tikz
\usepackage{tikz}%Front end per PGF, package di disegno 
\usepackage{tikz-qtree}
\usepackage{pgfplots}
%\pgfplotsset{compat=1.6}
\usetikzlibrary{pgfplots.units}
%sano le frecce diverse   
\usetikzlibrary{calc,positioning,fit}
\usetikzlibrary{shapes,shapes.multipart,shapes.arrows}
\usetikzlibrary{decorations,decorations.pathmorphing,decorations.pathreplacing,decorations.markings,decorations.shapes}
\usetikzlibrary{arrows}
\usetikzlibrary{fit,backgrounds}
\usetikzlibrary{plotmarks}
\tikzstyle{node}=[circle,draw,inner sep=2pt,transform shape,minimum size=1.75em]
 
\newcommand*\sizedcircled[2]{\tikz[baseline=(char.base)]{ \node[shape=circle,draw,inner sep=2pt, scale=#1] (char) {\textbf{#2}}; }} 

%%%%%%%%%%%%%%%%%%%%%%%%%%%%%%%%%%%%%%%%%%%%%%%%%%%%%%%%%%%%%%%%%%%%%%%%%%%%%%%%%%%%%%%%%%%%%%%%%
\tikzstyle{every picture}=[>=latex]
\tikzstyle{every label}=[inner sep=2pt]

\usetikzlibrary{decorations.pathmorphing}

\journal{arXiv}

\begin{document}

\begin{frontmatter}

\title{Faster $O(|V|^2|E|W)$-Time Energy Algorithms \\ for Optimal Strategy Synthesis in Mean Payoff Games}

\author[tn,fr]{Carlo Comin}
\ead{carlo.comin@unitn.it}

\author[vr]{Romeo Rizzi}
\ead{romeo.rizzi@univr.it}

\address[tn]{Department of Mathematics, University of Trento, Trento, Italy}
\address[fr]{LIGM, Universit{\'e} Paris-Est, Marne-la-Vall{\'e}e, Paris, France}
\address[vr]{Department of Computer Science, University of Verona, Verona, Italy}

\date{Received: date / Accepted: date}

\begin{abstract}
This study strengthens the links between Mean Payoff Games (\MPG{s}) and Energy Games (EG{s}). 
Firstly, we offer a faster $O(|V|^2|E|W)$ pseudo-polynomial time and $\Theta(|V|+|E|)$ space 
deterministic algorithm for solving the Value Problem and Optimal Strategy Synthesis in \MPG{s}. 
This improves the best previously known estimates on the pseudo-polynomial time complexity to: 
\[ O(|E|\log |V|) + \Theta\Big(\sum_{v\in V}\texttt{deg}_{\Gamma}(v)\cdot\ell_{\Gamma}(v)\Big) = O(|V|^2|E|W), \] 
where $\ell_{\Gamma}(v)$ counts the number of times that a certain energy-lifting operator 
$\delta(\cdot, v)$ is applied to any $v\in V$, along a certain sequence of Value-Iterations on reweighted \EG{s}; 
and $\texttt{deg}_{\Gamma}(v)$ is the degree of $v$. 
This improves significantly over a previously known pseudo-polynomial time estimate, 
\ie $\Theta\big(|V|^2|E|W + \sum_{v\in V}\texttt{deg}_{\Gamma}(v)\cdot\ell_{\Gamma}(v)\big)$~\citep{CR15, CR16}, 
as the pseudo-polynomiality is now confined to depend solely on $\ell_\Gamma$. Secondly, 
we further explore on the relationship between Optimal Positional Strategies (OPSs) in \MPG{s} 
and Small Energy-Progress Measures (SEPMs) in reweighted \EG{s}. It is observed that the space of all OPSs, 
$\texttt{opt}_{\Gamma}\Sigma^M_0$, admits a unique complete decomposition in terms of extremal-SEPM{s} in reweighted EG{s}. 
This points out what we called the “Energy-Lattice $\mathcal{X}^*_{\Gamma}$ 
associated to $\texttt{opt}_{\Gamma}\Sigma^M_0$". Finally, it is offered a pseudo-polynomial total-time recursive procedure  
for enumerating (w/o repetitions) all the elements of $\mathcal{X}^*_{\Gamma}$, 
and for computing the corresponding partitioning of $\texttt{opt}_{\Gamma}\Sigma^M_0$.
\end{abstract}

\begin{keyword} 
Mean Payoff Games, Value Problem, Optimal Strategy Synthesis, 
Pseudo-Polynomial Time, Energy Games, Small Energy-Progress Measures. 
\end{keyword}

\end{frontmatter}

\section{Introduction}\label{sect:introduction}  
A \emph{Mean Payoff Game} (\MPG) is a two-player infinite game $\Gamma=(V, E, w, \langle V_0, V_1 \rangle)$, 
which is played on a finite weighted directed graph, denoted $G^{\Gamma} \triangleq ( V, E, w )$, where $w:E\rightarrow \Z$,  
the vertices of which are partitioned into two classes, $V_0$ and $V_1$, according to the player to which they belong. 

At the beginning of the game a pebble is placed on some vertex $v_s\in V$, and then the two players, 
named Player~0 and Player~1, move it along the arcs ad infinitum. Assuming the pebble is currently on some $v\in V_0$, 
then Player~0 chooses an arc $(v,v')\in E$ going out of $v$ and moves the pebble to the destination vertex $v'$. 
Similarly, if the pebble is currently on some $v\in V_1$, it is Player~1 to choose an outgoing arc. 
The infinite sequence $v_s,v,v'\ldots$ of all the encountered vertices forms a \emph{play}. 
In order to play well, Player~0 wants to maximize the limit inferior of the long-run average weight 
of the traversed arcs, \ie to maximize $\liminf_{n\rightarrow\infty} \frac{1}{n}\sum_{i=0}^{n-1}w(v_i, v_{i+1})$, whereas 
Player~1 wants to minimize the $\limsup_{n\rightarrow\infty} \frac{1}{n} \sum_{i=0}^{n-1}w(v_i, v_{i+1})$. 
\cite{EhrenfeuchtMycielski:1979}~proved that each vertex $v$ admits a \emph{value}, denoted $\val{\Gamma}{v}$, 
which each player can secure by means of a \emph{memoryless} (or \emph{positional}) strategy, 
\ie one depending only on the current vertex position and not on the previous choices.

Solving an \MPG consists in computing the values of all vertices (\emph{Value Problem}) and, for each player, 
a positional strategy that secures such values to that player (\emph{Optimal Strategy Synthesis}).
The corresponding decision problem lies in $\NP\cap \coNP$~\citep{ZwickPaterson:1996} and it was later 
shown to be in $\UP\cap\coUP$~\citep{Jurdzinski1998}.

The problem of devising efficient algorithms for solving \MPG{s} has been studied extensively in the literature.
The first milestone was settled in \cite{GKK88}, in which it was offered an \emph{exponential} time algorithm 
for solving a slightly wider class of \MPG{s} called \emph{Cyclic Games}. 
Afterwards, \cite{ZwickPaterson:1996} devised the first deterministic procedure for computing values in \MPG{s}, 
and optimal strategies securing them, within a pseudo-polynomial time and polynomial space. 
In particular, it was established an $O(|V|^3 |E| W)$ upper bound for the time complexity of the Value Problem, 
as well as an upper bound of $O(|V|^4|E| W \log(|E|/|V|))$ for that of Optimal Strategy Synthesis~\citep{ZwickPaterson:1996}.

Several research efforts have been spent in studying quantitative extensions of infinite games for 
modeling quantitative aspects of reactive systems~\citep{Chakrabarti03,Bouyer08,brim2011faster}. 
In this context quantities may represent, \eg the power usage of an embedded component, or the buffer size of a networking element. 
These studies unveiled interesting connections with \MPG{s}, 
and recently they have led to the design of faster procedures for solving them.
In particular, \cite{brim2011faster} devised a faster deterministic algorithm for solving the Value Problem 
and Optimal Strategy Synthesis in \MPG{s} within $O\big(|V|^2 |E|\, W\log(|V|\,W)\big)$ pseudo-polynomial time and polynomial space. 
Essentially, a binary search is directed by the resolution of multiple reweighted \EG{s}. 
The determination of \EG{s} comes from repeated applications of energy-lifting operators $\delta(\cdot, v)$ for any $v\in V$; 
these are all monotone functions defined on a complete lattice (\ie the Energy-Lattice of a reweighted \EG). 
At this point the correct termination is ensured by the Knaster–Tarski fixed-point theorem.

Recently, the worst-cast time complexity of the Value Problem and Optimal Strategy Synthesis was given an 
improved pseudo-polynomial upper bound~\citep{CR15, CR16}. Those works focused on offering a simple proof 
of the improved time complexity bound. The algorithm there proposed, henceforth called Algorithm~0, had the 
advantage of being very simple; its existence made it possible to discover and analyze some of the underlying fundamental ideas, 
that ultimately led to the improved upper bound, more directly; it was shown appropriate to supersede 
(at least in the perspective of sharpened bounds) the above mentioned binary search by 
sort of a linear search that succeeds at amortizing all the energy-liftings throughout the computation. 
However, its running time turns out to be also $\Omega(|V|^2|E|W)$, the actual time complexity being 
$\Theta\big(|V|^2 |E|\, W + \sum_{v\in V}\texttt{deg}_{\Gamma}(v)\cdot\ell_{\Gamma}^0(v)\big)$, where 
$\ell_{\Gamma}^0(v)\leq (|V|-1)|V|W$ denotes the total number of times that the energy-lifting operator 
$\delta(\cdot, v)$ is applied to any $v\in V$ by Algorithm~0. 

After the appearance of those works, a way to overcome this issue was found.

\paragraph*{Contribution}
This study aims at strenghtening the relationship between \MPG{s}~and~EG{s}. 

	Our results are summarized as follows. 
\begin{itemize}
\item \emph{A Faster $O(|V|^2|E|W)$-Time Algorithm for \MPG{s} by Jumping through Reweighted \EG{s}.} 
\end{itemize}
We introduce a novel algorithmic scheme, named \emph{Jumping} (Algorithm~\ref{algo:solve_mpg}), 
which tackles on some further regularities of the problem, thus reducing the estimate on the pseudo-polynomial 
time complexity of \MPG{s} to: \[O(|E|\log |V|) + \Theta\Big(\sum_{v\in V}\texttt{deg}_{\Gamma}(v)\cdot\ell_{\Gamma}^1(v)\Big), \] 
where $\ell_{\Gamma}^1(v)$ is the total number of applications of $\delta(\cdot, v)$ to $v\in V$ that are made by Algorithm~1;
$\ell_{\Gamma}^1\leq (|V|-1)|V|W$ (worst-case; but experimentally, $\ell_{\Gamma}^1\ll\ell_{\Gamma}^0$), 
and the working space is $\Theta(|V|+|E|)$. Overall the worst-case complexity is still $O(|V|^2|E|W)$,  
but the pseudo-polynomiality is now confined to depend solely on the total number $\ell^1_{\Gamma}$ of required energy-liftings; 
this is not known to be $\Omega(|V|^2|E|W)$ generally, and future theoretical or practical 
advancements concerning the Value-Iteration framework for \EG{s} could help reducing this metric. 
Under this perspective, theoretically, the computational equivalence between \MPG{s} and \EG{s} seems 
now like a bit more unfolded and subtle. In practice, Algorithm~\ref{algo:solve_mpg} 
allows us to reduce the magnitude of $\ell_{\Gamma}$ considerably, \wrt \cite{CR15, CR16}, and 
therefore the actual running time of the algorithm; our experiments suggest that $\ell_{\Gamma}^1\ll\ell_{\Gamma}^0$ holds 
for wide families of \MPG{s} (see SubSection~\ref{subsect:experiments}). 

In summary, the present work offers a \emph{faster} pseudo-polynomial time algorithm; theoretically the 
pseudo-polynomiality now depends only on $\ell^1_{\Gamma}$, and in practice the actual performance also 
improves considerably \wrt \cite{CR15, CR16}. With hindsight, Algorithm~0 turned out to be a high-level description 
and the tip of a more technical and efficient underlying procedure. This is the first 
truly $O(|V|^2|E|W)$ time deterministic algorithm, for solving the Value Problem and Optimal Strategy Synthesis in \MPG{s}, 
that can be effectively applied in practice (optionally, in interleaving with some other known sub-exponential time algorithms). 

Indeed, a wide spectrum of different approaches have been investigated in the literature.
For instance, \cite{Andersson06fastalgorithms} provided a fast \emph{randomized} algorithm for solving \MPG{s} 
in \emph{sub-exponential} time $O\big(|V|^2 |E|\, \exp\big(2\, \sqrt{|V|\, \ln(|E| / \sqrt{|V|})}+O(\sqrt{|V|}+\ln|E|)\big)\big)$. 
\cite{LifshitsPavlov:2007} devised a deterministic $O(2^{|V|}\, |V|\, |E|\, \log W)$ \emph{singly-exponential} time 
procedure by considering a so called potential-theory of \MPG{s}, one that is akin to \EG{s}. 

Table~\ref{Table:Algorithms} offers a summary of past and present results in chronological order.

\begin{table}[!htb]
\caption{Time Complexity of the main Algorithms for solving \MPG{s}.}
\label{Table:Algorithms}
\centering
\bgroup
\def\arraystretch{1.4}
\begin{tabular}{| c  c  c  | }
\hline
\vtop{\hbox{\strut Algorithm }}&\vtop{\hbox{\strut Optimal Strategy Synthesis}} & \vtop{\hbox{\strut Value Problem}} \\
\hline
This work &  $ O(|E|\log |V|) + \Theta\big( \sum_{v\in V}\texttt{deg}_\Gamma(v) \cdot 
					\ell^1_{\Gamma}(v)\big)$  & \textit{\small same complexity} \\ 
CR15-16    &  $\Theta\big(|V|^2 |E|\, W + \sum_{v\in V}\texttt{deg}_\Gamma(v) \cdot 
					\ell^0_{\Gamma}(v)\big)$  & \textit{\small same complexity} \\
BC11 & $O(|V|^2 |E|\, W\log(|V|\, W))$ & \textit{\small same complexity} \\
LP07 &  n/a &    $O(|V| |E| 2^{|V|} \log W)$ \\
AV06 &  $O\Big(|V|^2 |E|\, e^{2\,\sqrt{|V|\, \ln\big(\frac{|E|}{\sqrt{|V|}}\big)} + 
		O(\sqrt{|V|}+\ln|E|)}\Big)$ & \textit{\small same complexity} \\
ZP96 &       $\Theta(|V|^4 |E|\, W \log\frac{|E|}{|V|})$    & $\Theta(|V|^3 |E|\, W)$  \\
\hline
\end{tabular}
\egroup
\end{table}
\begin{itemize}
	\item \emph{An Energy-Lattice Decomposition of the Space of all Optimal Positional Strategies.}
\end{itemize}
Let us denote by $\texttt{opt}_{\Gamma}\Sigma^M_0$ the space of all the optimal positional strategies in a given \MPG $\Gamma$.
What allows Algorithm~\ref{algo:solve_mpg} to compute at least one $\sigma^*_0\in 
\texttt{opt}_{\Gamma}\Sigma^M_0$ is a so called \emph{compatibility} relation, 
linking optimal arcs in \MPG{s} to arcs that are \emph{compatible} \wrt least-SEPM{s} in reweighted \EG{s}. 
The family $\mathcal{E}_\Gamma$ of all SEPM{s} of a given \EG $\Gamma$ forms a complete lattice, 
which we call the Energy-Lattice of the \EG $\Gamma$. Firstly, 
we observe that even though compatibility \wrt \emph{least}-SEPMs 
in reweighted \EG{s} implies optimality of positional strategies 
in \MPG{s} (see Theorem~\ref{Thm:pos_opt_strategy}), 
the converse doesn't hold generally (see Proposition~\ref{prop:counter_example}). 
Thus a natural question was whether compatibility \wrt SEPM{s} 
was really appropriate to capture (\eg to provide a recursive enumeration of) the 
whole $\texttt{opt}_{\Gamma}\Sigma^M_0$ and not just a proper subset of it. 
Partially motivated by this question we further explored on the relationship between $\texttt{opt}_{\Gamma}\Sigma^M_0$ 
and $\mathcal{E}_\Gamma$. In Theorem~\ref{thm:main_energystructure}, it is observed a unique complete decomposition of 
$\texttt{opt}_{\Gamma}\Sigma^M_0$ which is expressed in terms of so called \emph{extremal}-SEPM{s} in reweighted EG{s}. 
This points out what we called the “Energy-Lattice $\mathcal{X}^*_{\Gamma}$ associated to $\texttt{opt}_{\Gamma}\Sigma^M_0$", 
\ie the family of all the extremal-SEPM{s} of a given $\Gamma$. 
So, compatibility \wrt SEPM{s} actually turns out to be appropriate for constructing the whole $\texttt{opt}_{\Gamma}\Sigma^M_0$; 
but an entire lattice $\mathcal{X}^*_{\Gamma}$ of extremal-SEPM{s} then arises (and not only the least-SEPM, 
which accounts only for the ``join/top" component of $\texttt{opt}_{\Gamma}\Sigma^M_0$). 

\begin{itemize}
	\item \emph{A Recursive Enumeration of Extremal-SEPMs and of Optimal Positional Strategies.}
\end{itemize}
Finally, it is offered a pseudo-polynomial total time recursive procedure 
for enumerating (w/o repetitions) all the elements of $\mathcal{X}^*_{\Gamma}$, 
and for computing the associated partitioning of $\texttt{opt}_{\Gamma}\Sigma^M_0$. 
This shows that the above mentioned compatibility relation is appropriate so to extend our algorithms, 
recursively, in order to compute the whole $\texttt{opt}_{\Gamma}\Sigma^M_0$ and $\mathcal{X}^*_{\Gamma}$. 
It is observed that the corresponding recursion tree actually defines an additional lattice $\mathcal{B}^*_{\Gamma}$,  
whose elements are certain sub-games $\Gamma'\subseteq \Gamma$ that we call \emph{basic}. 
The extremal-SEPMs of a given $\Gamma$ coincide with the least-SEPMs of the basic sub-games of $\Gamma$;
so, $\mathcal{X}^*_{\Gamma}$ is the energy-lattice comprising all and only the \emph{least}-SEPMs of 
the \emph{basic} sub-games of $\Gamma$. The total time complexity of the proposed enumeration 
for both $\mathcal{B}^*_{\Gamma}$ and $\mathcal{X}^*_{\Gamma}$ is $O(|V|^3|E|W |\mathcal{B}^*_{\Gamma}|)$, 
it works in space $O(|V||E|)+\Theta\big(|E||\mathcal{B}^*_{\Gamma}|\big)$.

\paragraph*{Organization} 
The manuscript is organized as follows. In Section~\ref{sect:background}, we introduce some notation and provide 
the required background on infinite two-player pebble-games and related algorithmic results. 
In Section~\ref{section:values}, a suitable relation between values, optimal strategies, 
and certain reweighting operations is recalled from~\cite{CR15, CR16}. 
In Section~\ref{sect:algorithm}, it is offered an $O(|E|\log |V|) + 
	\Theta\big(\sum_{v\in V}\texttt{deg}_\Gamma(v) \cdot \ell^1_{\Gamma}(v)\big) = O(|V|^2 |E|\, W)$ 
pseudo-polynomial time and $\Theta(|V|+|E|)$ space deterministic algorithm for solving the 
Value Problem and Optimal Strategy Synthesis in \MPG{s}; SubSection~\ref{subsect:experiments} offers 
an experimental comparison between Algorithm~\ref{algo:solve_mpg} and Algorithm~0~\citep{CR15, CR16}. 
Section~\ref{sect:energy} offers a unique and complete energy-lattice decomposition of $\texttt{opt}_{\Gamma}\Sigma^M_0$. 
Finally, Section~\ref{sect:recursive_enumeration} provides a recursive enumeration 
of $\mathcal{X}^*_{\Gamma}$ and the corresponding partitioning of $\texttt{opt}_{\Gamma}\Sigma^M_0$. 
The manuscript concludes in Section~\ref{sect:conclusions}.

\section{Notation and Preliminaries}\label{sect:background}
We denote by $\N$, $\Z$, $\Q$ the set of natural, integer, and rational numbers. 
It will be sufficient to consider integral intervals, \eg $[a,b]\triangleq\{z\in\Z\mid a\leq z\leq b\}$ 
and $[a,b)\triangleq\{z\in\Z\mid a \leq z < b\}$ for any $a,b\in \Z$. 
If $(a,b),(a',b')\in\Z$, then $(a,b)<(a',b')$ \textit{iff} $a<a'$, or $a=a'$ and $b<b'$.
Our graphs are directed and weighted on the arcs; thus, if $G=(V, E, w)$ is a graph, 
then every arc $e\in E$ is a triplet $e=(u,v,w_e)$, where $w_e = w(u,v)\in\Z$. 
Let $W \triangleq \max_{e\in E} |w_e|$ be the maximum absolute weight. Given a vertex $u\in V$, the set of its successors is  
$N_\Gamma^{\text{out}}(u) \triangleq \{ v \in V \mid (u,v) \in E \}$, 
whereas the set of its predecessors is $N_\Gamma^{\text{in}}(u) \triangleq \{ v \in V \mid (v,u) \in E \}$. 
Let $\texttt{deg}_\Gamma(v)\triangleq |N_\Gamma^{\text{in}}(v)| + |N_\Gamma^{\text{out}}(v)|$.
A \emph{path} is a sequence $v_0v_1\ldots v_n\ldots$ such that $\forall^{i\in [n]}\, (v_i, v_{i+1}) \in E$. 
Let $V^*$ be the set of all (possibly empty) finite paths. 
A \emph{simple path} is a finite path $v_0v_1\ldots v_n$ having no repetitions, 
\ie for any $i,j\in [0,n]$ it holds $v_i \neq v_j$ if $i\neq j$. 
A \emph{cycle} is a path $v_0v_1\ldots v_{n-1}v_n$ such that $v_0\ldots v_{n-1}$ is simple and $v_n = v_0$.
The \emph{average weight} of a cycle $v_0\ldots v_n$ is $w(C)/|C|=\frac{1}{n} \sum_{i=0}^{n-1} w(v_i,v_{i+1})$. 
A cycle $C=v_0v_1\ldots v_n$ is \emph{reachable} from $v$ in $G$ if there is some path $p$ in $G$ such that $p\cap C\neq\emptyset$. 

An \emph{arena} is a tuple $\Gamma = (V, E, w, \langle V_0, V_1\rangle)$ 
where $G^{\Gamma} \triangleq (V, E, w)$ is a finite weighted directed graph and $(V_0, V_1)$ is a partition 
of $V$ into the set $V_0$ of vertices owned by Player~0, and $V_1$ owned by Player~$1$.
It is assumed that $G^{\Gamma}$ has no sink, \ie $\forall^{v\in V} N_{\Gamma}^{\text{out}}(v)\neq\emptyset$; 
we remark that $G^{\Gamma}$ is not required to be a bipartite graph on colour classes $V_0$ and $V_1$.
A {sub-arena} $\Gamma'$ (or \emph{sub-game}) of $\Gamma$ is any arena $\Gamma' = (V', E', w', \langle V'_0, V'_1\rangle )$ 
such that: $V'\subseteq V$, $\forall^{i\in\{0,1\}} V'_i=V'\cap V_i$, $E'\subseteq E$, and $\forall^{e\in E'} w'_e=w_e$. 
Given $S\subseteq V$, the sub-arena of $\Gamma$ induced by $S$ is denoted $\Gamma_{|_{S}}$, 
its vertex set is $S$ and its edge set is $E'=\{(u,v)\in E \mid u,v\in S\}$.
A game on $\Gamma$ is played for infinitely many rounds by two players moving a pebble along the arcs 
of $G^{\Gamma}$. At the beginning of the game the pebble is found on some vertex $v_s\in V$, 
which is called the \emph{starting position} of the game. At each turn, 
assuming the pebble is currently on a vertex $v\in V_i$ (for $i=0, 1$), 
Player~$i$ chooses an arc $(v,v')\in E$ and then the next turn starts with the pebble on $v'$. 
Below, \figref{fig:ex1_arena} depicts an example arena $\Gamma_{\text{ex}}$. 

\begin{figure}[!h]
\center
\begin{tikzpicture}[scale=.6,arrows={-triangle 45}, node distance=1.5 and 2]
 		\node[node, thick, color=red] (E) {$E$};
		\node[node, thick, color=blue, left=of E, fill=blue!20] (C) {$C$};
		\node[node, thick, color=red, above=of C, xshift=-8.5ex, yshift=-5ex] (B) {$B$};
		\node[node, thick, color=blue, left=of C, fill=blue!20] (A) {$A$};
		\node[node, thick, color=red, below=of C, xshift=-8.5ex, yshift=5ex] (D) {$D$}; 	
		\node[node, thick, color=blue, right=of E, fill=blue!20] (F) {$F$};
		\node[node, thick, color=red, right=of F] (G) {$G$};	
		%arcs
		\draw[] (E) to [bend left=0] node[above] {$0$} (C);
		\draw[] (E) to [bend left=22] node[above left, xshift=-4ex] {$0$} (A.south east);
		\draw[] (E) to [bend left=50] node[above left, xshift=-4ex, yshift=-1ex] {$0$} (G.north);
		\draw[] (E) to [bend left=0] node[above] {$0$} (F);
		\draw[] (A) to [bend left=40] node[left] {$+3$} (B);
		\draw[] (B) to [bend left=40] node[xshift=2ex, yshift=1ex] {$+3$} (C);
		\draw[] (C) to [bend left=40] node[xshift=2ex, yshift=0ex] {$-5$} (D);
		\draw[] (D) to [bend left=40] node[xshift=-2ex, yshift=-1ex] {$-5$} (A);
		\draw[] (F) to [bend left=40] node[xshift=-2.5ex, yshift=-.5ex, above] {$-5$} (G);
		\draw[] (G) to [bend left=40] node[below] {$+3$} (F);
\end{tikzpicture}
\caption{
An arena $\Gamma_{\text{ex}}=\langle V, \E, w, (V_0, V_1) \rangle$. Here, $V=\{A,B,C,D,E,F,G\}$ and  
$\E=\{(A,B,+3), (B,C,+3), (C,D,-5)$, $(D,A,-5), (E,A,0), (E,C,0), (E,F,0), (E,G,0), (F,G,-5), (G,F,+3)\}$.  
Also, $V_0=\{B,D,E,G\}$ is colored in red, while $V_1=\{A,C,F\}$ is filled in blue.
}\label{fig:ex1_arena}
\end{figure}
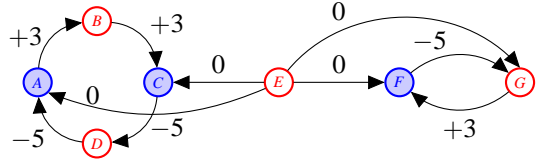

A \emph{play} is any infinite path $v_0v_1\ldots v_n\ldots\in V^\omega$ in $\Gamma$. 
For any $i\in \{0,1\}$, a strategy of Player~$i$ is any function $\sigma_i:V^*\times V_i\rightarrow V$ 
such that for every finite path $p'v$ in $G^{\Gamma}$, 
where $p'\in V^*$ and $v\in V_i$, it holds that $(v, \sigma_i(p', v))\in E$. 
A strategy $\sigma_i$ of Player $i$ is \emph{positional} (or \emph{memoryless}) 
if $\sigma_i(p, v_n) = \sigma_i(p', v'_m)$ for every finite paths $pv_n=v_0\ldots v_{n-1}v_n$ 
and $p'v'_m=v'_0\ldots v'_{m-1}v'_m$ in $G^{\Gamma}$ such that $v_n=v'_m\in V_i$. 
The set of all the positional strategies of Player~$i$ is denoted by $\Sigma^M_i$. 
A play $v_0v_1\ldots v_n\ldots $ is \emph{consistent} with a strategy 
$\sigma\in\Sigma_i$ if $v_{j+1} = \sigma(v_0v_1\ldots v_j)$ whenever $v_j\in V_i$. 

Given a starting position $v_s\in V$, the \emph{outcome} of two strategies $\sigma_0 \in\Sigma_0$ and $\sigma_1 \in\Sigma_1$, 
denoted $\texttt{outcome}^{\Gamma}(v_s, \sigma_0, \sigma_1)$, 
is the unique play that starts at $v_s$ and is consistent with both $\sigma_0$ and $\sigma_1$.

Given a memoryless strategy $\sigma_i\in\Sigma^M_i$ of Player~$i$ in $\Gamma$, 
then $G(\sigma_i, \Gamma)=(V, E_{\sigma_i}, w)$ is the graph obtained from $G^{\Gamma}$ 
by removing all the arcs $( v,v')\in E$ such that $v\in V_i$ and $v'\neq \sigma_i(v)$; 
we say that $G(\sigma_i, \Gamma)$ is obtained from $G^{\Gamma}$ \emph{by projection} \wrt $\sigma_i$. 

For any weight function $w':E\rightarrow \Z $, the \emph{reweighting} of $\Gamma=(V, E, w, \langle V_0, V_1\rangle )$ \wrt $w'$ 
is the arena $\Gamma^{w'}= (V, E, w', \langle V_0, V_1\rangle )$. Also, for $w:E\rightarrow \Z$ and any $\nu\in \Z$, 
we denote by $w+\nu$ the weight function $w'$ defined as $\forall^{e\in E} w'_e \triangleq w_e+\nu$. 
Indeed, we shall consider reweighted games of the form $\Gamma^{w-q}$, for some $q\in \Q$. 
Notice that the corresponding weight function $w':E\rightarrow\Q:e\mapsto w_e-q$ is rational, 
while we required the weights of the arcs to be always integers. 
To overcome this issue, it is sufficient to re-define $\Gamma^{w-q}$ by  scaling all weights by a factor equal 
to the denominator of $q\in \Q$; \ie when $q\in \Q$, 
say $q=N/D$ for $\gcd(N,D)=1$ we define $\Gamma^{w-q}\triangleq \Gamma^{D\cdot w-N}$. This rescaling operation doesn't change 
the winning regions of the corresponding games (we denote this equivalence as $\Gamma^{w-q}\cong \Gamma^{D\cdot w - N}$), 
and it has the significant advantage of allowing for a discussion (and an algorithmics) strictly based on integer weights. 

\subsection{Mean Payoff Games}
A \emph{Mean Payoff Game}~(\MPG)~\citep{brim2011faster, ZwickPaterson:1996, EhrenfeuchtMycielski:1979} 
is a game played on some arena $\Gamma$ for infinitely many rounds by two opponents,  
Player~$0$ gains a payoff defined as the long-run average weight of the play, 
whereas Player~$1$ loses that value. 
Formally, the Player~$0$'s \emph{payoff} of a play $v_0v_1\ldots v_n\ldots $ 
in $\Gamma$ is defined as follows: 
\[\texttt{MP}_0(v_0v_1\ldots v_n\ldots)\triangleq\liminf_{n\rightarrow\infty} 
	\frac{1}{n}\sum_{i=0}^{n-1}w(v_i, v_{i+1}).\]
The value \emph{secured} by a strategy $\sigma_0\in\Sigma_0$ in a vertex $v$ is defined as:
\[\texttt{val}^{\sigma_0}(v)\triangleq
\inf_{\sigma_1\in\Sigma_1}\texttt{MP}_0\big(\texttt{outcome}^{\Gamma}(v, \sigma_0, \sigma_1)\big),\]
Notice that payoffs and secured values can be defined symmetrically for the Player~$1$ 
(\ie by interchanging the symbol \emph{0} with \emph{1} and \emph{inf} with \emph{sup}).

Ehrenfeucht and Mycielski~\cite{EhrenfeuchtMycielski:1979} proved that each vertex 
$v\in V$ admits a unique \emph{value}, denoted $\val{\Gamma}{v}$, which each player can secure by means 
of a \emph{memoryless} (or \emph{positional}) strategy. Moreover, 
\emph{uniform} positional optimal strategies do exist for both players, 
in the sense that for each player there exist at least one 
positional strategy which can be used to secure all the optimal values, 
independently with respect to the starting position $v_s$.
Thus, for every \MPG $\Gamma$, there exists a strategy 
$\sigma_0\in\Sigma^M_0$ such that $\forall^{v\in V} \val{\sigma_0}{v}\geq \val{\Gamma}{v}$,  
and there exists a strategy $\sigma_1\in\Sigma^M_1$ such that $\forall^{v\in V} \val{\sigma_1}{v}\leq \val{\Gamma}{v}$.
The \emph{(optimal) value} of a vertex $v\in V$ in the \MPG $\Gamma$ is given by:
\[\val{\Gamma}{v} = \sup_{\sigma_0\in\Sigma_0}\val{\sigma_0}{v} = \inf_{\sigma_1\in\Sigma_1}\val{\sigma_1}{v}.\]
Thus, a strategy $\sigma_0\in\Sigma_0$ is \emph{optimal} if $\texttt{val}^{\sigma_0}(v)=\val{\Gamma}{v}$ for all $v\in V$.  
We denote $\text{opt}_{\Gamma}\Sigma^M_0\triangleq \big\{\sigma_0\in\Sigma^M_0(\Gamma) \mid \; 
	\forall^{v\in V}\, \texttt{val}^{\Gamma}_{\sigma_0}(v) = \val{\Gamma}{v}\big\}$.
A strategy $\sigma_0\in\Sigma_0$ is said to be \emph{winning} for Player~$0$ if $\forall^{v\in V}\texttt{val}^{\sigma_0}(v)\geq 0$,  
and $\sigma_1\in\Sigma_1$ is winning for Player~$1$ if $\texttt{val}^{\sigma_1}(v) < 0$.
Correspondingly, a vertex $v\in V$ is a \emph{winning starting position} for 
Player~$0$ if $\val{\Gamma}{v}\geq 0$, otherwise it is winning for Player~$1$. 
The set of all winning starting positions of Player~$i$ is denoted by $\W_i$ for $i\in \{0,1\}$.

A refined formulation of the determinacy theorem is offered in~\cite{Bjorklund04}.
\begin{Thm}[\cite{Bjorklund04}]\label{thm:ergodic_partition}
Let $\Gamma$ be an \MPG and let $\{C_i\}_{i=1}^m$ be a partition 
(called \emph{ergodic}) of its vertices into $m\geq 1$ classes each one having the same optimal value $\nu_i\in\Q$. 
Formally, $V=\bigsqcup_{i=1}^m C_i$ and $\forall^{i\in [m]}\forall^{v\in C_i} \val{\Gamma_{i}}{v}=\nu_i$, 
where $\Gamma_{i}\triangleq\Gamma_{|_{C_i}}$.

Then, Player~0 has no vertices with outgoing arcs leading from $C_i$ to $C_j$ whenever $\nu_i<\nu_j$, 
	and Player~1 has no vertices with outgoing arcs leading from $C_i$ to $C_j$ whenever $\nu_i>\nu_j$;

moreover, there exist $\sigma_0\in\Sigma^M_0$ and $\sigma_1\in\Sigma^M_1$ such that:

-- If the game starts from any vertex in $C_i$, 
	then $\sigma_0$ secures a gain at least $\nu_i$ to Player~0 and $\sigma_1$ secures a loss at most $\nu_i$ to Player~1;

-- Any play that starts from $C_i$ always stays in $C_i$, if it is consistent with both strategies $\sigma_0, \sigma_1$, 
	\ie if Player~0 plays according to $\sigma_0$, and Player~1 according to $\sigma_1$.
\end{Thm}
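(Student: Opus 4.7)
\emph{Proof plan.} My plan is to establish the two arc-confinement statements first, and then to assemble $\sigma_0,\sigma_1$ by patching together uniform positional optimal strategies that exist on each subgame $\Gamma_i$ by Ehrenfeucht--Mycielski applied inside $\Gamma_i$: the hypothesis $\val{\Gamma_i}{v}=\nu_i$ for every $v\in C_i$ guarantees that $\Gamma_i$ is a legitimate \MPG with uniform value, and hence admits such $\sigma_0^i\in\Sigma^M_0$ and $\sigma_1^i\in\Sigma^M_1$ on $\Gamma_i$.

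For the arc-confinement claims, I would use an extremal-class induction on $m$. Consider the maximum-value class $C^{*}=C_{i^{*}}$ with $\nu^{*}=\max_i\nu_i$, and aim to rule out any Player~1 arc going out of $C^{*}$ to a strictly lower-value class. Suppose for contradiction that some $v\in C^{*}\cap V_1$ had an outgoing arc $(v,u)\in E$ with $u\in C_j$ and $\nu_j<\nu^{*}$. Then Player~1, starting at $v$, could first play this arc and then follow an optimal positional strategy in $\Gamma_j$, securing loss at most $\nu_j<\nu^{*}$ against any Player~0 response in $\Gamma$; on the other hand, the hypothesis tells us Player~0 can secure $\nu^{*}$ inside $\Gamma_{i^{*}}$, and a patching argument promotes this guarantee from $\Gamma_{i^{*}}$ to $\Gamma$, yielding the contradiction. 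Peeling off $C^{*}$ and recursing on $\Gamma_{|_{V\setminus C^{*}}}$ gives the Player~1 statement in full generality; the dual argument, anchored at the minimum-value class and run for Player~0, gives the Player~0 statement.

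Having established both confinement statements, each $\Gamma_i$ is a legitimate sub-arena (every vertex retains at least one in-class outgoing arc, else the hypothesised value $\nu_i$ could not be realised inside $\Gamma_i$), so gluing the $\sigma_0^i$'s (resp.\ the $\sigma_1^i$'s) yields $\sigma_0\in\Sigma^M_0$ (resp.\ $\sigma_1\in\Sigma^M_1$) on all of $\Gamma$. Under $\sigma_0$, Player~0 never crosses classes; by the Player~1 confinement, Player~1 can only move from $C_i$ to a class of $\nu$-value $\geq\nu_i$. Hence any play consistent with $\sigma_0$ starting in $C_i$ traverses a non-decreasing sequence of class values and, since $m$ is finite, stabilises in some terminal $C_k$ with $\nu_k\geq\nu_i$, where $\sigma_0^k$ secures mean payoff $\geq\nu_k\geq\nu_i$ (finite prefixes being asymptotically irrelevant). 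The dual argument yields the loss-$\leq\nu_i$ claim for $\sigma_1$, and a play consistent with both strategies is confined to $C_i$ forever by construction.

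The main obstacle is the circularity in the confinement step: ruling out escape arcs wants to use value comparisons that themselves presuppose some prior confinement. Breaking this circularity by the extremal-class induction above -- top for Player~1, bottom for Player~0 -- is the technical core; once confinement is in hand, the strategy synthesis and the mean-payoff reasoning reduce to essentially routine bookkeeping.
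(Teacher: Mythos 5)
The paper does not actually prove this statement: Theorem~\ref{thm:ergodic_partition} is imported from \cite{Bjorklund04} as a known refinement of the determinacy theorem, so there is no in-paper argument to compare against. Your proposal must therefore stand on its own, and its core step --- the arc-confinement claim, which you correctly identify as the technical heart --- has a genuine gap.

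The deviation argument does not go through as sketched. When Player~1 at $v\in C^{*}\cap V_1$ plays the arc into $C_j$ and then ``follows an optimal positional strategy in $\Gamma_j$'', that strategy carries a payoff guarantee only for plays that remain inside $C_j$; at this point of the argument nothing prevents Player~0 from steering the play back out of $C_j$, so ``securing loss at most $\nu_j$ against any Player~0 response in $\Gamma$'' is unjustified. Symmetrically, Player~0's guarantee of $\nu^{*}$ ``inside $\Gamma_{i^{*}}$'' applies only to plays confined to $C_{i^{*}}$, and the scenario under discussion is precisely one where Player~1 leaves $C_{i^{*}}$; the ``patching argument'' you invoke to promote it to $\Gamma$ \emph{is} the confinement being proved. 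The extremal-class induction does not break this circle: already the base case at the top class requires controlling what happens after the play enters a lower class, and those classes are only handled later in the recursion. Indeed, under the literal reading of the hypothesis --- only the subgame values $\val{\Gamma_i}{v}=\nu_i$ are given --- the confinement conclusion is \emph{false}: take $V=V_1=\{a,b\}$ with arcs $(a,a)$ of weight $1$, $(a,b)$ of weight $0$, $(b,b)$ of weight $0$, and the partition $\{a\},\{b\}$; the subgame values are $\nu_1=1>\nu_2=0$, yet Player~1 owns an arc from the higher class into the lower one. The intended hypothesis (as in Björklund~\etal) is that $\nu_i=\val{\Gamma}{v}$ for $v\in C_i$ in the \emph{full} game; with that, confinement follows in one line from the local optimality equations $\val{\Gamma}{v}=\max_{(v,u)\in E}\val{\Gamma}{u}$ for $v\in V_0$ and $\val{\Gamma}{v}=\min_{(v,u)\in E}\val{\Gamma}{u}$ for $v\in V_1$, which are immediate consequences of prefix-independence of $\texttt{MP}_0$ together with memoryless determinacy --- no induction over classes is needed. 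Your gluing construction and the ``non-decreasing sequence of class values must stabilise'' argument for the second half are essentially sound, but only once confinement and the equality of subgame and full-game values are already in hand.
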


A finite variant of \MPG{s} is well-known in the literature 
\citep{EhrenfeuchtMycielski:1979, ZwickPaterson:1996, brim2011faster},  
where the game stops as soon as a cyclic sequence of vertices is traversed. 
It turns out that this is equivalent to the infinite game formulation~\citep{EhrenfeuchtMycielski:1979}, 
in the sense that the values of an \MPG are in a strong relationship 
with the average weights of its cycles, as in the next lemma.

\begin{Prop}[Brim,~\etal \cite{brim2011faster}]\label{prop:reachable_cycle}
Let $\Gamma$ be an \MPG. For all $\nu\in\Q$, for all $\sigma_0\in\Sigma^M_0$, 
and for all $v\in V$, the value $\texttt{val}^{\sigma_0}(v)$ is greater than $\nu$ \textit{iff} all cycles $C$ 
reachable from $v$ in the projection graph $G^{\Gamma}_{\sigma_0}$ have an average weight $w(C)/|C|$ greater than $\nu$.
\end{Prop}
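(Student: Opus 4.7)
My plan is to translate the mean-payoff claim into a statement about simple cycles of the finite projection graph $G^{\Gamma}_{\sigma_0}$. Since $\sigma_0$ is positional, the infinite paths in $G^{\Gamma}_{\sigma_0}$ starting at $v$ are exactly the outcomes $\texttt{outcome}^{\Gamma}(v, \sigma_0, \sigma_1)$ for $\sigma_1 \in \Sigma_1$, so
\[
\texttt{val}^{\sigma_0}(v) \,=\, \inf \bigl\{\, \texttt{MP}_0(\pi) \mid \pi \text{ an infinite path in } G^{\Gamma}_{\sigma_0} \text{ starting at } v \,\bigr\}.
\]

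For the ($\Rightarrow$) direction I argue by contrapositive. Suppose some simple cycle $C$ reachable from $v$ inside $G^{\Gamma}_{\sigma_0}$ satisfies $w(C)/|C| \leq \nu$. Pick a path $p$ in $G^{\Gamma}_{\sigma_0}$ from $v$ to a vertex of $C$, chosen so that $p$ meets $C$ only at its last vertex, and consider the ultimately periodic path $\pi$ that follows $p$ and then loops around $C$ forever. Its long-run average equals $w(C)/|C| \leq \nu$, and it is realised as $\texttt{outcome}^{\Gamma}(v, \sigma_0, \sigma_1)$ for some $\sigma_1 \in \Sigma_1$ (committing, at each Player~1 vertex along $p$ or on $C$, to the appropriate successor), hence $\texttt{val}^{\sigma_0}(v) \leq \nu$.

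For the ($\Leftarrow$) direction, assume every simple cycle reachable from $v$ in $G^{\Gamma}_{\sigma_0}$ has average weight $> \nu$. Finiteness of $V$ yields only finitely many such simple cycles, whence some $\epsilon > 0$ exists with $w(C)/|C| \geq \nu + \epsilon$ for each of them. Given any infinite path $v = v_0 v_1 \cdots$ in $G^{\Gamma}_{\sigma_0}$, I decompose every prefix $v_0 \ldots v_n$ by the standard peel-off technique: scanning left to right, at each revisit of a previously seen vertex I extract the simple cycle it creates. The result is a multiset of simple cycles of total length $k_n$ plus a simple-path remainder of length $r_n \leq |V|-1$, with $k_n + r_n = n$. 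Combining the per-cycle bound $\geq (\nu+\epsilon)|C|$ with the trivial lower bound $\geq -W$ per remainder edge gives
\[
\frac{1}{n}\sum_{i=0}^{n-1} w(v_i, v_{i+1}) \,\geq\, \frac{k_n (\nu+\epsilon) - r_n W}{n},
\]
and since $r_n \leq |V|-1$ is bounded while $k_n = n - r_n$, the right-hand side tends to $\nu+\epsilon$ as $n\to\infty$. Therefore $\texttt{MP}_0(\pi) \geq \nu + \epsilon > \nu$ for every such path $\pi$, so $\texttt{val}^{\sigma_0}(v) \geq \nu + \epsilon > \nu$.

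The main obstacle is setting up the peel-off decomposition and its bookkeeping precisely enough to certify that the bounded remainder contributes only a vanishing $O(1/n)$ term uniformly over all plays; the ($\Rightarrow$) direction, by contrast, is essentially immediate once one observes that Player~1 retains free choice at all $V_1$-vertices in the projection graph. Once this combinatorial accounting is in place, both directions of the equivalence follow from the finiteness of $V$.
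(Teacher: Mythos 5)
Your proof is correct, but it takes a different route from the paper, which offers no self-contained argument at all: it merely remarks that the proposition ``follows from the memoryless determinacy of \MPG{s}'' (and cites Brim \etal). That determinacy-based route would observe that once $\sigma_0$ is fixed, the residual game on $G^{\Gamma}_{\sigma_0}$ is a one-player game for Player~1, who therefore has an optimal \emph{positional} counter-strategy; a positional strategy in the projection graph produces an ultimately periodic play that settles on a single reachable cycle, so $\texttt{val}^{\sigma_0}(v)$ equals the minimum mean weight over reachable cycles and the equivalence is immediate. You instead prove the statement from scratch: the ($\Rightarrow$) direction by exhibiting the lasso play on a bad cycle (which is exactly the easy half of either approach), and the ($\Leftarrow$) direction by the peel-off decomposition of every prefix into simple cycles plus a remainder of length at most $|V|-1$, giving the uniform bound $\liminf \geq \nu+\epsilon$ over \emph{all} infinite paths, not just the ultimately periodic ones. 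What your approach buys is self-containment — no appeal to determinacy of the one-player mean-payoff game — at the cost of the combinatorial bookkeeping you flag; what the paper's approach buys is brevity, pushing the work into the cited determinacy theorem. Two minor points worth noting: your restriction to \emph{simple} cycles is harmless here because the paper defines a cycle to be a simple cycle; and your identification of $\texttt{val}^{\sigma_0}(v)$ with the infimum over all infinite paths of $G^{\Gamma}_{\sigma_0}$ from $v$ is legitimate because every such path is the outcome against \emph{some} (possibly history-dependent) $\sigma_1\in\Sigma_1$, and the infimum in the definition ranges over general strategies.
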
 
The proof of Proposition~\ref{prop:reachable_cycle} follows from the memoryless determinacy of \MPG{s}. 
We remark that a proposition which is symmetric to Proposition~\ref{prop:reachable_cycle} holds for Player~$1$ as well: 
for all $\nu\in\Q$, for all positional strategies $\sigma_1\in\Sigma^M_1$ of Player~$1$, and for all vertices $v\in V$, 
the value $\texttt{val}^{\sigma_1}(v)$ is less than $\nu$ \textit{iff} if 
all cycles reachable from $v$ in the projection graph $G^{\Gamma}_{\sigma_1}$
have an average weight less than $\nu$. Also, 
it is well-known~\citep{brim2011faster, EhrenfeuchtMycielski:1979} that each value $\val{\Gamma}{v}$ is contained within 
the following set of rational numbers: \[ S_{\Gamma}=\Big\{ N/D \suchthat D\in [1, |V|],\, N\in [-D\cdot W, D\cdot W] \Big\}.\] 
Notice, $|S_{\Gamma}|\leq |V|^2 W$. 

The present work tackles on the algorithmics of the following two classical problems:

-- \emph{Value Problem.} Compute for each vertex $v\in V$ the (rational) optimal value $\val{\Gamma}{v}$.

-- \emph{Optimal Strategy Synthesis.} Compute an optimal positional strategy for Player~0 in $\Gamma$. 

In Section~\ref{sect:recursive_enumeration} we shall consider the problem of computing the whole $\texttt{opt}_\Gamma\Sigma^M_0$.

-- \emph{Optimal Strategy Enumeration.} Provide a listing\footnote{The listing have to be exhaustive 
(\ie each element is listed, eventually) and without repetitions (\ie no element is listed twice).} 
of all the optimal positional strategies of Player~0 in the \MPG $\Gamma$.

\subsection{Energy Games and Small Energy-Progress Measures}
An \emph{Energy Game} (\EG) is a game that is played on an arena $\Gamma$ for infinitely many rounds by two opponents,
where the goal of Player~0 is to construct an infinite play $v_0v_1\ldots v_n\ldots$ 
such that for some initial \emph{credit} $c\in\N$ the following holds: 
$c + \sum_{i=0}^{j}w(v_i, v_{i+1})\geq 0\, \text{, for all } j \geq 0$. 
Given a credit $c\in\N$, a play $v_0v_1\ldots v_n\ldots$ is \emph{winning} for Player~0 if it satisfies (1), 
otherwise it is winning for Player~1. A vertex $v\in V$ is a winning 
starting position for Player~0 if there exists an initial credit $c\in\N$ 
and a strategy $\sigma_0\in\Sigma_0$ such that, for every strategy $\sigma_1\in\Sigma_1$, 
the play $\texttt{outcome}^{\Gamma}(v, \sigma_0, \sigma_1)$ is winning for Player~0. 
As in the case of \MPG{s}, the \EG{s} are memoryless determined \cite{brim2011faster}, 
\ie for every $v\in V$, either $v$ is winning for Player~$0$ or $v$ is winning for Player~$1$, 
and (uniform) memoryless strategies are sufficient to win the game.
In fact, as shown in the next lemma, the decision problems of \MPG{s} and \EG{s} are intimately related.
\begin{Prop}[\cite{brim2011faster}]\label{prop:relation_MPG_EG}
Let $\Gamma$ be an arena. For all threshold $\nu\in\Q$, for all vertices $v\in V$, 
Player~$0$ has a strategy in the \MPG $\Gamma$ that secures value at least $\nu$ from $v$ if and only if, 
for some initial credit $c\in\N$, Player~$0$ has a winning strategy from $v$ in the reweighted \EG $\Gamma^{w-\nu}$.
\end{Prop}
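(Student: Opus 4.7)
The plan is to leverage the memoryless determinacy of both MPGs and EGs, reducing the equivalence to a purely graph-theoretic statement about the cycles reachable from $v$ in the projection graph $G(\sigma_0,\Gamma^{w-\nu})$. After the integer rescaling described before the statement, I may assume $\nu\in\Z$, so that $w'\triangleq w-\nu$ is again integer valued; write $\Gamma'\triangleq\Gamma^{w-\nu}$.

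For the forward direction, suppose Player~0 secures value at least $\nu$ from $v$ in the MPG $\Gamma$; by memoryless determinacy I may pick a positional witness $\sigma_0\in\Sigma^M_0$. By (the non-strict analogue of) Proposition~\ref{prop:reachable_cycle}, every cycle $C$ reachable from $v$ in $G^{\Gamma}_{\sigma_0}$ satisfies $w(C)/|C|\geq \nu$, i.e., $w'(C)\geq 0$ in $\Gamma'$. The key step is then to decompose any finite prefix of a play from $v$ in $G(\sigma_0,\Gamma')$ as a simple path (of length at most $|V|-1$) together with a multiset of cycles, each reachable from $v$ and hence of non-negative $w'$-weight; this shows that every prefix sum is bounded below by the minimum $w'$-weight of a simple path out of $v$, and in particular by $-(|V|-1)W$. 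Taking $c\triangleq(|V|-1)W$ then certifies that $\sigma_0$ is winning for Player~0 in $\Gamma'$ from $v$ with initial credit $c$.

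For the converse, suppose Player~0 wins in $\Gamma'$ from $v$ with some initial credit $c\in\N$; by memoryless determinacy of EGs I may take a positional witness $\sigma_0\in\Sigma^M_0$. No cycle reachable from $v$ in $G(\sigma_0,\Gamma')$ can have strictly negative $w'$-weight: otherwise, Player~1 could first drive the pebble to such a cycle and then loop around it, pushing the running sum arbitrarily far below $-c$ and contradicting that $\sigma_0$ wins with credit $c$. Equivalently, every cycle reachable from $v$ in $G^{\Gamma}_{\sigma_0}$ has average weight at least $\nu$, so by the non-strict version of Proposition~\ref{prop:reachable_cycle} we obtain $\texttt{val}^{\sigma_0}(v)\geq \nu$, hence $\sigma_0$ secures value at least $\nu$ from $v$ in $\Gamma$.

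The main obstacle I expect is the cycle-decomposition argument in the forward direction: upgrading the local condition ``every reachable cycle is non-negative'' to a uniform lower bound on every prefix sum of every Player~1 response requires observing that any walk decomposes as a simple prefix path plus reachable closed walks, each of which can only increase (or preserve) the $w'$-sum. A secondary technicality is passing from the strict ``$>\nu$'' version of Proposition~\ref{prop:reachable_cycle} as stated to the non-strict ``$\geq \nu$'' version used above, which follows from a standard limit argument over the finite set $S_\Gamma$ of candidate values.
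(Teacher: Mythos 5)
The paper offers no proof of this proposition---it is imported verbatim from Brim \emph{et al.}\ (the cited source)---so there is no in-paper argument to compare against; judged on its own, your proof is correct and is the standard one: positional witnesses on both sides (via positional optimality in \MPG{s} and memoryless determinacy of \EG{s}), the cycle criterion of Proposition~\ref{prop:reachable_cycle} upgraded to its non-strict form by the limit argument you indicate, and the decomposition of any finite walk into a simple path plus reachable (hence non-negative) cycles to bound all prefix sums from below. The only nit is the constant: after reweighting by $w-\nu$ and rescaling to integers, the maximum absolute arc weight can exceed $W$ (up to about $2|V|W$), so the credit should be $(|V|-1)$ times \emph{that} quantity rather than $(|V|-1)W$; since the statement only asks for the existence of \emph{some} initial credit $c\in\N$, this does not affect correctness.
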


In this work we are especially interested in the \emph{Minimum Credit Problem} (\MCP) for \EG{s}: 
for each winning starting position $v$, compute the minimum initial credit $c^*=c^*(v)$ 
such that there exists a winning strategy $\sigma_0\in\Sigma^M_0$ for Player~$0$ starting from $v$.
A fast pseudo-polynomial time deterministic procedure for solving \MCP{s} comes from \cite{brim2011faster}.

\begin{Thm}[\cite{brim2011faster}]\label{Thm:VI}
There exists a deterministic algorithm for solving the MCP within $O(|V|\,|E|\,W)$ pseudo-polynomial time, 
on any input \EG $(V, E, w, \langle V_0, V_1\rangle)$. 
\end{Thm}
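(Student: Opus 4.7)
The plan is to prove this by exhibiting a Value-Iteration procedure over the complete lattice of Small Energy-Progress Measures (SEPMs), in the style of Brim et al. First, I would set up the energy-lattice $\mathcal{E}_\Gamma$ as the set of functions $f:V\to\{0,1,\ldots,(|V|-1)W\}\cup\{\top\}$, partially ordered pointwise with $\top$ as the maximum. This is easily seen to be a complete lattice, finite in size, with every element bounded above by $\top$ and below by the constant $\mathbf{0}$. The role of $\top$ is to represent ``infeasible'' vertices, i.e.\ those from which no finite initial credit suffices for Player~0 to win.

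Next I would introduce the local lifting operator $\delta(\cdot,v):\mathcal{E}_\Gamma\to\mathcal{E}_\Gamma$, defined, for each $v\in V$, so as to enforce the one-step energy-preservation condition at $v$. Concretely, for $v\in V_0$ set $\delta(f,v)(v)\triangleq\min_{(v,u)\in E}\max\bigl(0,f(u)-w(v,u)\bigr)$, for $v\in V_1$ replace $\min$ by $\max$, and leave $f(u)$ unchanged for $u\neq v$ (with the convention that the expression equals $\top$ whenever any involved value exceeds $(|V|-1)W$). Each $\delta(\cdot,v)$ is monotone on $\mathcal{E}_\Gamma$, hence, by Knaster--Tarski, the system $f=\bigsqcup_{v\in V}\delta(f,v)$ admits a least fixed point $f^*\in\mathcal{E}_\Gamma$. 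The correctness claim I would then prove is that $f^*(v)=c^*(v)$ for every $v\in V$, with the convention $c^*(v)=\top$ iff $v$ is losing for Player~0. The $\leq$ direction follows by exhibiting a positional strategy $\sigma_0$: at each $v\in V_0$ pick the successor realising the $\min$ in $\delta(f^*,v)$; then along any play consistent with $\sigma_0$, starting with credit $f^*(v_s)$, the running energy stays non-negative, because each transition either preserves or increases the accumulated slack $f^*(\cdot)+\text{credit}$. The $\geq$ direction follows from the memoryless determinacy of EGs applied to the complementary winning strategy of Player~1, together with a descent argument on any hypothetical witness lying strictly below $f^*$.

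For the complexity, I would describe the algorithm operationally as a worklist iteration: maintain $f\in\mathcal{E}_\Gamma$, initialised to $\mathbf{0}$, and a queue $L$ of vertices $v$ that are currently \emph{inconsistent}, meaning $f(v)<\delta(f,v)(v)$. At each step, extract $v\in L$, lift $f(v)\leftarrow\delta(f,v)(v)$, and re-enqueue precisely those predecessors $u\in N^{\text{in}}_\Gamma(v)$ whose consistency may have been broken by the lift. The value of $f$ at each vertex is non-decreasing and bounded above by $(|V|-1)W$ before jumping to $\top$; consequently, the total number of times $v$ is lifted is $O(|V|W)$. Since each lift at $v$ inspects $N^{\text{out}}_\Gamma(v)$ for recomputing $\delta(f,v)(v)$ and $N^{\text{in}}_\Gamma(v)$ for reinsertion, its cost is $O(\texttt{deg}_\Gamma(v))$. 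Summing, we obtain the amortised bound $\sum_{v\in V}\texttt{deg}_\Gamma(v)\cdot O(|V|W)=O(|V|\,|E|\,W)$.

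The main obstacle I foresee is the correctness half, specifically the $\geq$ direction: one must rule out that there exists a measure $g\in\mathcal{E}_\Gamma$ with $g\lneq f^*$ which nevertheless certifies Player~0's victory, despite not being a fixed point of the lifting system. The cleanest way I would handle this is by a potential argument: if Player~0 wins from $v$ with credit $c<f^*(v)$, then iteratively lifting the (pointwise) minimum of $f^*$ and the function $v\mapsto c$ leads to a fixed point strictly below $f^*$, contradicting minimality. Once this is in place, the complexity analysis is a direct amortisation.
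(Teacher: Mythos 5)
The paper does not actually prove this statement: it is imported verbatim from Brim et al., and the surrounding text only describes the Value-Iteration algorithm operationally (the lattice $\C_\Gamma$, the operator $\delta$, the list $L^{\text{inc}}$ of inconsistent vertices, the $O(|N_{\Gamma}^{\text{in}}(v)|)$ update cost) before explicitly deferring the details of the correctness and complexity proofs to the cited reference. Your reconstruction follows exactly that algorithm, and your complexity analysis is sound: each vertex is lifted at most $O(|V|W)$ times because its value is integral, strictly increasing under lifts, and bounded by $(|V|-1)W$ before saturating at $\top$; each lift costs $O(\texttt{deg}_\Gamma(v))$ (granting the counter bookkeeping needed to test a predecessor's consistency in constant time, which is what the $\texttt{cnt}$ array supplies), giving $\sum_{v\in V}\texttt{deg}_\Gamma(v)\cdot O(|V|W)=O(|V|\,|E|\,W)$.

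The genuine gap is in the direction $f^*(v)\preceq c^*(v)$, precisely the step you flag as the main obstacle. Your proposed repair --- start from the pointwise minimum $h$ of $f^*$ and the function assigning $c$ to $v$, iterate the lifts, and obtain a fixed point strictly below $f^*$ --- cannot work. Lifting only increases values, $h\sqsubseteq f^*$, and by monotonicity every iterate satisfies $F^n(h)\sqsubseteq F^n(f^*)=f^*$; hence the iteration converges to a fixed point $g$ with $g\sqsubseteq f^*$, and since $f^*$ is the \emph{least} fixed point this forces $g=f^*$. In other words the iteration simply lifts the value at $v$ back up from $c$ to $f^*(v)$, and no contradiction is produced. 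The standard argument goes the other way: from a memoryless winning strategy $\sigma_0$ of Player~0 from $v$ with finite credit $c$, one builds an explicit SEPM $g$ --- for instance $g(u)$ equal to the largest energy deficit over paths from $u$ in the projection graph $G(\sigma_0,\Gamma)$, which is finite and at most $(|V|-1)W$ because all cycles reachable under $\sigma_0$ are non-negative --- checks that $g$ satisfies the two local SEPM conditions with $g(v)\preceq c$, and then uses the fact that the least fixed point of the lifting operators coincides with the least pre-fixed point, i.e.\ the least SEPM, to conclude $f^*(v)\preceq g(v)\preceq c$. Without some such construction, the minimality of $f^*$ alone tells you nothing about $c^*$.
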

The algorithm mentioned in Theorem~\ref{Thm:VI} is 
the \emph{Value-Iteration} algorithm~\citep{brim2011faster}. 
Its rationale relies on the notion of \emph{Small Energy-Progress Measures} (SEPM{s}).

\subsection{Energy-Lattices of Small Energy-Progress Measures}
Small-Energy Progress Measures are bounded, non-negative and integer-valued 
functions that impose local conditions to ensure global properties on the
arena, in particular, witnessing that Player~0 has a way to enforce 
conservativity (\ie non-negativity of cycles) in the resulting game's graph.
Recovering standard notation, see e.g.~\cite{brim2011faster}, 
let us denote $\C_\Gamma=\{n\in\N\mid n\leq (|V|-1) W\}\cup\{\top\}$ and let $\preceq$ be the total order on 
$\C_\Gamma$ defined as: $x\preceq y$ \textit{iff} either $y=\top$ or $x,y\in\N$ and $x\leq y$.
In order to cast the minus operation to range over $\C_{\Gamma}$, 
let us consider an operator $\ominus:\C_\Gamma\times\Z\rightarrow \C_\Gamma$ defined as follows: 
\[
a\ominus b \triangleq \left\{ 
	\begin{array}{ll}
		\max(0, a-b) & ,  \text{ if } a\neq \top \text{ and } a-b\leq (|V|-1)W ; \\
		a\ominus b = \top & , \text{ otherwise.} \\
	\end{array}\right.
\]
Given an \EG $\Gamma$ on vertex set $V = V_0 \cup V_1$, a function $f:V\rightarrow\C_\Gamma$ is 
a \emph{Small Energy-Progress Measure} (SEPM) for $\Gamma$ \textit{iff} the following two conditions are met: 
\begin{enumerate}
\item if $v\in V_0$, then $f(v)\succeq f(v') \ominus w(v,v')$ for \emph{some} $(v, v')\in E$;
\item if $v\in V_1$, then $f(v)\succeq f(v') \ominus w(v,v')$ for \emph{all} $(v, v')\in E$.
\end{enumerate}

The values of a SEPM, \ie the elements of the image $f(V)$, are called the \emph{energy levels} of $f$.
It is worth to denote by $V_f=\{v\in V\mid f(v)\neq\top\}$ the set of vertices having finite energy.
Given a SEPM $f:V\rightarrow \C_\Gamma$ and a vertex $v\in V_0$, 
an arc $(v, v')\in E$ is said to be \emph{compatible} with $f$ whenever $f(v)\succeq f(v')\ominus w(v,v')$; 
otherwise $(v, v')$ is said to be \emph{incompatible} with $f$. Moreover, 
a positional strategy $\sigma_0\in\Sigma^M_0$ is said to be \emph{compatible} with $f$ whenever: $\forall^{v\in V_0}$  
if $\sigma_0(v)=v'$ then $(v,v')\in E$ is compatible with $f$; otherwise, $\sigma_0$ is \emph{incompatible} with $f$.

It is well-known that the family of all the SEPMs of a given $\Gamma$ forms a complete (finite) lattice, 
which we denote by $\mathcal{E}_\Gamma$ call it the \emph{Energy-Lattice} of $\Gamma$. Therefore, we shall consider: 
\[\mathcal{E}_{\Gamma}\triangleq \big(\{f:V\rightarrow \C_\Gamma \mid f \text{ is SEPM of } \Gamma\}, \sqsubseteq),\] 
where for any two SEPMs $f,g$ define $f \sqsubseteq g$ \textit{iff} $\forall{v\in V} f(v)\preceq g(v)$.
Notice that, whenever $f$ and $g$ are SEPM{s}, then so is the \emph{minimum function} defined as: 
$\forall^{v\in V} h(v)\triangleq \min\{f(v), g(v)\}$.
This fact allows one to consider the \emph{least} SEPM, namely, the unique SEPM $f^*:V\rightarrow \C_\Gamma$ such that, 
for any other SEPM $g:V\rightarrow \C_\Gamma$, the following holds: $\forall^{v\in V} f^*(v)\preceq g(v)$.
Thus, $\mathcal{E}_\Gamma$ is a complete lattice. So, $\mathcal{E}_\Gamma$ enjoys of \emph{Knaster–Tarski Theorem}, 
which states that the set of fixed-points of a monotone function on a complete lattice is again a complete lattice.

Also concerning SEPMs, we shall rely on the following lemmata. 
The first one relates SEPMs to the winning region $\W_0$ of Player~0 in \EG{s}.
\begin{Prop}[\cite{brim2011faster}]\label{prop:least_energy_prog_measure}
Let $\Gamma$ be an \EG. Then the following hold. 
\begin{enumerate}
\item If $f$ is any SEPM of the \EG $\Gamma$ and $v\in V_{f}$, 
then $v$ is a winning starting position for Player~$0$ in the \EG $\Gamma$.
Stated otherwise, $V_f\subseteq \W_0$;
\item If $f^*$ is the least SEPM of the \EG $\Gamma$, 
and $v$ is a winning starting position for Player~$0$ in the \EG $\Gamma$, then $v\in V_{f^*}$. 
Thus, $V_{f^*}=\W_0$.
\end{enumerate}
\end{Prop}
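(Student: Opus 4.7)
The plan is to prove the two items by exhibiting, in each case, a concrete link between SEPMs and Player~0 strategies in the \EG $\Gamma$.

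For item~(1), I would build a positional strategy $\sigma_0\in\Sigma^M_0$ by selecting, at every $u\in V_0$, one arc $(u,u')\in E$ that is compatible with $f$; such an arc exists by the SEPM condition. Playing $\sigma_0$ from the initial position $v\in V_f$ with initial credit $c\triangleq f(v)\in\N$, I would establish by induction on the play length the invariant that after $j$ rounds the pebble sits on some vertex $u$ with $f(u)\neq\top$ and $c+\sum_{i=0}^{j-1}w(v_i,v_{i+1})\succeq f(u)\succeq 0$. The inductive step invokes the SEPM condition at $u$ (on a compatible arc if $u\in V_0$, and on every arc if $u\in V_1$), which, read as $f(u)+w(u,u')\succeq f(u')$, turns the invariant at $u$ into the same invariant at $u'$. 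Finiteness of $f(u')$ along the induction follows because $f(u)\neq\top$ forces $f(u')\ominus w(u,u')\neq\top$, so the play never leaves $V_f$; hence the accumulated energy stays non-negative forever and $v$ is a winning starting position for Player~0.

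For item~(2), I plan to produce a SEPM $g\in\mathcal{E}_\Gamma$ with $V_g=\W_0$; then minimality of $f^*$ gives $f^*\sqsubseteq g$, whence $V_{f^*}\supseteq V_g=\W_0$, and the reverse inclusion comes for free from item~(1). To construct $g$, I would invoke the memoryless determinacy of \EG{s} to fix a uniform positional winning strategy $\sigma_0^*\in\Sigma^M_0$ for Player~0 on $\W_0$. The structural observation that makes everything work is that in the projection $G(\sigma_0^*,\Gamma)$ restricted to $\W_0$ every reachable cycle has non-negative total weight (otherwise Player~1 would drive the energy to $-\infty$), and $\W_0$ is closed under Player~1's moves (otherwise Player~1 would escape into a losing vertex, contradicting $v\in\W_0$). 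Setting
\[
g(v)\triangleq \max\Big\{0,\ -\min_{p}\textstyle\sum_{e\in p}w(e)\Big\}\ \text{for } v\in\W_0,\qquad g(v)\triangleq \top\ \text{otherwise},
\]
where $p$ ranges over finite paths from $v$ in $G(\sigma_0^*,\Gamma)$, I would argue that non-negativity of cycles lets the minimum be attained on a simple path, so $g(v)\leq (|V|-1)W$ and thus $g(V)\subseteq\C_\Gamma$. Verifying the SEPM inequalities is then a direct unfolding of the recursion defining $g$: for $v\in V_0\cap\W_0$ the arc $(v,\sigma_0^*(v))$ witnesses the existential clause; for $v\in V_1\cap\W_0$ every successor lies in $\W_0$ and the worst-case recursion over successors witnesses the universal clause; while for $v\notin\W_0$ the condition holds trivially since $g(v)=\top$.

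The main obstacle will be the careful bookkeeping around the saturating operator $\ominus$ and the value $\top$: the induction in~(1) must preserve finiteness of the visited energy levels along the entire play (which hinges on the observation that $f(u)\neq\top$ propagates to $f(u')\neq\top$ through the SEPM inequality), and the construction of $g$ in~(2) must stay within the codomain $\C_\Gamma$, which is exactly where the non-negativity of cycles in $G(\sigma_0^*,\Gamma)$ is used to cap $g$ at $(|V|-1)W$.
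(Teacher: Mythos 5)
The paper does not actually prove this statement: it is imported verbatim from \cite{brim2011faster} as a known fact, with no proof given, so there is no in-paper argument to compare against. Your proof is correct and is essentially the classical one from the energy-games literature. Item~(1) is the standard ``play compatible arcs, start with credit $f(v)$'' argument, and you correctly handle the two points where it could go wrong: the propagation of finiteness ($f(u)\neq\top$ together with $f(u)\succeq f(u')\ominus w(u,u')$ forces $f(u')\neq\top$, since $\top\ominus b=\top$), and the reading of the saturating inequality as $f(u)+w(u,u')\geq f(u')$ once both sides are finite. Item~(2) constructs a SEPM $g$ with $V_g=\W_0$ from a uniform positional winning strategy and then invokes minimality of $f^*$; your $g$ is exactly the least feasible potential $\pi^*_{G(\Gamma,\sigma_0^*)}$ of the (conservative, $\W_0$-closed) projection graph, which is the same object the paper itself later builds in Section~\ref{sect:energy} (Lemma~\ref{lem:pre_main_thm} and the $\Delta^M_0(f,\Gamma)$ machinery), so the construction is consistent with the paper's own framework. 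The two supporting claims you flag --- non-negativity of cycles reachable in the projection restricted to $\W_0$, and closure of $\W_0$ under Player~1's moves --- are both needed and both argued correctly (the former to bound $g$ by $(|V|-1)W$ so that $g$ lands in $\C_\Gamma$ and the $\ominus$ never saturates to $\top$ on $\W_0$, the latter so that the universal SEPM clause at $V_1$-vertices can be verified successor by successor). I see no gap.
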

The following bound holds on the energy-levels of any SEPM (by definition of $\C_{\Gamma}$). 
\begin{Prop}\label{prop:lepm_equals_mincredit}
Let $\Gamma$ be an \EG. Let $f$ be any SEPM of $\Gamma$.

Then, for every $v\in V$ either $f(v)=\top$ or $0\leq f(v)\leq (|V|-1)W$.
\end{Prop}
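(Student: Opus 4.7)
The plan is to observe that this proposition is essentially a direct consequence of the definition of the codomain $\C_\Gamma$ into which any SEPM maps; no game-theoretic reasoning is really needed. The hint ``(by definition of $\C_{\Gamma}$)'' in the statement itself confirms this reading.

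Concretely, I would first recall that by definition a SEPM is a function $f:V\rightarrow \C_{\Gamma}$, where the codomain was fixed as
\[
\C_{\Gamma}=\{n\in\N\mid n\leq (|V|-1)W\}\cup\{\top\}.
\]
Then for an arbitrary $v\in V$, the value $f(v)$ lies in $\C_{\Gamma}$, so there are exactly two cases: either $f(v)=\top$, which gives the first alternative of the conclusion, or $f(v)\in\N$ with $f(v)\leq (|V|-1)W$. In the latter case, $f(v)\in\N$ forces $f(v)\geq 0$ since $\N$ denotes the non-negative integers in the notation fixed in Section~\ref{sect:background}, and the upper bound $(|V|-1)W$ is already built into the definition of $\C_{\Gamma}$. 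Combining, we get $0\leq f(v)\leq (|V|-1)W$, which is exactly the second alternative.

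There is no real obstacle here: the only thing to be careful about is that the non-negativity is not an extra property of SEPMs that must be proved from the inequalities $f(v)\succeq f(v')\ominus w(v,v')$, but rather follows from the choice of codomain; the purpose of the operator $\ominus$, which returns $\max(0, a-b)$ (or $\top$ when the result would exceed $(|V|-1)W$), is precisely to keep all finite SEPM values inside the interval $[0,(|V|-1)W]$. Thus a two- or three-line proof unfolding these definitions suffices, and no induction, fixed-point argument, or appeal to Proposition~\ref{prop:least_energy_prog_measure} is required.
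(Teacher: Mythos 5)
Your proposal is correct and matches the paper's own justification: the paper gives no separate proof, simply noting that the bound holds ``by definition of $\C_{\Gamma}$'', which is precisely the unfolding of $\C_{\Gamma}=\{n\in\N\mid n\leq (|V|-1)W\}\cup\{\top\}$ that you carry out. Your additional remark that non-negativity comes from the codomain (enforced by the $\ominus$ operator) rather than from the SEPM inequalities is a fair and accurate clarification.
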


\subsection{The Value-Iteration Algorithm for solving MCPs in EGs} 
In order to resolve MCPs in \EG{s}, the well-known  \emph{Value-Iteration}~\citep{brim2011faster} algorithm is employed.
Given an \EG $\Gamma$ as input, the Value-Iteration aims to compute the least SEPM $f^*$ of $\Gamma$.
This simple procedure basically relies on an \emph{energy-lifting operator} $\delta$.
Given $v\in V$, the energy-lifting operator $\delta(\cdot, v): 
[V\rightarrow \C_{\Gamma}]\rightarrow [V\rightarrow \C_\Gamma]$ is defined by $\delta(f,v)\triangleq g$, where:
\[
g(u)\triangleq \left\{
\begin{array}{ll}
f(u) & \text{ if } u\neq v \\
\min\{f(v') \ominus w(v,v') \mid v'\in N_{\Gamma}^{\text{out}}(v)\} & \text{ if } u=v\in V_0 \\
\max\{f(v') \ominus w(v,v') \mid v'\in N_{\Gamma}^{\text{out}}(v)\} & \text{ if } u=v\in V_1 \\
\end{array}
\right.
\]

We also need the following definition. Given a function $f:V\rightarrow \C_\Gamma$, we say that 
$f$ is \emph{inconsistent} in $v$ whenever one of the following two holds: 

1. $v\in V_0$ and  $\forall^{v'\in N_{\Gamma}^{\text{out}}(v)}\, f(v) \prec f(v') \ominus w(v,v')$;

2.  $v\in V_1$ and $\exists^{v'\in N_{\Gamma}^{\text{out}}(v)}\, f(v) \prec f(v') \ominus w(v,v')$.

In that case, we also say that $v$ is inconsistent \wrt $f$ in $\Gamma_{i,j}$.
 
To start with, the Value-Iteration algorithm initializes $f$ to the constant zero function, \ie $\forall^{v\in V} f(v)=0$. 
Furthermore, the procedure maintains a list $L^{\text{inc}}$ of vertices in order to witness the inconsistencies of $f$.
Initially, $v\in V_0\cap L^{\text{inc}}$ \textit{iff} all arcs going out of $v$ are negative, 
while $v\in V_1\cap L^{\text{inc}}$ if and only 
if $v$ is the source of at least one negative arc. Notice that checking the above conditions takes time $O(|E|)$.

While $L^{\text{inc}}$ is nonempty, the algorithm picks a vertex $v$ from $L^{\text{inc}}$ and performs the following:
\begin{enumerate}
\item Apply the lifting operator $\delta(f,v)$ to $f$ in order to resolve the inconsistency of $f$ in $v$;
\item Insert into $L^{\text{inc}}$ all vertices $u\in N_{\Gamma}^{\text{in}}(v)\setminus L^{\text{inc}}$ 
witnessing a new inconsistency due to the increase of $f(v)$. (Here, the same vertex can’t occur twice in $L^{\text{inc}}$.)
\end{enumerate}
The algorithm terminates when $L^{\text{inc}}$ is empty. This concludes the description of the Value-Iteration algorithm.
As shown in~\cite{brim2011faster}, 
the update of $L^{\text{inc}}$ following an application of the lifting operator $\delta(f,v)$ requires $O(|N_{\Gamma}^{\text{in}}(v)|)$ time. 
Moreover, a single application of the lifting operator $\delta(\cdot, v)$ takes $O(|N_{\Gamma}^{\text{out}}(v)|)$ time at most.
This implies that the algorithm can be implemented so that it will always halt within $O(|V||E|W)$ time  
(the reader is referred to \cite{brim2011faster} for all the details of the correctness and complexity proofs).

\begin{Rem} 
The Value-Iteration procedure lends itself to the following basic generalization, which is of a pivotal importance for us. 
Let $f^*$ be the least SEPM of the \EG $\Gamma$. Recall that, as a first step, 
the Value-Iteration algorithm initializes $f$ to be the constant zero function. Here, 
we remark that it is not necessary to do that really. 
Indeed, it is sufficient to initialize $f$ to be any function $f_0$ which bounds $f^*$ from below, that is to say, 
to initialize $f$ to any $f_0:V\rightarrow\C_\Gamma$ such that $\forall^{v\in V}\, f_0(v) \preceq f^*(v)$.
Soon after, $\L$ can be initialized in a natural way: just insert $v$ into $L^{\text{inc}}$ \textit{iff} $f_0$ is inconsistent at $v$.
This initialization still requires $O(|E|)$ time and it doesn't affect the correctness of the procedure.
\end{Rem}

\section{Values and Optimal Strategies from Reweightings}\label{section:values}
\paragraph*{Values and Farey sequences} 
Recall that each value $\val{\Gamma}{v}$ is contained within the following set of rational numbers:
$S_{\Gamma}=\left\{ N/D \suchthat D\in [1, |V|],\, N\in [-D\cdot W, D\cdot W] \right\}$. 
Let us introduce some notation in order to handle $S_{\Gamma}$ in a way that is suitable for our purposes.
Firstly, we write every $\nu\in S_{\Gamma}$ as $\nu=i+F$, 
where $i=i_\nu=\lfloor \nu \rfloor$ is the integral and $F=F_\nu=\{\nu\}=\nu-i$ is the fractional part. 
Notice that $i\in [-W, W]$ and that $F\in\Q$ is non-negative with denominator at most $|V|$.

Therefore, it is worthwhile to consider the \emph{Farey sequence} $\F_n$ of order $n=|V|$.
This is the increasing sequence of all irreducible fractions from the (rational) interval 
$[0,1]$ with denominators less than or equal to $n$. In the rest of this paper, 
$\F_n$ denotes the following sorted set:
\[ \F_n=\left\{ N/D \suchthat 0 \leq N \leq D\leq n, \gcd(N,D)=1 \right\}. \]
Farey sequences have numerous and interesting properties, in particular, many algorithms for 
generating the entire sequence $\F_n$ in time $O(n^2)$ are known in the literature~\cite{GKP:94}. 
The above mentioned quadratic running time is optimal, since $\F_n$ has $s(n) = \Theta(n^2)$ many elements.
We shall assume that $F_0, \ldots, F_{s-1}$ is an increasing ordering of $\F_n$, 
so that $\F_n=\{F_j\}_{j=0}^{s-1}$ and $F_j < F_{j+1}$ for every $j$. Notice that $F_0=0$ and $F_{s-1}=1$. 

For example, $\F_5=\{0, \frac{1}{5}, \frac{1}{4}, \frac{1}{3}, \frac{2}{5}, 
		\frac{1}{2}, \frac{3}{5}, \frac{2}{3}, \frac{3}{4}, \frac{4}{5}, 1 \}$. 

We will be interested in generating the sequence $F_0, \ldots, F_{s-1}$, one term after another, iteratively and efficiently.
As mentioned in \cite{PaPa09}, combining several properties satisfied by the Farey sequence, 
one can get a trivial iterative algorithm, which generates the next term $F_j=N_j/D_j$ of $\F_n$ based on the previous two:
\begin{align*}
	N_j & \leftarrow \left\lfloor \frac{D_{j-2}+n}{D_{j-1}} \right\rfloor \cdot N_{j-1} - N_{j-1}; &  
	D_j & \leftarrow \left\lfloor \frac{D_{j-2}+n}{D_{j-1}} \right\rfloor \cdot D_{j-1} - D_{j-1}. 
\end{align*}
Given $F_{j-2},F_{j-1}$, this computes $F_j$ in $O(1)$ time and space. It will perfectly fit our needs.

At this point, it is worth observing that $S_{\Gamma}$ can be represented as follows, this will be convenient in a while:
\[S_{\Gamma} = [-W, W) + \F_{|V|} = \left\{i+F_j \suchthat i\in [-W, W),\, j\in [0, s-1]\right\}. \]

\paragraph{A Characterization of Values in MPG{s} by reweighted EG{s}} 
It is now recalled a suitable characterization of optimal values in \MPG{s} in terms of winning regions.
\begin{Thm}[\cite{CR16}]\label{Thm:transition_opt_values}
Given an \MPG $\Gamma = ( V, E, w, \langle V_0, V_1 \rangle )$, let us consider the reweightings:
\[ \Gamma_{i,j}\cong\Gamma^{w-i-F_j} ,\,\text{ for any } i\in [-W, W] \text{ and } j\in [1, s-1], \]
where $s=|\F_{|V|}|$ and $F_j$ is the $j$-th term of the Farey sequence $\F_{|V|}$.

Then, the following holds: 
\[\val{\Gamma}{v} = i + F_{j-1} \textit{ iff } v\in \W_0(\Gamma_{i,j-1})\cap \W_1(\Gamma_{i, j}).\]
\end{Thm}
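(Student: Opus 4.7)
The plan is to derive both directions from Proposition~\ref{prop:relation_MPG_EG} applied at the two thresholds $\nu_1 \triangleq i + F_{j-1}$ and $\nu_2 \triangleq i + F_j$. Recall that $\Gamma_{i, j-1}\cong\Gamma^{w-\nu_1}$ and $\Gamma_{i, j}\cong\Gamma^{w-\nu_2}$ (with the integer rescaling trick $\Gamma^{w-q}\cong \Gamma^{Dw-N}$ of Section~\ref{sect:background}), so the winning regions of the reweighted \EG{s} are well defined. The memoryless determinacy of \EG{s} together with Proposition~\ref{prop:relation_MPG_EG} yields the two working facts
\[ \val{\Gamma}{v}\geq \nu \;\iff\; v\in\W_0(\Gamma^{w-\nu}), \qquad \val{\Gamma}{v}< \nu \;\iff\; v\in\W_1(\Gamma^{w-\nu}), \]
for every threshold $\nu\in\Q$ and every vertex $v\in V$.

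For the $(\Rightarrow)$ direction I would assume $\val{\Gamma}{v}=i+F_{j-1}$. Then $\val{\Gamma}{v}\geq \nu_1$ immediately gives $v\in \W_0(\Gamma_{i,j-1})$ by the first fact, while $F_{j-1}<F_j$ yields $\val{\Gamma}{v}<\nu_2$ and hence $v\in \W_1(\Gamma_{i,j})$ by the second.

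For $(\Leftarrow)$ I would assume $v\in \W_0(\Gamma_{i,j-1})\cap \W_1(\Gamma_{i,j})$ and combine the same two facts to sandwich the value as
\[ i+F_{j-1}\;\leq\;\val{\Gamma}{v}\;<\;i+F_j. \]
The last step is to pin down which element of $S_\Gamma$ lies in this half-open interval. Since $\val{\Gamma}{v}\in S_\Gamma = [-W,W)+\F_{|V|}$, write $\val{\Gamma}{v}=N/D$ with $D\in[1,|V|]$. The inequality above forces $\val{\Gamma}{v}-i\in[F_{j-1},F_j)\subseteq[0,1)$, so $\val{\Gamma}{v}-i$ is a rational in $[0,1)$ whose reduced denominator is at most $|V|$ and therefore belongs to $\F_{|V|}$. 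Because $F_{j-1}$ and $F_j$ are \emph{consecutive} terms of $\F_{|V|}$, no element of $\F_{|V|}$ lies strictly between them, and hence $\val{\Gamma}{v}-i = F_{j-1}$, as required.

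The only delicate point is this last pin-down step: it rests on both the characterization $S_\Gamma=[-W,W)+\F_{|V|}$ and the consecutivity of $F_{j-1},F_j$ in the Farey sequence $\F_{|V|}$. The boundary index $i=W$ is vacuous, since $F_{j-1}\geq 0$ with $j\geq 1$ and the value must be bounded by $W$, forcing $F_{j-1}=0$ (i.e.\ $j=1$) and $\val{\Gamma}{v}=W$; the argument above still applies. Apart from this minor care, everything reduces to two invocations of Proposition~\ref{prop:relation_MPG_EG} plus an elementary number-theoretic observation about consecutive Farey fractions.
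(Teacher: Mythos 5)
Your proof is correct, and it uses exactly the ingredients the paper assembles for this purpose: Proposition~\ref{prop:relation_MPG_EG} applied at the two consecutive thresholds $i+F_{j-1}$ and $i+F_j$ (together with memoryless determinacy to turn it into the two-sided equivalences), plus the representation $S_\Gamma=[-W,W)+\F_{|V|}$ and the consecutivity of Farey terms to pin the sandwiched value down to $i+F_{j-1}$. The paper itself does not reproduce the argument --- it defers to [Theorem~3 in \cite{CR16}] --- but your route is the intended one, and your handling of the boundary index and of the integer rescaling $\Gamma^{w-q}\cong\Gamma^{Dw-N}$ covers the only delicate points.
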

\begin{proof}
See the proof of [Theorem~3 in \cite{CR16}].
\end{proof}

\subsection{A Description of Optimal Positional Strategies in \MPG{s} from reweighted EG{s}}
We provide a sufficient condition, for a positional strategy to be optimal, which is expressed in terms of reweighted \EG{s} and their SEPM{s}.
\begin{Thm}[\cite{CR16}]\label{Thm:pos_opt_strategy} Let $\Gamma = (V , E, w, \langle V_0, V_1 \rangle)$ be an \MPG.
For each $u\in V$, consider the reweighted \EG $\Gamma_u \cong \Gamma^{w-\val{\Gamma}{u}}$. 
Let $f_{u}:V\rightarrow\C_{\Gamma_u}$ be any SEPM of $\Gamma_u$ such that $u\in V_{f_u}$ (\ie $f_u(u)\neq\top$). 
Moreover, we assume: $f_{u_1}=f_{u_2}$ whenever $\val{\Gamma}{u_1}=\val{\Gamma}{u_2}$.

When $u\in V_0$, let $v_{f_u}\in N_{\Gamma}^{\text{out}}(u)$ be any vertex such that $(u, v_{f_u})\in E$ is compatible with $f_u$ in \EG $\Gamma_u$,  
and consider the positional strategy $\sigma^*_0\in\Sigma^M_{0}$ defined as follows: $\forall^{u\in V_0}\, \sigma^*_0(u) \triangleq v_{f_u}$.

Then, $\sigma^*_0$ is an optimal positional strategy for Player~$0$ in the \MPG $\Gamma$.
\end{Thm}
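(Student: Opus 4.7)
The plan is to reduce the claim to a cycle-weight analysis via Proposition~\ref{prop:reachable_cycle}: to prove $\texttt{val}^{\sigma_0^*}(v)\geq\val{\Gamma}{v}$ for every $v\in V$, it suffices (using Proposition~\ref{prop:reachable_cycle} and the finiteness of $S_\Gamma$) to show that every cycle $C$ reachable from $v$ in the projection graph $G(\sigma_0^*,\Gamma)$ satisfies $w(C)/|C|\geq\val{\Gamma}{v}$. The proof then hinges on confining any such cycle to a single ergodic class of common value $\nu^*\geq\val{\Gamma}{v}$, so that a single SEPM of $\Gamma^{w-\nu^*}$ governs a telescoping estimate around the cycle.

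The first building block I would establish is that $\val{\Gamma}{\cdot}$ is non-decreasing along every arc of $G(\sigma_0^*,\Gamma)$. For a Player~0 arc $(u,v')$ with $v'=\sigma_0^*(u)$, compatibility of this arc with $f_u$ together with $f_u(u)\neq\top$ forces $f_u(v')\neq\top$ via the definition of $\ominus$; Proposition~\ref{prop:least_energy_prog_measure} then places $v'\in V_{f_u}\subseteq\W_0(\Gamma^{w-\val{\Gamma}{u}})$, and Proposition~\ref{prop:relation_MPG_EG} upgrades this to $\val{\Gamma}{v'}\geq\val{\Gamma}{u}$. For a Player~1 arc $(u,v')$, Theorem~\ref{thm:ergodic_partition} directly excludes $\val{\Gamma}{v'}<\val{\Gamma}{u}$, since Player~1 would otherwise own an arc leading from a strictly higher- to a strictly lower-value ergodic class.

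The monotonicity above forces any cycle $C$ in $G(\sigma_0^*,\Gamma)$ to lie inside a single ergodic class $C^*$ of common value $\nu^*$ (the values around a closed loop can only weakly increase, hence must be constant), and along any finite path from $v$ to $C$ values are non-decreasing, giving $\nu^*\geq\val{\Gamma}{v}$. At this point I would invoke the hypothesis ``$f_{u_1}=f_{u_2}$ whenever $\val{\Gamma}{u_1}=\val{\Gamma}{u_2}$'': let $f^*$ denote the common SEPM attached to every vertex of $C^*$, a SEPM of $\Gamma^{w-\nu^*}$ finite on all of $C$. On every arc $(u_\ell,u_{\ell+1})$ of $C$, compatibility of $\sigma_0^*$ at Player~0 vertices and the SEPM axiom at Player~1 vertices yield $f^*(u_\ell)\succeq f^*(u_{\ell+1})\ominus(w(u_\ell,u_{\ell+1})-\nu^*)$; the left-hand side being finite, the right-hand side is not $\top$, and unfolding the definition of $\ominus$ gives the ordinary-arithmetic inequality $w(u_\ell,u_{\ell+1})-\nu^*\geq f^*(u_{\ell+1})-f^*(u_\ell)$. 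Summing round $C$ telescopes the right-hand side to $0$, hence $w(C)\geq|C|\,\nu^*\geq|C|\,\val{\Gamma}{v}$, and Proposition~\ref{prop:reachable_cycle} closes the argument.

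The step I expect to be the main obstacle is the structural confinement in the third paragraph: simultaneously ensuring, via Theorem~\ref{thm:ergodic_partition} for Player~1 arcs and via Propositions~\ref{prop:least_energy_prog_measure}--\ref{prop:relation_MPG_EG} for Player~0 arcs, that values do not strictly decrease anywhere in $G(\sigma_0^*,\Gamma)$, and then exploiting the hypothesis on the $f_u$'s to identify a single SEPM controlling the whole cycle. Once that bookkeeping is in place, the $\ominus$-telescoping is routine, as is the translation from a cycle-weight bound to an optimality statement via Proposition~\ref{prop:reachable_cycle}.
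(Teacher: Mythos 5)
Your argument is correct: the monotonicity of $\val{\Gamma}{\cdot}$ along projection arcs (via Propositions~\ref{prop:least_energy_prog_measure} and~\ref{prop:relation_MPG_EG} for Player~0 arcs, Theorem~\ref{thm:ergodic_partition} for Player~1 arcs), the confinement of cycles to a single ergodic class, and the $\ominus$-telescoping of the common SEPM around the cycle followed by Proposition~\ref{prop:reachable_cycle} is exactly the argument of the proof the paper defers to (Theorem~4 of the cited reference~\citep{CR16}). No gaps; the only detail you gloss over, the integer rescaling of $w-\nu^*$ by the denominator of $\nu^*$, is harmless since the telescoping sum is homogeneous in that factor.
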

\begin{proof}
See the proof of [Theorem~4 in \cite{CR16}].
\end{proof}

\begin{Rem}\label{Rem:pos_opt_strategy} Notice that Theorem~\ref{Thm:pos_opt_strategy} holds, in particular, 
when $f_u$ is the least SEPM $f^*_u$ of the reweighted \EG $\Gamma_u$. 
This is because $u\in V_{f^*_u}$ always holds for the least SEPM $f^*_u$ of the \EG $\Gamma_u$:
indeed, by Proposition~\ref{prop:relation_MPG_EG} and by definition of $\Gamma_u$, 
then $u$ is a winning starting position for Player~0 in the \EG $\Gamma_u$ (for some initial credit);
thus, by Proposition~\ref{prop:least_energy_prog_measure}, it follows that $u\in V_{f^*_u}$.
\end{Rem}

\section{A Faster $O(|V|^2 |E|\, W)$-Time Algorithm for \MPG{s} by Jumping through Reweighted EGs}\label{sect:algorithm}
This section offers an $O(|E|\log |V|) + 
	\Theta\big(\sum_{v\in V}\texttt{deg}_\Gamma(v) \cdot \ell_\Gamma^1(v)\big)=O(|V|^2 |E|\, W)$ 
time algorithm for solving the Value Problem and Optimal Strategy Synthesis in \MPG{s} 
$\Gamma = (V, E, w, \langle V_0, V_1\rangle)$, where $W\triangleq\max_{e\in E}|w_e|$; 
it works with $\Theta(|V|+|E|)$ space. Its name is Algorithm~\ref{algo:solve_mpg}.

In order to describe it in a suitable way, let us firstly recall some notation. 
Given an \MPG $\Gamma$, we shall consider the following reweightings:
\[ \Gamma_{i,j} \cong \Gamma^{w-i-F_j},\,\text{ for any } i\in [-W, W] \text{ and } j\in [1, s-1], \] 
where $s\triangleq |\F_{|V|}|$, and $F_j$ is the $j$-th term of $\F_{|V|}$.

Assuming $F_j=N_j/D_j$ for some (co-prime) $N_j,D_j\in \N$, we work with the following weights: 
\[
	w_{i,j}\triangleq w-i-F_j = w-i-N_j/D_j; \;\;\;\;\;\;\;\;\;\;\; 
	w'_{i,j}\triangleq D_j\, w_{i,j} = D_j\, (w-i) - N_j. 
\]
Recall $\Gamma_{i,j}\triangleq \Gamma^{w'_{i,j}}$ and $\forall^{e\in E} w'_{i,j}(e)\in \Z$. 
Notice, since $F_1 < \ldots < F_{s-1}$ is monotone increasing, 
$\{w_{i,j}\}_{i,j}$ can be ordered (inverse)-lexicographically \wrt $(i,j)$; 
\ie $w_{(i,j)}>w_{(i',j')}$ \textit{iff} either: $i<i'$, or $i=i'$ and $j<j'$; 
\eg $w_{W^-, 1} > w_{W^-, 2} > \ldots > w_{W^-, s-1} >\ldots > w_{W^+-1, s-1} > w_{W^+, s-1}$. 
Also, we denote the least-SEPM of the reweighted \EG $\Gamma_{i,j}$ by $f^*_{w'_{i,j}}:V\rightarrow \C_{\Gamma_{i,j}}$.
In addition, $f^*_{i,j}:V\rightarrow\Q$ denotes the \emph{rational-scaling} of $f^*_{w'_{i,j}}$, 
which is defined as: $\forall^{v\in V} f^*_{i,j}(v)\triangleq \frac{1}{D_j}\cdot f^*_{w'_{i,j}}(v)$. 
Finally, if $f$ is any SEPM of the \EG $\Gamma_{i,j}$, 
then $\text{Inc}(f,i,j)\triangleq\{v \in V\mid v \text{ is inconsistent \wrt } f \text{ in } \Gamma_{i,j}\}$.

\subsection{Description of Algorithm~\ref{algo:solve_mpg}} 
\textit{Outline.} Given an input arena $\Gamma=(V, E, w, \langle V_0, V_1 \rangle)$, 
Algorithm~\ref{algo:solve_mpg} aims at returning a tuple $(\W_0, \W_1, \nu, \sigma^*_0)$ where: 
$\W_0$ is the winning set of Player~$0$ in the \MPG $\Gamma$, and $\W_1$ is that of Player~$1$; 
$\nu:V\rightarrow S_\Gamma$ maps each starting position $v_s\in V$ to $\val{\Gamma}{v_s}$; 
finally, $\sigma^*_0:V_0\rightarrow V$ is an optimal positional strategy for Player $0$ in the \MPG $\Gamma$. 

Let $W^-\triangleq \min_{e\in E} w_e$ and $W^+\triangleq \max_{e\in E} w_e$. 
The first aspect underlying Algorithm~\ref{algo:solve_mpg} is that of ordering $[W^-, W^+]\times [1, s-1]$ lexicographically, 
by considering the above mentioned (decreasing) sequence of weights: 
\[ \rho  : [W^-, W^+]\times [1, s-1] \rightarrow \Z^E : (i,j) \mapsto w_{i,j}, \]
\[ \rho  : w_{W^-, 1} > w_{W^-, 2} > \ldots > w_{W^-, s-1} > w_{W^-+1, 1} > 
		w_{W^-+1, 2} > \ldots  > w_{W^+-1, s-1} > \ldots > w_{W^+, s-1}, \]
then, to rely on Theorem~\ref{Thm:transition_opt_values}, at each step of $\rho$, 
testing whether some \emph{transition of winning regions} occurs.  
At the generic $(i,j)$-th step of $\rho$, we run a Value-Iteration \citep{brim2011faster} 
in order to compute the least-SEPM of $\Gamma_{i,j}$,   
and then we check for every $v\in V$ whether $v$ is winning for Player~$1$ 
\wrt the \emph{current} weight $w_{i,j}$ (\ie \wrt $\Gamma_{i,j}$), 
meanwhile recalling whether $v$ was winning for Player~$0$ \wrt 
the (immediately, inverse-lex) \emph{previous} weight $w_{\texttt{prev}_{\rho}(i,j)}$ (\ie \wrt $\Gamma_{\texttt{prev}_{\rho}(i,j)}$). 
This step relies on Proposition~\ref{prop:least_energy_prog_measure}, 
as in fact $\W_0(\Gamma_{\texttt{prev}_{\rho}(i,j)})=V_{f^*_{\texttt{prev}_{\rho}(i,j)}}$ 
	and $\W_1(\Gamma_{i,j})=V\setminus V_{f^*_{i,j}}$.

If a transition occurs, say for some $\hat{v} \in \W_0(\Gamma_{\texttt{prev}_{\rho}(i,j)}) \cap \W_1(\Gamma_{i,j})$, 
then $\val{\Gamma}{\hat{v}}$ can be computed easily by relying on Theorem~\ref{Thm:transition_opt_values}, 
\ie $\nu(\hat{v})\leftarrow i + F[j-1]$; also, 
an optimal positional strategy can be extracted from $f^*_{\texttt{prev}_{\rho}(i,j)}$
thanks to Theorem~\ref{Thm:pos_opt_strategy} and Remark~\ref{Rem:pos_opt_strategy}, provided that $\hat{v}\in V_0$.

Each phase, in which one does a Value-Iteration and looks at transitions of winning regions, 
it is named \emph{Scan-Phase}. Remarkably, for every $i\in [W^-, W^+]$ and $j\in [1,s-1]$, 
the \emph{$(i,j)$-th} Scan-Phase performs a Value-Iteration~\citep{brim2011faster} on 
the reweighted \EG $\Gamma_{i,j}$ by initializing all the energy-levels to those computed by the previous 
Scan-Phase (subject to a suitable re-scaling and a rounding-up, \ie $\lceil D_j\cdot f^*_{\texttt{prev}_{\rho}(i,j)}\rceil$). 
As described in~\cite{CR16}, the main step of computation that is carried 
on at the $(i,j)$-th Scan-Phase goes therefore as follows: 
\[ f_{i,j} \leftarrow \frac{1}{D_j}\, 
	\texttt{Value-Iteration}\Big(\Gamma_{i,j}, \big\lceil D_j\cdot f^*_{\texttt{prev}_{\rho}(i,j)}\big\rceil\Big), \]
where $D_j$ is the denominator of $F_j$. Then, one can prove that $\forall{(i,j)}\, f_{i,j}=f^*_{i,j}$ 
[\cite{CR16}, Lemma~8, Item~4]. Indeed, 
	by Propositions~\ref{prop:relation_MPG_EG} and Proposition~\ref{prop:least_energy_prog_measure}, 
$\W_0(\Gamma_{\texttt{prev}_{\rho}(i,j)})=V_{f^*_{\texttt{prev}_{\rho}(i,j)}}$ and $\W_1(\Gamma_{i,j})=V\setminus V_{f^*_{i,j}}$.
And since $\rho$ is monotone decreasing, the sequence of 
	energy-levels $\psi_\rho : (i,j)\mapsto f^*_{i,j}$ is monotone non-decreasing [\cite{CR16}, Lemma~8, Item~1]:
\[ \psi_\rho : f^*_{W^-, 1} \preceq f^*_{W^-, 2} \preceq  \ldots \preceq f^*_{W^-, s-1} 
	\preceq f^*_{W^-+1, 1} \preceq f^*_{W^-+1, 2} \preceq \ldots 
	\preceq f^*_{W^+-1, s-1} \preceq \ldots \preceq f^*_{W^+, s-1}; \]
Our algorithm will succeed at \emph{amortizing} the cost of the corresponding 
	sequence of Value-Iterations for computing $\psi_\rho$. A similar amortization takes place already in Algorithm~0. 

However, Algorithm~0 performs exactly one Scan-Phase (\ie one Value-Iteration, 
	plus the tests $v \in^{?} \W_0(\Gamma_{\texttt{prev}_{\rho}(i,j)}) \cap \W_1(\Gamma_{i,j})$) 
for each term of $\rho$ --without making any \emph{Jump} in $\rho$--. So, 
	Algorithm~0 performs $\Theta(|V|^2 W)$ Scan-Phases overall, each one costing $\Omega(|E|)$ time 
(\ie the cost of initializing the Value-Iteration as in~\cite{brim2011faster}). 
	This brings an overall $\Omega(|V|^2|E|W)$ time complexity, which turns out to be also $O(|V|^2|E|W)$; 
leading us to an improved pseudo-polynomial time upper bound for solving \MPG{s}~\citep{CR15, CR16}.

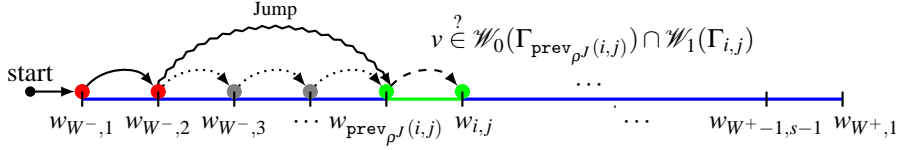
\begin{figure}[!t]
\begin{center}
\begin{tikzpicture}[xscale=1.0]
\draw[-][draw=blue, very thick] (0,0) -- (4,0);
\draw[-][draw=green, very thick] (4,0) -- (5,0);
\draw[-][draw=blue, very thick]  (5,0) -- (10,0);

\node[node, color=red, fill=red, scale=.3] (A) at (0,.1) {};
\node[node, left=of A, xshift=3ex, color=black, fill=black, scale=.2, label={above:start} ] (S) {};
\node[node, color=red, fill=red, scale=.3] (B) at (1,.1) {};
\node[node, color=gray, fill=gray, scale=.3] (C) at (2,.1) {};
\node[node, color=gray, fill=gray, scale=.3] (D) at (3,.1) {};
\node[node, color=green, fill=green, scale=.3] (E) at (4,.1) {};
\node[node, color=green, fill=green, scale=.3] (F) at (5,.1) {};
\node[-] (G) at (6,.1) {};

\draw[thick, arrows=->] (S) to [ bend left=0] node[below, xshift=-1.5ex, yshift=1ex] {} (A);
\draw[thick, arrows=->] (A) to [ bend left=50] node[below, xshift=-1.5ex, yshift=1ex] {} (B); 
\draw[thick, arrows=->, line join=round,
decorate, decoration={
    zigzag,
    segment length=5,
    amplitude=.5,post=lineto,
    post length=2pt
} ] (B.north) to [ bend left=60] node[above, xshift=0ex, yshift=0ex, scale=.75] {Jump} (E.north east); 
\draw[thick, dotted,arrows=->] (B) to [ bend left=50] node[below, xshift=-1.5ex, yshift=1ex] {} (C); 
\draw[thick, dotted, arrows=->] (C) to [ bend left=50] node[below, xshift=-1.5ex, yshift=1ex] {} (D); 
\draw[thick, dotted,arrows=->] (D) to [ bend left=50] node[below, xshift=-1.5ex, yshift=1ex] {} (E); 
\draw[thick, dashed, arrows=->] (E) to [bend left=50] node[below, xshift=-3ex, yshift=-1ex, 
label={above right:$v\overset{?}{\in} \W_0(\Gamma_{\texttt{prev}_{\rho^J}(i,j)}) \cap \W_1(\Gamma_{i,j})$}] {} (F); 

\draw [thick] (0,-.1) node[below]{$w_{W^-, 1}$} -- (0,0.1);
\draw [thick] (1,-.1) node[below]{$w_{W^-, 2}$} -- (1,0.1);
\draw [thick] (2,-.1) node[below]{$w_{W^-, 3}$} -- (2,0.1);
\draw [thick] (3,-.1) node[below, xshift=-.2ex]{$\cdots$} -- (3,0.1);
\draw [thick] (4,-.1) node[below]{$w_{\texttt{prev}_{\rho^J}(i,j)}$} -- (4,0.1);
\draw [thick] (5,-.1) node[below, xshift=1ex]{$w_{i, j}$} -- (5,0.1);
\draw [thick] (7,-.1) node[below, xshift=2ex]{$\cdots$} -- (7,-.1);
\draw [thick] (7,-.1) node[above, xshift=-2ex, yshift=.5ex]{$\cdots$} -- (7,-.1);
\draw [thick] (9,-.1) node[below]{$w_{W^+-1, s-1}$} -- (9,0.1);
\draw [thick] (10,-.1) node[below, xshift=2ex]{$w_{W^+, 1}$} -- (10,0.1);
\end{tikzpicture}
\end{center}
\caption{An illustration of Algorithm~\ref{algo:solve_mpg}.}\label{fig:algo_solve}
\end{figure}

The present work shows that it is instead possible, and actually very convenient, to perform many \emph{Jumps} in $\rho$; 
thus introducing ``gaps" between the weights that are considered along the sequence of Scan-Phases. 
The corresponding sequence of weights is denoted by $\rho^J$. This is Algorithm~\ref{algo:solve_mpg}. 
In \figref{fig:algo_solve}, a graphical intuition of Algorithm~\ref{algo:solve_mpg} and $\rho^J$ is given, 
in which a Jump is depicted with an arc going from $w_{W^-,2}$ to $w_{\texttt{prev}_{\rho^J}(i,j)}$, 
	\eg $w_{\texttt{prev}_{\rho^J}(\texttt{prev}_{\rho^J}(i,j))}=w_{W^-,2}$. 

Two distinct kinds of Jumps are employed: \emph{Energy-Increasing-Jumps (EI-Jumps)} and \emph{Unitary-Advance-Jumps (UA-Jumps)}. 
	Briefly, EI-Jumps allow us to satisfy a suitable invariant: 

[\emph{Inv-EI}] Whenever a Scan-Phase is executed (each time that a Value-Iteration is invoked), 
	an energy-level $f(v)$ strictly increases for at least one $v\in V$. There will be no \emph{vain} Scan-Phase 
(\ie such that all the energy-levels stand still); so, $\delta$ will be applied (successfully) at least once per Scan-Phase. 
Therefore, $\psi_{\rho^J}$ will be monotone increasing (except at the steps of backtracking introduced next, 
	but there will be at most $|V|$ of them). $\Box$ 

Indeed, the UA-Jumps are employed so to scroll through $\F_{|V|}$ only \emph{when} 
	and \emph{where} it is really necessary. Consider the following facts.

-- Suppose that Algorithm~\ref{algo:solve_mpg} came at the end of the $(i,s-1)$-th Scan-Phase, 
	for some $i\in [W^-,W^+]$; recall that $F_{s-1}=1$, so $w_{i,s-1}=w'_{i,s-1}$ is integral. 
Then, Algorithm~0 would scroll through $\F_{|V|}$ entirely, by invoking one Scan-Phase per each term, 
going from the $(i+1,1)$-th to the $(i+1, s-1)$-th, meanwhile testing whether a transition of winning regions occurs; 
notice, $w_{i+1, s-1}$ is integral again. Instead, 
to UA-Jump means to jump in advance (proactively) from $w_{i,s-1}$ to $w_{i+1,s-1}$, 
by making a Scan-Phase on input $\Gamma_{i+1,s-1}$, 
thus skipping all those from the $(i+1,1)$-th to the $(i+1, s-2)$-th one. After that, 
	Algorithm~\ref{algo:solve_mpg} needs to \emph{backtrack} to $w_{i,s-1}$, and to scroll through $\F_{|V|}$, 
if and only if $\W_0(\Gamma_{w_{i,s-1}}) \cap \W_1(\Gamma_{w_{i+1,s-1}})\neq\emptyset$. 
	Otherwise, it is safe to keep the search going on, from $w_{i+1,s-1}$ on out, 
making another UA-Jump to $w_{i+2,s-1}$. The backtracking step may happen at most $|V|$ times overall, 
because some value $\nu(v)$ is assigned to some $v\in V$ at each time. 
So, Algorithm~\ref{algo:solve_mpg} scrolls entirely through $\F_{|V|}$ at most $|V|$ times; 
\ie only \emph{when} it is really necessary.

-- Remarkably, when scrolling through $\F_{|V|}$, soon after the above mentioned backtracking step, 
	the corresponding sequence of Value-Iterations really need to lift-up again (more slowly) \emph{only} the 
energy-levels of the sub-arena of $\Gamma$ that is induced by 
	$S\triangleq \W_0(\Gamma_{w_{i,s-1}}) \cap \W_1(\Gamma_{w_{i+1,s-1}})$. 
All the energy-levels of the vertices in  
$V\setminus S$ can be confirmed and left unchanged during the UA-Jump's backtracking step; 
	and they will all stand still, during the 
forthcoming sequence of Value-Iterations (at least, until a new EI-Jump will occur), 
as they were computed just \emph{before} the occurence of the UA-Jump's backtracking step. 
This is why Algorithm~\ref{algo:solve_mpg} scrolls through $\F_{|V|}$ only \emph{where} it is really necessary. 

-- Also, Algorithm~\ref{algo:solve_mpg} succeeds at 
\emph{interleaving} EI-Jumps and UA-Jumps, thus making only one single pass through $\rho^J$ (plus the backtracking steps).

Altogether these facts are going to reduce the running time considerably.

\begin{Def}[$\ell_{\Gamma}^1$]
Given an input \MPG $\Gamma$, let $\ell_{\Gamma}^1(v)$ be the total number 
of times that the energy-lifting operator $\delta(\cdot, v)$ is applied to any $v\in V$ by Algorithm~\ref{algo:solve_mpg} 
(notice that it will be applied only at line~\ref{algo:jvalue:l3} of $\texttt{J-VI}()$, 
see SubProcedure~\ref{algo:j-value-iteration}). 
\end{Def}
Then, the following remark holds on Algorithm~\ref{algo:solve_mpg}.
\begin{Rem} 
Jumping is not heuristic, the theoretical running time of the procedure improves exactly, 
from $\Theta(|V|^2|E|W + \sum_{v\in V}\texttt{deg}_\Gamma(v)\cdot\ell_\Gamma^0(v))$~\citep{CR16} 
to $O(|E|\log |V|) + \Theta\big(\sum_{v\in V}\texttt{deg}_\Gamma(v)\cdot\ell_\Gamma^1(v)\big)$ (Algorithm~\ref{algo:solve_mpg}), 
where $\ell_\Gamma^1\leq (|V|-1)|V|W$; which is still $O(|V|^2 |E|\, W)$ in the worst-case, 
but it isn't known to be $\Omega(|V|^2|E|W)$ generally. 
In practice, this reduces the magnitude of $\ell_\Gamma$ significantly, 
\ie $\ell_\Gamma^1 \ll \ell_\Gamma^0$ is observed in our experiments (see SubSection~\ref{subsect:experiments}).
\end{Rem}

To achieve this, we have to overcome some subtle issues. Firstly, 
we show that it is unnecessary to re-initialize the Value-Iteration 
at each Scan-Phase (this would cost $\Omega(|E|)$ each time otherwise), even when making wide jumps in $\rho$. 
Instead, it will be sufficient to perform an initialization phase only at the beginning, 
by paying only $O(|E|\log |V|)$ total time and a linear space in pre-processing. 
For this, we will provide a suitable readjustment of the Value-Iteration; 
it is named $\texttt{J-VI}()$ (SubProcedure~\ref{algo:j-value-iteration}). 
Briefly, the Value-Iteration of \cite{brim2011faster} employs an array of counters, 
$\texttt{cnt}:V_0\rightarrow\N$, in order to check in time $O(|N_\Gamma^{\text{in}}(v)|)$ 
which vertices $u\in N_\Gamma^{\text{in}}(v)\cap V_0$ have become inconsistent (soon after that the energy-level 
$f(v)$ was increased by applying $\delta(f, v)$ to some $v\in V$),  
and should therefore be added to the list $L^{\text{inc}}$ of inconsistent vertices. 
One subtle issue here is that, when going from the $\texttt{prev}_{\rho^J}(i,j)$-th  
to the $(i,j)$-th Scan-Phase, the \emph{coherency} of $\texttt{cnt}$ can 
break (\ie $\texttt{cnt}$ may provide false-positives, thus classifying a vertex as consistent when it isn't really so). 
This may happen when $w_{\texttt{prev}_{\rho^J}(i,j)} > w_{(i,j)}$ (which is always the case, 
except for the UA-Jump's backtracking steps). This is even amplified by the EI-Jumps, as they may lead to wide jumps in $\rho$. 
The algorithm in~\cite{CR16} recalculates $\texttt{cnt}$ from scratch, at the beginning of each Scan-Phase, 
thus paying $\Omega(|E|)$ time per each. In this work, 
we show how to keep $\texttt{cnt}$ coherent throughout the Jumping Scan-Phases, efficiently. 
Actually, even in Algorithm~\ref{algo:solve_mpg} the coherency of $\texttt{cnt}$ can possibly break, 
but Algorithm~\ref{algo:solve_mpg} succeeds at \emph{repairing} all the incoherencies that may happen 
during the whole computation in $\Theta(|E|)$ \emph{total} time -- just by paying $O(|E|\log|V|)$ time in pre-processing. 
This is a very convenient trade-off. At this point we should begin entering into the details of Algorithm~\ref{algo:solve_mpg}.

\textit{Jumper.} We employ a container data-structure, denoted by $\J$. It comprises a bunch of arrays, 
maps, plus an integer variable $\J.i$. Concerning maps, the key universe is $V$ or $E$; 
\ie keys are restricted to a narrow range of integers ($[1,|V|]$ or $[1,|E|]$, depending on the particular case). 

We suggest direct addressing: the value binded to a key $v\in V$ (or $(u,v)\in E$) is stored at $A[v]$ (resp., $A[(u,v)]$); 
if there is no binding for key $v$ (resp., $(u,v)$), the cell stores a sentinel, \ie $A[v]=\bot$. 
Also, we would need to iterate efficiently through $A$ (\ie without having to scroll entirely through $A$). 
This is easy to implement by handling pointers in a suitable way; 
one may also keep a list $L_A$ associated to $A$, explicitly, 
storing one element for each $(k,v)\neq\bot$ of $A$; every time that an item is added to or removed from $A$, 
then $L_A$ is updated accordingly, in time $O(1)$ (by handling pointers). 
The \emph{last} entry inserted into $A$ (the key of which isn't already binded at insertion time) goes in \emph{front} of $L_A$. 
We say that $L\triangleq (A,L_A)$ is an \emph{array-list}, 
and we dispose of the following operations: $\texttt{insert}((k, \texttt{val}), L)$, 
which binds $\texttt{val}$ to $k$ by inserting $(k, \texttt{val})$ into $L$ 
(if any $(k, \texttt{val}')$ is already in $L$, then $\texttt{val}'$ gets overwritten by $\texttt{val}$); 
$\texttt{remove}(k, L)$ deletes an entry $(k, \texttt{val})$ from $L$; $\texttt{pop\_front}(L)$, 
removes from $L$ the \emph{last} $(k, \texttt{val})$ that was inserted (and 
whose key was not already binded at the time of the insertion, 
\ie the \emph{front}) also returning it; $\texttt{for\_each}\big((k,\texttt{val})\in L\big)$ 
iterates through the entries of $L$ efficiently (\ie skipping the sentinels). 
Notice, any sequence of $\texttt{insert}$ and \texttt{pop\_front} on $L$ implements a LIFO policy. 

So, $\J$ comprises: an integer variable $\J.i$; an array $\J.f : V \rightarrow \Q$; 
an array $\J.\texttt{cnt}:V_0\rightarrow \N$; 
an array $\J.\texttt{cmp} : \{(u,v)\in E\mid u\in V_0\}\rightarrow \{\texttt{T}, \texttt{F}\}$; a bunch of array-lists, 
$L_f : V\rightarrow \N$, and $L^{\text{inc}}, L^{\text{inc}}_{\text{nxt}}, 
L^{\text{inc}}_{\text{cpy}}, L_\top : V\rightarrow \{\ast\}$;
finally, a special array-list $L_{\omega}$ indexed by $\{w_e\mid e\in E\}$, 
whose values are in turn (classical, linked) lists of arcs, denoted $L_{\alpha}$; $L_{\omega}$ 
is filled in pre-processing as follows: 
$(\hat{w}, L_{\alpha})\in L_{\omega}$ \textit{iff} $L_{\alpha}=\{e\in E\mid w_e=\hat{w}\}$.  
The subprocedure $\texttt{init\_jumper}()$ (SubProcedure~\ref{proc:jumper_init}) takes care of initializing $\J$.

\begin{wrapfigure}[13]{r}{6cm}
\raisebox{0pt}[\dimexpr\height-2\baselineskip\relax]{
\begin{algo-proc}[H]
\caption{Init Jumper $\J$}\label{proc:jumper_init}
\scriptsize
\DontPrintSemicolon
\nonl \SetKwProg{SubFn}{SubProcedure}{}{}
\SubFn{$\texttt{init\_jumper}(\J,\Gamma)$}{
\SetKwInOut{Input}{input}
\SetKwInOut{Output}{output}
\Input{Jumper $\J$, an \MPG $\Gamma$.}
$L_f, L^{\text{inc}}, L^{\text{inc}}_{\text{nxt}}, 
	L^{\text{inc}}_{\text{cpy}}, L_{\top}, L_{\omega} \leftarrow\emptyset$; \label{proc:jumper_init:l1}\; 
\ForEach{$v\in V$}{ \label{proc:jumper_init:l2}
	$\J.f[v]\leftarrow 0$; \label{proc:jumper_init:l3} \; 
	\If{$v\in V_0$ \label{proc:jumper_init:l4} }{
		$\J.\texttt{cnt}[v]\leftarrow |N^{\text{out}}_{\Gamma}(v)|$; \label{proc:jumper_init:l5}
	}
} 
\ForEach{$(u,v,w)\in E$}{ \label{proc:jumper_init:l6}
	\If{$v\in V_0$ \label{proc:jumper_init:l7} }{
		$\J.\texttt{cmp}[(u,v)]\leftarrow \texttt{T}$; \label{proc:jumper_init:l8} 
	}
	\If{$L_{\omega}[w]=\bot$}{ \label{proc:jumper_init:l9}
		$\texttt{insert}\big((w,\emptyset), L_{\omega})\big)$; \label{proc:jumper_init:l10}\;
	}
	$\texttt{insert}\big((u,v), L_{\omega}[w]\big)$; \label{proc:jumper_init:l11}\;
}
}
Sort $L_{\omega}$ in increasing order \wrt the keys $w$; \label{proc:jumper_init:l12} 
\end{algo-proc} }
\end{wrapfigure}

At the beginning, all array-lists are empty (line~\ref{proc:jumper_init:l1}). 
For every $v\in V$ (line~\ref{proc:jumper_init:l2}), we set $\J.f[u] = 0$ and, if $v\in V_0$, then 
$\J.\texttt{cnt}[v] \leftarrow |N^{\text{out}}_{\Gamma}(v)|$ (lines~\ref{proc:jumper_init:l3}-\ref{proc:jumper_init:l5}). 
Then, each arc $(u,v,w)\in E$ is flagged as \emph{compatible}, 
\ie $\J.\texttt{cmp}[(u,v)]\leftarrow \texttt{T}$ (lines~\ref{proc:jumper_init:l6}-\ref{proc:jumper_init:l8}); 
also, if $L_{\omega}$ doesn't contain an entry already binded to $w(u,v)$, 
then an empty list of arcs is inserted into $L_{\omega}$ as an entry 
$(w,\emptyset)$ (lines~\ref{proc:jumper_init:l9}-\ref{proc:jumper_init:l10}); then, in any case, 
the arc $(u,v)$ is added to the unique $L_{\alpha}$ which is binded to $w=w(u,v)$ in $L_\omega$ (line~\ref{proc:jumper_init:l11}). 
At the end (line~\ref{proc:jumper_init:l12}), all the elements of $L_{\omega}$ are sorted in increasing order \wrt their weight keys, 
$w_e$ for $e\in E$ (\eg $(W^{-}, L_{\alpha})$ goes in front of $L_{\omega}$). 
This concludes the initialization of $\J$; it takes $O(|E|\log|V|)$ time and $\Theta(|V|+|E|)$ space.

\textit{Main Procedure: $\texttt{solve\_\MPG}()$.} The main procedure of Algorithm~\ref{algo:solve_mpg} is organized as follows.
Firstly, the algorithm performs an initialization phase; which includes $\texttt{init\_jumper}(\J,\Gamma)$. 

The variables $\W_0,\W_1,\nu,\sigma^*_0$ are initially empty (line~\ref{algo:solve:l1}). 
Also, $W^-\leftarrow \min_{e\in E} w_e$, $W^+\leftarrow \max_{e\in E} w_e$ (line~\ref{algo:solve:l2}). 
And $F$ is a reference to the Farey's terms, 
say $\{F[j]\mid j\in [0, s-1]\}=\F_{|V|}$, and $s\leftarrow |\F_{|V|}|$ (line~\ref{algo:solve:l3}). 
At line~\ref{algo:solve:l4}, $\J$ is initialized by $\texttt{init\_jumper}(\J,\Gamma)$ (SubProcedure~\ref{proc:jumper_init}). 

Then the Scan-Phases start. 

\begin{wrapfigure}[14]{r}{7cm}
\raisebox{0pt}[\dimexpr\height-1.9\baselineskip\relax]{
\begin{algorithm}[H]
\caption{Main Procedure}\label{algo:solve_mpg}
\scriptsize
\DontPrintSemicolon
\nonl \SetKwProg{Fn}{Procedure}{}{}
\Fn{$\texttt{solve\_MPG}(\Gamma)$}{
    \SetKwInOut{Input}{input}
    \SetKwInOut{Output}{output}
\Input{An \MPG $\Gamma= ( V, E, w, \langle V_0, V_1 \rangle )$.}
\Output{$(\W_0, \W_1, \nu, \sigma^*_0)$.}  
$\W_0, \W_1, \nu, \sigma^*_0\leftarrow\emptyset$; \textit{/*Init Phase*/} \label{algo:solve:l1} \; 
$W^-\leftarrow \min_{e\in E} w_e$; $W^+\leftarrow \max_{e\in E} w_e$; \label{algo:solve:l2} \;
$F\leftarrow $ reference to $\F_{|V|}$; $s\leftarrow |\F_{|V|}|$; \label{algo:solve:l3} \; 
$\texttt{init\_jumper}(\J,\Gamma)$; \label{algo:solve:l4} \; 
$i\leftarrow W^--1$; $j\leftarrow 1$; \label{algo:solve:l5} \textit{/*Jumping Scan-Phases*/}\;
\While{\texttt{T}}{ \label{algo:solve:l6} 
	\If{$\texttt{ei-jump}(i,\J)$ \label{algo:solve:l7}}{ 
		\lIf{$L^{\text{inc}}=\emptyset$}{
			\Return{$(\W_0, \W_1, \nu, \sigma^*_0)$;} \label{algo:solve:l8}
		}
		$(i, S)\leftarrow \texttt{ua-jumps}(\J.i, s, F, \J, \Gamma)$; \label{algo:solve:l9} \; 
		$j\leftarrow 1$; \label{algo:solve:l10} \; 
	} 
	$\texttt{J-VI}(i, j, F, \J, \Gamma[S]);$ \label{algo:solve:l11} \;	
	$\texttt{set\_vars}(\W_0, \W_1, \nu, \sigma^*_0, i, j, F, \J, \Gamma[S])$; \label{algo:solve:l12} \; 
	$\texttt{scl\_back\_}f(j,F,\J)$; \label{algo:solve:l13} \; 
	$j\leftarrow j+1$; \label{algo:solve:l14} \; 
}
}
\end{algorithm}}
\end{wrapfigure}
After setting $i\leftarrow W^--1$, $j\leftarrow 1$ (line~\ref{algo:solve:l5}), 
Algorithm~\ref{algo:solve_mpg} enters into a \texttt{while} loop (line~\ref{algo:solve:l6}), 
which lasts until both $\texttt{ei-jump}(i,\J)=\texttt{T}$ at line~\ref{algo:solve:l7}, 
\emph{and} $L^{\text{inc}}=\emptyset$ at line~\ref{algo:solve:l8}, hold; 
in which case $(\W_0,\W_1,\nu,\sigma^*_0)$ is returned (line~\ref{algo:solve:l8}) and Algorithm~\ref{algo:solve_mpg} halts. 
Inside the \texttt{while} loop, $\texttt{ei-jump}(i,\J)$ (SubProcedure~\ref{proc:energy_jump}) 
is invoked (line~\ref{algo:solve:l7}). This checks whether or not to make an EI-Jump; 
if so, the ending point of the EI-Jump (the new value of $i$) is stored into $\J.i$.  
This will be the starting point for making a sequence of UA-Jumps, 
which begins by invoking $\texttt{ua-jumps}(\J.i, s, F, \J, \Gamma)$ at line~\ref{algo:solve:l9}.
When the $\texttt{ua-jumps}()$ halts, it returns $(\hat{i},S)$, where: $\hat{i}$ is the new value of $i$ (line~\ref{algo:solve:l9}), 
for some $\hat{i}\geq \J.i$; and $S$ is a set of vertices such 
that $S = \W_0(\Gamma_{w_{\hat{i}-1,s-1}}) \cap \W_1(\Gamma_{w_{\hat{i},s-1}})$. 
Next, $j\leftarrow 1$ is set (line~\ref{algo:solve:l10}), 
as Algorithm~\ref{algo:solve_mpg} is now completing the backtracking from $w_{\hat{i},s-1}$ to $w_{\hat{i},1}$, 
in order to begin scrolling through $\F_{|V|}$ by running a sequence of $\texttt{J-VI}()$ at line~\ref{algo:solve:l11}. 
Such a sequence of $\texttt{J-VI}()${s} will last until the occurence of another EI-Jump at line~\ref{algo:solve:l7}, 
that in turn will lead to another sequence of UA-Jumps at line~\ref{algo:solve:l9}, 
and so on. So, a $\texttt{J-VI}()$ (SubProcedure~\ref{algo:j-value-iteration}) 
is executed on input $(\hat{i},j,F,\J,\Gamma[S])$ at line~\ref{algo:solve:l11}. 
We remark that, during the $\texttt{J-VI}(i,j,F,\J,\Gamma[S])$, 
the energy-levels are scaled up, from $\Q$ to $\N$; actually, from $\J.f$ to $\lceil D_j\cdot \J.f \rceil$, 
where $D_j$ is the denominator of $F_j$. Also, $\texttt{J-VI}(i,j,F,\J,\Gamma[S])$ (SubProcedure~\ref{algo:j-value-iteration}) 
is designed so that, when it halts, $L_\top = \W_0(\Gamma_{\texttt{prev}_{\rho^{\J}}(i,j)}) \cap \W_1(\Gamma_{i,j})$. 
Then, $\texttt{set\_vars}()$ is invoked on input $(\W_0, \W_1, \nu, \sigma^*_0, i, j, F, \J, \Gamma[S])$ (line~\ref{algo:solve:l12}): 
this checks whether some value and optimal strategy needs to be assigned to $\nu$ and $\sigma^*_0$ (respectively). 
Next, all of the energy-levels are scaled back, from $\N$ to $\Q$, and stored back into $\J.f$: 
this is done by invoking $\texttt{scl\_back\_}f(j,F,\J)$ (line~\ref{algo:solve:l13}). 
Finally, $j\leftarrow j+1$ (line~\ref{algo:solve:l14}) is assigned (to step through the sequence $\F_{|V|}$ during the \texttt{while} loop at line~\ref{algo:solve:l7}). 
This concludes $\texttt{solve\_MPG}()$, which is the main procedure of Algorithm~\ref{algo:solve_mpg}.

\emph{Set Values and Optimal Strategy.} Let us provide the details 
of $\texttt{set\_vars}()$ (SubProcedure~\ref{proc:set_values_and_strategies}). 
It takes $(\W_0, \W_1, \nu, \sigma^*_0, i, j, F, \Gamma)$ in input, 
where $i\in [W^-, W^+]$ and $j\in [1,s-1]$. At line~\ref{proc:set:l0}, $D=D_{j-1}$ is the denominator of $F_{j-1}$. 
Then, all of the following operations are repeated while $L_{\top}\neq\emptyset$ (line~\ref{proc:set:l1}). 
Firstly, the front element $u$ of $L_{\top}$ is popped (line~\ref{proc:set:l2}); 
recall, it will turn out that $u\in \W_0(\Gamma_{i,j-1}) \cap \W_1(\Gamma_{i,j})$, 
thanks to the specs of $\texttt{J-VI}()$ (SubProcedure~\ref{algo:j-value-iteration}). 
For this reason, the optimal value of $u$ in the \MPG $\Gamma$ is set to $\nu(u)\leftarrow i+F[j-1]$ (line~\ref{proc:set:l3}); 
and, if $\nu(u)\geq 0$, $u$ is added to the winning region $\W_0$; else, to $\W_1$ (line~\ref{proc:set:l4}). 
The correctness of lines~\ref{proc:set:l3}-\ref{proc:set:l4} relies on Theorem~\ref{Thm:transition_opt_values}.
If $u\in V_0$ (line~\ref{proc:set:l5}), it is searched an arc $(u,v)\in E$ 
that is compatible \wrt $D_{j-1}\cdot\J.f$ in $\Gamma_{i,j-1}$ (line~\ref{proc:set:l7}), 
\ie it is searched some $v\in N_\Gamma^{\text{out}}(u)$ such that: 
\[ (D\cdot\J.f[u]) \succeq (D\cdot\J.f[v]) \ominus \texttt{get\_scl\_}w\big(w(u, v), i, j-1, F\big) \; \text{ (line~\ref{proc:set:l7})}; \] 
By Theorem~\ref{Thm:pos_opt_strategy}, setting $\sigma^*_0(u)\leftarrow v$ (line~\ref{proc:set:l8}) 
brings an optimal positional strategy for Player~0 in the \MPG $\Gamma$. 
Here, $\texttt{get\_scl\_}w\big(w,i,j-1,F\big)$ simply returns $D_{j-1}\cdot(w(u,v)-i)-N_{j-1}$, 
where: $N_{j-1}$ is the numerator of $F_{j-1}$, and $D_{j-1}$ is its denominator. 
Thanks to how $\texttt{J-VI}()$ (SubProcedure~\ref{algo:j-value-iteration}) is designed, 
at this point $\J.f$ still stores the energy-levels as they were just \emph{before} the last invocation 
of $\texttt{J-VI}()$ made at line~\ref{algo:solve:l11} of Algorithm~\ref{algo:solve_mpg}; 
instead, the new energy-levels, those lifted-up during that same $\texttt{J-VI}()$, are stored into $L_f$.
So, at this point, it will turn out that $\forall^{u\in V}\, \J.f[u]=f^*_{i,j-1}(u)$. 

\begin{algo-proc}[H]
\caption{Set Values and Optimal Strategy}\label{proc:set_values_and_strategies}
\scriptsize
\DontPrintSemicolon
\nonl \SetKwProg{Fn}{Procedure}{}{}
\Fn{$\texttt{set\_vars}(\W_0, \W_1, \nu, \sigma^*_0, i, j, F, J, \Gamma)$}{
    \SetKwInOut{Input}{input}
    \SetKwInOut{Output}{output}
\Input{Winning sets $\W_0, \W_1$, values $\nu$, opt. strategy $\sigma^*_0$, 
$i\in [W^-, W^+]$, $j\in [1, s-1]$, ref. $F$ to $\F_{|V|}$, \MPG $\Gamma$}
	$D\leftarrow$ denominator of $F[j-1]$; \label{proc:set:l0} \;
	\While{$L_{\top}\neq\emptyset$ \label{proc:set:l1}}{
		$u\leftarrow\texttt{pop\_front}(L_{\top})$; \label{proc:set:l2}\;
		$\nu(u)\leftarrow i + F[j-1]$; \label{proc:set:l3}\; 
		\lIf{$\nu(u)\geq 0$ }{ 
			$\W_0\leftarrow \W_0\cup \{v\}$; \texttt{\bf else} $\W_1\leftarrow \W_1\cup \{v\}$; \label{proc:set:l4}
		}
		\If{$u\in V_0$ \label{proc:set:l5}}{ 
			\For{$v\in N^{\text{out}}_{\Gamma}(u)$ \label{proc:set:l6}}{ 	
				\If{$(D\cdot\J.f[u])\succeq (D\cdot\J.f[v]) \ominus 
					\texttt{get\_scl\_}w\big(w(u, v), i, j-1, F\big)$\label{proc:set:l7} }{ 
						$\sigma^*_0(u)\leftarrow v$; \textbf{break}; \label{proc:set:l8} }}}
	}
}
\end{algo-proc}

This actually concludes the description of $\texttt{set\_vars}()$ (SubProcedure~\ref{proc:set_values_and_strategies}).

Indeed, the role of $L_f$ is precisely that to allow the $\texttt{J-VI}()$ 
to lift-up the energy-levels during the $(i,j)$-th Scan-Phase,  
meanwhile preserving (inside $\J.f$) those computed at the $(i,j-1)$-th one (because $\texttt{set\_vars}()$ needs them in order to rely on Theorem~\ref{Thm:pos_opt_strategy}).
As mentioned, when $\texttt{set\_vars}()$ halts, all the energy-levels are scaled back, 
from $\N$ to $\Q$, and stored back from $L_f$ into $\J.f$ 
(at line~\ref{algo:solve:l13} of Algorithm~\ref{algo:solve_mpg}, 
see $\texttt{scl\_back\_}f()$ in SubProcedure~\ref{subproc:scaling_energy}).

We remark at this point that all the arithmetics of Algorithm~\ref{algo:solve_mpg} can be done in $\Z$. 

Now, let us detail the remaining subprocedures, 
those governing the Jumps and those concerning the energy-levels and the $\texttt{J-VI}()$.
Since the details of the former rely significantly on those of the latter two, we proceed by discussing firstly 
how the energy-levels are handled by the 
$\texttt{J-VI}()$ (see SubProcedure~\ref{algo:j-value-iteration} and~\ref{subproc:scaling_energy}).

\textit{J-Value-Iteration.} $\texttt{J-VI}()$ is similar to the Value-Iteration of \cite{brim2011faster}. 
Still, there are some distinctive features. 
The $\texttt{J-VI}()$ takes in input two indices $i\in [W^{-}, W^{+}]$ and $j\in [1, s-1]$, 
a reference $F$ to $\F_{|V|}$, (a reference to) the Jumper $\J$, (a reference to) the input arena $\Gamma$. 
Basically, $\texttt{J-VI}(i,j,F,\J,\Gamma)$ aims at computing the least-SEPM of the reweighted \EG $\Gamma_{i,j}$. 
For this, it relies on a (slightly revisited) \emph{energy-lifting} operator $\delta: 
[V\rightarrow \C_{\Gamma}]\times V\rightarrow [V\rightarrow \C_\Gamma]$. 
The array-list employed to keep track of the inconsistent vertices is $L^{\text{inc}}$. 
It is assumed, as a pre-condition, that $L^{\text{inc}}$ is already initialized when $\texttt{J-VI}()$ starts.
We will show that this pre-condition holds thanks to how $L^{\text{inc}}_{\text{nxt}}$ is managed. 
Recall, Algorithm~\ref{algo:solve_mpg} is going to perform a sequence of invocations to $\texttt{J-VI}()$. 
During the execution of any such invocation of $\texttt{J-VI}()$, 
the role of $L^{\text{inc}}_{\text{nxt}}$ is precisely that of collecting, 
in advance, the initial list of inconsistent vertices for the 
\emph{next\footnote{\ie the subsequent invocation (in the above mentioned sequence of $\texttt{J-VI}()$)
that will be performed, either at line~\ref{algo:solve:l12} of $\texttt{solve\_MPG}()$ (Algorithm~\ref{algo:solve_mpg}), 
or at line~\ref{subproc:ua-jumps:l3} of $\texttt{ua-jumps}()$ (SubProcedure~\ref{subproc:ua-jumps}).}} $\texttt{J-VI}()$. 
Rephrasing, the $k$-th invocation of $\texttt{J-VI}()$ takes care of 
initializing $L^{\text{inc}}$ for the $k+1$-th invocation of $\texttt{J-VI}()$, 
and this is done thanks to $L^{\text{inc}}_{\text{nxt}}$. 

Also, the energy-levels are managed in a special way. The \emph{inital} energy-levels are stored inside $\J.f$ (as a pre-condition).
Again, the $k$-th invocation of $\texttt{J-VI}()$ takes care of initializing the initial energy-levels for the $k+1$-th one: 
actually, those computed at the end of the $k$-th $\texttt{J-VI}()$ will become the initial energy-levels 
for the $k+1$-th one (subject to a rescaling). In this way, 
Algorithm~\ref{algo:solve_mpg} will succeed at amortizing the cost of all invocations of $\texttt{J-VI}()$. 
As mentioned, since $\J.f$ stores rational-scalings, and $\Gamma_{i,j}$ is weighted in $\Z$, 
the $\texttt{J-VI}()$ needs to scale everything up, from $\Q$ to $\N$, when it reads the energy-levels out from $\J.f$. 
So, $\J.f$ is accessed \emph{read-only} during the $\texttt{J-VI}()$: we want to update the energy-levels by applying $\delta$, 
but still we need a back-up copy of the initial energy-levels 
(because they are needed at line~\ref{proc:set:l7} of $\texttt{set\_vars}()$, SubProcedure~\ref{proc:set_values_and_strategies}). 
Therefore, a special subprocedure is employed for accessing energy-levels during $\texttt{J-VI}()$, 
it is named $\texttt{get\_scl\_}f()$ (SubProcedure~\ref{subproc:scaling_energy}); moreover, an array-list $L_f$ is employed, 
whose aim is that to store the current energy-levels, those lifted-up during the $\texttt{J-VI}()$. 
SubProcedure~\ref{subproc:scaling_energy} shows $\texttt{get\_scl\_}f()$, it takes: 
$u\in V$, some $j\in [1, s-1]$, a reference $F$ to $\F_{|V|}$, and (a reference to) $\J$. 

$\texttt{get\_scl\_}f()$ goes as follows. If $L_{f}[u]=\bot$ (line~\ref{subproc:get_scaled_energy:l1}), 
the denominator $D$ of $F_j$ is taken (line~\ref{subproc:get_scaled_energy:l2}), 
and $f\leftarrow \left\lceil D\cdot\J.f[u]\right\rceil$ is computed (line~\ref{subproc:get_scaled_energy:l3}); 
a (new) entry $(v,f)$ is inserted into $L_f$ (line~\ref{subproc:get_scaled_energy:l4}). 
Finally, in any case, $L_f[v]$ is returned (line~\ref{subproc:get_scaled_energy:l5}). 

\begin{wrapfigure}[13]{r}{6.5cm}
\raisebox{0pt}[\dimexpr\height-2\baselineskip\relax]{
\begin{algo-proc}[H]
\caption{Energy-Levels}\label{subproc:scaling_energy}
\scriptsize
\DontPrintSemicolon
\nonl \SetKwProg{Fn}{SubProcedure}{}{}
\Fn{$\texttt{get\_scl\_}f(v,j,F,\J)$}{
\SetKwInOut{Input}{input}
\Input{$v\in V$, $j\in [1, s-1]$, \\$F$ is a ref. to $\F_{|V|}$, $\J$ is Jumper.}
\If{$L_f[v]=\bot$}{ \label{subproc:get_scaled_energy:l1}
	$D\leftarrow$ denominator of $F[j]$; \label{subproc:get_scaled_energy:l2} \; 
	$f\leftarrow \left\lceil D\cdot\J.f[v]\right\rceil$; \label{subproc:get_scaled_energy:l3} \;
	$\texttt{insert}\big((v,f), L_f\big)$; \label{subproc:get_scaled_energy:l4} \;
}
\Return{$L_f[v]$;} \label{subproc:get_scaled_energy:l5}
}
\setcounter{AlgoLine}{0}
\nonl \Fn{$\texttt{scl\_back\_}f(j,F,\J)$}{
\SetKwInOut{Input}{input}
\Input{$j\in [0, s-1]$, $F$ is a ref. to Farey's terms, \\ $\J$ is Jumper.}
$D\leftarrow$ denominator of $F[j]$; \label{subproc:scale_back:l1}\; 
\While{$L_f\neq\emptyset$}{ \label{subproc:scale_back:l2}
	$(v,f)\leftarrow \texttt{pop\_front}(L_f)$; \label{subproc:scale_back:l3}\;
	$\J.f[v]\leftarrow f/D $; \label{subproc:scale_back:l4}\;
}
}
\end{algo-proc}}
\end{wrapfigure}

As mentioned, at line~\ref{algo:solve:l13} of Algorithm~\ref{algo:solve_mpg}, 
$\J.f$ will be overwritten by scaling back the values that are stored in $L_f$. 
This is done by $\texttt{scl\_back\_}f()$ (SubProcedure~\ref{subproc:scaling_energy}): 
at line~\ref{subproc:scale_back:l1}, $D$ is the denominator of $F_j$; 
then, $L_f$ is emptied, one element at a time (line~\ref{subproc:scale_back:l2}); 
for each $(v,f)\in L_f$ (line~\ref{subproc:scale_back:l3}), 
the rational $f/D$ is stored back to $\J.f[v]$ (line~\ref{subproc:scale_back:l4}). 
This concludes $\texttt{scl\_back\_}f()$.

Next, $\texttt{J-VI}()$ takes in input: $i\in [W^-, W^+]$, $j\in [1, s-1]$, a reference $F$ to $\F_{|V|}$, 
(a reference to) the Jumper $\J$, and (a reference to) the input \MPG $\Gamma$.
At line~\ref{algo:jvalue:l1}, $\texttt{J-VI}()$ enters into a \texttt{while} loop which lasts while $L^{\text{inc}}\neq\emptyset$. 
The front vertex $v\leftarrow\texttt{pop\_front}(L^{\text{inc}})$ is popped from $L^{\text{inc}}$ (line~\ref{algo:jvalue:l2}).
Next, the energy-lifting operator $\delta$ is applied to $v$ 
by invoking $\texttt{apply\_}\delta(v,i,j,F,\J,\Gamma)$ (line~\ref{algo:jvalue:l3}).

There inside (at line~\ref{subproc:energy-lift:l1} of $\texttt{apply\_}\delta()$), the energy-level of $v$ is lifted-up as follows:
\[
f_v\leftarrow \left\{
	\begin{array}{ll}
		\min\big\{\texttt{get\_scl\_}f\big(v', j, F, \J\big) \ominus 
	\texttt{get\_scl\_}w\big(w(v,v'),i,j,F\big) \mid v'\in N^{\text{out}}_{\Gamma}(v)\big\}, & \text{ if } v\in V_0; \\ 
		\max\big\{\texttt{get\_scl\_}f\big(v', j, F, \J\big) \ominus 
	\texttt{get\_scl\_}w\big(w(v,v'),i,j,F\big) \mid v'\in N^{\text{out}}_{\Gamma}(v)\big\}, & \text{ if } v\in V_1.
	\end{array} \right.
\]

Then, $f_v$ is stored inside $L_f$ (notice, not in $\J.f$), where it is binded to the key $v$  (line~\ref{subproc:energy-lift:l2}). 
The control turns back to $\texttt{J-VI}()$. 
The current energy-level of $v$ is retrieved by $f_v\leftarrow \texttt{get\_scl\_}f(v,j,F,\J)$ (line~\ref{algo:jvalue:l4}).
If $f_v\neq \top$ (line~\ref{algo:jvalue:l5}), 
then $v$ is inserted into $L^{\text{inc}}_{\text{nxt}}$ (if it isn't already in there) (line~\ref{algo:jvalue:l6}); 
moreover, if $v\in V_0$, then $\J.\texttt{cnt}[v]$ and 
$\{\J.\texttt{cmp}[(v,v')]\mid v'\in N_\Gamma^{\text{out}}(v)\}$ are recalculated from scratch, 
by invoking $\texttt{init\_cnt\_cmp}(v,i,j,F,\J,\Gamma)$ (line~\ref{algo:jvalue:l7}, see SubProcedure~\ref{subproc:counters_inc-arcs}). 
Else, if $f_v=\top$ (line~\ref{algo:jvalue:l8}), then $v$ is stored into $L_{\top}$ (line~\ref{algo:jvalue:l9}); 
and if $L^{\text{inc}}_{\text{nxt}}[v]\neq\bot$ in addition, 
then $v$ is removed from $L^{\text{inc}}_{\text{nxt}}$ (line~\ref{algo:jvalue:l10}).  

At this point it is worth introducing the following notation concerning energy-levels.
\begin{Def}
For any step of execution $\iota$ and for any variable $x$ of Algorithm~\ref{algo:solve_mpg}, 
the state of $x$ at step $\iota$ is denoted by $x^\iota$. 
Then, the \emph{current energy-levels} at step $\iota$ are defined as follows: 
\[
\forall^{v\in V} f^{\texttt{c}:\iota}(v)\triangleq
\left\{\begin{array}{ll} 
	L^{\iota}_f[v], & \text{ if } L^{\iota}_f[v]\neq\bot; \\ 
	\big\lceil D_{j^\iota}\cdot \J.f^\iota[v]\big\rceil, & \text{ otherwise}.
\end{array}\right.  
\]
where $D_{j^\iota}$ is the denominator of $F_{j^\iota}$. If $\iota$ is implicit, 
the \emph{current energy-levels} are denoted by $f^{\texttt{c}}$. 
\end{Def}

\begin{algo-proc}[H]
\caption{J-Value-Iteration}\label{algo:j-value-iteration}
\scriptsize
\DontPrintSemicolon
\nonl \SetKwProg{Fn}{Procedure}{}{}
\Fn{$\texttt{J-VI}(i,j,F,\J,\Gamma)$}{
    \SetKwInOut{Input}{input}
    \SetKwInOut{Output}{output}
\Input{$i\in [W^-, W^+]$ and $j\in [1, s-1]$, $F$ is a ref. to Farey's terms, $\J$ is Jumper, $\Gamma$ is an \MPG.}
\While{$L^{\text{inc}}\neq\emptyset$}{ \label{algo:jvalue:l1}
	$v\leftarrow \texttt{pop\_front}(L^{\text{inc}})$; \label{algo:jvalue:l2} \;
		$\texttt{apply\_}\delta(v,i,j,F,\J,\Gamma)$; \label{algo:jvalue:l3} \;	
		$f_v\leftarrow \texttt{get\_scl\_}f(v,j,F,\J)$; \label{algo:jvalue:l4} \;	
		\If{$f_v\neq\top$}{  \label{algo:jvalue:l5}
			\lIf{$L^{\text{inc}}_{\text{nxt}}[v]=\bot$}{ 
				$\texttt{insert}(v,L^{\text{inc}}_{\text{nxt}})$; \label{algo:jvalue:l6}
			}
			\lIf{$v\in V_0$}{$\texttt{init\_cnt\_cmp}(v,i,j,F,\J,\Gamma)$; \label{algo:jvalue:l7}}
		}\Else{ \label{algo:jvalue:l8}
			$\texttt{insert}(v, L_{\top})$; \label{algo:jvalue:l9}\;
			\lIf{$L^{\text{inc}}_{\text{nxt}}[v]\neq\bot$}{
				$\texttt{remove}(v,L^{\text{inc}}_{\text{nxt}});$ \label{algo:jvalue:l10} 
			}
		}
		\ForEach{$u\in N^{\text{in}}_{\Gamma}(v)$}{ \label{algo:jvalue:l11}
			$f_u\leftarrow \texttt{get\_scl\_}f(u,j,F,\J)$; \label{algo:jvalue:l12}\;
			$\Delta_{u,v}\leftarrow f_v\ominus \texttt{get\_scl\_}w(w(u,v),i,j,F)$; \label{algo:jvalue:l13}\;
			\If{ $L^{\text{inc}}[u]=\bot$ \texttt{\bf\, and\,} $f_u < \Delta_{u,v}$ }{ \label{algo:jvalue:l14}
				\If{$u\in V_0$ \texttt{\bf and} $\J.\texttt{cmp}[(u,v)]=\texttt{T}$}{ \label{algo:jvalue:l15}
					$\J.\texttt{cnt}[u]\leftarrow \J.\texttt{cnt}[u]-1$; \label{algo:jvalue:l16}\;
					$\J.\texttt{cmp}[(u,v)]\leftarrow\texttt{F}$; \label{algo:jvalue:l17}\; 
				}
				\lIf{$u\in V_1$ \texttt{\bf OR} 	
				$J.\texttt{cnt}[u]=0$}{$\texttt{insert}(u,L^{\text{inc}})$; \label{algo:jvalue:l18}}	
			}
		}
}
$\texttt{swap}(L^{\text{inc}}, L^{\text{inc}}_{\text{nxt}})$; \label{algo:jvalue:l19}
}
\setcounter{AlgoLine}{0}
\nonl \SetKwProg{SubFn}{SubProcedure}{}{}
\SubFn{$\texttt{apply\_}\delta(v,i,j,F,\J,\Gamma)$}{
    \SetKwInOut{Input}{input}
    \SetKwInOut{Output}{output}
\Input{$v\in V$, $i\in [W^-, W^+]$, $j\in [1, s-1]$, 
	$F$ is a ref. to Farey, $\J$ is Jumper, $\Gamma$ is an \MPG.}
$f_v\leftarrow \left\{
\begin{array}{ll}
	\min\big\{\texttt{get\_scl\_}f\big(v', j, F, \J\big) \ominus \texttt{get\_scl\_}w\big(w(v,v'),i,j,F\big) 
							\mid v'\in N^{\text{out}}_{\Gamma}(v)\big\}, & \text{ if } v\in V_0; \\
	\max\big\{\texttt{get\_scl\_}f\big(v', j, F, \J\big) \ominus \texttt{get\_scl\_}w\big(w(v,v'),i,j,F\big) 
							\mid v'\in N^{\text{out}}_{\Gamma}(v)\big\}, & \text{ if } v\in V_1. 
\end{array}
\right.$ \label{subproc:energy-lift:l1}\;
$\texttt{insert}\big((v, f_v), L_f\big)$; \label{subproc:energy-lift:l2}\;
}
\end{algo-proc}

\begin{Rem} 
Recall, the role of $L^{\text{inc}}_{\text{nxt}}$ and that of the $\texttt{swap}()$ (line~\ref{algo:jvalue:l19}) 
is precisely that of initializing, 
in advance, the list of inconsistent vertices $L^{\text{inc}}$ for the \emph{next} $\texttt{J-VI}()$; 
because the $\texttt{J-VI}()$ assumes a correct initialization of $L^{\text{inc}}$ as a pre-condition. 

We argue in Proposition~\ref{prop:jvalue_init} and Lemma~\ref{lem:jvalue_init_correct} that, 
when $\texttt{J-VI}()$ halts --say at step $h$-- it is necessary to 
initialize $\J.{L^{\text{inc}}}$ for the \emph{next} $\texttt{J-VI}()$ by including (at least) 
all the $v\in V$ such that: $0<f^{\text{c}:h}(v)\neq\top$.  
\end{Rem}
Notice, if $L^{\text{inc}}_{\text{nxt}}=\emptyset$ holds just before the $\texttt{swap}()$ at line~\ref{algo:jvalue:l19}, 
then $L^{\text{inc}}=\emptyset$ holds soon after;
therefore, in that case yet another \emph{EI-Jump} will occur (at line~\ref{algo:solve:l7} of Algorithm~\ref{algo:solve_mpg}) 
and eventually some other vertices will be inserted into 
$L^{\text{inc}}$ (see the details of SubProcedure~\ref{subproc:energy_jumps}).

We shall provide the details of $\texttt{init\_cnt\_inc}(v,i,j,F,\J)$ (line~\ref{algo:jvalue:l7}) very soon hereafter.

But let us first discuss the role that is played by $\J.\texttt{cnt}$ and $\J.\texttt{cmp}$ during the $\texttt{J-VI}()$.

\begin{wrapfigure}[11]{r}{7.5cm}
\raisebox{0pt}[\dimexpr\height-1.3\baselineskip\relax]{
\begin{algo-proc}[H]
\caption{Counters and Cmp Flags}\label{subproc:counters_inc-arcs}
\scriptsize
\DontPrintSemicolon
\nonl \SetKwProg{SubFn}{SubProcedure}{}{}
\SubFn{$\texttt{init\_cnt\_cmp}(u,i,j,F,\J,\Gamma)$}{
    \SetKwInOut{Input}{input}
    \SetKwInOut{Output}{output}
\Input{$u\in V_0$, $i\in [W^-, W^+]$, $j\in [1, s-1]$, $F$ is a ref. to Farey, $\J$ is Jumper, $\Gamma$ is an \MPG.}
	$c_u\leftarrow 0$; \label{subproc:cnt_inc-arcs:l1} \;
	\ForEach{$v\in N^{\text{out}}_{\Gamma}(u)$}{ \label{subproc:cnt_inc-arcs:l2} 
		$f_u\leftarrow \texttt{get\_scl\_}f(u,j,F,\J)$; \label{subproc:cnt_inc-arcs:l3} \; 
		$f_v\leftarrow\texttt{get\_scl\_}f(v,j,F,\J)$; \label{subproc:cnt_inc-arcs:l4} \;
		\If{$f_u \succeq f_{v} \ominus \texttt{get\_scl\_}w\big(w(u,v),i,j,F\big)$ \label{subproc:cnt_inc-arcs:l5} }{ 
			$c_u\leftarrow c_u+1$; \label{subproc:cnt_inc-arcs:l6} \; 
			$\J.\texttt{cmp}[(u,v)]\leftarrow \texttt{T}$; \label{subproc:cnt_inc-arcs:l7}
		}\lElse{
			$\J.\texttt{cmp}[(u,v)]\leftarrow \texttt{F}$; \label{subproc:cnt_inc-arcs:l8}
		}
	}
	$\J.\texttt{cnt}[u]\leftarrow c_u$; \label{subproc:cnt_inc-arcs:l9}\;
}
\end{algo-proc}}
\end{wrapfigure}

From line~\ref{algo:jvalue:l11} to line~\ref{algo:jvalue:l18}, 
$\texttt{J-VI}()$ explores $N^{\text{in}}_{\Gamma}(v)$ in order to find all 
the $u\in N^{\text{in}}_{\Gamma}(v)$ that may have become inconsistent 
soon after the energy-lifting $\delta$ that was applied to $v$ (before, at line~\ref{algo:jvalue:l3}). 
For each $u\in N^{\text{in}}_{\Gamma}(v)$ (line~\ref{algo:jvalue:l11}), 
the energy-level $f_u\leftarrow \texttt{get\_scl\_}f(u,j,F,\J)$ is considered (line~\ref{algo:jvalue:l12}), 
also, $\Delta_{u,v}\leftarrow f_v\ominus w'_{i,j}(u,v)$ is computed (line~\ref{algo:jvalue:l13}), 
where $f_v\leftarrow \texttt{get\_scl\_}f(v,j,F,\J)$; 
if $f_u<\Delta_{u,v}$ (\ie in case $(u,v)$ is now incompatible 
\wrt $f^{\text{c}}$ in $\Gamma_{i,j}$) \emph{and} $L^{\text{inc}}[u]=\bot$ holds (line~\ref{algo:jvalue:l14}), then: 

-- If $u\in V_0$ \textit{and} $(u,v)$ was not already incompatible 
\emph{before} (\ie if $\J.\texttt{cmp}[(u,v)]=\texttt{T}$ at line~\ref{algo:jvalue:l15}, 
then: $\J.\texttt{cnt}[u]$ is decremented (line~\ref{algo:jvalue:l16}), 
and $\J.\texttt{cmp}[(u,v)]\leftarrow \texttt{F}$ is assigned (line~\ref{algo:jvalue:l17}). 
(This is the role of the $\J.\texttt{cnt}$ and $\J.\texttt{cmp}$ flags).

-- After that, if $u\in V_1$ \textit{or} $\J.\texttt{cnt}[u]=0$, 
then $u$ is inserted into $L^{\text{inc}}$ (line~\ref{algo:jvalue:l18}). 

When the \texttt{while} loop (at line~\ref{algo:jvalue:l1}) ends, 
the (references to) $L^{\text{inc}}$ and $L^{\text{inc}}_{\text{nxt}}$ 
are \emph{swapped} (line~\ref{algo:jvalue:l19}) (one is assigned to reference the other and vice-versa, 
in $O(1)$ time by interchanging pointers). 

The details of $\texttt{init\_cnt\_cmp}(u,i,j,F,\J,\Gamma)$ (line~\ref{algo:jvalue:l7}), 
where $u\in V_0$, are given in SubProcedure~\ref{subproc:counters_inc-arcs}. 
At line~\ref{subproc:cnt_inc-arcs:l1}, $c_u\leftarrow 0$ is initialized. 
For each $v\in N^{\text{out}}_{\Gamma}(u)$ (line~\ref{subproc:cnt_inc-arcs:l2}), 
it is checked whether $(u,v)$ is compatible with respect to the current energy-levels; 
\ie whether or not $f_u \succeq f_{v} \ominus w'_{i,j}(u,v)$, 
holds for $f_u\leftarrow\texttt{get\_scl\_}f(u,j,F,\J)=f^{\text{c}}(u)$ and 
$f_{v}\leftarrow\texttt{get\_scl\_}f(v,j,F,\J)=f^{\text{c}}(v)$ (lines~\ref{subproc:cnt_inc-arcs:l3}-\ref{subproc:cnt_inc-arcs:l5}); 
if $(u,v)$ is found to be compatible, then $c_u$ is 
incremented (line~\ref{subproc:cnt_inc-arcs:l6}) and $\J.\texttt{cmp}[(u,v)]\leftarrow \texttt{T}$ 
is assigned (line~\ref{subproc:cnt_inc-arcs:l7}); 
otherwise, ($c_u$ stands still and) it is set $\J.\texttt{cmp}[(u,v)]\leftarrow \texttt{F}$ (line~\ref{subproc:cnt_inc-arcs:l8}). 
At the very end, it is finally set $\J.\texttt{cnt}[u]\leftarrow c_u$ (line~\ref{subproc:cnt_inc-arcs:l9}).

Concerning $\J.\texttt{cmp}$ and $\J.\texttt{cnt}$, it is now worth defining a formal notion of \emph{coherency}.
\begin{Def}\label{def:coherency}
Let $\iota$ be any step of execution of Algorithm~\ref{algo:solve_mpg}. 
Let $i\in [W^-,W^+]$, $j\in [0,s-1]$, $u\in V_0$ and $v\in N_\Gamma^{\text{out}}(u)$. 
We say that $\J.\texttt{cmp}^{\iota}[(u,v)]$ is \emph{coherent} \wrt $f^{\text{c}:\iota}$ in $\Gamma_{i,j}$ when it holds:
\[
\J.\texttt{cmp}^{\iota}[(u,v)]=\texttt{T} \;\textit{ iff }\;
					f^{\text{c}:\iota}(u)\succeq f^{\text{c}:\iota}(v)\ominus w'_{i,j}(u,v).
\]
Also, we say that $\J.\texttt{cnt}^{\iota}[u]$ is \emph{coherent} \wrt $f^{\text{c}:\iota}$ in $\Gamma_{i,j}$ when:
\[
\J.\texttt{cnt}^{\iota}[u] = \big|\big\{(u,v)\in E\mid f^{\text{c}:\iota}(u) 
					\succeq f^{\text{c}:\iota}(v)\ominus w'_{i,j}(u,v)\big\}\big|.
\]
We say that $\J.\texttt{cmp}^\iota$ is \emph{coherent} when $\forall\, (u\in V_0\setminus {L^{\text{inc}}}^{\iota})\; 
	\forall\, (v\in N_\Gamma^{\text{out}}(u))$ $\J.\texttt{cmp}^{\iota}[(u,v)]$ \text{is coherent};

and we say that $\J.\texttt{cnt}^{\iota}$ is \emph{coherent} 
when $\forall (u\in V_0\setminus {L^{\text{inc}}}^{\iota})$ $\J.\texttt{cnt}^{\iota}[u]$ is coherent. 

Finally, when something is not coherent, it is \emph{incoherent}. Remark: the step $\iota$ can be implicit.
\end{Def}
\begin{Rem}
In the Value-Iteration~\citep{brim2011faster}, the consistency checking of $(u,v)\in E$ (line~\ref{algo:jvalue:l14}) is 
explicit: an inequality like ``$f(u)\succeq f(v)\ominus w(u,v)$" is tested; 
thus, neither the \texttt{cmp} flags nor an explicit notion of coherency are needed. 
So, why we introduced \texttt{cmp} flags and coherency? Observe, at line~\ref{algo:jvalue:l14} of $\texttt{J-VI}()$, 
it doesn't make much sense to check ``$f(u)\succeq f(v)\ominus w(u,v)$" in our setting. 
Consider the following facts: (1) of course the values of $w'_{i,j}$ depend on the index $(i,j)$ of the current Scan-Phase; 
(2) therefore, going from one Scan-Phase to the next one (possibly, by Jumping), 
some counters may become incoherent, because $w_{i',j'} < w_{i,j}$ if $(i',j') > (i,j)$; 
but in the Value-Iteration~\citep{brim2011faster} the only possible source of incoherency was the application of $\delta(\cdot, v)$; 
in Algorithm~\ref{algo:solve_mpg}, going from one Scan-Phase to the next, we have an additional source of incoherency. 
(3) still, $\texttt{J-VI}()$ can't afford to re-initialize $\texttt{cnt}: V\rightarrow\N$ each time that it is 
needed, as this would cost $\Omega(|E|)$. So, if $(u,v)\in E$ is found incompatible (at line~\ref{algo:jvalue:l14} of $\texttt{J-VI}()$) 
after the application of $\delta(\cdot, v)$ (line~\ref{algo:jvalue:l3}), 
how do we know whether or not $(u,v)$ was already incompatible \emph{before} the (last) application of $\delta(\cdot, v)$? 
We suggest to adopt the \texttt{cmp} flags, one bit per arc is enough.
\end{Rem}

To show correctness and complexity, we firstly assume that whenever $\texttt{J-VI}(i,j,F,\J,\Gamma)$ is invoked
the following three pre-conditions are satisfied:

\begin{itemize}
\item[\emph{(PC-1)}] $L_f=\emptyset$ and $\forall^{v\in V}\, f^{\texttt{c}}(v) \preceq f^*_{w'_{i,j}}(v)$;

\item[\emph{(PC-2)}] $L^{\text{inc}}= \text{Inc}(f^{\text{c}}, i, j)$;

\item[\emph{(PC-3)}] $\J.\texttt{cnt}$ and $\J.\texttt{cmp}$ are \textit{coherent} \wrt $f^{\texttt{c}}$ in $\Gamma_{i,j}$.
\end{itemize}

After having described the internals of the EI-Jumps, 
we'll show how to ensure (a slightly weaker, but still sufficient formulation of) (PC-1), (PC-2), (PC-3).

Assuming the pre-conditions, similar arguments as in [\cite{brim2011faster}, Theorem~4] show that $\texttt{J-VI}()$ 
computes the least-SEPM of the \EG $\Gamma_{i,j}$ in time $O(|V|^2|E|W)$ and linear space.
\begin{Prop}\label{prop:correctness}
Assume that $\texttt{J-VI}()$ is invoked on input $(i,j,F,\J,\Gamma)$, and that (PC-1), (PC-2), (PC-3) hold at invocation time. 
Then, $\texttt{J-VI}()$ halts within the following time bound:
\[
	\Theta\Big(\sum_{v\in V} \texttt{deg}_\Gamma(v)\cdot\ell^1_{\Gamma_{i,j}}(v)\Big)= O\big(|V|^2 |E| W\big), 
\]
where $0\leq \ell^1_{\Gamma_{i,j}}(v)\leq (|V|-1)|V| W$ is the number of times 
that the energy-lifting operator $\delta$ is applied to any $v\in V$, 
at line~\ref{algo:jvalue:l3} of $\texttt{J-VI}()$ on input $(i,j,F,\J,\Gamma)$. The working space is $\Theta(|V|+|E|)$. 

When $\texttt{J-VI}()$ halts, $f^{\texttt{c}}$ coincides with the \emph{least}-SEPM of the reweighted \EG $\Gamma_{i,j}$.
\end{Prop}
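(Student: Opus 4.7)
The plan is to mirror the correctness-and-complexity argument of \cite{brim2011faster}, lifted to the pre-conditions (PC-1)--(PC-3). Monotonicity of the lifting operator $\delta(\cdot,v)$ on the complete lattice $\mathcal{E}_{\Gamma_{i,j}}$ is the underlying engine, and Knaster--Tarski supplies termination at the least fixed-point. The three pre-conditions are precisely what is needed to simulate the behaviour of the standard Value-Iteration in media res: (PC-1) places $f^{\texttt{c}}$ below the least SEPM $f^*_{w'_{i,j}}$, (PC-2) guarantees that no actionable inconsistency is hidden, and (PC-3) certifies that the bookkeeping arrays $\J.\texttt{cnt}$ and $\J.\texttt{cmp}$ faithfully describe the compatibility pattern induced by $f^{\texttt{c}}$ in $\Gamma_{i,j}$.

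\emph{Correctness.} I would establish, as the main loop invariant of $\texttt{J-VI}()$, the conjunction of: \emph{(i)} $f^{\texttt{c}}(v)\preceq f^*_{w'_{i,j}}(v)$ for every $v\in V$; \emph{(ii)} $L^{\text{inc}}\supseteq \text{Inc}(f^{\texttt{c}},i,j)$ at all times and $L^{\text{inc}}=\text{Inc}(f^{\texttt{c}},i,j)$ whenever the \texttt{while} guard is evaluated; \emph{(iii)} $\J.\texttt{cnt}[u]$ and $\{\J.\texttt{cmp}[(u,v)]\}_v$ are coherent with $f^{\texttt{c}}$ for every $u\in V_0\setminus L^{\text{inc}}$. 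Invariant \emph{(i)} is preserved by each call $\texttt{apply\_}\delta(v,\cdot)$ by monotonicity of $\delta$ applied to a function bounded by the fixed point $f^*_{w'_{i,j}}$, exactly as in \cite{brim2011faster}. Invariants \emph{(ii)}--\emph{(iii)} are preserved by the neighbour scan at lines~\ref{algo:jvalue:l11}--\ref{algo:jvalue:l18}: the conditional ``$f_u<\Delta_{u,v}$'' at line~\ref{algo:jvalue:l14} detects a freshly broken compatibility, the $V_0$-branch decrements $\J.\texttt{cnt}[u]$ exactly once per newly incompatible arc thanks to the $\J.\texttt{cmp}$ guard at line~\ref{algo:jvalue:l15}, and the call $\texttt{init\_cnt\_cmp}(v,\cdots)$ at line~\ref{algo:jvalue:l7} restores coherency at $v$ after $v$'s own energy-level has just changed. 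When the loop exits, $L^{\text{inc}}=\emptyset$ forces $\text{Inc}(f^{\texttt{c}},i,j)=\emptyset$ by \emph{(ii)}, so $f^{\texttt{c}}$ is a SEPM of $\Gamma_{i,j}$; combined with \emph{(i)} and the minimality of $f^*_{w'_{i,j}}$, we conclude $f^{\texttt{c}}=f^*_{w'_{i,j}}$.

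\emph{Complexity.} A vertex $v$ enters a single iteration of the \texttt{while} loop only via $L^{\text{inc}}$, and each such iteration triggers exactly one application of $\delta(\cdot,v)$ at line~\ref{algo:jvalue:l3}; thus the total iteration count is $\sum_{v\in V}\ell^1_{\Gamma_{i,j}}(v)$. Charging per iteration: $\texttt{apply\_}\delta(v,\cdots)$ costs $\Theta(|N^{\text{out}}_\Gamma(v)|)$; the optional $\texttt{init\_cnt\_cmp}(v,\cdots)$ costs $\Theta(|N^{\text{out}}_\Gamma(v)|)$; the neighbour-scan loop at lines~\ref{algo:jvalue:l11}--\ref{algo:jvalue:l18} costs $\Theta(|N^{\text{in}}_\Gamma(v)|)$ because each lookup into $L^{\text{inc}}$, $\J.\texttt{cnt}$, $\J.\texttt{cmp}$ and $L_f$ is $O(1)$ by the array-with-list representation. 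Summing these three contributions yields the desired $\Theta\big(\sum_{v}\texttt{deg}_\Gamma(v)\cdot\ell^1_{\Gamma_{i,j}}(v)\big)$. The worst-case bound $\ell^1_{\Gamma_{i,j}}(v)\leq(|V|-1)|V|W$ follows because, by invariant~\emph{(i)} and Proposition~\ref{prop:lepm_equals_mincredit}, $f^{\texttt{c}}(v)$ assumes at most $(|V|-1)|V|W+1$ distinct values in $\C_\Gamma$ and strictly increases at each application of $\delta(\cdot,v)$; hence the global worst-case time is $O(|V|^2|E|W)$. The working space is $\Theta(|V|+|E|)$, since each array/array-list in $\J$ is indexed by $V$ or $E$ and every auxiliary list stores $O(|V|)$ vertex entries.

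\emph{Main obstacle.} The delicate point is proving invariant \emph{(iii)}. Outside the standard Value-Iteration, two phenomena can corrupt $\J.\texttt{cnt}$/$\J.\texttt{cmp}$: the application of $\delta$ at $v$ itself (which can break/restore compatibility of arcs leaving $v\in V_0$) and, in the broader Algorithm~\ref{algo:solve_mpg}, the jump-induced weight change $w'_{\texttt{prev}_{\rho^J}(i,j)}\to w'_{i,j}$. The first source is neutralised by the unconditional $\texttt{init\_cnt\_cmp}(v,\cdots)$ at line~\ref{algo:jvalue:l7}; the second falls outside $\texttt{J-VI}()$ and is, by hypothesis, already absorbed into (PC-3). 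What still needs care is the monotone neighbour-scan: one must verify that, before line~\ref{algo:jvalue:l15} fires, $\J.\texttt{cmp}[(u,v)]$ records the state of $(u,v)$ under the \emph{previous} value of $f^{\texttt{c}}(v)$, not the current one, so that the decrement is correctly applied at most once per genuine compatibility loss. This is guaranteed because $\J.\texttt{cmp}[(u,v)]$ is updated only inside the scan itself (lines~\ref{algo:jvalue:l15}--\ref{algo:jvalue:l17}) and inside $\texttt{init\_cnt\_cmp}$, both of which are invoked after the lifting that may have caused the change; a careful case analysis on whether $(u,v)$ was compatible/incompatible before and after the lifting closes the argument.
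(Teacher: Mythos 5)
Your proposal is correct and follows essentially the same route as the paper's proof: both adapt the Value-Iteration argument of Brim et al.\ by showing that (PC-1)--(PC-3) substitute for the missing initialization phase, maintain a loop invariant combining the upper bound $f^{\texttt{c}}\sqsubseteq f^*_{w'_{i,j}}$ with coherency of $\J.\texttt{cnt}$/$\J.\texttt{cmp}$ and consistency of all vertices outside $L^{\text{inc}}$, conclude via monotonicity of $\delta$ and Knaster--Tarski, and obtain the time bound by charging $\Theta(\texttt{deg}_\Gamma(v))$ per application of $\delta(\cdot,v)$ with the lift count bounded by the $D_j$-scaled codomain $(|V|-1)D_jW\leq(|V|-1)|V|W$. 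Your identification of the $\J.\texttt{cmp}$ guard as the delicate point (ensuring each genuine compatibility loss decrements $\J.\texttt{cnt}$ exactly once) matches the paper's own remark on why the \texttt{cmp} flags are introduced.
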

\begin{proof} 
The argument is very similar to that of [\cite{brim2011faster}, Theorem~4], 
but there are some subtle differences between the $\texttt{J-VI}()$ and the Value-Iteration of Brim, \etal:

(1) $\texttt{J-VI}()$ employs $\J.f$ and $L_f$ to manage the energy-levels; however, 
one can safely argue by always referring to the current energy-levels $f^{\texttt{c}}$. 

(2) $\texttt{J-VI}()$ has no initialization phase; however, 
notice that the pre-conditions (PC-1), (PC-2), (PC-3) ensure a correct initialization of it. 

(3) $\texttt{J-VI}()$ employs $\J.\texttt{cmp}$ in order to test 
the consistency state of the arcs (see line~\ref{algo:jvalue:l15} and \ref{algo:jvalue:l17} of $\texttt{J-VI}()$); 
but it is easy to see that, assuming (PC-3), this is a correct way to go.

Let us provide a sketch of the proof of correctness. As already observed in [\cite{brim2011faster}, Lemma~7], 
the energy-lifting operator $\delta$ is 
$\sqsubseteq$-\emph{monotone} (\ie $\delta(f,v) \sqsubseteq \delta(g,v)$ for all $f \sqsubseteq g$).  
Next, the following invariant is maintained by $\texttt{J-VI}()$ 
	(Subprocedure~\ref{algo:j-value-iteration}) at line~\ref{algo:jvalue:l1}. 

\emph{Inv-JVI.} $\forall(\text{iteration } \iota 
\text{ of line}~\ref{algo:jvalue:l1}$ of $\texttt{J-VI}(i,j,F,\J,\Gamma))$ $\forall (u\in V\setminus 
	\J.{L^{\text{inc}}}^\iota)$ $\forall (v\in N_\Gamma^{\text{out}}(u))$:  

	(\textit{i}) $\delta(f^{\texttt{c}:\iota},u)=f^{\texttt{c}:\iota}$; 

	(\textit{ii}) if $u\in V_0\setminus \J.{L^{\text{inc}}}^\iota$, 
then $\J.\texttt{cnt}^{\iota}[u]$ and $\J.\texttt{cmp}^{\iota}[(u,v)]$ are 
both coherent \wrt $f^{\text{c}:\iota}$ in $\Gamma_{i,j}$. 

It is not difficult to prove that \emph{Inv-JVI} holds. 
The argument is almost the same as in [\cite{brim2011faster}, Lemma~8]; 
the only noticeable variations are: 
(a) the $\texttt{J-VI}()$ employs $\J.\texttt{cmp}$ in order to flag the compatibility status of the arcs; 
(b) the reference energy-level is $f^{\texttt{c}}$; 
(c) at the first iteration of line~\ref{algo:jvalue:l1} of $\texttt{J-VI}()$, 
the \emph{Inv-JVI} holds thanks to (PC-2) and (PC-3).

Termination is enforced by three facts: (i) every application of the energy-lifting operator (line~\ref{algo:jvalue:l3}) 
strictly increases the energy-level of one vertex $v$; (ii) the co-domain of SEPMs is finite. 

Correctness follows by applying the Knaster-Tarski's Fixed-Point Theorem.
Indeed, at halting time, since $\delta$ is $\sqsubseteq$-monotone, 
and since (PC-1) and \emph{Inv-JVI} hold, then we can apply Knaster-Tarski's Fixed-Point Theorem to conclude that, 
when $\texttt{J-VI}()$ halts at step $h$ (say), 
then $f^{\texttt{c}:h}$ is the unique least fixpoint of (simultaneously) all operators $\delta(\cdot,v)$ 
for all $v \in V$, \ie $f^{\texttt{c}:h}$ is the least-SEPM of the \EG $\Gamma_{i,j}$. 

So, when $\texttt{J-VI}()$ halts, it holds that $\forall^{ v\in V}\, f^{\texttt{c}:h}(v)=f^*_{w'_{i,j}}(v)$.

Concerning the time and space complexity, $\delta(\cdot,v)$ can 
be computed in time $O(|N_\Gamma^{\text{out}}(v)|)$ (line~\ref{algo:jvalue:l3}) 
(see $\texttt{apply\_}\delta()$ in SubProcedure~\ref{algo:j-value-iteration}); 
the updating of $\J.\texttt{cnt}$ and $\J.\texttt{cmp}$, 
which is performed by $\texttt{init\_cnt\_cmp}()$ (line~\ref{algo:jvalue:l7}), also takes $O(|N_\Gamma^{\text{out}}(v)|)$ time. 
Soon after that $\delta(\cdot, v)$ has been applied to $v\in V$ (line~\ref{algo:jvalue:l3}), 
the whole $N^{\text{in}}_\Gamma(v)$ is explored for repairing incoherencies and for finding new inconsistent vertices 
(which is done from line~\ref{algo:jvalue:l11} to line~\ref{algo:jvalue:l18}): this process takes $O(|N_\Gamma^{\text{in}}(v)|)$ time.
Therefore, if $\delta(\cdot, v)$ is applied $\ell^1_{\Gamma_{i,j}}(v)$ 
times to (any) $v\in V$ during the $\texttt{J-VI}(i,j,F,\J,\Gamma)$, 
the total time is $\Theta\big(\sum_{v\in V} \texttt{deg}_\Gamma(v)\cdot\ell^1_{\Gamma_{i,j}}(v)\big)$. 
The codomain of any SEPM of $\Gamma_{i,j}$ is at most $(|V|-1)W'$, for $W'=D_jW\leq |V|W$, 
where the additional factor $D_j\leq |V|$ comes from the scaled weights of $\Gamma_{i,j}$; 
thus, $\forall^{v\in V} 0\leq \ell^1_{\Gamma_{i,j}}(v)\leq (|V|-1)D_j W\leq (|V|-1)|V|W$. 
As already mentioned in Section~\ref{section:values}, 
the Farey's term $F[j]$ can be computed at the beginning of $\texttt{J-VI}()$ in $O(1)$ time and space, 
from $F[j-1]$ and $F[j-2]$.
Since $\sum_{v\in V}\texttt{deg}_\Gamma(v)=2|E|$, the running time is also $O(|V|^2|E|W)$. 
We check that $\texttt{J-VI}()$ works with $\Theta(|V|+|E|)$ space: $L^{\text{inc}}$, $L^{\text{inc}}_{\text{nxt}}$, $L_f$, 
and $L_\top$ contain no duplicates, so they take $\Theta(|V|)$ space; 
the size of $\J.f$ and $\J.\texttt{cnt}$ is $|V|$, that of $\J.\texttt{cmp}$ plus $L_{\omega}$ is $\Theta(|E|)$.  
\end{proof}

Indeed, the $\texttt{J-VI}()$ keeps track of two additional array-lists, $L^{\text{inc}}_{\text{nxt}}$ and $L_{\bot}$.
The role of $L^{\text{inc}}_{\text{nxt}}$ is to ensure (a slightly weaker formulation of) (PC-2): 
during the execution of Algorithm~\ref{algo:solve_mpg}, the $\texttt{prev}_{\rho^J}(i,j)$-th invocation 
of $\texttt{J-VI}()$ handles $L^{\text{inc}}_{\text{nxt}}$ so to ensure that (a slightly weaker, 
but still sufficient form of) (PC-2) holds for the $(i,j)$-th invocation. 
However, the way in which this happens also relies on the internals of the EI-Jumps. 
Also, the EI-Jumps take care of repairing $\J.\texttt{cnt}$ and $\J.\texttt{cmp}$ so to ensure (a weaker) (PC-3). 
The weaker formulation of (PC-2), (PC-3) is discussed in SubSection~\ref{subsect:correctness}. 
From this perspective, the functioning of $\texttt{J-VI}()$ and that of the EI-Jumps is quite braided. 
In order to detail these aspects, we need to observe the following fact.
\begin{Prop}\label{prop:jvalue_init} Let $i\in [W^-,W^+]$ and $j\in [1,s-1]$.
Assume that $\texttt{J-VI}(i,j,F,\J,\Gamma)$ is invoked at some step $\iota$, 
suppose that $\J.{L^{\text{inc}}_{\text{nxt}}}^{\iota}=\emptyset$, and that (PC-1), (PC-2), (PC-3) hold at step $\iota$. 

Then, the following two facts hold:
\begin{enumerate}
\item At each step $\hat\iota\geq \iota$ of $\texttt{J-VI}()$, 
done \emph{before} the $\texttt{swap}()$ at line~\ref{algo:jvalue:l19}: 
$\J.{L^{\text{inc}}}^{\hat\iota}\subseteq \text{Inc}(f^{\text{c}:\hat\iota}, i, j)$.
\item When $\texttt{J-VI()}$ halts, \emph{after} the $\texttt{swap}()$ at line~\ref{algo:jvalue:l19}, say at step $h$, then: 
\[ \J.{L^{\text{inc}}}^{h} = \{ v\in V \mid 0<f^{\text{c}:h}(v) \neq \top \}. \]
\end{enumerate}
\end{Prop}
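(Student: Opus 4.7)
The plan is to proceed by induction on the iterations of the outer while loop of $\texttt{J-VI}()$ (line~\ref{algo:jvalue:l1}), maintaining one invariant tailored to each of the two claims.

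For Part~1, I would maintain $I_1(\hat\iota)$: at the top of the loop at step $\hat\iota$, $\J.{L^{\text{inc}}}^{\hat\iota}\subseteq \text{Inc}(f^{\text{c}:\hat\iota}, i, j)$. The base case $\hat\iota=\iota$ is exactly (PC-2). For the inductive step, once $v\leftarrow\texttt{pop\_front}(L^{\text{inc}})$ has been executed and $\texttt{apply\_}\delta(v,\ldots)$ has raised $f^{\text{c}}(v)$, the vertex $v$ itself is no longer inconsistent. Any $u\neq v$ that was already in $L^{\text{inc}}$ remains inconsistent, since $f^{\text{c}}$ changed only at $v$, and raising $f^{\text{c}}(v)$ can only make an in-neighbour's SEPM condition harder to satisfy (never easier). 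Hence the only subtle point is verifying that the vertices newly inserted at line~\ref{algo:jvalue:l18} are genuinely inconsistent: for $u\in V_1$, the explicit test $f_u<\Delta_{u,v}$ at line~\ref{algo:jvalue:l14} directly witnesses the existential clause of the inconsistency definition; for $u\in V_0$, the guard $\J.\texttt{cnt}[u]=0$ combined with the coherency of $\J.\texttt{cnt}$ and $\J.\texttt{cmp}$ (Definition~\ref{def:coherency}) certifies that every outgoing arc of $u$ is incompatible, which is the universal clause. Coherency of these two arrays is what the inductive step really hinges on; it is carried forward by the updates at lines~\ref{algo:jvalue:l16}-\ref{algo:jvalue:l17} and by $\texttt{init\_cnt\_cmp}$ invoked at line~\ref{algo:jvalue:l7}, starting from (PC-3).

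For Part~2, the key observation is that $L^{\text{inc}}_{\text{nxt}}$ is only ever written at lines~\ref{algo:jvalue:l6} and~\ref{algo:jvalue:l10}: the just-popped vertex $v$ is inserted iff the post-$\delta$ value $f^{\text{c}}(v)$ is $\neq\top$ and deleted iff that value is $\top$. Since $\texttt{apply\_}\delta$ is invoked only on inconsistent vertices, it strictly raises $f^{\text{c}}(v)$ from a non-negative integer, yielding the secondary invariant $I_2$: at any step during the loop, $L^{\text{inc}}_{\text{nxt}}$ equals the set of vertices to which $\delta$ has been applied during the current $\texttt{J-VI}()$ and whose current energy is different from $\top$, and every such $v$ satisfies $f^{\text{c}}(v)>0$. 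Since the $\texttt{swap}()$ at line~\ref{algo:jvalue:l19} leaves $f^{\text{c}}$ untouched, immediately after it $\J.{L^{\text{inc}}}^h$ equals this set, which yields the easy inclusion $\J.{L^{\text{inc}}}^h\subseteq\{v:0<f^{\text{c}:h}(v)\neq\top\}$.

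The main obstacle I expect is the reverse inclusion, which cannot be read off from the stated preconditions alone: a vertex with $f^{\text{c}:\iota}(v)>0$ that happens to be already consistent at step $\iota$ would never be popped and hence never flow into $L^{\text{inc}}_{\text{nxt}}$ during this invocation. The plan is to close this gap by relying on a \emph{global} invariant maintained jointly by $\texttt{ei-jump}()$ and $\texttt{J-VI}()$, namely $L^{\text{inc}}\cup L^{\text{inc}}_{\text{nxt}}\supseteq \{v : 0<f^{\text{c}}(v)\neq\top\}$ at every step. I expect this to be re-established by the EI-Jump immediately before each invocation of $\texttt{J-VI}()$ (by inserting every vertex with finite positive current energy into $L^{\text{inc}}$, jointly with the $\text{Inc}$-set demanded by (PC-2)), and preserved inside the while loop because each pop transfers $v$ from $L^{\text{inc}}$ into $L^{\text{inc}}_{\text{nxt}}$, unless $f^{\text{c}}(v)=\top$, in which case $v$ simultaneously drops out of the target set. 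Combining this global invariant with $I_2$ at the halting step $h$ (where $L^{\text{inc}}=\emptyset$ before the swap) yields the missing inclusion and hence the claimed equality; the details of exactly how the EI-Jump enforces the invariant are deferred to the discussion of $\texttt{ei-jump}()$ that follows.
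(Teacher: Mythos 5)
Your treatment of Item~1 and of the inclusion $\J.{L^{\text{inc}}}^{h}\subseteq\{v\mid 0<f^{\text{c}:h}(v)\neq\top\}$ is exactly the paper's argument: (PC-2) as the base case, coherency of $\J.\texttt{cnt}$ and $\J.\texttt{cmp}$ (from (PC-3), preserved by lines~\ref{algo:jvalue:l15}--\ref{algo:jvalue:l17} and by $\texttt{init\_cnt\_cmp}()$) to certify that line~\ref{algo:jvalue:l18} only inserts genuinely inconsistent vertices, monotonicity of the energy-levels to show that inconsistency persists while a vertex waits in $L^{\text{inc}}$, and the observation that a popped vertex is strictly lifted by $\delta$ and therefore enters $L^{\text{inc}}_{\text{nxt}}$ with $0<f_v\neq\top$ (being expelled at line~\ref{algo:jvalue:l10} if it later reaches $\top$). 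Where you genuinely depart from the paper is the reverse inclusion in Item~2, and your diagnosis there is correct --- indeed more careful than the paper's own proof, which asserts that direction without argument. The stated preconditions alone do not yield it: if $f^{\text{c}:\iota}$ already equals the least-SEPM of $\Gamma_{i,j}$ and some vertex has a positive finite energy-level, then (PC-1)--(PC-3) and $L^{\text{inc}}_{\text{nxt}}=\emptyset$ all hold, the \texttt{while} loop never executes, and the claimed equality fails. In the paper the slack is absorbed by the surrounding machinery rather than inside this proof: Lemma~\ref{lem:jvalue_init_correct} applied to the preceding invocation (together with Item~2 for that invocation) guarantees that every $v$ with $0<f^{\text{c}:\iota}(v)\neq\top$ already lies in $\text{Inc}(f^{\text{c}:\iota},i,j)$, hence in ${L^{\text{inc}}}^{\iota}$ by (PC-2), hence is eventually popped and lifted. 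Your global invariant $L^{\text{inc}}\cup L^{\text{inc}}_{\text{nxt}}\supseteq\{v\mid 0<f^{\text{c}}(v)\neq\top\}$ is a reasonable way to package the same fact, but two of the details you defer need adjusting: it is the previous $\texttt{J-VI}()$'s $\texttt{swap}()$ at line~\ref{algo:jvalue:l19}, not $\texttt{ei-jump}()$, that loads the positive-finite vertices into $L^{\text{inc}}$ (when $\texttt{ei-jump}()$ actually fires, (eij-PC-2) says that this set is empty); and the invariant as stated is broken verbatim by $\texttt{backtrack\_ua-jump}()$, which parks $L^{\text{inc}}$ into $L^{\text{inc}}_{\text{cpy}}$ while the algorithm restricts attention to the sub-arena $\Gamma[S]$, so it must be formulated relative to the vertex set of the arena currently passed to $\texttt{J-VI}()$. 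With those repairs your plan closes the gap; as written it is a sketch of the right proof rather than the proof itself.
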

\emph{Proof of (1)} When $\texttt{J-VI()}$ is invoked, Item~1 holds by (PC-2). 
Then, $\texttt{J-VI}()$ can insert any $u\in V$ into $L^{\text{inc}}$ only at line~\ref{algo:jvalue:l18}, 
when exploring $N_\Gamma^{\text{in}}(v)$ (from line~\ref{algo:jvalue:l11} to line~\ref{algo:jvalue:l18}), for some $v\in V$. 
At line~\ref{algo:jvalue:l18}, $u\in V$ is inserted into 
$L^{\text{inc}}$ \textit{iff} $f_u < \Delta_{u,v}$ (line~\ref{algo:jvalue:l14}) and either 
$u\in V_1$ or $\J.\texttt{cnt}[u]=0$; \ie \textit{iff} $u$ is inconsistent \wrt $f^{\text{c}}$ in $\Gamma_{i,j}$ 
(indeed, $\J.\texttt{cnt}$ is coherent by (PC-3) and the fact that 
lines~\ref{algo:jvalue:l15}-\ref{algo:jvalue:l17} of $\texttt{J-VI}()$ preserve coherency). 
As $f^{\text{c}}(u)$ stands still while $u$ is inside $L^{\text{inc}}$, 
and $f^{\text{c}}(v)$ for any $v\in N_\Gamma^{\text{out}}(u)$ can only increase during the $\texttt{J-VI}()$, 
then Item~1 holds. \qed

\emph{Proof of (2)} Let us focus on the state of $L^{\text{inc}}_{\text{nxt}}$. 
Initially, $L^{\text{inc}}_{\text{nxt}}=\emptyset$ by hypothesis. 
During the $\texttt{J-VI()}$, $L^{\text{inc}}_{\text{nxt}}$ is 
modified only at line~\ref{algo:jvalue:l6} or \ref{algo:jvalue:l10}: 
some $v\in V$ is inserted into $L^{\text{inc}}_{\text{nxt}}$, 
say at step $\hat\iota$, (line~\ref{algo:jvalue:l6}) \textit{iff} $f_v\neq \top$ 
(where $f_v$ is the energy-level of $v$ at the time of the insertion $\hat\iota$). 
We argue that $f_v>0$ holds at $\hat\iota$ (line~\ref{algo:jvalue:l6}): 
since $v$ was extracted from $L^{\text{inc}}$ (line~\ref{algo:jvalue:l2}), 
and since all vertices in $L^{\text{inc}}$ are inconsistent \wrt $f^{\text{c}:\hat\iota}$ in $\Gamma_{i,j}$ by Item~1, 
then $\delta(\cdot, v)$ had really increased $f^{\text{c}}(v)$ (at line~\ref{algo:jvalue:l3}); 
thus, it really holds $f_v>0$ at $\hat\iota$. 
After the insertion, in case $f^{\text{c}}(v)$ becomes $\top$ at some subsequent execution of line~\ref{algo:jvalue:l3}, 
$v$ is removed from $L^{\text{inc}}_{\text{nxt}}$ (and inserted into $L_{\top}$), 
see lines~\ref{algo:jvalue:l8}-\ref{algo:jvalue:l10}. 
Finally, at line~\ref{algo:jvalue:l19} of $\texttt{J-VI()}$, 
$L^{\text{inc}}_{\text{nxt}}$ and $L^{\text{inc}}$ are swapped (line~\ref{algo:jvalue:l19}). 
Therefore, at that point, Item~2 holds. \qed

When $\texttt{J-VI}()$ halts, it is necessary to initialize $L^{\text{inc}}$ for the \emph{next} 
$\texttt{J-VI}()$ by including all the $v\in V$ such that $0<f^{\text{c}}(v)\neq\top$, because they are all inconsistent; 
this is shown by Lemma~\ref{lem:jvalue_init_correct}. 
\begin{Lem}\label{lem:jvalue_init_correct} Let $i\in [W^{-}, W^{+}]$ and $j\in [1, s-1]$, where $s\triangleq |\F_{|V|}|$. 
Assume that $\texttt{J-VI}()$ is invoked on input $(i,j,F,\J,\Gamma)$, 
and that all the pre-conditions (PC-1), (PC-2), (PC-3) are satisfied. 
Assume that $\texttt{J-VI}(i,j,F,\J,\Gamma)$ halts at step $h$.
Let $i'\in [W^{-}, W^{+}]$ and $j'\in [1, s-1]$ be any two indices such that $(i',j') > (i,j)$. If $v\in V$ satisfies 
$0 < f^{\text{c}:h}(v) \neq \top$, then $v\in \text{Inc}(f^{\text{c}:h}, i', j')$.  
\end{Lem}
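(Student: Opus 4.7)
The plan is as follows. First, I would invoke Proposition~\ref{prop:correctness}: since (PC-1), (PC-2), (PC-3) are met at invocation time, when $\texttt{J-VI}(i,j,F,\J,\Gamma)$ halts at step $h$ the current energy-levels $f^{\text{c}:h}$ coincide with the least-SEPM $f^*_{w'_{i,j}}$ of $\Gamma_{i,j}$. By Knaster--Tarski, $f^{\text{c}:h}$ is a fixed point of every operator $\delta(\cdot,u)$; equivalently, for $v\in V_0$ we have $f^{\text{c}:h}(v)=\min\{f^{\text{c}:h}(v')\ominus w'_{i,j}(v,v')\mid v'\in N^{\text{out}}_\Gamma(v)\}$, and for $v\in V_1$ the analogous $\max$ holds.

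Next, using the hypothesis $0<f^{\text{c}:h}(v)\neq\top$, I would exhibit a witness $v^*\in N^{\text{out}}_\Gamma(v)$ attaining the extremum above in an \emph{unclipped} way: if $f^{\text{c}:h}(v^*)=\top$ the extremum would be $\top$, forcing $f^{\text{c}:h}(v)=\top$; if $f^{\text{c}:h}(v^*)\ominus w'_{i,j}(v,v^*)$ were clipped to $0$, then $f^{\text{c}:h}(v)=0$; both contradict the hypothesis. Hence $f^{\text{c}:h}(v^*)-w'_{i,j}(v,v^*)=f^{\text{c}:h}(v)$ holds as plain integer arithmetic with a finite, positive right-hand side. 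For $v\in V_0$ I would further sharpen this: for \emph{every} $v'\in N^{\text{out}}_\Gamma(v)$, $f^{\text{c}:h}(v')\ominus w'_{i,j}(v,v')\succeq f^{\text{c}:h}(v)>0$, so the ``clip to $0$'' never triggers on any outgoing arc of $v$.

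The key step is then the strict monotonicity coming from the lex order: $(i',j')>(i,j)$ gives $i+F_j<i'+F_{j'}$, hence at the rational level the weight $w_{i',j'}(v,v')=w(v,v')-i'-F_{j'}$ is strictly less than $w_{i,j}(v,v')$ along every arc. Lifting to the integer scaled convention used by $\text{Inc}(\cdot,i',j')$ (aligning $D_j$ and $D_{j'}$ on a common multiple if necessary), monotonicity of $a\mapsto a\ominus b$ in $-b$ yields $f^{\text{c}:h}(v')\ominus w'_{i',j'}(v,v')\succ f^{\text{c}:h}(v')\ominus w'_{i,j}(v,v')$ whenever the right-hand side is not already at $0$ -- and we just ensured this for the outgoing arcs of $v$.

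Combining the steps: for $v\in V_0$, every $v'$ satisfies $f^{\text{c}:h}(v)\preceq f^{\text{c}:h}(v')\ominus w'_{i,j}(v,v')\prec f^{\text{c}:h}(v')\ominus w'_{i',j'}(v,v')$, which is exactly the $V_0$ inconsistency clause; for $v\in V_1$, the witness $v^*$ gives $f^{\text{c}:h}(v)=f^{\text{c}:h}(v^*)\ominus w'_{i,j}(v,v^*)\prec f^{\text{c}:h}(v^*)\ominus w'_{i',j'}(v,v^*)$, the $V_1$ clause. Thus $v\in\text{Inc}(f^{\text{c}:h},i',j')$, as required. The main obstacle I foresee is carrying the strict inequalities cleanly through the two different integer scalings $D_j,D_{j'}$ while simultaneously controlling the saturation of $\ominus$ at both $0$ and $\top$ in all the side cases; once those book-keeping issues are pinned down, the Knaster--Tarski fixed-point identity does the real work.
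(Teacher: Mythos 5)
Your proposal is correct and follows essentially the same route as the paper's own proof: invoke Proposition~\ref{prop:correctness} and Knaster--Tarski to get the fixed-point identity for $f^{\text{c}:h}$ at $v$, use $0<f^{\text{c}:h}(v)\neq\top$ to discard the saturation of $\ominus$, exploit the strict decrease of the weights when passing from $(i,j)$ to $(i',j')$, and then restore $\ominus$. The scaling and saturation book-keeping you flag at the end (different denominators $D_j,D_{j'}$, successors with energy $\top$) is handled just as informally in the paper's version, so there is nothing substantive to reconcile.
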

\begin{proof}
Let $\hat{v}\in V$ be any vertex such that $0 < f^{\text{c}:h}(\hat{v}) \neq \top$.
By Proposition~\ref{prop:correctness}, $\forall^{v\in V}\, f^{\text{c}:h}(v)=f^*_{w'_{i,j}}(v)$ . 
Since $f^*_{w'_{i,j}}(\hat{v})$ is the least-SEPM of $\Gamma_{i,j}$, 
then it is the unique least fixed-point of 
simultaneously all operators $\{\delta(\cdot, v)\}_{v\in V}$ by Knaster-Tarski; 
therefore, the following holds:
\[ f^{\text{c}:h}(\hat{v})=f^*_{w'_{i,j}}(\hat{v})=
	\left\{
	 \begin{array}{ll}
		\min\{f^*_{w'_{i,j}}(v')\ominus w'_{i,j}(\hat{v},v')\mid 
				v'\in N_\Gamma^{\text{out}}(\hat{v}) \} , & \text{ if } \hat{v} \in V_0 \\ 
		\max\{f^*_{w'_{i,j}}(v')\ominus w'_{i,j}(\hat{v},v')\mid 
				v'\in N_\Gamma^{\text{out}}(\hat{v}) \} , & \text{ if } \hat{v} \in V_1
	 \end{array} 
	\right. 
\]
Since $0<f^{\text{c}:h}(\hat{v})\neq \top$, it is safe to discard the $\ominus$ operator in the equality above. 
Moreover, since $(i',j') > (i,j)$, then $w'_{i,j} > w'_{i',j'}$. Therefore, the following inequality holds:
\begin{align*}
f^{\text{c}:h}(\hat{v}) & = 
\left\{
 \begin{array}{ll}
	\min\{f^{\text{c}:h}(v') - w'_{i,j}(\hat{v},v')\mid v'\in N_\Gamma^{\text{out}}(\hat{v}) \} , & \text{ if } \hat{v} \in V_0 \\ 
	\max\{f^{\text{c}:h}(v') - w'_{i,j}(\hat{v},v')\mid v'\in N_\Gamma^{\text{out}}(\hat{v}) \} , & \text{ if } \hat{v} \in V_1
 \end{array} 
\right. \\ 
 & < \left\{
 \begin{array}{ll}
	\min\{f^{\text{c}:h}(v') - w'_{i',j'}(\hat{v},v')\mid v'\in N_\Gamma^{\text{out}}(\hat{v}) \} , & \text{ if } \hat{v} \in V_0 \\ 
	\max\{f^{\text{c}:h}(v') - w'_{i',j'}(\hat{v},v')\mid v'\in N_\Gamma^{\text{out}}(\hat{v}) \} , & \text{ if } \hat{v} \in V_1
 \end{array} 
\right. 
\end{align*}
So, restoring $\ominus$, we have $f^{\text{c}:h}(\hat{v})\prec \left\{
 \begin{array}{ll}
	f^{\text{c}:h}(v') \ominus w'_{i',j'}(\hat{v},v') \text{ for all }  
							v'\in N_\Gamma^{\text{out}}(\hat{v}) , & \text{ if } \hat{v} \in V_0 \\ 
	f^{\text{c}:h}(v') \ominus w'_{i',j'}(\hat{v},v') \text{ for some }  
							v'\in N_\Gamma^{\text{out}}(\hat{v}) , & \text{ if } \hat{v} \in V_1
 \end{array} 
\right.$ 

Therefore,  $v\in \text{Inc}(f^{\text{c}:h}, i', j')$.
\end{proof}
Although, when the $\texttt{prev}_{\rho^J}(i,j)$-th $\texttt{J-VI}()$ halts, 
it is correct --and necessary-- to initialize $L^{\text{inc}}$ for the $(i,j)$-th $\texttt{J-VI}()$ by including 
all those $v\in V$ such that $0<f^{\text{c}}(v)\neq\top$ (because they are all inconsistent \wrt 
to $f^{\text{c}}$ in $\Gamma_{i,j}$ by Lemma~\ref{lem:jvalue_init_correct}), 
still, we observe that this is not sufficient. Consider the following two facts (I-1) and (I-2):

(I-1) It may be that, when the $\texttt{prev}_{\rho^J}(i,j)$-th $\texttt{J-VI}()$ halts, 
it holds for all $v\in V$ that either $f^{\text{c}}(v)=0$ or $f^{\text{c}}(v)=\top$. 
In that case, $L^{\text{inc}}$ would be empty (if nothing more than 
what prescribed by Proposition~\ref{prop:jvalue_init} is done). 
We need to prevent this from happening, so to avoid \emph{vain} Scan-Phases.
	
(I-2) When going, say, from the $(i-1,j)$-th to the $(i,1)$-th Scan-Phase, there might be some $(u,v)\in E$ such that: 
$f^{\text{c}}(u)=0=f^{\text{c}}(v)$ and $w(u,v)=i$;  
those $(u,v)$ may become incompatible \wrt $f^{\text{c}}$ in $\Gamma_{i,1}$ (because $i-1$ had been increased to $i$), 
possibly breaking the compatibility (and thus the coherency) of $(u,v)$. 
These incompatible arcs are not taken into account by Proposition~\ref{prop:jvalue_init}, 
nor by Lemma~\ref{lem:jvalue_init_correct}. 
Thus a special care is needed in order to handle them.

\emph{Energy-Increasing-Jumps.} To resolve the issues raised in I-1 and I-2, the \emph{EI-Jumps} will come into play.
The pseudocode of the EI-Jumps is provided in SubProcedure~\ref{subproc:energy_jumps}. 
The $\texttt{ei-jump}(i,\J)$ really makes a jump only when $L^{\text{inc}}=\emptyset$ holds invocation. 
Basically, if $L^{\text{inc}}=\emptyset$ (line~\ref{subproc:energy_jump:l1}) 
we aim at avoiding \emph{vain} Scan-Phases, \ie (I-1); 
still, we need to take care of some additional (possibly) incompatible arcs, \ie (I-2). Recall, 
$L^{\text{inc}}$ is initialized by the $\texttt{J-VI}()$ itself according to Proposition~\ref{prop:jvalue_init}. 
Therefore, at line~\ref{subproc:energy_jump:l1}, $L^{\text{inc}}=\emptyset$ \textit{iff} 
either $\J.f(v)=0$ \textit{or} $\J.f(v)=\top$ for every $v\in V$.

\begin{wrapfigure}[24]{r}{7cm}
\raisebox{0pt}[\dimexpr\height-1.1\baselineskip\relax]{
\begin{algo-proc}[H]
\caption{EI-Jump}\label{subproc:energy_jumps}
\scriptsize
\DontPrintSemicolon
\nonl \SetKwProg{Fn}{Procedure}{}{}
\label{proc:energy_jump}
\Fn{$\texttt{ei-jump}(i,\J)$}{
    \SetKwInOut{Input}{input}
    \SetKwInOut{Output}{output}
\Input{Jumper $\J$.}
\Output{\texttt{T} if an EI-Jump occurs; else, \texttt{F}.}
\If{$L^{\text{inc}}=\emptyset$ \label{subproc:energy_jump:l1}}{
	$L^{\text{inc}}\leftarrow L^{\text{inc}}_{\text{cpy}}$;  
	$L^{\text{inc}}_{\text{cpy}}\leftarrow\emptyset$ \label{subproc:energy_jump:l2};\;
	$\J.i\leftarrow i+1$; \label{subproc:energy_jump:l3} \;	
	\If{$L_{\omega}\neq\emptyset$}{ \label{subproc:energy_jump:l4}
	   $(w, L_{\alpha})\leftarrow \texttt{read\_front}(L_{\omega})$; \label{subproc:energy_jump:l5}\;
		\If{$w=\J.i$ \label{subproc:energy_jump:l6}}{
			$\texttt{pop\_front}(L_\omega)$; \label{subproc:energy_jump:l7}\; 
			$\texttt{repair}(L_{\alpha}, \J)$;\label{subproc:energy_jump:l8}
		}
	}
	\While{$L^{\text{inc}}=\emptyset$ \texttt{\bf and }$L_{\omega}\neq\emptyset$ \label{subproc:energy_jump:l9}}{
		$(w, L_{\alpha})\leftarrow \texttt{pop\_front}(L_\omega)$; \label{subproc:energy_jump:l10}\;
		$\J.i\leftarrow w;$ \label{subproc:energy_jump:l11}\;
		$\texttt{repair}(L_{\alpha}, \J)$; \label{subproc:energy_jump:l12}\;
	}
	\Return{\texttt{T}}; \label{subproc:energy_jump:l13}\;
}
\lElse{\Return{\texttt{F}; \label{subproc:energy_jump:l14}}}
}
\setcounter{AlgoLine}{0}
\SetKwProg{SubFn}{SubProcedure}{}{}
\nonl\SubFn{$\texttt{repair}(L_{\alpha}, \J)$}{
    \SetKwInOut{Input}{input}
    \SetKwInOut{Output}{output}
\Input{A list of arcs $L_{\alpha}$, reference to Jumper $\J$.}
\ForEach{$(u,v)\in L_{\alpha}$ \label{subproc:repair:l1}}{
	\If{$\J.f[u]=0$ \texttt{\bf and }$\J.f[v]=0$ \texttt{\bf and} $L^{\text{inc}}[u]=\bot$ \label{subproc:repair:l2}}{ 
			\If{$u\in V_0$ \label{subproc:repair:l3}}{
				$\J.\texttt{cnt}[u]\leftarrow \J.\texttt{cnt}[u]-1;$ \label{subproc:repair:l4}\;	
				$\J.\texttt{cmp}\big[(u,v)\big]\leftarrow \texttt{F}$; \label{subproc:repair:l5}\;
				\If{$\J.\texttt{cnt}[u]=0$ \label{subproc:repair:l6}}{ 
					$\texttt{insert}(u, L^{\text{inc}})$; \label{subproc:repair:l7}		
				}		
			}\lIf{$u\in V_1$ \label{subproc:repair:l8}}{  
				$\texttt{insert}(u, L^{\text{inc}})$; \label{subproc:repair:l9}
			}	
		}
	} 
}
\end{algo-proc}} 
\end{wrapfigure} 

To begin with, if $L^{\text{inc}}=\emptyset$ (line~\ref{subproc:energy_jump:l1}), 
copy $L^{\text{inc}}\leftarrow L^{\text{inc}}_{\text{cpy}}$, then, 
erase $L^{\text{inc}}_{\text{cpy}}\leftarrow\emptyset$ (line~\ref{subproc:energy_jump:l2}): 
this is related to the steps of backtracking that are performed by the UA-Jumps, 
we will give more details on this later on. Next, we increment $i$ to $\J.i\leftarrow i+1$ (line~\ref{subproc:energy_jump:l3}). 
Then, if $L_{\omega}\neq\emptyset$ at line~\ref{subproc:energy_jump:l4}, 
we read (read-only) the front entry $(\hat{w},L_{\hat{\alpha}})$ of $L_{\omega}$ (line~\ref{subproc:energy_jump:l5}); 
only if $\hat{w}=\J.i$ (line~\ref{subproc:energy_jump:l6}), 
we pop $(\hat{w}, L_{\hat{\alpha}})$ out of $L_\omega$ (line~\ref{subproc:energy_jump:l7}), 
and we invoke $\texttt{repair}(L_{\hat{\alpha}}, \J)$ (line~\ref{subproc:energy_jump:l8}) to 
repair the coherency state of all those arcs (\ie all and only those in $L_{\hat{\alpha}}$) that we mentioned in (I-2). 
We will detail $\texttt{repair}()$ shortly, now let us proceed with $\texttt{ei-jump}()$. 
At line~\ref{subproc:energy_jump:l9}, \textit{while} $L^{\text{inc}}=\emptyset$ \textit{and} $L_{\omega}\neq\emptyset$: 
the front $(\bar{w},L_{\bar{\alpha}})$ is popped from $L_\omega$ (line~\ref{subproc:energy_jump:l10}) 
and $\J.i\leftarrow \bar{w}$ is assigned (line~\ref{subproc:energy_jump:l11}). 
The ending-point of the EI-Jump will now reach $\bar{w}$ (at least). 
A moment's reflection reveals that, jumping up to $\bar{w}$, 
some arcs $(u,v)\in E$ such that $f^{\text{c}}(u)=0=f^{\text{c}}(v)$ (which were compatible \wrt the $(i,j)$-th Scan-Phase, 
just \emph{before} the jump) may become incompatible for the $(\bar{w}, 1)$-th Scan-Phase (which is now candidate to happen), 
because $\bar{w}>i$. What are these new incompatible arcs? Since $L_{\omega}$ was sorted in increasing order, 
they're all \emph{and only} those of weight $w(u,v)=\bar{w}=\J.i$; 
\ie those in the $L_{\bar{\alpha}}$ that is binded to $\bar{w}$ in $L_{\omega}$. 
To repair coherency, $\texttt{repair}(L_{\bar{\alpha}}, \J)$ (line~\ref{subproc:energy_jump:l12}) is invoked. 
This repeats until $L^{\text{inc}}\neq\emptyset$ \textit{or} $L_{\omega}=\emptyset$. 
Then, $\texttt{ei-jump}()$ returns $\texttt{T}$ (at line~\ref{subproc:energy_jump:l13}).

If $L^{\text{inc}}\neq\emptyset$ at line~\ref{subproc:energy_jump:l1}, 
then $\texttt{F}$ is returned (line~\ref{subproc:energy_jump:l14}); so, in that case, \emph{no} EI-Jump will occur.  

Let us detail the $\texttt{repair}(L_\alpha, \J)$. On input $(L_{\alpha}, \J)$, 
for each arc $(u,v)\in L_{\alpha}$ (line~\ref{subproc:repair:l1}), if $\J.f[u]=0=\J.f[v]$ and 
$L^{\text{inc}}[u]=\bot$ (line~\ref{subproc:repair:l2}), the following happens. 
If $u\in V_1$, then $u$ is promptly inserted (in front of) $L^{\text{inc}}$ (line~\ref{subproc:repair:l8}); 
else, if $u\in V_0$, $\J.\texttt{cnt}[u]$ is decremented by one unit (line~\ref{subproc:repair:l4}); 
also, it is flagged $\J.\texttt{cmp}[(u,v)]\leftarrow\texttt{F}$ (line~\ref{subproc:repair:l5}). 
After that, if $\J.\texttt{cnt}[u]=0$ (line~\ref{subproc:repair:l6}), 
then $u$ is inserted in front of $L^{\text{inc}}$ (line~\ref{subproc:repair:l7}). 
The following proposition holds for the $\texttt{ei-jump}()$ (SubProcedure~\ref{subproc:energy_jumps}).
\begin{Prop}\label{prop:ei-jump_halts} 
The $\texttt{ei-jump}()$ (SubProcedure~\ref{subproc:energy_jumps}) halts in finite time. 
The \emph{total} time spent for all invocations of $\texttt{ei-jump}()$ (that are made, at line~\ref{algo:solve:l7}, 
during the main \texttt{while} loop of Algorithm~\ref{algo:solve_mpg}) is $\Theta(t_{\ell_{\ref{algo:solve:l7}}}+|E|)$, 
where $t_{\ell_{\ref{algo:solve:l7}}}$ is the total number of iterations 
of line~\ref{algo:solve:l7} that are made by Algorithm~\ref{algo:solve_mpg}. 
The $\texttt{ei-jump}()$ works with $\Theta(|V|+|E|)$ space.
\end{Prop}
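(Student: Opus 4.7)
The plan is to decompose the analysis into three parts: (i) termination of a single invocation of $\texttt{ei-jump}()$, (ii) the $O(1)$ per-invocation overhead outside the calls to $\texttt{repair}()$, and (iii) an amortized argument bounding the total work done inside all $\texttt{repair}()$ calls across the entire execution of Algorithm~\ref{algo:solve_mpg}.

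For (i), I would observe that the only loop in $\texttt{ei-jump}()$ is the \texttt{while} at line~\ref{subproc:energy_jump:l9}, whose body invokes $\texttt{pop\_front}(L_\omega)$ at line~\ref{subproc:energy_jump:l10}. Hence $L_\omega$ strictly shrinks at every iteration and the loop cannot run more than $|L_\omega|\le |E|$ times before either $L_\omega=\emptyset$ or $L^{\text{inc}}\neq\emptyset$; all other lines are straight-line code with $O(1)$ primitive operations on the array-lists, so the subprocedure always halts.

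For (ii), every non-\texttt{repair} line of $\texttt{ei-jump}()$ costs $O(1)$: the sentinel/size checks at lines~\ref{subproc:energy_jump:l1}, \ref{subproc:energy_jump:l4}, \ref{subproc:energy_jump:l9}; the pointer swaps at lines~\ref{subproc:energy_jump:l2}, \ref{subproc:energy_jump:l3}, \ref{subproc:energy_jump:l11}; the read-only $\texttt{read\_front}$ and $\texttt{pop\_front}$ on $L_\omega$; and the return at lines~\ref{subproc:energy_jump:l13}--\ref{subproc:energy_jump:l14}. Summing over all $t_{\ell_{\ref{algo:solve:l7}}}$ invocations of $\texttt{ei-jump}()$ made at line~\ref{algo:solve:l7} of the main loop yields an $O(t_{\ell_{\ref{algo:solve:l7}}})$ total contribution from the outer scaffolding.

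For (iii), the crux, I would invoke a simple token/charging scheme: by the construction of $L_\omega$ inside $\texttt{init\_jumper}()$ (lines~\ref{proc:jumper_init:l9}--\ref{proc:jumper_init:l11}), every arc $e\in E$ belongs to exactly one bucket $L_\alpha$ of $L_\omega$, and the bucket is passed to $\texttt{repair}()$ only immediately after being removed from $L_\omega$ (at lines~\ref{subproc:energy_jump:l7}--\ref{subproc:energy_jump:l8} or lines~\ref{subproc:energy_jump:l10}--\ref{subproc:energy_jump:l12}), so no bucket is ever processed twice across the whole run. Since each iteration of the $\texttt{foreach}$ inside $\texttt{repair}()$ performs $O(1)$ work (a constant number of array-list reads/writes and counter updates at lines~\ref{subproc:repair:l2}--\ref{subproc:repair:l9}), the total cost of all $\texttt{repair}()$ calls is $\Theta\big(\sum_\alpha |L_\alpha|\big)=\Theta(|E|)$. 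Adding the two bounds gives the claimed $\Theta(t_{\ell_{\ref{algo:solve:l7}}}+|E|)$ total running time.

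Finally, the $\Theta(|V|+|E|)$ space bound follows by inspection of the state maintained: $L_\omega$ and $\J.\texttt{cmp}$ together occupy $\Theta(|E|)$ (each arc is stored in exactly one bucket and one flag entry), while $\J.f$, $\J.\texttt{cnt}$, and each of the inconsistency/top lists $L^{\text{inc}}, L^{\text{inc}}_{\text{nxt}}, L^{\text{inc}}_{\text{cpy}}, L_\top$ are $\Theta(|V|)$; the local variables $w, L_\alpha$ used inside $\texttt{ei-jump}()$ are references and take $O(1)$ extra space. The main subtlety I expect to spell out carefully is item (iii), namely the invariant that a bucket $L_\alpha$ is handed to $\texttt{repair}()$ only after being popped from $L_\omega$; this is what prevents the same arc from being charged twice and what makes the $\texttt{repair}()$ work collapse to $\Theta(|E|)$ in total rather than per invocation.
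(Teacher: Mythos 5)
Your proposal is correct and follows essentially the same route as the paper's own proof: both bound termination by noting that the only loop consumes entries of $L_{\omega}$, both charge the $O(1)$ non-\texttt{repair} overhead to each of the $t_{\ell_{\ref{algo:solve:l7}}}$ invocations, and both amortize the aggregate cost of all $\texttt{repair}()$ calls to $\Theta(|E|)$ via the observation that each arc's bucket is popped from $L_{\omega}$ (and hence handed to $\texttt{repair}()$) at most once over the whole run. The space accounting by inspection of $\J$'s components is likewise the same.
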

\begin{proof} 
The \texttt{for-each} loop in $\texttt{repair}()$ is bounded: each arc $(u,v)$ of $L_{\alpha}$ is visited exactly once, 
spending $O(1)$ time per each. The \texttt{while} loop in 
$\texttt{ei-jump}()$ (lines~\ref{subproc:energy_jump:l9}-\ref{subproc:energy_jump:l12}) is also bounded: 
it consumes the elements $(w,L_{\alpha})$ of $L_{\omega}$, 
spending $O(|L_{\alpha}|)$ time per cycle. There are no other loops in $\texttt{ei-jump}()$, so it halts in finite time. 
Now, consider the following three facts: 
(i) $\texttt{ei-jump}()$ is invoked by $\texttt{solve\_MPG()}$ (Algorithm~\ref{algo:solve_mpg})  
once per each iteration of the main \texttt{while} loop at line~\ref{algo:solve:l7}. 
Assume there are $t_{\ell_{\ref{algo:solve:l7}}}$ such iterations overall. 
(ii) either $\texttt{ei-jump}()$ returns immediately or it visits $k$ 
arcs $(u,v)\in E$ in time $\Theta(k)$, for some $1\leq k\leq |E|$; 
(iii) each arc $(u,v)\in E$ is visited by $\texttt{ei-jump}()$ at most once 
during the whole execution of Algorithm~\ref{algo:solve_mpg}, 
because the elements of $L_{\omega}$ are consumed and there are no duplicates in there. 
Altogether, (i), (ii) and (iii) imply the $\Theta(t_{\ell_{\ref{algo:solve:l7}}}+|E|)$ total running time. 
Moreover, $\texttt{ei-jump}()$ works with $\Theta(|V|+|E|)$ space. 
Indeed $L^{\text{inc}}$ contains no duplicated vertices, so: $|L^{\text{inc}}|\leq |V|$, $|L_{\omega}|=|E|$, 
the size of $\J.f$ and that of $\J.\texttt{cnt}$ is $|V|$, and the size of $\J.\texttt{cmp}$ is $|E|$.
\end{proof}

The description of Algorithm~\ref{algo:solve_mpg} ends by detailing the UA-Jumps.

\emph{Unitary-Advance-Jumps.} Recall, UA-Jumps are adopted so to scroll 
through $\F_{|V|}$ only \emph{when} (and \emph{where}) it is really necessary; 
that is only $|V|$ times at most, because each time at least one vertex will take a value. 
The pseudocode is shown in \figref{subproc:ua-jumps}. 

The UA-Jumps begin soon after that $\texttt{ei-jump()}$ returns $\texttt{T}$ 
at line~\ref{algo:solve:l8} of Algorithm~\ref{algo:solve_mpg}. 
The starting point of the UA-Jumps (\ie the initial value of $i$) is provided 
by $\texttt{ei-jump()}$ (line~\ref{algo:solve:l7} of Algorithm~\ref{algo:solve_mpg}): 
it is stored into $\J.i$ and passed in input to 
$\texttt{ua-jumps}(\J.i, s, F, \J, \Gamma)$ (at line~\ref{algo:solve:l9} of Algorithm~\ref{algo:solve_mpg}). 
Starting from $i=\J.i$, basically the $\texttt{ua-jumps}()$ repeats a sequence of invocations to $\texttt{J-VI}()$, 
on input $(i, s-1), (i+1, s-1), (i+2, s-1), \cdots, (\hat{i}, s-1)$; 
until $L_{\top}\triangleq \W_0(\Gamma_{\hat{i}-1, s-1})\cap\W_1(\Gamma_{\hat{i}, s-1})\neq\emptyset$ 
holds for some $\hat{i}\geq i$. When $L_{\top}\neq\emptyset$, 
the $\texttt{ua-jumps}()$ \emph{backtracks} the Scan-Phases from the $(\hat{i}, s-1)$-th to the $(\hat{i},1)$-th one, 
by invoking $\texttt{backtrack\_ua-jump}(i,s,F,\J,\Gamma)$, and then it halts; soon after, Algorithm~\ref{algo:solve_mpg} 
will begin scrolling through $\F_{|V|}$ by invoking another sequence 
of $\texttt{J-VI}()$ (this time at line~\ref{algo:solve:l11} of Algorithm~\ref{algo:solve_mpg}) 
on input $(\hat{i}, 1), (\hat{i}, 2), (\hat{i}, 3), \ldots$ (which 
is controlled by the \texttt{while} loop at line~\ref{algo:solve:l6} of Algorithm~\ref{algo:solve_mpg}). 
More details concerning the UA-Jumps now follow. 

So, $\texttt{ua-jumps}()$ (SubProcedure~\ref{subproc:ua-jumps}) performs a sequence of UA-Jumps (actually, at least one). 
The invocation to $\texttt{J-VI}(\hat{i},s-1,F,\J,\Gamma)$ repeats for $\hat{i}\geq i$ 
(lines~\ref{subproc:ua-jumps:l1}-\ref{subproc:ua-jumps:l2}), until $L_{\top}\neq\emptyset$ (line~6). 
There, $L_{\top}$ contains all and only those $v\in V$ whose 
energy-level became $f(v)=\top$ during the last performed $\texttt{J-VI}()$ (line~\ref{subproc:ua-jumps:l2}); 
so, at line~\ref{subproc:ua-jumps:l3}, it is $L_{\top}=\W_0(\Gamma_{\hat{i}-1,s-1}) \cap \W_1(\Gamma_{\hat{i},s-1})$. 
At this point, if $L_{\top}=\emptyset$ (line~\ref{subproc:ua-jumps:l3}), 
the procedure prepares itself to make another UA-Jump: $\hat{i}\leftarrow \hat{i}+1$ is set (line~\ref{subproc:ua-jumps:l4}), 
and then $\texttt{rejoin\_ua-jump}(\hat{i},s,F,\J)$ is invoked (line~\ref{subproc:ua-jumps:l5}). 
Else, if $L_{\top}\neq\emptyset$ (line~6), 
it is invoked $\texttt{backtrack\_ua-jump}(\hat{i}, s, F, \J, \Gamma)$ (line~\ref{subproc:ua-jumps:l7}), 
and then $(i,S)$ is returned (line~\ref{subproc:ua-jumps:l8}), 
where $S\triangleq L_\top = \W_0(\Gamma_{\hat{i}-1,s-1}) \cap \W_1(\Gamma_{\hat{i},s-1})$ 
was assigned at line~\ref{subproc:ua-jumps_backtrack:l4} of $\texttt{backtrack\_ua-jump}()$. 

The $\texttt{rejoin\_ua-jump}(i,s,F,\J)$ firstly copies the energy-levels stored 
in $L_{f}$ back to $\J.f$, by invoking $\texttt{scl\_back\_}f(s-1,F,\J)$ (line~\ref{subproc:unitary_join:l1}).
Secondly, at lines~\ref{subproc:unitary_join:l4}-\ref{subproc:unitary_join:l6}, by operating in the same way as $\texttt{ei-jump}()$ 
does (see lines~\ref{subproc:energy_jump:l4}-\ref{subproc:energy_jump:l8} 
of $\texttt{ei-jump}()$, SubProcedure~\ref{subproc:energy_jumps}), 
it repairs the coherency state of $\J.\texttt{cnt}$ and $\J.\texttt{cmp}$ \wrt 
all those arcs $(u,v)\in E$ such that $w(u,v)=i$ and $\J.f[u]=0=\J.f[v]$. 

Let us detail the $\texttt{backtrack\_ua-jump}()$. Basically, 
it aims at preparing a correct state so to allow Algorithm~\ref{algo:solve_mpg} to step through $\F_{|V|}$. 
Stepping through $\F_{|V|}$ essentially means to execute a sequence 
of $\texttt{J-VI}()$ at line~\ref{algo:solve:l11} of Algorithm~\ref{algo:solve_mpg}, until $L^{\text{inc}}=\emptyset$. 
A moment's reflection reveals that this sequence of $\texttt{J-VI}()$ can run just on the sub-arena of $\Gamma$ that is induced by 
$S\triangleq L_\top=\W_0(\Gamma_{i-1,s-1}) \cap \W_1(\Gamma_{i,s-1})$ (see 
line~\ref{subproc:ua-jumps_backtrack:l4} of $\texttt{backtrack\_ua-jump}()$); 
there is no real need to lift-up again (actually, slowly than before) 
all the energy-levels of the component induced by $V\setminus L_\top$: 
those energy-levels can all be confirmed now that the UA-Jumps are finishing, 
and they can all stand still while Algorithm~\ref{algo:solve_mpg} is stepping through $\F_{|V|}$ at line~\ref{algo:solve:l11}, 
until another EI-Jump occurs. 

For this reason, $\texttt{backtrack\_ua-jump}(\hat{i}, s, F, \J, \Gamma)$ works as follows. 

\begin{wrapfigure}[32]{r}{6.5cm}
\raisebox{0pt}[\dimexpr\height-1\baselineskip\relax]{
\begin{algo-proc}[H]
\caption{UA-Jumps}\label{subproc:ua-jumps}
\scriptsize
\DontPrintSemicolon
\nonl \SetKwProg{Fn}{SubProcedure}{}{}
\Fn{$\texttt{ua-jumps}(i,s,F,\J,\Gamma)$}{
    \SetKwInOut{Input}{input}
    \SetKwInOut{Output}{output}
	\Input{$i\in [W^-,W^+]$, $s=|\F_{|V|}|$, $F$ is a ref. to $\F_{|V|}$, Jump $\J$, input MPG $\Gamma$.}
	\Repeat{$L_{\top}\neq\emptyset$}{ \label{subproc:ua-jumps:l1}
		$\texttt{J-VI}(i, s-1, F, \J, \Gamma);$ \textit{/* UA-Jump */} \label{subproc:ua-jumps:l2} \; 
		\If{$L_{\top}=\emptyset$}{ \label{subproc:ua-jumps:l3}
			$i\leftarrow i+1$; \label{subproc:ua-jumps:l4} \; 
			$\texttt{rejoin\_ua-jump}(i,s,F,\J)$; \label{subproc:ua-jumps:l5} \; 
		} 
	} % line 6
	$S\leftarrow \texttt{backtrack\_ua-jump}(i,s,F,\J,\Gamma)$; \label{subproc:ua-jumps:l7} \; 
	\Return{$(i, S)$}; \label{subproc:ua-jumps:l8}
}
\setcounter{AlgoLine}{0}
\nonl\Fn{$\texttt{rejoin\_ua-jump}(i,s,F,\J)$}{
    \SetKwInOut{Input}{input}
    \SetKwInOut{Output}{output}
\Input{$i\in [W^-,W^+]$, $F$ is a ref. to $\F_{|V|}$, Jump $\J$.}
$\texttt{scl\_back\_}f(s-1,F,\J)$; \label{subproc:unitary_join:l1} \;
\If{$L_{\omega}\neq\emptyset$}{ \label{subproc:unitary_join:l2}
			$(w, L_{\alpha})\leftarrow \texttt{read\_front}(L_{\omega})$; \label{subproc:unitary_join:l3}\;
			\If{$w=i$ \label{subproc:unitary_join:l4}}{
				$\texttt{pop\_front}(L_\omega)$; \label{subproc:unitary_join:l5}\; 
				$\texttt{repair}(L_{\alpha}, \J)$; 
					\texttt{// see SubProc.~\ref{subproc:energy_jumps}} \label{subproc:unitary_join:l6}
			}
	}
}
\setcounter{AlgoLine}{0}
\nonl\Fn{$\texttt{backtrack\_ua-jump}(i,s,F,\J,\Gamma)$}{
    \SetKwInOut{Input}{input}
    \SetKwInOut{Output}{output}
\Input{$i\in [W^-,W^+]$, $s=|\F_{|V|}|$, Jump $\J$, \MPG $\Gamma$.}
$L^{\text{inc}}_{\text{cpy}}\leftarrow L^{\text{inc}}$; 
	$L^{\text{inc}}\leftarrow\emptyset$; \label{subproc:ua-jumps_backtrack:l1}\;
$L_f[u]\leftarrow \left\{
	\begin{array}{ll} 
		\bot &, \text{ if } u \in L_\top; \\
		L_f[u] &, \text{ if } u \in V\setminus L_\top.
	\end{array}\right. $ \label{subproc:ua-jumps_backtrack:l2}\; 
$\texttt{scl\_back\_}f(s-1,F,\J)$;\label{subproc:ua-jumps_backtrack:l3}\;
$S\leftarrow L_\top$; \label{subproc:ua-jumps_backtrack:l4}\;
\While{$L_\top\neq\emptyset$\label{subproc:ua-jumps_backtrack:l5}}{
	$u\leftarrow\texttt{pop\_front}(L_\top)$\label{subproc:ua-jumps_backtrack:l6}\;
	\If{$u\in V_0$\label{subproc:ua-jumps_backtrack:l7}}{
		$\texttt{init\_cnt\_cmp}(u, i, 1, F, J, \Gamma[S])$;\label{subproc:ua-jumps_backtrack:l8}\;
		\If{$J.\texttt{cnt}[u]=0$\label{subproc:ua-jumps_backtrack:l9}}{
			$\texttt{insert}(u, L^{\text{inc}})$;\label{subproc:ua-jumps_backtrack:l10}\;
		}
	}\If{$u\in V_1$\label{subproc:ua-jumps_backtrack:l11}}{
		\ForEach{$v\in N_{\Gamma[S]}^{\text{out}}(u)$\label{subproc:ua-jumps_backtrack:l12}}{
			$f_u\leftarrow \texttt{get\_scl\_}f(u,1,F,J)$;\label{subproc:ua-jumps_backtrack:l13}\;
			$f_v\leftarrow \texttt{get\_scl\_}f(v,1,F,J)$;\label{subproc:ua-jumps_backtrack:l14}\;
			$w'\leftarrow \texttt{get\_scl\_}w(w(u,v),i,1,F)$;\label{subproc:ua-jumps_backtrack:l15}\;
			\If{$f_u \prec f_v \ominus w'$\label{subproc:ua-jumps_backtrack:l16}}{
				$\texttt{insert}(u, L^{\text{inc}})$; \textbf{break};\label{subproc:ua-jumps_backtrack:l17}\;
			}
		}	
	}
}
\Return{$S$} \label{subproc:ua-jumps_backtrack:l18};
}
\end{algo-proc}}
\end{wrapfigure}

Firstly, we copy $L^{\text{inc}}_{\text{cpy}}\leftarrow L^{\text{inc}}$, 
then we erase $L^{\text{inc}}\leftarrow\emptyset$ (line~\ref{subproc:ua-jumps_backtrack:l1}). 
This is sort of a back-up copy, notice that $L^{\text{inc}}_{\text{cpy}}$ will be restored back to $L^{\text{inc}}$ at 
line~\ref{subproc:energy_jump:l2} of $\texttt{ei-jump}()$ (SubProcedure~\ref{subproc:energy_jumps}): 
when Algorithm~\ref{algo:solve_mpg} will finish to step through $\F_{|V|}$, 
it will hold $L^{\text{inc}}=\emptyset$ at line~\ref{subproc:energy_jump:l1} 
of $\texttt{ei-jump}()$ (SubProcedure~\ref{subproc:energy_jumps}), so at that point the state 
of $L^{\text{inc}}$ will need to be restored by including (at least) all those vertices that are now assigned 
to $L^{\text{inc}}_{\text{cpy}}$ at line~\ref{subproc:ua-jumps_backtrack:l1} of $\texttt{backtrack\_ua-jump}()$. 
Next, all the energy-levels of $V\setminus L_\top$ are confirmed and saved back to $\J.f$; this is done: (i) by setting,  

$
L_f[u]\leftarrow \left\{
	\begin{array}{ll} 
		\bot &, \text{ if } u \in L_\top; \\
		L_f[u] &, \text{ if } u \in V\setminus L_\top.
	\end{array}\right. \text{ (line~\ref{subproc:ua-jumps_backtrack:l2}) }
$ 

and (ii) by invoking $\texttt{scl\_back\_}f(s-1,F,\J)$ (line~\ref{subproc:ua-jumps_backtrack:l3}). 
The energy-levels of all $v\in L_\top$ are thus restored as they 
were at the end of the $(\hat{i}-1, s-1)$-th invocation of 
$\texttt{J-VI}()$ at line~\ref{subproc:ua-jumps:l2} of $\texttt{ua-jumps}()$. 
Next, it is assigned $S\leftarrow L_\top$ at line~\ref{subproc:ua-jumps:l4}.
Then, $\texttt{backtrack\_ua-jump}()$ takes care of preparing a correct state of 
$L^{\text{inc}}$, $\J.\texttt{cnt}$, $\J.\texttt{cmp}$ for letting Algorithm~\ref{algo:solve_mpg} stepping through $\F_{|V|}$. 

While $L_\top\neq\emptyset$ (line~\ref{subproc:ua-jumps_backtrack:l5}), we pop the front element of $L_\top$, 
	\ie $u\leftarrow \texttt{pop\_front}(L_\top)$ (line~\ref{subproc:ua-jumps_backtrack:l6}): 

-- If $u\in V_0$ (line~\ref{subproc:ua-jumps_backtrack:l7}), then we compute $\J.\texttt{cnt}[u]$ and we also compute for every 
$v\in N_{\Gamma[S]}^{\text{out}}(u)$ a coherent $\J.\texttt{cmp}[(u,v)]$ \wrt $f^{\text{c}}$ in ${\Gamma[S]}_{\hat{i}, 1}$, 
by $\texttt{init\_cnt\_cmp}(u, \hat{i}, 1, F, J, \Gamma[S])$ (line~\ref{subproc:ua-jumps_backtrack:l8}); 
finally, if $\J.\texttt{cnt}[u]=0$ (line~\ref{subproc:ua-jumps_backtrack:l9}), 
we insert $u$ into $L^{\text{inc}}$ (line~\ref{subproc:ua-jumps_backtrack:l10}).

-- Else, if $u\in V_1$ (line~\ref{subproc:ua-jumps_backtrack:l11}), 
we explore $N_{\Gamma[S]}^{\text{out}}(u)$ looking for some incompatible 
arc (lines~\ref{subproc:ua-jumps_backtrack:l12}-\ref{subproc:ua-jumps_backtrack:l17}). 
For each $v\in N_{\Gamma[S]}^{\text{out}}(u)$ (line~\ref{subproc:ua-jumps_backtrack:l12}), 
if $f_u \prec f_v \ominus w'_{\hat{i}, 1}(u,v)$ (\ie if $(u,v)$ is incompatible \wrt $f^\text{c}$ in $\Gamma_{\hat{i}, 1}$), 
where $f_u\leftarrow \texttt{get\_scl\_}f(u,1,F,J)$ and $f_v\leftarrow \texttt{get\_scl\_}f(v,1,F,J)$, 
then, we insert $u$ into $L^{\text{inc}}$ at line~\ref{subproc:ua-jumps_backtrack:l17} (also breaking the \texttt{for-each} cycle).

This concludes the description of the UA-Jumps. Algorithm~\ref{algo:solve_mpg} is completed. 

\subsection{Correctness of Algorithm~\ref{algo:solve_mpg}}\label{subsect:correctness}
This subsection presents the proof of correctness for Algorithm~\ref{algo:solve_mpg}. It is organized as follows. 
Firstly, we show that $\texttt{J-VI}()$ (SubProcedure~\ref{algo:j-value-iteration}) works fine even when assuming 
a relaxed form of the pre-conditions (PC-2) and (PC-3). Secondly, we identify an additional set of pre-conditions under which 
the $\texttt{ei-jump}()$ (SubProcedure~\ref{subproc:energy_jumps}) is correct. Thirdly, we prove that under these pre-conditions 
$\texttt{ua-jumps}()$ (SubProcedure~\ref{subproc:ua-jumps}) is also correct. Finally, we show that these pre-conditions 
are all satisfied during the execution of Algorithm~\ref{algo:solve_mpg}, and that the latter is thus correct. 

\subsubsection{Correctness of $\texttt{J-VI}()$ (SubProcedure~\ref{algo:j-value-iteration})}
To prove the correctness of $\texttt{J-VI}()$, the (PC-1), (PC-2), (PC-3) have been assumed in Lemma~\ref{lem:jvalue_init_correct}. 
It would be fine if they were met whenever Algoritm~\ref{algo:solve_mpg} invokes $\texttt{J-VI()}$. 
Unfortunately, (PC-2) and (PC-3) may not hold. Still, we shall observe that a weaker formulation of them, 
denoted by (w-PC-2) and (w-PC-3), really hold; and these will turn out to be enough for proving correctness.

\begin{Def}
Let $i\in [W^-, W^+]$ and $j\in [1, s-1]$. Fix some step of execution $\iota$ of Algorithm~\ref{algo:solve_mpg}. 

The pre-conditions (w-PC-2) and (w-PC-3) are defined at step $\iota$ as follows.
\begin{itemize}
\item[(w-PC-2)] ${L^{\text{inc}}}^\iota\subseteq \text{Inc}(f^{\text{c}:\iota}, i, j)$.
\item[(w-PC-3)] $\forall (u\in V\setminus 
{L^{\texttt{inc}}}^\iota)$ $\forall (v\in N_{\Gamma}^{\text{out}}(u))$:
\begin{itemize}
\item[If $u\in V_0$,] the following three properties hold on $\J.\texttt{cnt}^{\iota}$ and $\J.\texttt{cmp}^{\iota}$: 
\begin{enumerate}
	\item If $\J.\texttt{cmp}^{\iota}[(u,v)]=\texttt{F}$, then $(u,v)$ is incompatible \wrt $f^\text{c}$ in $\Gamma_{i,j}$; 
		
	\item If $\J.\texttt{cmp}^{\iota}[(u,v)]=\texttt{T}\,$ 
			and $(u,v)$ is incompatible \wrt $f^\text{c}$ in $\Gamma_{i,j}$, then $v\in {L^{\text{inc}}}^\iota$.
	\item $\J.\texttt{cnt}^{\iota}[u] = \big| \big\{ v \in N^{\text{out}}_{\Gamma}(u) \mid \J.\texttt{cmp}^{\iota}[(u,v)] = 
								\texttt{T} \big\}\big|$ and $\J.\texttt{cnt}^{\iota}[u]>0$.
\end{enumerate}
\item[If $u\in V_1$,] and $(u,v)\in E$ is incompatible \wrt $f^\text{c}$ in $\Gamma_{i,j}$, then $v\in {L^{\text{inc}}}^\iota$.
\end{itemize}
If (w-PC-3) holds on $\J.\texttt{cnt}^{\iota}$ and $\J.\texttt{cmp}^{\iota}$, 
	they are said \emph{weak-coherent} \wrt $f^{\text{c}}$ in $\Gamma_{i,j}$.  
\end{itemize}
\end{Def}

%%%%%%%%%%%%%%%%%%%%%%%%%%%%%%%%%%%%%%%%%%%%%%%%%%%%%%%%%%%%%%%%%%%%%%%%%%%%%%%%%%%%%%%%%%%%%%%%%%%%%%%%%%%%%%%%%%%%%%%%%%%%%%%%%%%
We will also need the following Lemma~\ref{lem:monotone_energy}, it asserts 
that $\psi_{\rho}:(i,j)\rightarrow f^*_{i,j}$ is monotone non-decreasing; the proof already appears in [\cite{CR16}, Lemma~8, Item~1].
\begin{Lem}\label{lem:monotone_energy}
Let $i,i'\in [W^-, W^+]$ and $j,j'\in [1, s-1]$ be any two indices such that $(i,j) < (i',j')$. 

Then, $\forall^{v\in V} f^*_{i,j}(v)\preceq f^*_{i',j'}(v)$. 
\end{Lem}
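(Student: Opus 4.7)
The intuition I would follow is that a lex-smaller index $(i,j)$ corresponds to a pointwise \emph{larger} weight function $w_{i,j}$, which makes Player~0's task of building non-negative plays easier and therefore should require a smaller energy-progress measure. The plan is to make this precise by moving to a common integer scale and then invoking Knaster--Tarski on the parametric family of lifting operators.

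First I would verify the pointwise inequality $w_{i,j}(e) > w_{i',j'}(e)$ for every $e\in E$ whenever $(i,j)<(i',j')$: if $i<i'$, the integer gap $i'-i\geq 1$ dominates $|F_{j'}-F_j|<1$; if $i=i'$, monotonicity of the Farey sequence gives $F_j<F_{j'}$. Setting $D \triangleq \mathrm{lcm}(D_j,D_{j'})$ and $\hat w_{i,j}\triangleq D\cdot w_{i,j}$, $\hat w_{i',j'}\triangleq D\cdot w_{i',j'}$ yields integer weights satisfying $\hat w_{i,j}>\hat w_{i',j'}$ pointwise. A direct scaling argument shows that the least SEPMs of the integer-weighted reweightings $\Gamma^{\hat w_{i,j}}$ and $\Gamma^{\hat w_{i',j'}}$ are $\hat f_{i,j}\triangleq D\cdot f^*_{i,j}$ and $\hat f_{i',j'}\triangleq D\cdot f^*_{i',j'}$, respectively; hence the claim $f^*_{i,j}\sqsubseteq f^*_{i',j'}$ reduces to $\hat f_{i,j}\sqsubseteq \hat f_{i',j'}$.

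The key structural step is to observe that, as a function of the weight, the parallel energy-lifting operator is antitone. The operator $\ominus$ is monotone in its first argument and antitone in its second, so for every $f:V\to \C$ the quantity $f(v')\ominus \hat w_{i,j}(v,v')$ is pointwise $\preceq f(v')\ominus \hat w_{i',j'}(v,v')$; this inequality survives both the $\min$ taken over out-neighbours at $V_0$-vertices and the $\max$ taken at $V_1$-vertices. Working on a single lattice $[V\to\C]$ whose capacity is $(|V|-1)\hat W$ with $\hat W\triangleq\max\{\hat W_{i,j},\hat W_{i',j'}\}$ (valid for both games, since by Proposition~\ref{prop:lepm_equals_mincredit} the least SEPMs never exceed their respective $(|V|-1)W$ bound), the parallel liftings $\Delta_{\hat w_{i,j}}$ and $\Delta_{\hat w_{i',j'}}$ satisfy $\Delta_{\hat w_{i,j}}\sqsubseteq \Delta_{\hat w_{i',j'}}$ pointwise on this common lattice.

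To conclude, I would apply this inequality at the fixpoint $\hat f_{i',j'}$ of $\Delta_{\hat w_{i',j'}}$:
\[ \Delta_{\hat w_{i,j}}(\hat f_{i',j'}) \;\sqsubseteq\; \Delta_{\hat w_{i',j'}}(\hat f_{i',j'}) \;=\; \hat f_{i',j'}, \]
so $\hat f_{i',j'}$ is a pre-fixpoint of $\Delta_{\hat w_{i,j}}$. By Knaster--Tarski, the least fixpoint $\hat f_{i,j}$ of $\Delta_{\hat w_{i,j}}$ lies below every pre-fixpoint, hence $\hat f_{i,j}\sqsubseteq \hat f_{i',j'}$; dividing by $D>0$ gives $f^*_{i,j}\sqsubseteq f^*_{i',j'}$, as required. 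I expect the main subtlety to be the interaction of $\ominus$ with the saturation cap $\top$ and with the potentially different capacities $\hat W_{i,j}$, $\hat W_{i',j'}$ of the two scaled games; the uniform embedding into the common lattice of capacity $(|V|-1)\hat W$, combined with Proposition~\ref{prop:lepm_equals_mincredit}, is precisely what makes the monotonicity of $\Delta_w$ in $w$ rigorous and allows the Knaster--Tarski comparison to be drawn cleanly.
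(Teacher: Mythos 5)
Your proof is correct. The paper does not actually prove Lemma~\ref{lem:monotone_energy} in-text but defers to [CR16, Lemma~8, Item~1], whose argument is essentially the one you give: a lexicographically smaller index $(i,j)$ yields a pointwise strictly larger weight, so the least SEPM of $\Gamma_{i',j'}$ is a pre-fixpoint (equivalently, itself a SEPM) for the lifting operator of $\Gamma_{i,j}$, and the least fixpoint must lie below it; your extra care with the common denominator $D$ (so that both least SEPMs scale to integer-valued least SEPMs of integer-weighted games) and with the common cap $(|V|-1)\hat W$ (justified via Proposition~\ref{prop:lepm_equals_mincredit}) is exactly what is needed to make the Knaster--Tarski comparison legitimate on a single lattice.
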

%%%%%%%%%%%%%%%%%%%%%%%%%%%%%%%%%%%%%%%%%%%%%%%%%%%%%%%%%%%%%%%%%%%%%%%%%%%%%%%%%%%%%%%%%%%%%%%%%%%%%%%%%%%%%%%%%%%%%%%%%%%%%%%%%%%

Proposition~\ref{prop:saint_weak_coherency} shows that (PC-1), (w-PC-2), (w-PC-3) suffices for the correctness of $\texttt{J-VI}()$. 
\begin{Prop}\label{prop:saint_weak_coherency}
The $\texttt{J-VI}()$ (SubProcedure~\ref{algo:j-value-iteration}) is correct (\ie Propositions~\ref{prop:correctness} 
and \ref{prop:jvalue_init} still hold) even if (PC-1), (w-PC-2), (w-PC-3) are assumed instead of (PC-1), (PC-2), (PC-3).

In particular, suppose that $\texttt{J-VI}()$ is invoked on input $(i,j,F,\J,\Gamma)$, 
say at step $\iota$, and that all of the pre-conditions (PC-1), (w-PC-2), (w-PC-3) hold at $\iota$. 
When $\texttt{J-VI}(i,j,F,\J,\Gamma)$ halts, say at step $h$, then all of the following four propositions hold:
\begin{enumerate} 
	\item $f^{\text{c}:h}$ is the least-SEPM of the \EG $\Gamma_{i,j}$; 
	\item $\J.\texttt{cnt}^{h}$, $\J.\texttt{cmp}^{h}$ are both coherent \wrt $f^{\text{c}:h}$ in $\Gamma_{i,j}$;
	\item ${L^{\text{inc}}}^{h}=\{ v \in V \mid 0 < f^{\text{c}:h}(v) \neq \top \}$;
	\item $L^h_\top=V_{f^{\text{c}:\iota}}\cap V\setminus V_{f^{\text{c}:h}}$. 
\end{enumerate}
\end{Prop}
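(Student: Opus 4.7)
The plan is to adapt the proof of Proposition~\ref{prop:correctness} by replacing the original loop invariant \emph{Inv-JVI} with a weakened version that accommodates the laxity granted by (w-PC-2) and (w-PC-3). The conceptual idea behind the weakening is simple: under (w-PC-3), a predecessor $u\in V_0$ with $\J.\texttt{cmp}[(u,v)]=\texttt{T}$ may in fact be \emph{hiding} an incompatibility along $(u,v)$, but clause~(2) of (w-PC-3) guarantees that the corresponding target $v$ already lies in $L^{\text{inc}}$; hence any such hidden incompatibility is bound to be discovered as soon as the outer \texttt{while} loop pops $v$, applies $\delta(\cdot,v)$, and scans $N_\Gamma^{\text{in}}(v)$ at lines~\ref{algo:jvalue:l11}--\ref{algo:jvalue:l18}.

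Concretely, the first step is to formulate a weak loop invariant (\emph{Inv-JVI}$'$) stating that, at every entry of line~\ref{algo:jvalue:l1} of $\texttt{J-VI}()$: (i)~$L^{\text{inc}}\subseteq \text{Inc}(f^{\text{c}}, i, j)$; (ii)~for every $u\in V\setminus L^{\text{inc}}$, either $u$ is consistent with respect to $f^{\text{c}}$ in $\Gamma_{i,j}$, or else there exists a \emph{witness} $v\in L^{\text{inc}}\cap N_\Gamma^{\text{out}}(u)$ such that $(u,v)$ is incompatible, and moreover $\J.\texttt{cmp}[(u,v)]=\texttt{T}$ whenever $u\in V_0$; (iii)~the three bullets of (w-PC-3) on $\J.\texttt{cnt}$ and $\J.\texttt{cmp}$ continue to hold. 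The base case follows directly from (w-PC-2) and (w-PC-3). The inductive step rests on two observations: (a)~any $v$ popped from $L^{\text{inc}}$ is genuinely inconsistent by~(i), so $\delta(\cdot,v)$ strictly lifts $f^{\text{c}}(v)$, and if $v\in V_0$ with $f_v\neq\top$ the call $\texttt{init\_cnt\_cmp}(v,\cdot)$ at line~\ref{algo:jvalue:l7} restores full coherency for all arcs outgoing from $v$; (b)~the sweep of $N_\Gamma^{\text{in}}(v)$ at lines~\ref{algo:jvalue:l11}--\ref{algo:jvalue:l18} correctly repairs the bookkeeping on incoming arcs: for each $u\in V_0\setminus L^{\text{inc}}$, if the arc $(u,v)$ is now incompatible \emph{and} was flagged $\texttt{T}$, the paired operations at lines~\ref{algo:jvalue:l16}--\ref{algo:jvalue:l17} decrement $\J.\texttt{cnt}[u]$ and flip the flag to $\texttt{F}$ in lockstep, while line~\ref{algo:jvalue:l18} reinserts $u$ into $L^{\text{inc}}$ exactly when its last $\texttt{T}$-flag disappears or when $u\in V_1$.

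Once \emph{Inv-JVI}$'$ is established, the four halting conclusions are derived as follows. Termination is unaffected, since every iteration strictly increases $f^{\text{c}}$ at one vertex over a finite co-domain bounded as in Proposition~\ref{prop:correctness}. For item~1, at halting step $h$ one has $L^{\text{inc}}=\emptyset$, so clause~(ii) collapses to ``every $u\in V$ is consistent with $f^{\text{c}:h}$ in $\Gamma_{i,j}$''; combining this with $f^{\text{c}}\sqsubseteq f^*_{w'_{i,j}}$ from (PC-1), the $\sqsubseteq$-monotonicity of $\delta$, and Knaster--Tarski's fixed-point theorem yields $f^{\text{c}:h}=f^*_{w'_{i,j}}$. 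For item~2, with $L^{\text{inc}}=\emptyset$, clause~(iii) together with the absence of hidden incompatibilities (clause~(ii)) upgrades to full coherency of $\J.\texttt{cnt}^{h}$ and $\J.\texttt{cmp}^{h}$. For items~3 and~4, I replay verbatim the argument of Proposition~\ref{prop:jvalue_init}, tracking how $L^{\text{inc}}_{\text{nxt}}$ accumulates exactly the vertices whose energy was lifted to a value in $(0,\top)$ and how $L_\top$ collects exactly the vertices whose level transitioned from finite to $\top$ during the run.

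The main obstacle I anticipate is the bookkeeping in step~(b) above: a single pop of $v$ may unmask several hidden incompatibilities at distinct predecessors $u_1,\dots,u_k\in V_0\cap N_\Gamma^{\text{in}}(v)$, and each $u_j$ may still harbour further hidden incompatibilities toward other vertices in $L^{\text{inc}}$. Ensuring that $\J.\texttt{cnt}[u_j]$ is neither double-decremented nor under-decremented over the entire execution hinges on a one-to-one pairing, between flag flips $\texttt{T}\mapsto\texttt{F}$ on arcs $(u_j,\cdot)$ and decrements of $\J.\texttt{cnt}[u_j]$, that is reset only by the next call to $\texttt{init\_cnt\_cmp}(u_j,\cdot)$ at line~\ref{algo:jvalue:l7}. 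Verifying this pairing invariant carefully, and ruling out spurious reinsertions into $L^{\text{inc}}$ due to the $L^{\text{inc}}[u]=\bot$ guard at line~\ref{algo:jvalue:l14}, is the technical heart of the proof.
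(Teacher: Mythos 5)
Your proposal is correct and follows essentially the same route as the paper's proof: the key insight in both is that clause~(2) of (w-PC-3) forces the target of any hidden incompatibility into $L^{\text{inc}}$, so the sweep of $N_\Gamma^{\text{in}}(v)$ after popping $v$ discovers and repairs it, after which \emph{Inv-JVI} holds and Knaster--Tarski applies as in Proposition~\ref{prop:correctness}. The only difference is packaging — you maintain a weakened invariant at every iteration, whereas the paper argues by case analysis that coherency of $\J.\texttt{cnt}$ and $\J.\texttt{cmp}$ is attained \emph{eventually} (treating the $L^{\text{inc}}=\emptyset$ invocation case separately, where (w-PC-2)/(w-PC-3) already collapse to (PC-2)/(PC-3)) — which is a stylistic rather than substantive distinction.
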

\begin{proof} 
Basically, we want to prove that Propositions~\ref{prop:correctness} and \ref{prop:jvalue_init} still hold.

Suppose ${L^{\text{inc}}}^\iota=\emptyset$. 
Let $u\in V_0$. By (w-PC-3) and ${L^{\text{inc}}}^\iota=\emptyset$, 
for every $v\in N_\Gamma^{\text{out}}(u)$, $\J.\texttt{cmp}^{\iota}[(u,v)]$ 
is coherent \wrt $f^{\iota}$ in $\Gamma_{i,j}$; thus, $\J.\texttt{cnt}^{\iota}[u]$ is also coherent 
\wrt $f^{\text{c}:\iota}$ in $\Gamma_{i,j}$. Therefore, (PC-3) holds. 
Now, let $u\in V_1$. By (w-PC-3) and ${L^{\text{inc}}}^\iota=\emptyset$, for every $v\in N^{\text{out}}_\Gamma(u)$ it holds that 
$(u,v)$ is compatible \wrt $f^{c:\iota}$ in $\Gamma_{i,j}$; thus, $u$ is consistent \wrt $f^{\text{c}:\iota}$ in $\Gamma_{i,j}$.
In addition, by (w-PC-3) again, $\J.\texttt{cnt}^{\iota}[u]>0$ holds for every $u\in V_0$. 
Therefore, every $u\in V$ is consistent \wrt $f^{\text{c}:\iota}$ in $\Gamma_{i,j}$; so, (PC-2) holds.
Since (PC-1,2,3) hold, then Propositions~\ref{prop:correctness} and \ref{prop:jvalue_init} hold.

Now, suppose ${L^{\text{inc}}}^\iota\neq\emptyset$. Since $\J.\texttt{cnt}^{\iota}$ and $\J.\texttt{cmp}^{\iota}$ 
may be incoherent -- \emph{at time} $\iota$ --, there might be some $\hat{u}\in V\setminus {L^{\text{inc}}}^\iota$ 
which is already inconsistent \wrt $f^{\text{c}:\iota}$ in $\Gamma_{i,j}$ (\ie even if $u\not\in {L^{\text{inc}}}^\iota$).

Still, we claim that, during $\texttt{J-VI}()$'s execution (say at some steps $\iota', \iota''$, \ie eventually), 
for every $u\in V_0$ and $v\in N_{\Gamma}^{\text{out}}(u)$, both $\J.\texttt{cmp}^{\iota'}[(u,v)]$ 
and $\J.\texttt{cnt}^{\iota''}[u]$ will \emph{become} coherent (at $\iota'$, $\iota''$ respectively); 
and we also claim that any $u\in V_1$ which was inconsistent at $\iota$ 
will be (eventually, say at step $\iota'''$) inserted into ${L^{\text{inc}}}$. 
Indeed, at that point (say, at $\hat{\iota}=\max\{\iota', \iota'', \iota'''\}$), 
\emph{all} (and only those) $\hat{u}\in V$ that were already inconsistent at invocation time $\iota$, 
or that became inconsistent during $\texttt{J-VI}()$'s execution (until step $\hat{\iota}$), 
they will be really inserted into $L^{\text{inc}}$.

To prove it, let $\hat{u}\in V\setminus {L^{\text{inc}}}^\iota$ and 
$\hat{v}\in N_{\Gamma}^{\text{out}}(\hat{u})$ be any two (fixed) vertices such that either: 
\begin{itemize}
\item[$\hat{u}\in V_0$] and $\J.\texttt{cmp}^{\iota}[(\hat{u},\hat{v})]=\texttt{F}$: 
Then, by (w-PC-3), $(\hat{u},\hat{v})$ is incompatible \wrt $f^{\text{c}:\iota}$ in $\Gamma_{i,j}$.
\item[$\hat{u}\in V_0$] and $\J.\texttt{cmp}^{\iota}[(\hat{u},\hat{v})]=\texttt{T}$ 
but $(\hat{u},\hat{v})$ is incompatible \wrt $f^{\text{c}:\iota}$ in $\Gamma_{i,j}$: 

Then, by (w-PC-3), $\hat{v}\in {L^{\text{inc}}}^\iota$. Since $\texttt{J-VI}()$ aims precisely at emptying $L^{\text{inc}}$, 
$\hat{v}$ is popped from ${L^{\text{inc}}}^{\iota'}$ (line~\ref{algo:jvalue:l2} 
of SubProcedure~\ref{algo:j-value-iteration}) -- say at some step $\iota'$ of $\texttt{J-VI}()$'s execution. 
Soon after that, $N^{\text{in}}_{\Gamma}(\hat{v})$ is explored (lines~\ref{algo:jvalue:l11}-\ref{algo:jvalue:l18} 
of SubProcedure~\ref{algo:j-value-iteration}); so $\hat{u}$ is visited, then $(\hat{u},\hat{v})$ is found incompatible 
(\ie $f_{\hat{u}} < \Delta_{\hat{u},\hat{v}}$ at line~\ref{algo:jvalue:l14}, after $\iota'$).
Since $\hat{u}\in V_0\setminus {L^{\text{inc}}}^{\iota'}$, and $\J.\texttt{cmp}^{\iota'}[(\hat{u},\hat{v})]=\texttt{T}$,  
then at some step $\iota''>\iota'$ the counter $\J.\texttt{cnt}^{\iota''}$ is decremented by one unit 
and therefore $\J.\texttt{cmp}^{\iota''}[(u,v)]\leftarrow\texttt{F}$ is assigned 
(at lines~\ref{algo:jvalue:l16}-\ref{algo:jvalue:l17}). 
This proves that $\J.\texttt{cmp}[(\hat{u},\hat{v})]$ becomes coherent eventually (\ie at $\iota''$). 
Now, given $\hat{u}$, the same argument holds for any other $v\in N^{\text{out}}_{\Gamma}(\hat{u})$; 
therefore, when $\J.\texttt{cmp}[(\hat{u},v)]$ will finally become coherent for every $v\in N^{\text{out}}_{\Gamma}(\hat{u})$, 
then $\J.\texttt{cnt}[\hat{u}]$ will be coherent as well by (w-PC-3). 
Thus, by (w-PC-3), coherency of both $\J.\texttt{cnt}$ and $\J.\texttt{cmp}$ holds eventually, say at $\hat{\iota}$. 
At that point, all $u\in V_0$ that were inconsistent at $\iota$, 
or that have become inconsistent during the execution (up to $\hat{\iota}$), they necessarily have had to be inserted into 
$L^{\text{inc}}$ (at line~\ref{algo:jvalue:l18} of $\texttt{J-VI}()$, SubProcedure~\ref{algo:j-value-iteration}), 
because their (coherent) counter $\J.\texttt{cnt}[u]$ must reach $0$ (at $\hat{\iota}$), 
which allows $\texttt{J-VI}()$ to recognize $u$ as inconsistent at lines~\ref{algo:jvalue:l14}-\ref{algo:jvalue:l18}. 
Notice that the coherency of $\J.\texttt{cnt}$ and $\J.\texttt{cmp}$ is kept satisfied from $\hat{\iota}$ onwards: 
when some $v\in V$ is popped out of $L^{\text{inc}}$ (line~\ref{algo:jvalue:l2}), 
then $\J.\texttt{cnt}$ and $\J.\texttt{cmp}$ are recalculated from scratch (line~\ref{algo:jvalue:l7}), 
and it is easy to check that $\texttt{init\_cnt\_cmp}()$ (SubProcedure~\ref{subproc:counters_inc-arcs}) is correct; 
then $\J.\texttt{cnt}$, $\J.\texttt{cmp}$ may be modified subsequently, 
at lines~\ref{algo:jvalue:l16}-\ref{algo:jvalue:l17} (SubProcedure~\ref{algo:j-value-iteration}), 
but it's easy to check that lines~\ref{algo:jvalue:l14}-\ref{algo:jvalue:l17} preserve coherency;
so, coherency will be preserved until $\texttt{J-VI}()$ halts.

\item[$\hat{u}\in V_1$] and $(\hat{u},\hat{v})$ is incompatible \wrt $f^{\text{c}:\iota}$ in $\Gamma_{i,j}$: 

Then, by (w-PC-3), $\hat{v}\in {L^{\text{inc}}}^{\iota}$. 
As before, since the $\texttt{J-VI}()$ aims precisely at emptying $L^{\text{inc}}$, 
$\hat{v}$ is popped from $L^{\text{inc}}$ (line~\ref{algo:jvalue:l2} of SubProcedure~\ref{algo:j-value-iteration}); 
at some step of $\texttt{J-VI}()$'s execution. Soon after that, 
$N^{\text{in}}_{\Gamma}(\hat{v})$ is explored (lines~\ref{algo:jvalue:l11}-\ref{algo:jvalue:l18} 
of SubProcedure~\ref{algo:j-value-iteration}). As soon as $\hat{u}$ is visited, $(\hat{u},\hat{v})$ is found incompatible 
(\ie $f_{\hat{u}} < \Delta_{\hat{u},\hat{v}}$ at line~\ref{algo:jvalue:l14}).
Since $\hat{u}\in V_1\setminus L^{\text{inc}}$, then $\hat{u}$ is promptly inserted into 
$L^{\text{inc}}$ (line~\ref{algo:jvalue:l18}). In this way, all those $u\in V_1$ that were inconsistent 
at the time of $\texttt{J-VI}()$'s invocation, or that become inconsistent during the execution, 
they necessarily have had to be inserted into $L^{\text{inc}}$ (line~\ref{algo:jvalue:l18} of SubProcedure~\ref{algo:j-value-iteration}).
\end{itemize}
This analysis is already sufficient for asserting that Proposition~\ref{prop:correctness} holds, 
even assuming only (PC-1), (w-PC-3): indeed, the \texttt{Inv-JVI} invariant mentioned in its proof will hold, eventually, 
and then the Knaster-Tarski's Fixed-Point Theorem applies. This also proves Items (1) and (2).

Moreover, by (w-PC-2) and by arguments above, at each step $\bar\iota$ of $\texttt{J-VI}()$, 
if $v\in {L^{\text{inc}}}^{\bar\iota}$ then $v$ is really inconsistent \wrt $f^{\texttt{c}:\bar\iota}$ in $\Gamma_{i,j}$, \ie ${L^{\text{inc}}}^{\bar\iota}\subseteq \text{Inc}(f^{\text{c}:\bar\iota},i,j)$. 
Thus, every time that some $v$ is popped from $L^{\text{inc}}$ at line~\ref{algo:jvalue:l2}, 
then $\delta(f^{\text{c}},v)$ really increases $f^{\text{c}}(v)$ at line~\ref{algo:jvalue:l3}; therefore,  
$f^{\text{c}}(v)>0$ holds whenever $v$ is inserted into $L^{\text{inc}}_{\text{nxt}}$ at line~\ref{algo:jvalue:l6} of $\texttt{J-VI}()$ (SubProcedure~\ref{algo:j-value-iteration}); 
this implies that Proposition~\ref{prop:jvalue_init} holds, assuming (PC-1), (w-PC-2), (w-PC-3), and proves Item (3).
To conclude, we show Item (4). Notice, $L_\top$ is modified only at line~\ref{algo:jvalue:l9} of $\texttt{J-VI}()$ (SubProcedure~\ref{algo:j-value-iteration});
in particular, some $v\in V$ is inserted into $L_\top$ at line~\ref{algo:jvalue:l9}, say at step $\hat\iota$, if and only if 
$f^{\text{c}:\hat\iota}(v)=\top$. Since the energy-levels can only increase during the execution of $\texttt{J-VI}()$, 
then $L^h_\top\subseteq V\setminus V_{f^{\text{c}:h}}=\{ u\in V\mid f^{\text{c}:h}(u)=\top \}$. 
Since at each step $\bar\iota$ of $\texttt{J-VI}()$ it holds ${L^{\text{inc}}}^{\bar\iota}\subseteq \text{Inc}(f^{\text{c}:\bar\iota},i,j)$, 
then whenever some $v\in V$ is inserted into $L_\top$ at line~\ref{algo:jvalue:l9},  
it must be that $f^{\text{c}:\iota}(v) < \top$ where $\iota$ is the invocation time (otherwise, $v$ would not have been inconsistent at step $\bar\iota$); 
thus, $L^h_\top\subseteq V_{f^{\text{c}:\iota}}=\{v\in V \mid f^{\text{c}:\iota}(v)<\top\}$.
Therefore, $L^h_\top\subseteq V_{f^{\text{c}:\iota}}\cap V\setminus V_{f^{\text{c}:h}}$. 
Vice versa, let $v\in V_{f^{\text{c}:\iota}}\cap V\setminus V_{f^{\text{c}:h}}$;  
the only way in which $\texttt{J-VI}()$ can increase the energy-level of $v$ from step $\iota$ 
to step $h$ is by applying $\delta(f^{\text{c}}, v)$ at line~\ref{algo:jvalue:l3}; as soon as $f^{\text{c}}(v)=\top$ (and this will happen, eventually, 
since $v\in V_{f^{\text{c}:\iota}}\cap V\setminus V_{f^{\text{c}:h}}$), then $v$ is inserted into $L_\top$ at line~\ref{algo:jvalue:l9}.
Thus, $V_{f^{\text{c}:\iota}}\cap V\setminus V_{f^{\text{c}:h}}\subseteq L^h_\top$. 
Therefore, $L^h_\top = V_{f^{\text{c}:\iota}}\cap V\setminus V_{f^{\text{c}:h}}$; and this proves Item (4).
\end{proof}

\subsubsection{Correctness of EI-Jump (SubProcedure~\ref{subproc:energy_jumps})}
To begin, it is worth asserting some preliminary properties of $\texttt{ei-jump}()$ (SubProcedure~\ref{subproc:energy_jumps}).
\begin{Lem}\label{lem:ei-jump_state}
Assume $\texttt{ei-jump}(i,\J)$ (SubProcedure~\ref{subproc:energy_jumps}) 
is invoked by Algorithm~\ref{algo:solve_mpg} at line~\ref{algo:solve:l7}, 
say at step $\iota$, and for some $i\in [W^--1, W^+]$ (\ie for $i=i^\iota$). 
Assume ${L^{\text{inc}}}^\iota=\emptyset$ and $L^\iota_{\omega}\neq\emptyset$; 
and say that $\texttt{ei-jump}(i,\J)$ halts at step $h$. 
Then, the following two properties hold. 
\begin{enumerate}
\item The front element $(\bar{w}, L_{\alpha})$ of $L^{\iota}_{\omega}$ satisfies $\bar{w}=\min\{w_e \mid e \in E, w_e > i\}$; 
\item It holds that $\J.i^{h} \geq \bar{w} > i$.
\end{enumerate} 
\end{Lem}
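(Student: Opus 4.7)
The plan is to prove both items via a single invariant on $L_\omega$ maintained by Algorithm~\ref{algo:solve_mpg}: at every invocation of $\texttt{ei-jump}(i,\J)$ from line~\ref{algo:solve:l7}, $L_\omega$ contains exactly one entry $(w,L_\alpha)$ for each $w\in W_E\triangleq\{w_e\mid e\in E\}$ satisfying $w>i^{\iota}$, and these entries are ordered by increasing $w$. Once this invariant is in place, Item~1 is immediate: the front of $L^{\iota}_\omega$ carries the minimum key, which therefore equals $\min\{w_e\mid e\in E,\, w_e>i^{\iota}\}=\bar{w}$.

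I would prove the invariant by induction on the number of iterations of the main \texttt{while} loop of $\texttt{solve\_MPG}()$ completed so far. The base case is handled by $\texttt{init\_jumper}()$, which inserts one entry per distinct weight in $W_E$ at lines~\ref{proc:jumper_init:l9}--\ref{proc:jumper_init:l11} and sorts $L_\omega$ in increasing order at line~\ref{proc:jumper_init:l12}; together with $i=W^--1<W^-=\min W_E$, every $w\in W_E$ qualifies. For the inductive step, if the last call to $\texttt{ei-jump}$ returned \texttt{F}, then neither $L_\omega$ nor the local variable $i$ of $\texttt{solve\_MPG}()$ is touched by the remainder of the main loop body (none of $\texttt{J-VI}()$, $\texttt{set\_vars}()$, $\texttt{scl\_back\_}f()$ references them), so the invariant persists trivially; if it returned \texttt{T}, I would track the pops occurring at lines~\ref{subproc:energy_jump:l7} and \ref{subproc:energy_jump:l10} of $\texttt{ei-jump}()$ and at line~\ref{subproc:unitary_join:l5} of $\texttt{rejoin\_ua-jump}()$. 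Since $L_\omega$ remains sorted and each pop strips its front, these pops remove precisely the entries $(w,L_\alpha)$ with $w\leq\hat{i}$, where $\hat{i}$ is the value returned by $\texttt{ua-jumps}()$ and subsequently assigned to $i$ at line~\ref{algo:solve:l9}; the invariant is thereby re-established.

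For Item~2, with the invariant at hand, I would split according to whether $\bar{w}=i^{\iota}+1$ or $\bar{w}>i^{\iota}+1$. In the former case, line~\ref{subproc:energy_jump:l3} sets $\J.i\leftarrow i^{\iota}+1=\bar{w}$ and lines~\ref{subproc:energy_jump:l6}--\ref{subproc:energy_jump:l8} pop the front, so $\J.i^{h}\geq \bar{w}$ regardless of whether the ensuing while loop fires. In the latter case, the test $w=\J.i$ at line~\ref{subproc:energy_jump:l6} fails, so neither line~\ref{subproc:energy_jump:l7} nor \ref{subproc:energy_jump:l8} executes; the while loop at line~\ref{subproc:energy_jump:l9} then enters and its first iteration pops $(\bar{w},L_\alpha)$, assigning $\J.i\leftarrow\bar{w}$. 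Combined with $\bar{w}>i^{\iota}$ from Item~1, this yields $\J.i^{h}\geq \bar{w}>i^{\iota}$.

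The delicate step, and the one I expect to be the main obstacle, is justifying that the while loop at line~\ref{subproc:energy_jump:l9} really does enter in the case $\bar{w}>i^{\iota}+1$: this requires $L^{\text{inc}}$ to still be empty after lines~\ref{subproc:energy_jump:l2}--\ref{subproc:energy_jump:l8}, which in turn depends on the state of $L^{\text{inc}}_{\text{cpy}}$ left behind by the most recent $\texttt{backtrack\_ua-jump}()$ and on the absence of insertions by the bypassed $\texttt{repair}()$. Cleanly handling this point is best done by strengthening the induction hypothesis with a compatible invariant relating $L^{\text{inc}}_{\text{cpy}}$ to the weights currently stored in $L_\omega$, so that both items and the auxiliary bookkeeping are established by the same induction.
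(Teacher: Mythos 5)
Your proposal follows the paper's own proof quite closely. For Item~1, the invariant you isolate --- that at each invocation of $\texttt{ei-jump}()$ from line~\ref{algo:solve:l7} the list $L_{\omega}$ holds exactly the entries with key $w>i$, still sorted --- is precisely what the paper establishes, only phrased there as a case analysis over the sites where $i$ (equivalently $\J.i$) can be assigned a value, observing that every pop of $L_{\omega}$ is synchronized with $i$ reaching the popped key; your induction on iterations of the main loop is an equivalent packaging, and your base case and your bookkeeping of the pop sites (lines~\ref{subproc:energy_jump:l7} and~\ref{subproc:energy_jump:l10} of $\texttt{ei-jump}()$, line~\ref{subproc:unitary_join:l5} of $\texttt{rejoin\_ua-jump}()$) are the same ones the paper inspects.

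For Item~2 the structure also matches, but the step you explicitly leave open is the real content of the claim, and you are right to flag it. When $\bar{w}>i^{\iota}+1$, the only way $\J.i$ can reach $\bar{w}$ is through the loop at line~\ref{subproc:energy_jump:l9}, whose guard requires $L^{\text{inc}}=\emptyset$ \emph{after} line~\ref{subproc:energy_jump:l2} has restored $L^{\text{inc}}\leftarrow L^{\text{inc}}_{\text{cpy}}$; if $L^{\text{inc}}_{\text{cpy}}$ were nonempty at that moment the loop would be skipped and $\J.i^{h}$ would stay at $i^{\iota}+1<\bar{w}$, contradicting the item. The paper's proof does not discharge this point either: it simply asserts that ``by line~\ref{subproc:energy_jump:l9} at least one further element must be popped,'' \ie it implicitly takes the guard to be satisfied. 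So your diagnosis locates exactly the spot the paper glosses over, and your proposed remedy --- carrying alongside the $L_{\omega}$ invariant a companion invariant on the state of $L^{\text{inc}}_{\text{cpy}}$ left behind by the last $\texttt{backtrack\_ua-jump}()$ --- is the natural way to close it; as submitted, however, your argument for Item~2 is not complete, and on this one step it is no more complete than (though more honest than) the proof in the paper.
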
  
\begin{proof}
At the first invocation of $\texttt{ei-jump}(i,\J)$ (SubProcedure~\ref{subproc:energy_jumps}), 
made at line~\ref{algo:solve:l7} of Algorithm~\ref{algo:solve_mpg},  it holds $i=W^--1$ (by line~\ref{algo:solve:l5} of Algorithm~\ref{algo:solve_mpg}). 
Since ${L^{\text{inc}}}^\iota=\emptyset$, then $\texttt{ei-jump}()$ first assigns $\J.i\leftarrow i+1=W^-$ at line~\ref{subproc:energy_jump:l3}. 
Since $L_{\text{\bf{w}}}$ was sorted in increasing order at line~\ref{proc:jumper_init:l12} of $\texttt{init\_jumper}()$ (SubProcedure~\ref{proc:jumper_init}), 
the front entry of $L_{\text{\bf{w}}}$ has key $w=W^-$, and all of the subsequent entries of $L_{\text{\bf{w}}}$ are binded to greater keys. 
Actually, $\texttt{ei-jump}()$ consumes the front entry $(W^-, L_\alpha)$ of $L_{\text{\bf{w}}}$ at line~\ref{subproc:energy_jump:l7};
and $W^-$ is assigned to $\J.i$ (line~\ref{subproc:energy_jump:l3}). These observations imply both Item~1 and Item~2.
Now, consider any invocation of $\texttt{ei-jump}(i, \J)$ (SubProcedure~\ref{subproc:energy_jumps}) which is not the first, but any subsequent one. 
%Then, line~\ref{algo:solve:l14} of Algorithm~\ref{algo:solve_mpg} must have been executed just before $\iota$, so $j\geq 2$ holds at $\iota$. 
Let us check that the front element $(\bar{w}, L_{\alpha})$ of $L^{\iota}_{\text{\bf{w}}}$ satisfies $\bar{w}=\min\{w_e \mid e \in E, w_e > i^\iota\}$.
Consider each line of Algorithm~\ref{algo:solve_mpg} at which the value $i^\iota$ could have ever been assigned to $i$; this may happen only as follows: 

-- At line~\ref{subproc:energy_jump:l3} of $\texttt{ei-jump}()$ (SubProcedure~\ref{subproc:energy_jumps}), 
\ie $\J.i\leftarrow i+1$ ($=i^\iota$). But then the front element $(\hat w, L_{\alpha})$ of $L_{\text{\bf{w}}}$ 
is also checked at lines~\ref{subproc:energy_jump:l5}-\ref{subproc:energy_jump:l6} (because $L_{\text{\bf{w}}}\neq\emptyset$): 
and $\hat{w}$ is popped from $L_{\text{\bf{w}}}$ at line~\ref{subproc:energy_jump:l7}, in case $\hat w=\J.i$ ($=i^\iota$) holds at line~\ref{subproc:energy_jump:l6}. 

-- The same happens at lines~\ref{subproc:unitary_join:l2}-\ref{subproc:unitary_join:l5} of 
$\texttt{rejoin\_ua-jump}()$ (SubProcedure~\ref{subproc:ua-jumps}); 
just notice that in that case $i$ was incremented just before at line~\ref{subproc:ua-jumps:l4} of $\texttt{ua-jumps}()$ (SubProcedure~\ref{subproc:ua-jumps}).

-- At lines~\ref{subproc:energy_jump:l9}-\ref{subproc:energy_jump:l10} of $\texttt{ei-jump}()$ (SubProcedure~\ref{subproc:energy_jumps}), 
whenever the front element $(\hat w, L_{\alpha})$ of $L_{\text{\bf{w}}}$ is popped, then $\J.i\leftarrow \hat w$ is assigned.

Therefore, in any case, the following holds: 

When the variable $i$ got any of its possible values, say $\hat{i}$ (including $i^{\iota}$), the front entry $(\hat{w}, L_{\alpha})$ of $L_{\text{\bf w}}$ had always been checked, 
and then popped from $L_{\text{\bf{w}}}$ if $\hat{w}=\hat{i}$.

Recall, $L_{\text{\bf{w}}}$ was sorted in increasing order at line~\ref{proc:jumper_init:l12} of $\texttt{init\_jumper}()$ (SubProcedure~\ref{proc:jumper_init}).

Therefore, when $\texttt{ei-jump}(i^\iota,\J)$ is invoked at step $\iota$, all of the entries 
$(w,L_{\alpha})$ of $L_{\text{\bf{w}}}$ such that $w\leq i^\iota$ must already have been popped from $L_{\text{\bf w}}$ before step $\iota$. 

Therefore, $\bar{w} = \min\{w_e \mid e\in E, w_e > i^\iota\}$, 
if $\bar{w}$ is the key (weight) of the front entry of $L^{\iota}_{\text{\bf{w}}}$. 

Next, since ${L^{\text{inc}}}^\iota=\emptyset$ and $L^{\iota}_{\text{\bf{w}}}\neq\emptyset$ by hypothesis, 
and by line~\ref{subproc:energy_jump:l9} of $\texttt{ei-jump}()$, 
at least one further element $(w, L_{\alpha})$ of $L_{\text{\bf{w}}}$ must be popped from $L^{\iota}_{\text{\bf{w}}}$,
either at line~\ref{subproc:energy_jump:l7} or line~\ref{subproc:energy_jump:l10} of $\texttt{ei-jump}()$, soon after $\iota$.
Consider the last element, say $w'$, which is popped after $\iota$ and before $h$. 
Then, $\J.i^{h}\leftarrow w'$ is assigned either at line~\ref{subproc:energy_jump:l3} 
or line~\ref{subproc:energy_jump:l11} of $\texttt{ei-jump}()$. 
Notice, $w' \geq \bar{w} > i^\iota$. Thus, $\J.i^{h} \geq \bar{w} > i^\iota$. 
\end{proof}

The following proposition essentially asserts that $\texttt{ei-jump}()$ (SubProcedure~\ref{subproc:energy_jumps}) is correct. 
To begin, notice that, when $\texttt{ei-jump}(i,\J)$ is invoked at line~\ref{algo:solve:l7} of Algorithm~\ref{algo:solve_mpg}, 
then $i\in [W^--1,W^+]$. Also recall that any invocation of $\texttt{ei-jump}(i,\J)$ halts 
in finite time by Proposition~\ref{prop:ei-jump_halts}. 
\begin{Prop}\label{prop:ei-jump_correctness}
Consider any invocation of $\texttt{ei-jump}(i,\J)$ (SubProcedure~\ref{subproc:energy_jumps}) that is made 
at line~\ref{algo:solve:l7} of Algorithm~\ref{algo:solve_mpg}, say at step $\iota$, and for some $i\in [W^--1, W^+]$. 
Further assume that ${L^{\text{inc}}}^{\iota}=\emptyset$ and that $\texttt{ei-jump}()$ halts at step $h$. 

Suppose the following pre-conditions are all satisfied at invocation time $\iota$, for $s=|\F_{|V|}|$:
\begin{enumerate}
\item[(eij-PC-1)] $f^{\text{c}:\iota}$ is the least-SEPM of $\Gamma_{i,s-1}$; thus, $\text{Inc}(f^{\text{c}:\iota}, i, s-1)=\emptyset$. Also, $L^\iota_f=\emptyset$.
\item[(eij-PC-2)] $\{v\in V\mid 0 < f^{\text{c}:\iota}(v)\neq \top\}=\emptyset$;
\item[(eij-PC-3)] ${L^{\text{inc}}_{\text{cpy}}}^\iota \subseteq 
			\text{Inc}(f^{\text{c}:\iota}, i', j')$ for every $(i',j')>(i,s-1)$; 
\item[(eij-PC-4)] $\J.\texttt{cnt}^{\iota}$ and $\J.\texttt{cmp}^{\iota}$ are both coherent \wrt $f^{\text{c}:\iota}$ in $\Gamma_{i,s-1}$. 
\end{enumerate}  

Finally, let $i'\in [W^-,W^+]$, $j'\in [1,s-1]$ be any indices such that $(i,s-1) < (i',j') \leq (\J.i^{h},1)$. 
Then, the following holds. 
\begin{enumerate}
\item Suppose that $L^{\iota}_{\text{\bf{w}}}\neq\emptyset$. Let $(\hat{w}, L_{\hat\alpha})$ be any entry of $L^{\iota}_{\text{\bf{w}}}$ such that
	$\hat{w}=\J.i^{\iota'}=i'$ holds either at line~\ref{subproc:energy_jump:l6} or line~\ref{subproc:energy_jump:l11} of $\texttt{ei-jump}(i,\J)$, for some step $\iota'>\iota$. 
	When the $\texttt{repair}(L_{\hat\alpha}, \J)$ halts soon after, either at line~\ref{subproc:energy_jump:l8} or \ref{subproc:energy_jump:l12} (respectively), 
	say at some step $\iota''>\iota'$, both $\J.\texttt{cnt}^{\iota''}$ and $\J.\texttt{cmp}^{\iota''}$ are coherent \wrt $f^{\text{c}:\iota''}$ ($=f^{\text{c}:\iota}$) in $\Gamma_{i',j'}$.

\item If $(i',j')<(\J.i^h,1)$, then $\text{Inc}(f^{\text{c}:\iota},i',j')=\emptyset$; 

\item It holds that either ${L^{\text{inc}}}^{h}\neq\emptyset$ or both ${L^{\text{inc}}}^{h}=\emptyset$ and $L^{h}_{\omega}=\emptyset$.

Anyway, ${L^{\text{inc}}}^{h} = \text{Inc}(f^{\text{c}:h}, i^{h}, 1)$.
\end{enumerate}
\end{Prop}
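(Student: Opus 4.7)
The cornerstone is that $\texttt{ei-jump}(i,\J)$ only touches the bookkeeping variables $\J.i$, $L^{\text{inc}}$, $L_\omega$, $\J.\texttt{cnt}$, $\J.\texttt{cmp}$ and never alters any energy level, so $f^{\text{c}:\iota''}=f^{\text{c}:\iota}$ throughout $\iota''\in[\iota,h]$. By (eij-PC-2) every $v\in V$ satisfies $f^{\text{c}:\iota}(v)\in\{0,\top\}$, which reduces the compatibility analysis to three cases for each arc $(u,v)\in E$: (a) $f^{\text{c}}(u)=\top$, trivially compatible in every reweighted game; (b) $f^{\text{c}}(u)=0$ and $f^{\text{c}}(v)=\top$, incompatible in every reweighted game and already flagged \texttt{F} by (eij-PC-4); (c) $f^{\text{c}}(u)=f^{\text{c}}(v)=0$, where integrality of weights combined with $F_{j^*}\in(0,1]$ for $j^*\in[1,s-1]$ yields that $(u,v)$ is compatible in $\Gamma_{i^*,j^*}$ iff $w(u,v)>i^*$. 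Only case~(c) arcs can flip, and only when $i<w(u,v)\le i^*$.

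For Item~1, Lemma~1 guarantees that by step $\iota''$ every entry $(w_k,L_{\alpha_k})$ of $L_\omega^\iota$ with $i<w_k\le i'$ has already been popped and passed to $\texttt{repair}$. Each such $\texttt{repair}$ visits exactly the arcs that can flip at weight $w_k$, decrementing $\J.\texttt{cnt}[u]$ and setting $\J.\texttt{cmp}[(u,v)]\leftarrow\texttt{F}$ whenever $u\notin L^{\text{inc}}$. Since $\texttt{ei-jump}$ never removes elements from $L^{\text{inc}}$, any $u\notin L^{\text{inc}^{\iota''}}$ was necessarily outside $L^{\text{inc}}$ at all earlier repairs, so every relevant flip is registered; combined with (eij-PC-4) on the non-flipping arcs (and noting that (eij-PC-1) forbids a $V_1$ vertex with $f=0$ from having any $\top$-neighbour, so case~(b) arcs cannot matter for $V_1$ consistency), this yields the required coherency of $\J.\texttt{cnt}^{\iota''}$ and $\J.\texttt{cmp}^{\iota''}$ in $\Gamma_{i',j'}$.

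Item~2 is a contradiction argument driven by the while-loop guard at line~\ref{subproc:energy_jump:l9}: if $i'<\J.i^h$ then the loop was executed strictly beyond the step where $\J.i=i'$, which forces $L^{\text{inc}^{\iota''}}=\emptyset$ at that step. Combining Item~1's coherency with $L^{\text{inc}^{\iota''}}=\emptyset$ then forces every vertex to be consistent in $\Gamma_{i',j'}$: for $u\in V_0$ the counter must be strictly positive (else $u$ would have been inserted during an earlier $\texttt{repair}$), exhibiting a compatible outgoing arc; for $u\in V_1$, any flip of an outgoing case~(c) arc would have caused $\texttt{repair}$ to insert $u$ into $L^{\text{inc}}$, so no flip has occurred and $u$ remains consistent as it was in $\Gamma_{i,s-1}$ by (eij-PC-1).

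For Item~3, the dichotomy reduces to inspecting the termination conditions of $\texttt{ei-jump}$ together with (eij-PC-3) (which controls the initial content of $L^{\text{inc}}$ after the copy at line~\ref{subproc:energy_jump:l2}). The equality $L^{\text{inc}^h}=\text{Inc}(f^{\text{c}:h},\J.i^h,1)$ is a mutual containment: $\subseteq$ follows because any insertion into $L^{\text{inc}}$ witnesses an inconsistency in some $\Gamma_{w,1}$ with $w\le\J.i^h$, and by monotonicity of weights in the $(i,j)$-ordering this inconsistency persists in $\Gamma_{\J.i^h,1}$; $\supseteq$ follows by applying Item~1 at $(i',j')=(\J.i^h,1)$ and then arguing exactly as for Item~2 that any remaining inconsistency would have been detected, either through the counter reaching $0$ (for $V_0$) or through $\texttt{repair}$ detecting a flipping arc (for $V_1$). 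I expect the main subtlety to lie in the $V_1$ case of $\supseteq$, where one must verify that whichever flipping arc of $u$ is visited first by $\texttt{repair}$ already suffices to insert $u$; this is handled by the sorted order of $L_\omega$ and the first-hit semantics of $\texttt{repair}$, which ensure that $u$ is captured no later than the processing of its smallest-weight flipping outgoing arc.
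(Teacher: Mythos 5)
Your plan is correct and follows essentially the same route as the paper's proof: the reduction via (eij-PC-2) to energy levels in $\{0,\top\}$, the characterization that an arc with both endpoints at level $0$ is compatible in $\Gamma_{i^*,j^*}$ \tiff $w(u,v)>i^*$ (so flips occur exactly for weights in $(i,\,i']$, which Lemma~\ref{lem:ei-jump_state} and the sorted $L_{\omega}$ guarantee are all repaired in time), the loop-guard contradiction for Item~2, and the mutual containment for Item~3 split over ${L^{\text{inc}}_{\text{cpy}}}$, $V_0$ (counter reaching $0$) and $V_1$ (direct insertion by $\texttt{repair}()$). Your added observation that a vertex outside $L^{\text{inc}}$ at step $\iota''$ was outside it at every earlier $\texttt{repair}()$ (since $\texttt{ei-jump}()$ never removes from $L^{\text{inc}}$) correctly discharges the guard at line~\ref{subproc:repair:l2}, a point the paper leaves implicit.
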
 
Notice that $f^{\text{c}}$ stands still during $\texttt{ei-jump}()$ (SubProcedure~\ref{subproc:energy_jumps}), \ie $f^{\text{c}:\iota}=f^{\text{c}:\iota'}=f^{\text{c}:\iota''}=f^{\text{c}:h}$, 
for steps $\iota, \iota', \iota'', h$ defined as in Proposition~\ref{prop:ei-jump_correctness}. In the proofs below, we can simply refer to $f^{\text{c}}$.
 
\begin{proof}[Proof of Item (1)] Let $u\in V_0$ and $v\in N_{\Gamma}^{\text{out}}(u)$, 
let $i',j'$ be fixed indices such that $(i,s-1) < (i',j') \leq (\J.i^h,1)$. 
By (\textit{eij-PC-2}), either $f^{\text{c}}(u)=\top$ or $f^{\text{c}}(u)=0$, either $f^{\text{c}}(v)=\top$ or $f^{\text{c}}(v)=0$. 
\begin{itemize}
\item If $f^{\text{c}}(u)=\top$, then $(u,v)\in E$ is compatible \wrt $f^{\text{c}}$ in $\Gamma_{i,s-1}$. 
So, $\J.\texttt{cmp}^{\iota}[(u,v)]=\texttt{T}$ holds by (\textit{eij-PC-4}).
Since $f^{\text{c}}(u)=\top$, $\texttt{ei-jump}()$ can't modify $\J.\texttt{cmp}[(u,v)]$; 
see line~\ref{subproc:repair:l2} of $\texttt{repair}()$ (SubProcedure~\ref{subproc:energy_jumps}).
So, $\J.\texttt{cmp}^{\iota''}[(u,v)]=\texttt{T}$ is still coherent \wrt $f^{\text{c}}$ in $\Gamma_{i',j'}$.
	
\item If $f^{\text{c}}(u)=0$ and $f^{\text{c}}(v)=\top$, then $(u,v)\in E$ is incompatible \wrt $f^{\text{c}}$ in $\Gamma_{i,s-1}$. 
So, $\J.\texttt{cmp}^{\iota}[(u,v)]=\texttt{F}$ holds by (\textit{eij-PC-4});
and it will hold for the whole execution of $\texttt{ei-jump}()$, because $\texttt{ei-jump}()$ never changes $\J.\texttt{cmp}$ from \texttt{F} to \texttt{T}.  
Thus, $\J.\texttt{cmp}^{\iota''}[(u,v)]=\texttt{F}$ is still coherent \wrt $f^{\text{c}}$ in $\Gamma_{i',j'}$.

\item Assume $f^{\text{c}}(u)=0$ and $f^{\text{c}}(v)=0$. 

Again, $\J.\texttt{cmp}^{\iota}[(u,v)]$ is coherent \wrt $f^{\text{c}}$ in $\Gamma_{i,s-1}$ by (\textit{eij-PC-4}). We have two cases: 
\begin{itemize}
\item If $\J.\texttt{cmp}^{\iota}[(u,v)]=\texttt{F}$, then $(u,v)$ is incompatible \wrt $f^{\text{c}}$ 
in $\Gamma_{i,s-1}$, \ie $f^{\text{c}}(u) < f^{\text{c}}(v)-(w(u,v)-i-F_{s-1})$.
Since $f^{\text{c}}(u)=f^{\text{c}}(v)=0$ and $F_{s-1}=1$, then: \[ 0=f^{\text{c}}(u) < f^{\text{c}}(v)-w(u,v)+i+F_{s-1} = -w(u,v)+i+1.\]
Therefore, $w(u,v)\leq i$, because $w(u,v)\in\Z$. Since $(i,s-1)<(i',j')$, then $i < i'$, so $w(u,v) < i'$; 
this means that $(u,v)$ is still incompatible \wrt $f^{\text{c}}$ in $\Gamma_{i',j'}$.
Meanwhile, $\J.\texttt{cmp}[(u,v)]=\texttt{F}$ stands still (because $\texttt{ei-jump}()$ never changes $\J.\texttt{cmp}$ from \texttt{F} to \texttt{T}), 
therefore, $\J.\texttt{cmp}^{\iota''}[(u,v)]=\texttt{F}$.

\item If $\J.\texttt{cmp}^{\iota}[(u,v)]=\texttt{T}$, then $(u,v)$ is compatible \wrt $f^{\text{c}}$ in $\Gamma_{i,s-1}$ by (\textit{eij-PC-4}). So,  
\begin{itemize} 
\item If $(u,v)$ is compatible \wrt $f^{\text{c}}$ in $\Gamma_{i',j'}$, \ie $f^{\text{c}}(u)\geq f^{\text{c}}(v) - (w(u,v) - i' - F_{j'})$, then, since $f^{\text{c}}(u)=f^{\text{c}}(v)=0$, we have: 
\[ 0=f^{\text{c}}(u) \geq f^{\text{c}}(v)-w(u,v)+i' + F_{j'} = -w(u,v)+i'+F_{j'}. \]
Then, $w(u,v)>i'$, because $j'\in [1,s-1]$ (so, $F_{j'}>0$) and $w(u,v)\in\Z$. Consider what happens in $\texttt{ei-jump}()$ at $\iota'$. 
Since $L_{\text{\bf{w}}}$ was sorted in increasing order, and since $w(u,v)>i'$, then the entry $(w(u,v), L_{\alpha})$ 
is still inside $L_{\text{\bf{w}}}$ at $\iota'$ (indeed, at step $\iota'$, the front entry of $L_{\text{\bf{w}}}$ has key value $i'$ by hypothesis). 
Therefore, neither the subsequent invocation of $\texttt{repair}()$ (line~\ref{subproc:energy_jump:l8} or line~\ref{subproc:energy_jump:l12} of $\texttt{ei-jump}()$), 
nor any of the previous invocations of $\texttt{repair}()$ (before $\iota'$), can alter the state of $\J.\texttt{cmp}([u,v])$ from $\texttt{T}$ to $\texttt{F}$, 
just because $(u,v)\in L_\alpha$ is still inside $L_{\text{\bf{w}}}$ at $\iota'$; so, $\J.\texttt{cmp}([u,v])=\texttt{T}$ stands still, thus, $\J.\texttt{cmp}^{\iota''}[(u,v)]=\texttt{T}$. 

\item If $(u,v)\in E$ is incompatible \wrt $f^{\text{c}}$ in $\Gamma_{i',j'}$, \ie $f^{\text{c}}(u) < f^{\text{c}}(v)-(w(u,v)-i'-F_{j'})$, then, since $f^{\text{c}}(u)=f^{\text{c}}(v)=0$, 
we have: \[ 0=f^{\text{c}}(u) < f^{\text{c}}(v) - w(u,v) + i' + F_{j'} = - w(u,v) + i' + F_{j'}. \]
Thus, $w(u,v)\leq i'$, because $f^{\text{c}}(u)=f^{\text{c}}(v)=0$ and $j'\in [1,s-1]$ (so $F_{j'}>0$). 
On the other side, since $(u,v)\in E$ is compatible \wrt $f^{\text{c}}$ in $\Gamma_{i,s-1}$, 
at this point the reader can check that $w(u,v)>i$. Then, by Item~1 of Lemma~\ref{lem:ei-jump_state}, and since $L_{\text{\bf{w}}}$ was sorted in increasing order, 
the entry $(w(u,v), L_\alpha)$ is still inside $L_{\text{\bf{w}}}$ at $\iota$. Therefore, since $w(u,v)\leq i'$, there must be some step $\hat{\iota}$ (such that $\iota < \hat{\iota} \leq \iota'$)   
at which the entry $(w(u,v), L_{\alpha})$ must have been considered, either at line~\ref{subproc:energy_jump:l5} or 
line~\ref{subproc:energy_jump:l10} of $\texttt{ei-jump}(i,\J)$, and thus popped from $L_{\text{\bf{w}}}$.  
Soon after $\hat{\iota}$, the subsequent invocation of $\texttt{repair}()$ (either at line~\ref{subproc:energy_jump:l8} or line~\ref{subproc:energy_jump:l12} of $\texttt{ei-jump}()$) 
changes the state of $\J.\texttt{cmp}^{\hat\iota}[(u,v)]$ from $\texttt{T}$ to $\texttt{F}$ (line~\ref{subproc:repair:l5} of $\texttt{repair}()$), 
and it decrements $\J.\texttt{cnt}^{\hat\iota}[(u,v)]$ by one unit (line~\ref{subproc:repair:l4} of $\texttt{repair}()$). 
Thus $\J.\texttt{cmp}$ gets \emph{repaired} so that to be coherent \wrt $f^{\text{c}}$ in $\Gamma_{w(u,v),j'}$.
Now, by Item~2 of Lemma~\ref{lem:ei-jump_state}, $\J.i$ can only increase during the execution of $\texttt{ei-jump}()$. 
So, from that point on, $(u,v)$ remains incompatible \wrt $f^{\text{c}}$ in $\Gamma_{\J.i,j'}$ for every $w(u,v) \leq \J.i\leq i'$. On the other hand, 
$\J.\texttt{cmp}^{\hat\iota}[(u,v)]=\texttt{F}$ stands still, since $\texttt{ei-jump}()$ (SubProcedure~\ref{subproc:energy_jumps}) never changes it from \texttt{F} to \texttt{T}. 
So, $\J.\texttt{cmp}^{\iota''}[(u,v)]=\texttt{F}$.
\end{itemize}
This proves that, in any case, $\J.\texttt{cmp}^{\iota''}[(u,v)]$ is coherent \wrt $f^{\text{c}}$ in $\Gamma_{i',j'}$.
\end{itemize}
This also proves that $\J.\texttt{cnt}^{\iota''}$ is coherent \wrt $f^{\text{c}}$ in $\Gamma_{i',j'}$: 
indeed, $\J.\texttt{cnt}^{\iota}$ was coherent \wrt $f^{\text{c}}$ in $\Gamma_{i,s-1}$ by (\textit{eij-PC-4}); 
then $\J.\texttt{cnt}$ was decremented by one unit (line~\ref{subproc:repair:l4} of $\texttt{repair}()$) 
each time that $\J.\texttt{cmp}$ was repaired (line~\ref{subproc:repair:l5} of $\texttt{repair}()$), as described above; 
therefore, at step $\iota''$, the coherency of $\J.\texttt{cnt}^{\iota''}$ follows by that of $\J.\texttt{cmp}^{\iota''}$.
\end{itemize}
\end{proof}

\begin{proof}[Proof of Item (2)]
Let $u\in V$. We want to prove that $u$ is consistent \wrt $f^{\text{c}}$ in $\Gamma_{i',j'}$ for every $i'\in [W^-,W^+]$ and $j'\in [1,s-1]$ such that $(i,s-1)<(i',j')<(\J.i^h,1)$ (if any). 

By (\textit{eij-PC-2}), either $f^{\text{c}}(u)=0$ or $f^{\text{c}}(u)=\top$. If $f^{\text{c}}(u)=\top$, the claim holds trivially. 
Assume $f^{\text{c}}(u)=0$. By (\textit{eij-PC-1}), $u$ is consistent \wrt $f^{\text{c}}$ in $\Gamma_{i,s-1}$. 
Assume $\hat{w}=\J.i^{\iota'}=i'$, either at line~\ref{subproc:energy_jump:l6} or line~\ref{subproc:energy_jump:l11} of $\texttt{ei-jump}(i,\J)$, for some step $\iota'>\iota$. 
Assume $\texttt{repair}(L_{\hat\alpha}, \J)$ halts soon after at line~\ref{subproc:energy_jump:l8} or \ref{subproc:energy_jump:l12} (respectively), for some step $\iota''$ where $\iota''>\iota'$. 
We claim $u$ is consistent \wrt $f^\text{c}$ in $\Gamma_{i',j'}$.
\begin{itemize} \item[If $u\in V_0$,]  
By (\textit{eij-PC-4}), $\J.\texttt{cnt}^{\iota}[u]$ is coherent \wrt $f^{\text{c}}$ in $\Gamma_{i,s-1}$. 
Thus, since $u$ is consistent \wrt $f^{\text{c}}$ in $\Gamma_{i,s-1}$, it holds that $\J.\texttt{cnt}^{\iota}[u]>0$. 
Now, since $(i',j')<(\J.i^h,1)$, then $i'<\J.i^h$, thus ${L^{\text{inc}}}^{\iota''}=\emptyset$. 
Therefore, $\J.\texttt{cnt}^{\iota''}[u]>0$ (otherwise $u$ would have been inserted into $L^{\text{inc}}$ within $\iota''$ at line~\ref{subproc:repair:l7} of $\texttt{repair}()$). 
By Item~1 of Proposition~\ref{prop:ei-jump_correctness}, $\J.\texttt{cnt}^{\iota''}[u]$ is coherent \wrt $f^{\text{c}}$ in $\Gamma_{i',j'}$.
Since $\J.\texttt{cnt}^{\iota''}[u]>0$ and $\J.\texttt{cnt}[u]^{\iota''}$ is coherent, $u$ is consistent \wrt $f^\text{c}$ in $\Gamma_{i',j'}$. 

\item[If $u\in V_1$,] Since $(i',j') < (\J.i^h,i)$, then $i'<\J.i^h$, thus ${L^{\text{inc}}}^{\iota''}=\emptyset$. 
Let $v\in N_{\Gamma}^{\text{out}}(u)$. By (\textit{eij-PC-2}), either $f^{\text{c}}(v)=0$ or $f^{\text{c}}(v)=\top$. Since $f^{\text{c}}(u)=0$ by assumption, $u\in V_1$, 
and $u$ is consistent \wrt $f^{\text{c}}$ in $\Gamma_{i,s-1}$ by (\textit{eij-PC-1}), then $f^{\text{c}}(v)=0$.

Now, we argue that $w(u,v)>i'$.

Assume, for the sake of contradiction, that $w(u,v)\leq i'$. On one side, since $u$ is consistent \wrt $f^{\text{c}}$ in $\Gamma_{i,s-1}$ and since $f^{\text{c}}(u)=f^{\text{c}}(v)=0$, 
then: \[0=f^{\text{c}}(u)\geq f^{\text{c}}(v) - w(u,v) + i + F_{s-1} = - w(u,v) + i + 1 .\]
Thus, $w(u,v)\geq i+1 > i$. Therefore, by Lemma~\ref{lem:ei-jump_state}, the entry $(w(u,v), L_{\alpha})$ is still inside $L_{\text{\bf{w}}}$ at step $\iota$.
On the other side, since $w(u,v)\leq i'$, it is easy to see at this point that within (or soon after) step $\iota'$ the entry 
$(w(u,v), L_{\alpha'})$ must have been popped from $L_{\text{\bf{w}}}$ either at line~\ref{subproc:energy_jump:l7} or at line~\ref{subproc:energy_jump:l10} of $\texttt{ei-jump}()$ (SubProcedure~\ref{subproc:energy_jumps}).  
So, $(w(u,v), L_{\alpha'})$ must have been popped from $L_{\text{\bf{w}}}$ after $\iota$ and within $\iota'$ (or soon after $\iota'$ at line~\ref{subproc:energy_jump:l7}). 
But soon after that, since $u\in V_1$, the subsequent invocation of $\texttt{repair}()$ would insert $u$ into $L^{\text{inc}}$ at line~\ref{subproc:repair:l9}, 
because $f^{\text{c}}(u)=f^{\text{c}}(v)=0$ and $L^{\text{inc}}[u]=\emptyset$ at line~\ref{subproc:repair:l2}.
Therefore ${L^{\text{inc}}}^{\iota''}\neq\emptyset$, which is a contradiction. Therefore, $w(u,v)>i'$. 
Since $v\in N_{\Gamma}^{\text{out}}(u)$ was chosen arbitrarily, $\forall^{v\in N_{\Gamma}^{\text{out}}(u)}$ $w(u,v)>i'$.
 
Since $\forall^{v\in N_{\Gamma}^{\text{out}}(u)}$ $w(u,v)>i'$ and $f^{\texttt{c}}(u)=f^{\texttt{c}}(v)=0$, 
then $u$ is consistent \wrt $f^{\text{c}}$ in $\Gamma_{i',j'}$.
\end{itemize}
So, $u$ is consistent \wrt $f^{\text{c}}$ in $\Gamma_{i',j'}$. Since $u\in V$ was chosen arbitrarily, then $\text{Inc}(f^{\text{c}:\iota},i',j')=\emptyset$.
\end{proof} 

\begin{proof}[Proof of Item (3)] 
By line~\ref{subproc:energy_jump:l9} of $\texttt{ei-jump}()$ (SubProcedure~\ref{subproc:energy_jumps}), 
when the \texttt{while} loop at lines~\ref{subproc:energy_jump:l9}-\ref{subproc:energy_jump:l12} halts, then 
$\texttt{ei-jump}()$ halts soon after at line~\ref{subproc:energy_jump:l13}, say at step $h$, and it must be that 
either ${L^{\text{inc}}}^{h}\neq\emptyset$ or both ${L^{\text{inc}}}^h=\emptyset$ and $L^h_{\text{\bf{w}}}=\emptyset$.
Now, we want to prove that ${L^{\text{inc}}}^h=\text{Inc}(f^{\text{c}}, \J.i^h, 1)$.
\begin{itemize}
	\item Firstly, ${L^{\text{inc}}}^h\subseteq \text{Inc}(f^{\text{c}}, \J.i^h, 1)$:
\end{itemize}
Assume $u\in {L^{\text{inc}}}^h$. We have three cases to check: 

-- If $u \in {L^{\text{inc}}_{\text{cpy}}}^\iota$, notice that $\J.i^h>i$ by Lemma~\ref{lem:ei-jump_state}, 
	then $u\in\text{Inc}(f^{\text{c}}, \J.i^h, 1)$ holds by (eij-PC-3);

-- If $u \in V_0\setminus {L^{\text{inc}}_{\text{cpy}}}^\iota$, then $\J.\texttt{cnt}^{h}[u]=0$ 
	by lines~\ref{subproc:repair:l6}-\ref{subproc:repair:l7} of $\texttt{repair}()$ (SubProcedure~\ref{subproc:energy_jumps}).
	By Item~1 of Proposition~\ref{prop:ei-jump_correctness}, $\J.\texttt{cnt}^{h}[u]$ is coherent \wrt $f^{\text{c}}$ in $\Gamma_{\J.i^h,1}$.
	Therefore, $u\in\text{Inc}(f^{\text{c}}, \J.i^h, 1)$.

-- If $u \in V_1\setminus {L^{\text{inc}}_{\text{cpy}}}^\iota$, then $\exists^{v\in N_{\Gamma}^{\text{out}}(u)}$ $f^{\text{c}}(u)=f^{\text{c}}(v)=0$ 
	and $w(u,v)=\J.i^h$ by lines~\ref{subproc:repair:l2}-\ref{subproc:repair:l8} of $\texttt{repair}()$. Therefore, $u\in\text{Inc}(f^{\text{c}}, \J.i^h, 1)$. 

	This proves, ${L^{\text{inc}}}^h\subseteq \text{Inc}(f^{\text{c}}, \J.i^h, 1)$.
\begin{itemize}
\item Secondly, ${L^{\text{inc}}}^h\supseteq \text{Inc}(f^{\text{c}}, \J.i^h, 1)$:  
\end{itemize}
Let $u \in \text{Inc}(f^{\text{c}}, \J.i^h, 1)$. By (\textit{eij-PC-2}), either $f^{\text{c}}(u)=0$ or $f^{\text{c}}(u)=\top$. 
Since $u\in \text{Inc}(f^{\text{c}}, \J.i^h, 1)$, then $f^{\text{c}}(u)=0$. Now, 
by (\textit{eij-PC-1}), $u$ is consistent \wrt $f^{\text{c}}$ in $\Gamma_{i,s-1}$.  
So, let $(u,\hat{v})\in E$ be any arc which is compatible \wrt $f^{\text{c}}$ in $\Gamma_{i,s-1}$.
\begin{itemize}
\item[If $u\in V_0$, ] 
then, \emph{at least one} such a compatible $\hat{v}\in N_{\Gamma}^{\text{out}}(u)$ exists (because $u\in V_0$ is consistent \wrt $f^{\text{c}}$ in $\Gamma_{i,s-1}$). 
By (\textit{eij-PC-2}), either $f^{\text{c}}(\hat{v})=0$ or $f^{\text{c}}(\hat{v})=\top$. 
Since $f^{\text{c}}(u)=0$ and $(u,\hat{v})$ is compatible \wrt $f^{\text{c}}$ in $\Gamma_{i,s-1}$, 
then $f^{\text{c}}(\hat{v})=0$. Since $f^{\text{c}}(u)=f^{\text{c}}(\hat{v})=0$ and $u\in \text{Inc}(f^{\text{c}}, \J.i^h, 1)$, then $w(u,\hat{v})\leq \J.i^h$.

We claim that at some step of execution of line~\ref{subproc:energy_jump:l7} or line~\ref{subproc:energy_jump:l10} 
in $\texttt{ei-jump}()$ (SubProcedure~\ref{subproc:energy_jumps}), say at step $\iota'$ for $\iota < \iota' < h$, 
the entry $(w(u,\hat{v}), L_\alpha)$ is popped from $L_{\text{\bf{w}}}$.

Since $f^{\text{c}}(u)=f^{\text{c}}(\hat{v})=0$, and $(u,\hat{v})$ is compatible \wrt $f^{\text{c}}$ in $\Gamma_{i,s-1}$, 
then $w(u,\hat{v}) > i$. Thus, by Lemma~\ref{lem:ei-jump_state}, when $\texttt{ei-jump}()$ is invoked (\ie at step $\iota$), 
the front entry $(\bar{w}, L_{\bar\alpha})$ of $L^{\iota}_{\text{\bf{w}}}$ 
satisfies $\bar{w}=\min\{w_e\mid e\in E, w_e>i\}\leq w(u,\hat{v})$. 
So, $\bar{w}\leq w(u,\hat{v})\leq \J.i^h$. Thus, at some step of execution $\iota'$ for $\iota < \iota' < h$, 
the entry $(w(u,\hat{v}), L_\alpha)$ must be popped from $L_{\text{\bf{w}}}$.

Soon after that, $\texttt{repair}(L_{\alpha}, \J)$ is invoked: there, since $f^{\text{c}}(u)=f^{\text{c}}(\hat{v})=0$ and $u\in V_0$, 
then $\J.\texttt{cnt}[u]$ is decremented by one unit at line~\ref{subproc:repair:l4} of $\texttt{repair}()$.

Indeed, this happens (after $\iota$ but before $h$),  
\emph{for every} $v\in N_{\Gamma}^{\text{out}}(u)$ such that $(u, v)\in E$ is compatible \wrt $f^{\text{c}}$ in $\Gamma_{i,s-1}$. 
Thus, eventually and before $h$, it will hold $\J.\texttt{cnt}[u]=0$. 
At that point, $u$ will be inserted into $L^{\text{inc}}$ at line~\ref{subproc:repair:l7} of $\texttt{repair}()$; and soon after, 
$\texttt{ei-jump}()$ halts (since $L^{\text{inc}}\neq\emptyset$ at line~\ref{subproc:energy_jump:l9}).
So, $u\in {L^{\text{inc}}}^h$. This holds for every $u\in \text{Inc}(f^{\text{c}}, \J.i^h, 1)\cap V_0$. 
Thus, $\text{Inc}(f^{\text{c}}, \J.i^h, 1)\cap V_0\subseteq  {L^{\text{inc}}}^h$.

\item[If $u\in V_1$, ] then, \emph{all} $\hat{v}\in N_{\Gamma}^{\text{out}}(u)$ are such 
that $(u, \hat{v})$ is compatible \wrt $f^{\text{c}}$ in $\Gamma_{i,s-1}$, 
because $u\in V_1$ is consistent \wrt $f^{\text{c}}$ in $\Gamma_{i,s-1}$ by (\textit{eij-PC-1}). 
The argument proceeds almost in the same way as before.
By \textit{(eij-PC-2)}, either $f^{\text{c}}(\hat{v})=0$ or $f^{\text{c}}(\hat{v})=\top$. 
Since $f^{\text{c}}(u)=0$ and $(u,\hat{v})$ is compatible \wrt $f^{\text{c}}$ in $\Gamma_{i,s-1}$, 
then $f^{\text{c}}(\hat{v})=0$. Since $f^{\text{c}}(u)=f^{\text{c}}(\hat{v})=0$ and 
$u\in \text{Inc}(f^{\text{c}}, \J.i^h, 1)$, then $w(u,\hat{v})\leq \J.i^h$.
By arguing as above, we see that at some step of execution of line~\ref{subproc:energy_jump:l3} 
in (the considered invocation of) $\texttt{ei-jump}()$ (SubProcedure~\ref{subproc:energy_jumps}), 
say at step $\iota'$ for $\iota < \iota' < h$, the entry $(w(u,\hat{v}), L_\alpha)$ 
is popped from $L_{\text{\bf{w}}}$. Soon after that, $\texttt{repair}(L_{\alpha}, \J)$ is invoked: 
there, since $f^{\text{c}}(u)=f^{\text{c}}(\hat{v})=0$ and $u\in V_1$, 
then $u$ is inserted into $L^{\text{inc}}$; soon after, $\texttt{ei-jump}()$ halts (since $L^{\text{inc}}\neq\emptyset$); 
so, $u\in {L^{\text{inc}}}^h$. This holds for every $u\in  \text{Inc}(f^{\text{c}}, \J.i^h, 1) \cap V_1$; 
so, $ \text{Inc}(f^{\text{c}}, \J.i^h, 1) \cap V_1\subseteq {L^{\text{inc}}}^h$.
\end{itemize} 
Therefore, $ \text{Inc}(f^{\text{c}}, \J.i^h, 1) = {L^{\text{inc}}}^h$. 
\end{proof}

\subsubsection{Correctness of $\texttt{ua-jumps}()$ (SubProcedure~\ref{subproc:ua-jumps})}
\begin{Prop}\label{prop:ua-jumps_correctness}
Consider any invocation of $\texttt{ei-jump}()$ (SubProcedure~\ref{subproc:energy_jumps}) that is made 
at line~\ref{algo:solve:l7} of Algorithm~\ref{algo:solve_mpg}.
Assume that the pre-conditions (\textit{eij-PC-1}), (\textit{eij-PC-2}), (\textit{eij-PC-3}), (\textit{eij-PC-4}),  
are all satisfied at invocation time. Further assume that $L^\text{inc}\neq\emptyset$ at line~\ref{algo:solve:l8} 
of Algorithm~\ref{algo:solve_mpg}, so that $\texttt{ua-jumps}()$ is invoked soon after at line~\ref{algo:solve:l9}.
Then, consider any invocation of $\texttt{J-VI}(i,s-1,F,\J,\Gamma)$ (SubProcedure~\ref{algo:j-value-iteration}) 
that is made at line~\ref{subproc:ua-jumps:l2} of $\texttt{ua-jumps}()$ (SubProcedure~\ref{subproc:ua-jumps}), 
for some $i\in [W^-, W^+]$, where $s=|\F_{|V|}|$. Then, the following properties hold.  
\begin{enumerate}
\item The (\emph{PC-1}), \emph{(w-PC-2)}, \emph{(w-PC-3)} are all satisfied by that invocation of $\texttt{J-VI}(i,s-1,F,\J,\Gamma)$.
\item When the $\texttt{J-VI}(i,s-1,F,\J,\Gamma)$ halts, say at step $h$, then the following holds: 
	\[ L^{h}_{\top} = \W_0(\Gamma_{i-1,s-1}) \cap \W_1(\Gamma_{i,s-1}). \]
\item Assume that $\texttt{backtrack\_ua-jumps}(i,s,F,\J,\Gamma)$ is invoked at line~\ref{subproc:ua-jumps:l7} of $\texttt{ua-jumps()}$, say at step $\iota$, and assume that it halts at step $h$.
	\begin{enumerate}
		\item At line~\ref{subproc:ua-jumps_backtrack:l1} of  $\texttt{backtrack\_ua-jumps}(i,s,F,\J,\Gamma)$,	
			it holds: \[ L^{\text{inc}}_{\text{cpy}} = \{v\in V\mid 0 < f^{\text{c}:\iota}(v)\neq\top\}. \]
		\item Consider the two induced games $\Gamma[L^\iota_\top]$ and $\Gamma[V\setminus L^\iota_\top]$. 
			The following holds:
		\begin{enumerate}
			\item $\forall({v\in L^\iota_\top})\; f^{\text{c}:h}(v) = f^*_{w'_{i-1, s-1}}(v)$;
			\item $\forall({v\in V\setminus L^\iota_\top})\; f^{\text{c}:h}(v) = f^*_{w'_{i, s-1}}(v)$;
			\item $\forall({u\in L^\iota_\top\cap V_0}) \forall({v\in N^{\text{out}}_{\Gamma[L^{\iota}_\top]}(u)})\; 
				\J.\texttt{cmp}^h(u,v)$ is coherent \wrt $f^{\text{c}:h}$ in $\Gamma[L^{\iota}_\top]_{i,1}$;
			\item $\forall({v\in L^\iota_\top\cap V_0})\; 
				\J.\texttt{cnt}^h(v)$ is coherent \wrt $f^{\text{c}:h}$ in $\Gamma[L^{\iota}_\top]_{i,1}$;
			\item $\forall(j'\in [1, s-1])\; \text{Inc}(f^{\text{c}:h}, i, j')\setminus L^\iota_\top=\emptyset$.
		\end{enumerate} 
	\end{enumerate}
\item Any invocation of $\texttt{ua-jumps()}$ (SubProcedure~\ref{subproc:ua-jumps}) (line~\ref{algo:solve:l7}, Algorithm~\ref{algo:solve_mpg}) halts in finite time. 
\end{enumerate}
\end{Prop}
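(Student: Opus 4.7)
The plan is to proceed by induction on the number of iterations $k\geq 1$ of the \texttt{repeat}-loop at lines~\ref{subproc:ua-jumps:l1}--\ref{subproc:ua-jumps:l5} of $\texttt{ua-jumps}()$, proving Items (1) and (2) simultaneously. Let $i_1, i_2, \ldots$ denote the successive values that $i$ takes across the iterations, with $i_1$ being the value of $\J.i$ returned by $\texttt{ei-jump}()$. For the base case ($k=1$), the preconditions on the first $\texttt{J-VI}(i_1,s-1,F,\J,\Gamma)$ follow directly from Proposition~\ref{prop:ei-jump_correctness}: (\emph{eij-PC-1}) plus Lemma~\ref{lem:monotone_energy} gives (PC-1) since $f^{\text{c}}$ equals $f^*_{w'_{i,s-1}}$ for the previous Scan-Phase's $i<i_1$, and $(i,s-1)<(i_1,s-1)$; Item~(3) of Proposition~\ref{prop:ei-jump_correctness} gives $L^{\text{inc}} = \text{Inc}(f^{\text{c}}, i_1, 1) \subseteq \text{Inc}(f^{\text{c}}, i_1, s-1)$ (monotonicity of $\text{Inc}$ in $j$ is easy), yielding (w-PC-2); finally Item~(1) applied with $(i',j')=(i_1,s-1)$ provides coherency of $\J.\texttt{cnt}$, $\J.\texttt{cmp}$, which is the stronger version of (w-PC-3).

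For the inductive step, assume the $(k{-}1)$-th $\texttt{J-VI}(i_{k-1},s-1,F,\J,\Gamma)$ satisfied (PC-1), (w-PC-2), (w-PC-3), and halted with $L_\top = \emptyset$; by Proposition~\ref{prop:saint_weak_coherency}, $f^{\text{c}}$ equals the least-SEPM $f^*_{w'_{i_{k-1},s-1}}$, $\J.\texttt{cnt}$ and $\J.\texttt{cmp}$ are coherent, and $L^{\text{inc}} = \{v\mid 0<f^{\text{c}}(v)\neq\top\}$. Then $i$ is incremented to $i_k=i_{k-1}+1$ and $\texttt{rejoin\_ua-jump}(i_k,s,F,\J)$ is called: it scales energy levels back into $\J.f$ and invokes $\texttt{repair}(L_{\alpha},\J)$ on the arcs of weight $w=i_k$, which is exactly the set of arcs where raising $i_{k-1}\to i_k$ can break coherency (namely those $(u,v)$ with $w(u,v)=i_k$ and $f^{\text{c}}(u)=f^{\text{c}}(v)=0$; compare with I-2). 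A case analysis paralleling Item~(1) of Proposition~\ref{prop:ei-jump_correctness}, restricted to this single weight class, shows that after $\texttt{repair}()$: coherency of $\J.\texttt{cnt}$, $\J.\texttt{cmp}$ now holds \wrt $\Gamma_{i_k,s-1}$ (giving (w-PC-3)), $L^{\text{inc}}$ picks up exactly the vertices newly inconsistent (giving (w-PC-2) after using Lemma~\ref{lem:jvalue_init_correct} to include the vertices already in $L^{\text{inc}}$), and (PC-1) persists by Lemma~\ref{lem:monotone_energy} since $(i_{k-1},s-1) < (i_k,s-1)$. Item~(2) then follows by Item~(4) of Proposition~\ref{prop:saint_weak_coherency}: $L^h_\top = V_{f^{\text{c}:\iota}} \setminus V_{f^{\text{c}:h}}$, where $V_{f^{\text{c}:\iota}} = V_{f^*_{w'_{i-1,s-1}}} = \W_0(\Gamma_{i-1,s-1})$ and $V\setminus V_{f^{\text{c}:h}} = V\setminus V_{f^*_{w'_{i,s-1}}} = \W_1(\Gamma_{i,s-1})$ by Proposition~\ref{prop:least_energy_prog_measure}.

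Item~(3a) is immediate: just before $\texttt{backtrack\_ua-jumps}()$ is called at line~\ref{subproc:ua-jumps:l7}, Item~(3) of Proposition~\ref{prop:saint_weak_coherency} gives $L^{\text{inc}} = \{v\mid 0<f^{\text{c}:\iota}(v)\neq\top\}$, and line~\ref{subproc:ua-jumps_backtrack:l1} copies this into $L^{\text{inc}}_{\text{cpy}}$. For Item~(3b), observe that line~\ref{subproc:ua-jumps_backtrack:l2} keeps the $L_f$ entries only for $V\setminus L^\iota_\top$ while nullifying those of $L^\iota_\top$, then $\texttt{scl\_back\_}f$ at line~\ref{subproc:ua-jumps_backtrack:l3} commits precisely these preserved (post-$\texttt{J-VI}$) levels to $\J.f$. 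Since for $v\in V\setminus L^\iota_\top$ the halting value of $\texttt{J-VI}()$ gave $f^{\text{c}:h}(v)=f^*_{w'_{i,s-1}}(v)$ (Item~(1) of Proposition~\ref{prop:saint_weak_coherency}), (ii) follows; for $v\in L^\iota_\top$, the stored $\J.f$ is exactly what was committed at the end of the $(i{-}1,s{-}1)$-th $\texttt{J-VI}()$, hence equals $f^*_{w'_{i-1,s-1}}(v)$, giving (i). Next, the \texttt{while}-loop at lines~\ref{subproc:ua-jumps_backtrack:l5}--\ref{subproc:ua-jumps_backtrack:l17} processes every $u\in L^\iota_\top$: for $u\in V_0$, $\texttt{init\_cnt\_cmp}(u, i, 1, F, \J, \Gamma[S])$ computes $\J.\texttt{cmp}[(u,v)]$ and $\J.\texttt{cnt}[u]$ by direct inspection against $f^{\text{c}}$ in $\Gamma[L^\iota_\top]_{i,1}$, so (iii) and (iv) follow by correctness of SubProcedure~\ref{subproc:counters_inc-arcs}. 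For (v), note that for every $v\in V\setminus L^\iota_\top$ we have $f^{\text{c}:h}(v)=f^*_{w'_{i,s-1}}(v)$, hence $v$ is consistent \wrt $f^{\text{c}:h}$ in $\Gamma_{i,s-1}$; but the weights satisfy $w'_{i,j'}(e)\geq w'_{i,s-1}(e)$ for all $e\in E$ and $j'\in [1,s-1]$ (since $F_{j'}\leq F_{s-1}=1$), so $f^{\text{c}:h}(v')\ominus w'_{i,j'}(v,v')\preceq f^{\text{c}:h}(v')\ominus w'_{i,s-1}(v,v')$, and consistency of $v$ is preserved in $\Gamma_{i,j'}$ for all $j'$, establishing (v).

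Finally, Item~(4) follows from the fact that each iteration of the \texttt{repeat}-loop that does not exit increments $i$; since $i\in [W^-,W^+]$ and each individual $\texttt{J-VI}()$ halts by Proposition~\ref{prop:correctness}, the loop must terminate within at most $2W+1$ iterations. The main technical obstacle is handling the \textit{interaction} between (a) the boundary transitions $F_{s-1}=1\to F_1>0$ triggering new incompatibilities on integer-weight arcs, and (b) the coherency invariants needed by the \emph{next} $\texttt{J-VI}()$; the argument essentially isolates this interaction into $\texttt{rejoin\_ua-jump}()$ (for the outer loop) and $\texttt{backtrack\_ua-jump}()$ (for handing control back to the main while loop of $\texttt{solve\_MPG}()$), reducing both cases to a localized $\texttt{repair}()$ or $\texttt{init\_cnt\_cmp}()$ computation whose correctness has already been established.
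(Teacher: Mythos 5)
Your plan follows essentially the same route as the paper's own proof: induction on the successive $\texttt{J-VI}()$ invocations inside the \texttt{repeat}-loop, with the base case discharged by Proposition~\ref{prop:ei-jump_correctness}, the inductive step by the localized $\texttt{repair}()$ inside $\texttt{rejoin\_ua-jump}()$, Item~(2) via Item~(4) of Proposition~\ref{prop:saint_weak_coherency} together with Proposition~\ref{prop:least_energy_prog_measure}, and Item~(3) by direct inspection of $\texttt{backtrack\_ua-jump}()$. The only step you compress is (3b)(v), where the inference ``hence $v$ is consistent'' additionally needs that $f^{\text{c}:h}\preceq f^*_{w'_{i,s-1}}$ pointwise on the successors lying in $L^\iota_\top$ (via Lemma~\ref{lem:monotone_energy}), which is exactly the observation the paper supplies there.
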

\begin{proof}[Proof of Item~(1)] 
By induction on the number $k\in \N$ of invocations of $\texttt{J-VI}()$ that are made at line~\ref{subproc:ua-jumps:l2} of $\texttt{ua-jumps}()$.

\textit{Base Case: $k=1$.} 
Consider the first invocation of $\texttt{J-VI}()$ at line~\ref{subproc:ua-jumps:l2} of $\texttt{ua-jumps}()$, 
say it happens at step $\iota$. 
Just before step $\iota$, Algorithm~\ref{algo:solve_mpg} invoked $\texttt{ei-jump}()$ at line~\ref{algo:solve:l7}. 
By hypothesis, (\textit{eij-PC-1}), (\textit{eij-PC-2}), (\textit{eij-PC-3}), (\textit{eij-PC-4}) are all satisfied at that time. 
Then:  

-- (\textit{PC-1}): It is easy to check from the definitions that (\textit{eij-PC-1}) directly implies (\textit{PC-1}). 

-- (\textit{w-PC-2}): By Item~3 of Proposition~\ref{prop:ei-jump_correctness}, 
	it holds that ${L^{\text{inc}}}^\iota = \text{Inc}(f^{\text{c}:\iota},i, 1)$.
	Since $\text{Inc}(f^{\text{c}:\iota},i, 1)\subseteq \text{Inc}(f^{\text{c}:\iota},i, s-1)$, then (\textit{w-PC-2}) holds.

-- (\textit{w-PC-3}): Let $u \in V \setminus {L^{\text{inc}}}^{\iota}$ and $v\in N_{\Gamma}^{\text{out}}(u)$. 
We need to check the following two cases. 
\begin{itemize}
\item[If $u\in V_0$,] by Item~1 of Proposition~\ref{prop:ei-jump_correctness}, 
both $\J.\texttt{cmp}^\iota$ and $\J.\texttt{cnt}^\iota$ are coherent \wrt $f^{\text{c}:\iota}$ in $\Gamma_{i,1}$.
Therefore, (\textit{w-PC-3}) holds when $u\in V_0$. 
\item[If $u\in V_1$] and $(u,v)$ is incompatible \wrt $f^{\text{c}:\iota}$ in $\Gamma_{i,1}$, 
then, $u\in \text{Inc}(f^{\text{c}:\iota}, i, 1)$. By Item~3 of Proposition~\ref{prop:ei-jump_correctness}, 
$\text{Inc}(f^{\text{c}:\iota}, i, 1)={L^{\text{inc}}}^\iota$. 
Thus, $u\in {L^{\text{inc}}}^\iota$, so (\textit{w-PC-3}) holds when $u\in V_1$. 
\end{itemize}
Therefore, (\textit{w-PC-3}) holds when $k=1$.

\textit{Inductive Step: $k>1$.}  
Consider the $k$-th invocation of $\texttt{J-VI}()$ for $k>1$, 
at line~\ref{subproc:ua-jumps:l2} of $\texttt{ua-jumps}()$. Say it happens at step $\iota$. 
Since $k>1$, just before step $\iota$, Algorithm~\ref{algo:solve_mpg} performed 
the $(k-1)$-th invocation of $\texttt{J-VI}()$ at line~\ref{subproc:ua-jumps:l2} of $\texttt{ua-jumps}()$.
Say it happened at step $\iota_0$. By induction hypothesis, 
at step $\iota_0$ the (\textit{PC-1}), (\textit{w-PC-2}), (\textit{w-PC-3}) were all satisfied.
Therefore, the $(k-1)$-th invocation of $\texttt{J-VI}()$ at line~\ref{subproc:ua-jumps:l2} halted in a correct manner, 
as prescribed by Proposition~\ref{prop:saint_weak_coherency}. Soon after that, Algorithm~\ref{algo:solve_mpg} invoked 
$\texttt{rejoin\_ua-jump}()$ at line~\ref{subproc:ua-jumps:l5} of $\texttt{ua-jumps}()$.

($\star$) The key is that $\texttt{rejoin\_ua-jump}()$, apart from copying the energy-levels of $L_f$ back to $\J.f$ 
(with $\texttt{scl\_back\_}f(s-1,F,\J)$ at line~\ref{subproc:unitary_join:l1}),  
it takes care of repairing (at line~\ref{subproc:unitary_join:l6}) 
the coherency state of $\J.\texttt{cnt}[u]$ and $\J.\texttt{cmp}[(u,v)]$ 
for all those $(u,v)\in E$ such that: $u\in V_0$, $w(u,v)=i$ and $\J.f[u]=\J.f[v]=0$ (if any);
moreover, it checks the compatibility state of all those arcs $(u,v)\in E$ such that: 
$u\in V_1$, $w(u,v)=i$ and $\J.f[u]=\J.f[v]=0$ (if any). 
In doing so, if any $u\in V\setminus L^{\text{inc}}$ is recognized to be inconsistent \wrt $f^{\text{c}}$ in $\Gamma_{i, s-1}$, then $u$ is (correctly) inserted into $L^{\text{inc}}$. 
See the pseudo-code of $\texttt{repair}()$ in SubProcedure~\ref{subproc:energy_jumps}.
	
With ($\star$) in mind, we can check that 
(\textit{PC-1}), (\textit{w-PC-2}), (\textit{w-PC-3}) are all satisfied at step $\iota$. 

		-- (\textit{PC-1}). Since $\texttt{rejoin\_ua-jump}()$ empties $L_f$ 
			(by $\texttt{scl\_back\_}f()$ at line~\ref{subproc:unitary_join:l1}), 
			then $L^\iota_f=\emptyset$. Next, we argue $f^{\text{c}}\preceq f^*_{w'_{i, s-1}}$. 
			By induction hypothesis, when the $(k-1)$-th invocation of $\texttt{J-VI}()$ 
			at line~\ref{subproc:ua-jumps:l2} of $\texttt{ua-jumps}()$ halts, 
			Proposition~\ref{prop:saint_weak_coherency} holds, therefore, 
			$f^{\text{c}}=f^*_{w'_{i-1, s-1}}$. Since $F_{s-1}=1$, then $w'_{i-1, s-1}= w_{i-1, s-1}$ 
			and $w'_{i, s-1}=w_{i, s-1}$. Therefore, the following holds for every $v\in V$:
		\begin{align*}
		f^{\text{c}}(v) = & f^*_{w'_{i-1, s-1}}(v) & \text{\footnotesize [by induction 
			hypothesis and Proposition~\ref{prop:saint_weak_coherency}]} \\  
			     = & f^*_{i-1, s-1}(v) & 
			\text{\footnotesize [by $w'_{i-1, s-1}=w_{i-1, s-1}$]} \\ 
			\preceq & f^*_{i, s-1}(v) & 
			\text{\footnotesize [by $w_{i-1, s-1} > w_{i, s-1}$ and Lemma~\ref{lem:monotone_energy}}] \\ 
			= & f^*_{w'_{i, s-1}}(v) & \text{\footnotesize [by $w_{i, s-1}=w'_{i, s-1}$]} 
		\end{align*}
			In summary, $\forall^{v\in V}\, f^{\text{c}}(v)\preceq f^*_{w'_{i, s-1}}(v)$. This proves (PC-1).
	
			-- (w-PC-2). By induction hypothesis and Proposition~\ref{prop:jvalue_init}, 
			all vertices that were already inside $L^{\text{inc}}$ at the end of 	
			the $(k-1)$-th invocation of $\texttt{J-VI}()$, at line~\ref{subproc:ua-jumps:l2} of $\texttt{ua-jumps}()$, 
			they were all inconsistent \wrt $\J.f^{\text{c}}$ in $\Gamma_{i-1, s-1}$, 
			so they are still inconsistent \wrt $\J.f^{\text{c}}$ in $\Gamma_{i, s-1}$, 	
			because $w'_{i, s-1} < w'_{i-1, s-1}$. 
			In addition, the repairing process performed by $\texttt{rejoin\_ua-jumps}()$,  
			as mentioned in ($\star$), can only add inconsistent vertices to $L^{\text{inc}}$. 
			Therefore, (w-PC-2) holds. 
				
			-- (w-PC-3). Let $u \in V \setminus {L^{\text{inc}}}^{\iota}$ and $v\in N_{\Gamma}^{\text{out}}(u)$. 
				We need to check the following two cases. 
			\begin{itemize}
			\item[Case $u\in V_0$.] In order to prove Item~1 of (w-PC-3), we need to check three cases. 
			\begin{enumerate}
			\item If $\J.\texttt{cmp}^\iota[u,v]=\texttt{F}$, 
				we argue that $(u,v)\in E$ is incompatible \wrt $f^{\text{c}}$ in $\Gamma_{i, s-1}$. 
	
			Indeed, one of the following two cases (i) or (ii) holds: 
				
		(i) $\J.\texttt{cmp}[(u,v)]=\texttt{F}$ was already so at the end of the $(k-1)$-th invocation of $\texttt{J-VI}()$.
		By induction hypothesis and by Item~2 of Proposition~\ref{prop:saint_weak_coherency}, then $(u,v)$ was incompatible \wrt $f^{\text{c}}$ in $\Gamma_{i-1, s-1}$.
		So, $(u,v)$ is incompatible \wrt $f^{\text{c}}$ in $\Gamma_{i, s-1}$ (as $w'_{i, s-1} < w'_{i-1, s-1}$).
				
		(ii) at the end of the $(k-1)$-th invocation of $\texttt{J-VI}()$, it was $\J.\texttt{cmp}[(u,v)]=\texttt{T}$. 
		But then, $\texttt{rejoin\_ua-jump}(i, s, F, \J)$ repaired it by setting $\J.\texttt{cmp}[(u,v)]=\texttt{F}$ at line~\ref{subproc:repair:l5} of $\texttt{repair}()$; 
		notice that this (correctly) happens \textit{iff} $w(u,v)=i$ and $\J.f[u]=\J.f[v]=0$, so that $(u,v)$ is really incompatible \wrt $f^{\text{c}}$ in $\Gamma_{i, s-1}$. 
				
		Therefore, in any case, Item~1 of (w-PC-3) holds. 
				
		\item	If $\J.\texttt{cmp}^\iota[(u,v)]=\texttt{T}$, we argue that either $(u,v)$ is compatible \wrt $f^{\text{c}}$ in $\Gamma_{i, s-1}$ or $v\in {L^{\text{inc}}}^\iota$. 
		Indeed, assume $\J.\texttt{cmp}^\iota[(u,v)]=\texttt{T}$ and that $(u,v)$ is incompatible \wrt $f^{\text{c}}$ in $\Gamma_{i, s-1}$.
		Since $\J.\texttt{cmp}^\iota[(u,v)]=\texttt{T}$, then it was as such even when the $(k-1)$-th 
		invocation of $\texttt{J-VI}()$ halted at line~\ref{subproc:ua-jumps:l2} of $\texttt{ua-jumps}()$.
		By induction hypothesis and by Item~2 of Proposition~\ref{prop:saint_weak_coherency}, $(u,v)$ was compatible \wrt $f^{\text{c}}$ in $\Gamma_{i-1, s-1}$. 
		But $(u,v)$ is now incompatible \wrt $f^{\text{c}}$ in $\Gamma_{i, s-1}$, and still $\J.\texttt{cmp}^\iota[(u,v)]=\texttt{T}$. 
		Thus, the last invocation of $\texttt{repair}()$, within the last $\texttt{rejoin\_ua-jump}()$, 
		has not recognized $(u,v)$ as incompatible (otherwise, it would be $\J.\texttt{cmp}^\iota[(u,v)]=\texttt{F}$). 
		Therefore, it must be that $f^{\text{c}}(u)>0$ or $f^{\text{c}}(v)>0$: otherwise, 
		if $f^{\text{c}}(u)=f^{\text{c}}(v)=0$, since $(u,v)$ is compatible \wrt $f^{\text{c}}$ in $\Gamma_{i-1, s-1}$ but incompatible in $\Gamma_{i, s-1}$, 
		and $w(u,v)\in \Z$, then $w(u,v)=i$ (contradicting the fact that the last invocation of $\texttt{repair}()$ has not recognized $(u,v)$ as incompatible).
		Moreover, since $(u,v)$ is incompatible \wrt $f^{\text{c}}$ in $\Gamma_{i, s-1}$, 
		then $f^{\text{c}}(u)\neq \top$; and since $(u,v)$ was compatible \wrt $f^{\text{c}}$ in $\Gamma_{i-1, s-1}$ 
		and $f^{\text{c}}(u)\neq \top$, then $f^{\text{c}}(v)\neq \top$. Now, when the $(k-1)$-th invocation of $\texttt{J-VI}()$ halts, 
		$L^{\text{inc}}=\{q\in V\mid 0<f^{\texttt{c}}(q)\neq\top\}$ holds by induction hypothesis and Item~3 of Proposition~\ref{prop:saint_weak_coherency}. 
		Since $u\not\in {L^{\text{inc}}}^\iota$ and $f^{\text{c}}(u)\neq \top$, then $f^{\text{c}}(u)=0$. 
		Thus, since either $f^{\text{c}}(u)>0$ or $f^{\text{c}}(v)>0$, it holds that $f^{\text{c}}(v)>0$. 
		So, it is $0<f^{\text{c}}(v)\neq\top$. Therefore, $v\in {L^{\text{inc}}}^\iota$. 
	\item By induction hypothesis and by Proposition~\ref{prop:saint_weak_coherency}, 
		when the $(k-1)$-th invocation of $\texttt{J-VI}()$ halts at line~\ref{subproc:ua-jumps:l2} of $\texttt{ua-jumps}()$, say at step $\iota_0$,  
		$f^{\text{c}}$ is the least-SEPM of the \EG $\Gamma_{i-1,s-1}$ and 
		$\J.\texttt{cnt}^{\iota_0}$, $\J.\texttt{cmp}^{\iota_0}$ are both coherent \wrt $f^{\text{c}}$ in $\Gamma_{i-1,s-1}$.
		Therefore, 
\begin{align*} 
\J.\texttt{cnt}^{\iota_0}[u] & = \big|\big\{ v \in N_{\Gamma}^{\text{out}}(u) \mid 
		f^{\text{c}}(u) \succeq f^{\text{c}}(v)\ominus 
		w'_{i-1,s-1}(u,v)\big\}\big| & \text{\footnotesize[by coherency of $\J.\texttt{cnt}^{\iota_0}$]} \\
				     & = \big|\big\{ v \in N_{\Gamma}^{\text{out}}(u) 
		\mid \J.\texttt{cmp}^{\iota_0}[(u,v)] = \texttt{T} \big\}\big|. & \text{\footnotesize [by coherency of $\J.\texttt{cmp}^{\iota_0}$]}  
\end{align*}
Moreover, since $f^{\text{c}}$ is least-SEPM of $\Gamma_{i-1,s-1}$, then $u$ is consistent \wrt $f^{\text{c}}$ in $\Gamma_{i-1,s-1}$; 
thus $\J.\texttt{cnt}^{\iota_0}[u]>0$. Then, after $\iota_0$ and before $\iota$, $\texttt{ua-jumps}()$ increments $i$ by 
one unit at line~\ref{subproc:ua-jumps:l4} and it invokes $\texttt{rejoin\_ua-jump}()$ at line~\ref{subproc:ua-jumps:l5}.
There, $\texttt{repair}()$ can (possibly) alter the state of both $\J.\texttt{cnt}$ 
and $\J.\texttt{cmp}$ at lines~\ref{subproc:repair:l4}-\ref{subproc:repair:l5}.
Whenever the state of $\J.\texttt{cmp}$ is modified from $\texttt{T}$ to $\texttt{F}$, 
then $\J.\texttt{cnt}$ is decremented by one unit; moreover, whenever $\J.\texttt{cnt}[u]=0$, 
then $\texttt{repair}()$ takes care of inserting $u$ into $L^{\text{inc}}$. 
Therefore, $\J.\texttt{cnt}^{\iota}[u] = \big|\big\{ v \in N_{\Gamma}^{\text{out}}(u) 
\mid \J.\texttt{cmp}^{\iota}[(u,v)] = \texttt{T} \big\}\big|$; 
since $u\not\in {L^{\text{inc}}}^\iota$, then $\J.\texttt{cnt}^{\iota}[u]>0$.
\end{enumerate} 
\item[Case $u\in V_1$.] Let $v\in N^{\text{out}}_{\Gamma}(u)$ be such 
that $(u,v)$ is incompatible \wrt $f^{\text{c}}$ in $\Gamma_{i,j}$. 
We claim $v\in {L^{\text{inc}}}^\iota$. By induction hypothesis and by Proposition~\ref{prop:saint_weak_coherency}, 
when the $(k-1)$-th invocation of $\texttt{J-VI}()$ halts at line~\ref{subproc:ua-jumps:l2} of $\texttt{ua-jumps}()$,  
say at step $\iota_0$, $f^{\text{c}}$ is the least-SEPM of 
the \EG $\Gamma_{i-1,s-1}$ and ${L^{\text{inc}}}^{\iota_0}=\{q\in V\mid 0<f^{\texttt{c}}(q)\neq\top\}$. 
Thus, since $u\in V_1$, the arc $(u,v)$ is compatible \wrt $f^{\text{c}}$ in $\Gamma_{i-1,s-1}$. 
Moreover, since $u\not\in {L^{\text{inc}}}^\iota$ by hypothesis, then $u\not\in{L^{\text{inc}}}^{\iota_0}$, 
thus $f^{\text{c}}(u)=0$ or $f^{\text{c}}(u)=\top$; but since $u$ is incompatible \wrt $f^{\text{c}}$ in $\Gamma_{i,j}$, 
it is $f^{\text{c}}(u)=0$. Now, if $0<f^{\text{c}}(v)\neq\top$, 
then $v\in {L^{\text{inc}}}^{\iota_0}\subseteq {L^{\text{inc}}}^{\iota}$, so we are done. 
Otherwise, since $(u,v)$ is compatible \wrt $f^{\text{c}}$ in $\Gamma_{i-1,s-1}$ and $f^{\text{c}}(u)=0$, 
then $f^{\text{c}}(v)\neq\top$. So, $f^{\text{c}}(v)=0$. Since $f^{\text{c}}(u)=f^{\text{c}}(v)=0$ 
and $(u,v)$ is compatible \wrt $f^{\text{c}}$ in $\Gamma_{i-1,s-1}$, 
but incompatible \wrt $f^{\text{c}}$ in $\Gamma_{i,s-1}$, and $w(u,v)\in\Z$, then $w(u,v)=i$.
Whence, soon after $\iota_0$, when $\texttt{ua-jumps}()$ invokes $\texttt{rejoin\_ua-jump}()$ at line~\ref{subproc:ua-jumps:l5};
there inside, $\texttt{repair}()$ takes care of inserting $v$ into $L^{\text{inc}}$. Therefore, $v\in {L^{\text{inc}}}^{\iota}$.
\end{itemize}
This concludes the inductive step, and thus the proof of Item~1 of Proposition~\ref{prop:ua-jumps_correctness}.
\end{proof}

\begin{proof}[Proof of Item~(2)] 
Consider the first invocation of $\texttt{J-VI}()$ at line~\ref{subproc:ua-jumps:l2} of $\texttt{ua-jumps}()$, 
say it happens at step $\iota$. Notice that, just before step $\iota$, 
the $\texttt{ei-jump}()$ was invoked by Algorithm~\ref{algo:solve_mpg} at line~\ref{algo:solve:l7}, say at step $\iota_0$. 
By hypothesis, (\textit{eij-PC-1}), (\textit{eij-PC-2}), (\textit{eij-PC-3}), (\textit{eij-PC-4}) were all satisfied at step $\iota_0$. 
By (\textit{eij-PC-1}) and Item~2 of Proposition~\ref{prop:ei-jump_correctness}, 
$f^{\text{c}:\iota_0}$ is the least-SEPM of $\Gamma_{i-1, s-1}$; 
therefore, $V_{f^{\text{c}:\iota_0}}=\W_0(\Gamma_{i-1,s-1})$ by Proposition~\ref{prop:least_energy_prog_measure}.
By Item~1 of Proposition~\ref{prop:ua-jumps_correctness} and Item~1 of Proposition~\ref{prop:saint_weak_coherency}, 
when the first invocation of $\texttt{J-VI}()$ halts, say at step $h$, then $f^{\text{c}:h}$ is the least-SEPM of $\Gamma_{i,s-1}$;
therefore, $V\setminus V_{f^{\text{c}:h}}=\W_1(\Gamma_{i, s-1})$ by Proposition~\ref{prop:least_energy_prog_measure}. 
Moreover, by Item~4 of Proposition~\ref{prop:saint_weak_coherency}, 
$L^h_\top = V_{f^{\text{c}:\iota}} \cap V\setminus V_{f^{\text{c}:h}}$. 
Therefore, $L^h_\top = \W_0(\Gamma_{i-1,s-1})\cap \W_1(\Gamma_{i, s-1})$.

Next, consider the $k$-th invocation of $\texttt{J-VI}()$, for $k>1$, at line~\ref{subproc:ua-jumps:l2} of $\texttt{ua-jumps}()$.
By Item~1 of Proposition~\ref{prop:ua-jumps_correctness} and Item~1 of Proposition~\ref{prop:saint_weak_coherency}, 
the following two hold: 
(i) when the $(k-1)$-th invocation of $\texttt{J-VI}()$ halts, at line~\ref{subproc:ua-jumps:l2} of $\texttt{ua-jumps}()$,  
say at step $\iota$, then $f^{\text{c}:\iota}$ is the least-SEPM of $\Gamma_{i-1,s-1}$; 
and $V_{f^{\text{c}:\iota}}=\W_0(\Gamma_{i-1,s-1})$ by Proposition~\ref{prop:least_energy_prog_measure}.
(ii) when the $k$-th invocation of $\texttt{J-VI}()$ halts, at line~\ref{subproc:ua-jumps:l2} of $\texttt{ua-jumps}()$,
say at step $h$, then $f^{\text{c}:h}$ is the least-SEPM of $\Gamma_{i,s-1}$; 
and $V\setminus V_{f^{\text{c}:h}}=\W_1(\Gamma_{i, s-1})$ by Proposition~\ref{prop:least_energy_prog_measure}.
Notice that, when the $k$-th invocation of $\texttt{J-VI}()$ takes place, soon after $\iota$, 
the current energy-levels are still $f^{\text{c}:\iota}$ (\ie they are not modified by 
$\texttt{rejoin\_ua-jumps}()$ at line~\ref{subproc:ua-jumps:l5} of $\texttt{ua-jumps}()$).
Moreover, by Item~4 of Proposition~\ref{prop:saint_weak_coherency}, 
$L^h_\top = V_{f^{\text{c}:\iota}} \cap V\setminus V_{f^{\text{c}:h}}$.
Therefore, by (i) and (ii), it holds $L^h_\top = \W_0(\Gamma_{i-1,s-1})\cap \W_1(\Gamma_{i, s-1})$.
\end{proof}

\begin{proof}[Proof of Item~(3)] We need to check the following two items (a) and (b).
\begin{enumerate}
\item[(a)] Consider the state of $L^{\text{inc}}_{\text{cpy}}$ at line~\ref{subproc:ua-jumps_backtrack:l1} 
	   of $\texttt{backtrack\_ua-jumps}(i,s,F,\J,\Gamma)$. By Item~1 of Proposition~\ref{prop:ua-jumps_correctness}, 
	   and by Item~3 of Proposition~\ref{prop:saint_weak_coherency}, 
	   when the last invocation of $\texttt{J-VI}()$ halts at line~\ref{subproc:ua-jumps:l2} of $\texttt{ua-jumps}()$,
	   say at step $h_0$, it holds ${\J.L^{\text{inc}}}^{h_0} = \{v\in V\mid 0 < f^{\text{c}:h_0}(v) \neq \top \}$.  
           By the copy operation which is performed at line~\ref{subproc:ua-jumps_backtrack:l1} 
	   of $\texttt{backtrack\_ua-jumps}()$, then ${\J.L^{\text{inc}}_{\text{cpy}}} = {\J.L^{\text{inc}}}^{h_0} 
						       = \{ v\in V\mid 0 < f^{\text{c}:h_0}(v) \neq \top \}$.
	   This proves (a).
\item[(b)] Let us focus on the two induced games $\Gamma[L^\iota_\top]$ and $\Gamma[V\setminus L^\iota_\top]$. 
		\begin{enumerate} 
			\item[i.] $\forall({v\in L^\iota_\top})\; f^{\text{c}:h}(v) = f^*_{w'_{i-1, s-1}}(v)$: indeed, 
				by arguing similarly as in the proof of Item~2 of Proposition~\ref{prop:ua-jumps_correctness}, 
				$\forall^{v\in V}\; \J.f^{\iota}[v] = f^*_{w'_{i-1, s-1}}(v)$.
				Notice that $\texttt{backtrack\_ua-jump}()$ modifies 
				the energy-levels only at 
				lines~\ref{subproc:ua-jumps_backtrack:l2}-\ref{subproc:ua-jumps_backtrack:l3}, 
				where the following assignment is performed: 
				\[ L^h_f[u]\leftarrow \left\{
					\begin{array}{ll} 
						\bot &, \text{ if } u \in L^{\iota}_\top; \\
						L^{\iota}_f[u] &, \text{ if } u \in V\setminus L^{\iota}_\top.
					\end{array}\right.\] 
				and $\texttt{scl\_back\_}f()$ is invoked (respectively).
				Since $\forall({v\in L^\iota_\top})\; L^{h}_f[u]=\bot$, 
				soon after the invocation of $\texttt{scl\_back\_}f()$ at 
				line~\ref{subproc:ua-jumps_backtrack:l3}, 
				it must be that $\forall({v\in L^\iota_\top})\; f^{\text{c}:h}(v) = 
				\J.f^{\iota}[v] = f^*_{w'_{i-1, s-1}}(v)$. This proves (i). 
			\item[ii.] $\forall({v\in V\setminus L^\iota_\top})\; f^{\text{c}:h}(v) = f^*_{w'_{i, s-1}}(v)$: indeed, 
				by Item~1 of Proposition~\ref{prop:ua-jumps_correctness} and 
				Item~1 of Proposition~\ref{prop:saint_weak_coherency}, 
				$\forall({v\in V})\; f^{\text{c}:\iota}(v) = f^*_{w'_{i, s-1}}(v)$.
				As mentioned, $\texttt{backtrack\_ua-jump}()$ modifies the energy-levels 
				only at lines~\ref{subproc:ua-jumps_backtrack:l2}-\ref{subproc:ua-jumps_backtrack:l3}, 
				where $L^h_f[u]$ is assigned (as above in (i)). 
				Since $\forall({v\in V\setminus L^\iota_\top})$ $(L^h_f[u]=L^{\iota}_f[u]$ and 
				$f^{\text{c}:\iota}(v) = f^*_{w'_{i, s-1}}(v))$, then (ii) holds.		
			\item[iii.] $\forall({u\in L^\iota_\top\cap V_0}) \forall({v\in N^{\text{out}}_{\Gamma[L^\iota_\top]}(u)})\; 
				    \J.\texttt{cmp}^h(u,v)$ is coherent 
					\wrt $f^{\text{c}:h}$ in $\Gamma[L^\iota_\top]_{i,1}$: indeed, 		
				at lines~\ref{subproc:ua-jumps_backtrack:l4}-\ref{subproc:ua-jumps_backtrack:l7} 
				of $\texttt{backtrack\_ua-jump}()$, for each $u\in L^\iota_\top\cap V_0$, 
				it is invoked the $\texttt{init\_cnt\_cmp}(u,i,1,F,\J,\Gamma[L^\iota_\top])$ 
				(line~\ref{subproc:ua-jumps_backtrack:l7}). 
				Therefore, (iii) holds.
			\item[iv.] $\forall({v\in L^\iota_\top\cap V_0})\; 
			   \J.\texttt{cnt}^h(v)$ is coherent \wrt $f^{\text{c}:h}$ 
				in $\Gamma[L^\iota_\top]_{i,1}$: same argument as in (iii). 
			\item[v.] $\forall(j'\in [1, s-1])\; \text{Inc}(f^{\text{c}:h}, i, j')\setminus L^\iota_\top=\emptyset$: 
			indeed, let $u\in V\setminus L^\iota_\top$ and let $j'\in [1, s-1]$ be fixed arbitrarily.
			We want to show that $u\not\in \text{Inc}(f^{\text{c}:h}, i, j')$.
			Since $f^*_{w'_{i, s-1}}$ is the least-SEPM of $\Gamma_{i, s-1}$, 
			then $u\not\in \text{Inc}(f^*_{w'_{i, s-1}}, i, s-1)$. We have two cases.
			\begin{itemize}
			\item[Case $u\in V_0\setminus L^\iota_\top$] Since $u\in V_0\setminus \text{Inc}(f^*_{w'_{i, s-1}}, i, s-1)$, 
				for \emph{some} $v\in N^{\text{out}}_\Gamma(u)$ it holds that: 
			\[f^*_{w'_{i, s-1}}(u)\succeq f^*_{w'_{i, s-1}}(v)\ominus w'_{i,s-1}(u,v). \tag{$*_0$}\] 
				By Item~(i) of Proposition~\ref{prop:ua-jumps_correctness}, it holds that 
				$\forall({v\in L^\iota_\top})\; f^{\text{c}:h}(v) = f^*_{w'_{i-1, s-1}}(v)$.
				By Item~(ii) of Proposition~\ref{prop:ua-jumps_correctness}, it holds that  
				$\forall({v\in V\setminus L^\iota_\top})\; f^{\text{c}:h}(v) = f^*_{w'_{i, s-1}}(v)$; 
				so $f^{\text{c}:h}(u)=f^*_{w'_{i, s-1}}(u)$.
				By Lemma~\ref{lem:monotone_energy}, $f^*_{w'_{i-1, s-1}}\preceq f^*_{w'_{i, s-1}}$.
				Then, since $f^{\text{c}:h}(u)=f^*_{w'_{i, s-1}}(u)$, 	
				since $f^{\text{c}:h}(v)\in\{f^*_{w'_{i-1, s-1}}(v), f^*_{w'_{i, s-1}}(v)\}$ 
				and $f^*_{w'_{i-1, s-1}}\preceq f^*_{w'_{i, s-1}}$, from ($*_0$) we obtain the following inequality: 
			\[f^{\text{c}:h}(u)\succeq f^{\text{c}:h}(v)\ominus w'_{i,s-1}(u,v).\]
				Now, since $w'_{i,j'}(u,v) \geq w'_{i,s-1}(u,v)$, 
				it also holds $f^{\text{c}:h}(u)\succeq f^{\text{c}:h}(v)\ominus w'_{i,j'}(u,v)$.
				This proves that $u\not\in \text{Inc}(f^{\text{c}:h}, i, j')$.
			\item[Case $u\in V_1\setminus L^\iota_\top$] Since $u\in V_1\setminus \text{Inc}(f^*_{w'_{i, s-1}}, i, s-1)$, 
				for \emph{all} $v\in N^{\text{out}}_\Gamma(u)$ it holds that: 
			\[f^*_{w'_{i, s-1}}(u)\succeq f^*_{w'_{i, s-1}}(v)\ominus w'_{i,s-1}(u,v).\]
			By arguing as in the previous case, we obtain that for every $v\in N^{\text{out}}_\Gamma(u)$ 
			the following holds: $f^{\text{c}:h}(u)\succeq f^{\text{c}:h}(v)\ominus w'_{i,s-1}(u,v)$.
			This proves that $u\not\in \text{Inc}(f^{\text{c}:h}, i, j')$.
			\end{itemize}
		\end{enumerate} 
\end{enumerate}
\end{proof}

\begin{proof}[Proof of Item~(4)] The fact that $\texttt{ua-jumps}()$ halts in finite time 
follows directly from Item~1 of Proposition~\ref{prop:ua-jumps_correctness} and 
the definition of $\texttt{rejoin\_ua-jump}()$ and that of $\texttt{backtrack\_ua-jump}()$. 
\end{proof}

\subsubsection{Correctness of $\texttt{solve\_MPG}()$ (Algorithm~\ref{algo:solve_mpg})}
As shown next, it turns out that (PC-1), (w-PC-2), (w-PC-3) are all satisfied by Algorithm~\ref{algo:solve_mpg}. 
\begin{Prop}\label{prop:all_pre_conditions_hold} 
Let $i\in [W^--1, W^+]$ and $j\in [1,s-1]$. The following two propositions hold.
\begin{enumerate}
\item Consider any invocation of $\texttt{ei-jump}(i,\J)$ 
	(SubProcedure~\ref{subproc:energy_jumps}) at line~\ref{algo:solve:l7} 
	of Algorithm~\ref{algo:solve_mpg} such that $L^{\text{inc}} = \emptyset$. 
	Then, \textit{(eij-PC-1)}, \textit{(eij-PC-2)}, \textit{(eij-PC-3)}, \textit{(eij-PC-4)} are all satisfied \wrt $\Gamma$.
\item Consider any invocation of $\texttt{J-VI}(i, j, F, \J, \Gamma[S])$ 
	at line~\ref{algo:jvalue:l11} of Algorithm~\ref{algo:solve_mpg}. 
	Then, \textit{(PC-1)}, \textit{(w-PC-2)}, \textit{(w-PC-3)} are all satisfied \wrt the sub-arena $\Gamma[S]$.
\end{enumerate}
\end{Prop}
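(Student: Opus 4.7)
The plan is to prove both items simultaneously by induction on the iteration count $k\geq 1$ of the main \texttt{while} loop at line~\ref{algo:solve:l6} of Algorithm~\ref{algo:solve_mpg}. The base case $k=1$ is handled by direct inspection of $\texttt{init\_jumper}$: $\J.f\equiv 0$, the lists $L_f$, $L^{\text{inc}}$, $L^{\text{inc}}_{\text{cpy}}$ are all empty, and every flag $\J.\texttt{cmp}$ is \texttt{T}; since $i=W^--1$, $F_{s-1}=1$ and $D_{s-1}=1$, the integer weights $w'_{W^--1,s-1}(e)=w_e-W^-\geq 0$ are non-negative, so $f\equiv 0$ is trivially the least-SEPM of $\Gamma_{W^--1,s-1}$ and all four (eij-PC)'s hold at once. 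The first entry into line~\ref{algo:solve:l11} will then inherit (PC-1), (w-PC-2), (w-PC-3) from Item~1 via Proposition~\ref{prop:ua-jumps_correctness}.

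For the inductive step suppose the proposition holds through iteration $k-1$ and consider iteration $k$. For Item~1, assume $L^{\text{inc}}=\emptyset$ at line~\ref{algo:solve:l7}, and let $(i,j')$ be the indices of the most recent $\texttt{J-VI}$ call. By the inductive hypothesis for Item~2, together with Proposition~\ref{prop:saint_weak_coherency}, that invocation left $f^{\texttt{c}}$ equal to the least-SEPM of $\Gamma[S]_{i,j'}$, with $\J.\texttt{cnt}$, $\J.\texttt{cmp}$ coherent in $\Gamma[S]_{i,j'}$ and (after \texttt{swap}) $L^{\text{inc}}=\{v\in S\mid 0<f^{\texttt{c}}(v)<\top\}$. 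The emptiness hypothesis, combined with the frozen values on $V\setminus S$ (Proposition~\ref{prop:ua-jumps_correctness} Item~3(b)(ii)), will give (eij-PC-2). The key observation for (eij-PC-1) is an integer-weight gap argument: any arc $(u,v)$ in the zero-region compatible at $(i,j')$ with $j'\geq 1$ satisfies $w(u,v)\geq i+F_{j'}$, hence by integrality $w(u,v)\geq i+1$, hence remains compatible at $(i,s-1)$; together with the correct $f^*_{w'_{i,s-1}}$ values on $V\setminus S$ this gives that $f^{\texttt{c}}$ is the least-SEPM of $\Gamma_{i,s-1}$. Pre-condition (eij-PC-3) will be obtained by tracing the last setting of $L^{\text{inc}}_{\text{cpy}}$ (at line~\ref{subproc:ua-jumps_backtrack:l1} of $\texttt{backtrack\_ua-jump}$, via Proposition~\ref{prop:ua-jumps_correctness} Item~3(a)) and invoking Lemma~\ref{lem:jvalue_init_correct}, while (eij-PC-4) will follow from the same gap argument propagating coherency from $\Gamma[S]_{i,j'}$ to $\Gamma_{i,s-1}$.

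For Item~2, two cases arise at line~\ref{algo:solve:l11}. If $\texttt{ei-jump}$ returned \texttt{T} and $\texttt{ua-jumps}$ has just run, then (PC-1), (w-PC-2), (w-PC-3) are immediate from Proposition~\ref{prop:ua-jumps_correctness} Item~3, which explicitly prepares the required state on $\Gamma[S]_{i,1}$. If $\texttt{ei-jump}$ returned \texttt{F}, the state was left by the $\texttt{J-VI}$ call at $(i,j-1)$ of iteration $k-1$, which by induction satisfied the pre-conditions; Proposition~\ref{prop:saint_weak_coherency} then gives $\J.f=f^*_{i,j-1}$ and coherent flags in $\Gamma[S]_{i,j-1}$, with $L^{\text{inc}}=\{v\in S\mid 0<f^{\texttt{c}}(v)<\top\}$. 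Then (PC-1) will follow from Lemma~\ref{lem:monotone_energy} together with the rescaling inequality $\lceil D_j\cdot f^*_{i,j-1}(v)\rceil\preceq f^*_{w'_{i,j}}(v)$, (w-PC-2) from Lemma~\ref{lem:jvalue_init_correct}, and (w-PC-3) from the integer-weight gap showing that within the zero-region no compatibility flip occurs between $j-1$ and $j$ while any other flipped arc has an endpoint already in $L^{\text{inc}}$.

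The main obstacle I anticipate is Case B of Item~2: although $\J.\texttt{cnt}$ and $\J.\texttt{cmp}$ were set coherently with respect to $w'_{i,j-1}$, the stored energies $\J.f$ live on a different integer scale after the denominator changes from $D_{j-1}$ to $D_j$, so the notion of "coherent" must be re-expressed carefully in $\Gamma[S]_{i,j}$. The unifying tool is the integer-weight gap: integrality of $w(u,v)$ and $i$ combined with $0<F_j<1$ rules out any zero-region compatibility flip strictly between consecutive Farey steps, while the flips that do occur at the discrete jump from $(i,s-1)$ to $(i+1,1)$ are precisely what $\texttt{ei-jump}$'s $\texttt{repair}$ routine is engineered to catch.
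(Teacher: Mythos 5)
Your plan follows the paper's proof essentially step for step: the same joint induction (on main-loop iterations, equivalently on the counts of \texttt{ei-jump} and \texttt{J-VI} invocations at lines~\ref{algo:solve:l7} and \ref{algo:solve:l11}), the same base case via non-negativity of $w'_{W^--1,s-1}$, the same case split resolved by Propositions~\ref{prop:saint_weak_coherency} and \ref{prop:ua-jumps_correctness}, Lemmas~\ref{lem:monotone_energy} and \ref{lem:jvalue_init_correct}, and the same integrality gap ($w(u,v)\in\Z$, $0<F_{j-1}<F_j\leq 1$) to rule out zero-region compatibility flips. The only compressed step is the passage from ``$f^{\text{c}}$ remains a SEPM of $\Gamma_{i,s-1}$'' to ``$f^{\text{c}}$ is the \emph{least} SEPM'' in (eij-PC-1), which the paper closes by observing that $f^*_{w'_{i,s-1}}\equiv\top$ on $S=\W_0(\Gamma_{i-1,s-1})\cap\W_1(\Gamma_{i,s-1})$ forces every zero-valued vertex of $S$ into a contradiction; this follows immediately from the ingredients you already have in hand.
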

\begin{proof}
We prove Item~1 and 2 jointly, arguing by induction on the number $k_1$ of invocations of $\texttt{ei-jump}()$ 
at line~\ref{algo:solve:l7} of Algorithm~\ref{algo:solve_mpg} \textit{and} the 
	number $k_2$ of invocations of $\texttt{J-VI}()$ at line~\ref{algo:jvalue:l11}.

\emph{Base Case: $k_1=1$ and $k_2=0$.}
So, the first subprocedure to be invoked is $\texttt{ei-jump}(i,\J)$ at line~\ref{algo:solve:l7} of Algorithm~\ref{algo:solve_mpg}, 
say at step $\iota$. Notice that: $i^{\iota}=W^--1$; $\forall (v\in V)\; f^{\text{c}:\iota}(v)=0$; $\forall (v\in V_0)\; 
\J.\texttt{cnt}^{\iota}[v]=|N^{\text{out}}_\Gamma(v)|$ and $\forall (u\in V_0) \forall (v\in N^{\text{out}}_\Gamma(u))\; 
\J.\text{cmp}^{\iota}[u,v]=\texttt{T}$; $L^\iota_f={L^\text{inc}}^{\iota}={L^\text{inc}_{\text{cpy}}}^{\iota}=\emptyset$. 
Also notice that for every $u\in V$ and $v\in N^{\text{out}}_\Gamma(u)$ the following holds: 
\[w'_{i^{\iota}, s-1}(u,v)= w'_{W^{-}-1, s-1}(u,v)=w(u,v)-W^{-}\geq 0.\] 
With this, it is straightforward to check that \textit{(eij-PC-1)}, 
	\textit{(eij-PC-2)}, \textit{(eij-PC-3)}, \textit{(eij-PC-4)} hold.   

\emph{Inductive Step: $k_1=1$ and $k_2\geq1$, or $k_1>1$.} We need to check three cases.
\begin{enumerate}
\item Assume that $\texttt{J-VI}(i,j,F,\J,\Gamma[S])$ is invoked 
at line~\ref{algo:solve:l11} of Algorithm~\ref{algo:solve_mpg}, say at step $\iota_1$, 
soon after that $\texttt{ua-jumps}()$ halted at line~\ref{algo:solve:l9} of Algorithm~\ref{algo:solve_mpg}. 
So, we aim at showing Item~2.

Notice that: $j^{\iota_1}=1$ holds (by line~\ref{algo:solve:l10} of Algorithm~\ref{algo:solve_mpg}). 
Let us check the (PC-1), (w-PC-2), (w-PC-3) \wrt $\Gamma[S]$. By line~\ref{subproc:ua-jumps_backtrack:l4} 
of $\texttt{backtrack\_ua-jump}()$, $S=L^\iota_\top$ for some step $\iota < \iota_1$.

-- \emph{PC-1}: By line~\ref{subproc:ua-jumps_backtrack:l2} of $\texttt{backtrack\_ua-jump}()$, it holds  
	$\forall (u\in L^{\iota}_\top)\; L_f^{\iota}[u]=\bot$. By Item~[3, (b), (i)] of Proposition~\ref{prop:ua-jumps_correctness}, 
	$\forall({v\in L^\iota_\top})\; f^{\text{c}:\iota_1}(v) = f^*_{w'_{i-1, s-1}}(v)$. 
	By Lemma~\ref{lem:monotone_energy}, $f^*_{w'_{i-1, s-1}}\preceq f^*_{w'_{i, j^{\iota_1}}}$. 
	Therefore, $\forall (v\in L^\iota_\top)\; f^{\text{c}:\iota_1}(v)\preceq f^*_{w'_{i,j^{\iota_1}}}(v)$. 
	This proves that (PC-1) holds \wrt $\Gamma[L^\iota_\top]=\Gamma[S]$.
		
-- \emph{w-PC-2:} By lines~\ref{subproc:ua-jumps_backtrack:l5}-\ref{subproc:ua-jumps_backtrack:l17} of $\texttt{backtrack\_ua-jumps}()$, 
	and since $\texttt{init\_cnt\_cmp}()$ is correct, ${L^{\text{inc}}}^{\iota_1}\subseteq \text{Inc}(f^{\text{c}:\iota_1}, i, 1)$.
	Since $j^{\iota_1}=1$, then (w-PC-2) holds \wrt $\Gamma[L^\iota_\top]=\Gamma[S]$.

-- \emph{w-PC-3:} By induction hypothesis and by Item~[3, (b), (iii) and (iv)] of Proposition~\ref{prop:ua-jumps_correctness}, 
	$\J.\texttt{cnt}^{\iota_1}$ and $\J.\texttt{cmp}^{\iota_1}$ are coherent \wrt $f^{\texttt{c}:\iota_1}$ in 
	$\Gamma[L^\iota_\top]_{i,1}$; also, if $u\in V_1\cap L^\iota_\top$ and $u\in \text{Inc}(f^{\text{c}:\iota_1}, i, 1)$, 
	then $u\in {L^{\text{inc}}}^{\iota_1}$ by lines~\ref{subproc:ua-jumps_backtrack:l11}-\ref{subproc:ua-jumps_backtrack:l17} 
	of $\texttt{backtrack\_ua-jump}()$. Since $j^{\iota_1}=1$, this proves that (PC-3) holds \wrt $\Gamma[L^\iota_\top]=\Gamma[S]$, 
	so (w-PC-3) holds as well.

\item Assume that $\texttt{J-VI}(i,j,F,\J,\Gamma[S])$ is invoked at line~\ref{algo:solve:l11} of Algorithm~\ref{algo:solve_mpg}, 
say at step $\iota_2$, soon after that a previous invocation of $\texttt{J-VI}(i,j-1,F,\J,\Gamma[S])$ halted 
at line~\ref{algo:solve:l11} say at step $\iota_1$. Notice that $j\in [2, s-1]$ in that case.
Let us check (PC-1), (w-PC-2), (w-PC-3) \wrt $\Gamma[S]$. 
By line~\ref{subproc:ua-jumps_backtrack:l4} of $\texttt{backtrack\_ua-jump}()$, $S=L^\iota_\top$ for some step $\iota < \iota_1$.

-- \emph{(PC-1)}: By lines~\ref{subproc:scale_back:l2}-\ref{subproc:scale_back:l3} of $\texttt{scl\_back}()$ 
	(which was executed at line~\ref{algo:solve:l13} of Algorithm~\ref{algo:solve_mpg}, just before $\iota_2$), 
	it holds $\forall (u\in L^{\iota}_\top)\; L_f^{\iota_1}[u]=\bot$.   
	By induction hypothesis and by Item~1 of Proposition~\ref{prop:saint_weak_coherency}, the following holds:   
	\[\forall({v\in L^\iota_\top})\; f^{\text{c}:\iota_2}(v) = f^*_{w'_{i, j^{\iota_1}}}(v)= f^*_{w'_{i, j^{\iota_2}-1}}(v).\]   
	By Lemma~\ref{lem:monotone_energy}, $f^*_{w'_{i, j^{\iota_2}-1}}\preceq f^*_{w'_{i, j^{\iota_2}}}$.   
	Therefore, $\forall (v\in L^\iota_\top)\; f^{\text{c}:\iota_2}(v)\preceq f^*_{w'_{i,j^{\iota_2}}}(v)$.	
	Whence, (PC-1) holds \wrt $\Gamma[L^\iota_\top]=\Gamma[S]$.  
		
-- \emph{(w-PC-2):} By induction hypothesis and by Item~3 of Proposition~\ref{prop:saint_weak_coherency}, 
	then: \[{L^{\text{inc}}}^{\iota_2} = \{v\in V\mid 0<f^{\text{c}:\iota_2}(v)\neq\top\}.\]
	Thus, by Lemma~\ref{lem:jvalue_init_correct}, 
	${L^{\text{inc}}}^{\iota_2}\subseteq \text{Inc}(f^{\text{c}:\iota_2}, i, j^{\iota_2})$.
	So, (w-PC-2) holds \wrt $\Gamma[L^\iota_\top]=\Gamma[S]$.

-- \emph{(w-PC-3):} Let $u\in V\setminus {L^{\text{inc}}}^{\iota_2}$, and let $v\in N^{\text{out}}_{\Gamma[L^\iota_\top]}(v)$.
\begin{itemize}
\item[If $u\in V_0$, ] we need to check the state of $\J.\texttt{cmp}^{\iota_2}[(u,v)]$ and $\J.\texttt{cnt}^{\iota_2}[u]$.
\end{itemize}

\emph{1.} If $\J.\texttt{cmp}^{\iota_2}[u,v]=\texttt{F}$, 
	we argue that $(u,v)\in E$ is incompatible \wrt $f^{\text{c}:\iota_2}$ in $\Gamma[S]_{i, j}$. 
	Indeed, it was already $\J.\texttt{cmp}^{\iota_2}[(u,v)]=\texttt{F}$ 
	when the previous $\texttt{J-VI}()$ (that invoked at step $\iota_1$) halted.
	Then, by induction hypothesis and by Item~3 of Proposition~\ref{prop:saint_weak_coherency}, 
	$(u,v)$ is incompatible \wrt $f^{\text{c}:\iota_2}$ in $\Gamma[S]_{i, j-1}$. 
	Thus, $(u,v)$ is incompatible \wrt $f^{\text{c}:\iota_2}$ also in 
	$\Gamma[S]_{i, j}$ (because $w'_{i, j} < w'_{i, j-1}$). 
	So, $\J.\texttt{cmp}^{\iota_2}[(u,v)]$ is coherent \wrt $f^{\texttt{c}:\iota_2}$ in $\Gamma[S]_{i,j}$.
			
\emph{2.} If $\J.\texttt{cmp}^{\iota_2}[(u,v)]=\texttt{T}$, 
	we argue that either $(u,v)$ is compatible \wrt $f^{\text{c}:\iota_2}$ in 
	$\Gamma[S]_{i, j}$ or it holds that $v\in {L^{\text{inc}}}^{\iota_2}$. 
	Indeed, assume that $\J.\texttt{cmp}^{\iota_2}[(u,v)]=\texttt{T}$ and 
	that $(u,v)$ is incompatible \wrt $f^{\text{c}:\iota_2}$ in $\Gamma[S]_{i, j}$.
	Since $\J.\texttt{cmp}^{\iota_2}[(u,v)]=\texttt{T}$, then it was as such even when 
	the previous $\texttt{J-VI}()$ (that invoked at step $\iota_1$) halted.
	So, $(u,v)$ was compatible \wrt $f^{\text{c}:\iota_2}$ in $\Gamma[S]_{i, j-1}$. 
	Since $u\not\in {L^{\text{inc}}}^{\iota_2}$, then $f^{\text{c}:\iota_2}(u)=0$ 
	(indeed, if $f^{\text{c}:\iota_2}(u)=\top$, then $(u,v)$ would have been compatible). 	
	Therefore, it is not possible that $f^{\text{c}:\iota_2}(v)=0$; 
	since, otherwise, from the fact that $f^{\text{c}:\iota_2}(u)=f^{\text{c}:\iota_2}(v)=0$ 
	and $(u,v)$ is compatible \wrt $f^{\text{c}:\iota_2}$ in $\Gamma[S]_{i, j-1}$,  
	it would be $w'(u,v)_{i,j-1}\geq 0$; and since $w(u,v)\in\Z$ and $0<F_{j-1}<F_{j}\leq 1$ where $j\in [2, s-1]$, 
	it would be $w'_{i,j}(u,v)\geq 0$ as well, so $(u,v)$ would be 
	compatible \wrt $f^{\text{c}}$ in $\Gamma[S]_{i, j}$. Also, 	
	it is not possible that $f^{\text{c}:\iota_2}(v)=\top$, since otherwise 
	$(u,v)$ would have been incompatible \wrt $f^{\text{c}:\iota_2}$ in $\Gamma[S]_{i, j-1}$ 
	(because $f^{\text{c}:\iota_2}(u)=0$). Therefore, $0<f^{\text{c}:\iota_2}(v)<\top$.
	Then, induction hypothesis and by Item~3 of Proposition~\ref{prop:saint_weak_coherency}, 
	$v\in {L^{\text{inc}}}^{\iota_2}$.
		 
\emph{3.} By induction hypothesis and by Proposition~\ref{prop:saint_weak_coherency},   
	$f^{\text{c}:\iota_2}$ is the least-SEPM of $\Gamma[S]_{i,j-1}$ and 
	$\J.\texttt{cnt}^{\iota_2}$, $\J.\texttt{cmp}^{\iota_2}$ are both 
	coherent \wrt $f^{\text{c}:\iota_2}$ in $\Gamma[S]_{i,j-1}$. Therefore, 
\begin{align*} 
\J.\texttt{cnt}^{\iota_2}[u] & = \big|\big\{ v \in N^{\text{out}}_{\Gamma[S]}(u) \mid 
f^{\text{c}:\iota_2}(u) \succeq f^{\text{c}:\iota_2}(v)\ominus 
w'_{i,j-1}(u,v)\big\}\big| & \text{\footnotesize [by coherency of $\J.\texttt{cnt}^{\iota_2}$]} \\
     & = \big|\big\{ v \in N^{\text{out}}_{\Gamma[S]}(u) 
\mid \J.\texttt{cmp}^{\iota_2}[(u,v)] = \texttt{T} \big\}\big|. & \text{\footnotesize [by coherency of $\J.\texttt{cmp}^{\iota_2}$]}  
\end{align*}
Moreover, since $f^{\text{c}:\iota_2}$ is least-SEPM of $\Gamma[S]_{i,j-1}$, 
then $u$ is consistent \wrt $f^{\text{c}:\iota_2}$ in $\Gamma[S]_{i, j-1}$, 
thus $\J.\texttt{cnt}^{\iota_2}[u]>0$. 

\begin{itemize}
\item[If $u\in V_1$,] assume $(u,v)$ is incompatible \wrt $f^{\text{c}:\iota_2}$ 
	in $\Gamma[S]_{i,j}$, for some $v\in N^{\text{out}}_{\Gamma[S]}(u)$. 
\end{itemize}
We want to prove $v\in {L^{\text{inc}}}^{\iota_2}$. By induction hypothesis and by Proposition~\ref{prop:saint_weak_coherency}, 
$f^{\text{c}:\iota_2}$ is the least-SEPM of $\Gamma[S]_{i,j-1}$ and 
${L^{\text{inc}}}^{\iota_2}=\{q\in V\mid 0<f^{\texttt{c}:\iota_2}(q)\neq\top\}$. 
Thus, since $v\in V_1$, $(u,v)$ is compatible \wrt $f^{\text{c}:\iota_2}$ in $\Gamma[S]_{i,j-1}$. 
Moreover, since $u\not\in {L^{\text{inc}}}^{\iota_2}$ by hypothesis, then $f^{\text{c}:\iota_2}(u)=0$ or $f^{\text{c}}(u)=\top$; 
but since $(u,v)$ is incompatible \wrt $f^{\text{c}:\iota_2}$ in $\Gamma[S]_{i,j}$, 
it is $f^{\text{c}:\iota_2}(u)=0$. Therefore, it is not possible that $f^{\text{c}:\iota_2}(v)=0$; 
since, otherwise, from the fact that $f^{\text{c}:\iota_2}(u)=f^{\text{c}:\iota_2}(v)=0$ 
and $(u,v)$ is compatible \wrt $f^{\text{c}:\iota_2}$ in $\Gamma[S]_{i, j-1}$, it would be: 
\[w'(u,v)_{i,j-1} = w(u,v)-i-F_{j-1}\geq 0,\] since $w(u,v)\in\Z$ and $0<F_{j-1}<F_{j}\leq 1$ where $j\in [2, s-1]$, 
it would be $w'_{i,j}(u,v)\geq 0$, so $(u,v)$ would have been compatible \wrt $f^{\text{c}}$ in $\Gamma[S]_{i, j}$. Also, 	
it is not possible that $f^{\text{c}:\iota_2}(v)=\top$, since otherwise $(u,v)$ would 
have been incompatible \wrt $f^{\text{c}:\iota_2}$ in $\Gamma[S]_{i, j}$ (because $f^{\text{c}:\iota_2}(u)=0$). 
Therefore, $0<f^{\text{c}:\iota_2}(v)<\top$. 
Then, induction hypothesis and by Item~3 of Proposition~\ref{prop:saint_weak_coherency}, $v\in {L^{\text{inc}}}^{\iota_2}$.

\item Assume that $\texttt{ei-jump}(i,\J)$ is invoked at line~\ref{algo:solve:l7} of Algorithm~\ref{algo:solve_mpg}, 
	say at step $\iota_1$, and that ${L^{\text{inc}}}^{\iota_1} = \emptyset$. Then, the following properties hold. 

-- (eij-PC-1) $f^{\text{c}:\iota_1}$ is the least-SEPM of $\Gamma_{i, s-1}$: indeed, 
	consider the previous invocation $\texttt{J-VI}(i,j^{\iota_0},F,\J,\Gamma[S^{\iota_0}])$ 
	at line~\ref{algo:solve:l11} of Algorithm~\ref{algo:solve_mpg}, 
	say it was invoked at step $\iota_0$ before $\iota_1$.
	By induction hypothesis and by Item~1 of Proposition~\ref{prop:saint_weak_coherency}, 
	$f^{\text{c}:\iota_1}$ is the least-SEPM of $\Gamma[S^{\iota_0}]_{i, j^{\iota_0}}$. 
	Since ${L^{\text{inc}}}^{\iota_1} = \emptyset$ by assumption, 
	by induction hypothesis and Item~3 of Proposition~\ref{prop:saint_weak_coherency}, 
	then $\{v\in S^{\iota_0} \mid 0<f^{\text{c}:\iota_1}(v)\neq\top\}={L^{\text{inc}}}^{\iota_1}=\emptyset$.
	We claim that $\forall (u\in S^{\iota_0})\; f^{\text{c}:\iota_1}(u)=\top$. 
	Indeed, the following holds.
	\begin{itemize}
	\item[If $u\in V_0$,] since $f^{\text{c}:\iota_1}$ is the least-SEPM of $\Gamma[S^{\iota_0}]_{i, j^{\iota_0}}$, 
		there exists $v\in N^{\text{out}}_{\Gamma[S^{\iota_0}]}(u)$ such that $(u,v)$ 
		is compatible \wrt $f^{\text{c}:\iota_1}$ in $\Gamma[S^{\iota_0}]_{i,j^{\iota_0}}$.
		So, it is not possible that $f^{\text{c}:\iota_1}(u)=0$: otherwise, it would be 
		$f^{\text{c}:\iota_1}(v)=0$ as well (because either $f^{\text{c}:\iota_1}(v)=0$ or $f^{\text{c}:\iota_1}(v)=\top$), 
		and since $w(u,v)\in \Z$ and $0<F_{j^{\iota_0}}\leq 1$ where $j^{\iota_0}\in [1, s-1]$, 
		then $(u,v)$ would be compatible \wrt $f^{\text{c}:\iota_1}$ even in $\Gamma[S^{\iota_0}]_{i, s-1}$, 
		thus $f^*_{w'_{i, s-1}}(u)=0$. But this contradicts the fact that, 
		by induction hypothesis, Item~1 of Proposition~\ref{prop:ua-jumps_correctness} and 
		Item~1 of Proposition~\ref{prop:saint_weak_coherency}, $f^*_{w'_{i, s-1}}(u)=\top$. 
		Therefore, $f^{\text{c}:\iota_1}(u)=\top$.
		
	\item[If $u\in V_1$,] since $f^{\text{c}:\iota_1}$ is the least-SEPM of $\Gamma[S^{\iota_0}]_{i, j^{\iota_0}}$, 
		for every $v\in N^{\text{out}}_{\Gamma[S^{\iota_0}]}(u)$, the arc $(u,v)$ 
		is compatible \wrt $f^{\text{c}:\iota_1}$ in $\Gamma[S^{\iota_0}]_{i,j^{\iota_0}}$.
		Now, by arguing as in the previous case (\ie $u\in V_0$), 
			it holds that $\forall (u\in S^{\iota_0})\; f^{\text{c}:\iota_1}(u)=\top$. 
	\end{itemize} 
	Thus, $\forall (u\in S^{\iota_0})\; f^{\text{c}:\iota_1}(u)=\top=f^*_{i,s-1}(u)$. 
	By induction hypothesis and Item~[3, (b), (ii)] of Proposition~\ref{prop:ua-jumps_correctness}, 
	$\forall({u\in V\setminus S^{\iota_0}})\; f^{\text{c}:\iota_1}(u) = f^*_{w'_{i, s-1}}(u)$.
	So, $f^{\text{c}:\iota_1} = f^*_{w'_{i, s-1}}$.
	
-- (eij-PC-2) ${L^{\text{inc}}}^{\iota_1}=\{v\in S^{\iota_0} \mid 0<f^{\text{c}:\iota_1}(v)\neq\top\}$: 
	this holds by induction hypothesis and by Item~3 of Proposition~\ref{prop:saint_weak_coherency}.

-- (eij-PC-3) ${L^{\text{inc}}_{\text{cpy}}}^\iota \subseteq \text{Inc}(f^{\text{c}:\iota}, i', j')$ for every $(i',j')>(i,s-1)$: 
	this holds by induction hypothesis plus Item~[3, (a)] of Proposition~\ref{prop:ua-jumps_correctness} 
	and Lemma~\ref{lem:jvalue_init_correct}.

-- (eij-PC-4): consider the previous invocation of $\texttt{J-VI}(i,j^{\iota_0},F,\J,\Gamma[S^{\iota_0}])$ 
	at line~\ref{algo:solve:l11} of Algorithm~\ref{algo:solve_mpg}, say at step $\iota_0$, just before $\iota_1$.
	By induction hypothesis and by Item~2 of Proposition~\ref{prop:saint_weak_coherency}, 
	for every $u\in V_0\cap S^{\iota_0}$, $J.\texttt{cnt}^{\iota_1}[u]$ 
	and $\J.\texttt{cmp}^{\iota_1}[(u,\cdot)]$ are both coherent \wrt $f^{\text{c}:\iota_1}$ in 
	$\Gamma[S^{\iota_0}]_{i, j^{\iota_0}}$; also, for every $u\in V_0\setminus S^{\iota_0}$, 
	$J.\texttt{cnt}^{\iota_1}[u]$ and $\J.\texttt{cmp}^{\iota_1}[(u,\cdot)]$ 
	are both coherent \wrt $f^*_{w'_{i,s-1}}$ in $\Gamma_{i, s-1}$.
	Since (eij-PC-1) holds, then $f^{\text{c}:\iota_1}=f^*_{w'_{i,s-1}}$. So, (eij-PC-4) holds.  
\end{enumerate}
\end{proof}

\begin{Lem}\label{lem:eventually_invoke}
Let $\hat{v}\in V$, assume $\val{\Gamma}{\hat{v}}=\hat{i}-F_{\hat{j}-1}$, for some $\hat{i}\in [W^-,W^+]$ and $\hat{j}\in [1,s-1]$. 

Then, eventually, Algorithm~\ref{algo:solve_mpg} invokes 
$\texttt{J-VI}(\hat{i},\hat{j}, F, \J, \Gamma[S])$ at line~\ref{algo:solve:l11}, for some $S\subseteq V$.
\end{Lem}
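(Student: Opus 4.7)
The plan is to combine correctness and termination of Algorithm~\ref{algo:solve_mpg} (established by Propositions~\ref{prop:correctness}--\ref{prop:all_pre_conditions_hold}) with the uniqueness of a value's representation in $S_\Gamma$ supplied by Theorem~\ref{Thm:transition_opt_values}.

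First, I would show that Algorithm~\ref{algo:solve_mpg} halts in finite time and that, upon halting, every $v \in V$ has had $\nu(v)$ set to $\val{\Gamma}{v}$ via line~\ref{proc:set:l3} of $\texttt{set\_vars}()$. Termination combines Proposition~\ref{prop:correctness} (each $\texttt{J-VI}()$ halts), Proposition~\ref{prop:ei-jump_halts} (each $\texttt{ei-jump}()$ halts), Item~4 of Proposition~\ref{prop:ua-jumps_correctness} (each $\texttt{ua-jumps}()$ halts), and the finiteness of $[W^-, W^+] \times [1, s-1]$. For correctness of the value assignments, I would combine Item~4 of Proposition~\ref{prop:saint_weak_coherency}, which identifies the set $L_\top$ at the end of each $\texttt{J-VI}()$, with Item~2 of Proposition~\ref{prop:ua-jumps_correctness}, which pins down $L_\top = \W_0(\Gamma_{i-1,s-1}) \cap \W_1(\Gamma_{i,s-1})$ at the end of the decisive $\texttt{J-VI}()$ of each UA-jump sequence; together with Theorem~\ref{Thm:transition_opt_values}, this ensures $\nu(u) = \val{\Gamma}{u}$ whenever $\nu(u)$ is set at line~\ref{proc:set:l3}. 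That the algorithm is exhaustive --- that every vertex eventually lands in some $L_\top$ --- would follow by a short induction on the sequence $\rho^J$ of indices processed by the main \texttt{while} loop at line~\ref{algo:solve:l6}: by Proposition~\ref{prop:ei-jump_correctness}, $\texttt{ei-jump}()$ only skips phases at which no vertex transitions, while by Item~2 of Proposition~\ref{prop:ua-jumps_correctness}, $\texttt{ua-jumps}()$ scans consecutive integer $i$-values and, upon detecting any transition, triggers the inner $j$-loop.

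Second, I would use the uniqueness of expressing any $\nu \in S_\Gamma = [-W, W) + \F_{|V|}$ as $i + F_{j-1}$ with $i \in [W^-, W^+]$ and $j \in [1, s-1]$. Applying Theorem~\ref{Thm:transition_opt_values} to $\val{\Gamma}{\hat v}$, the hypothesis singles out the unique pair of indices at which $\hat v$ lies in the transition set $\W_0(\Gamma_{\cdot,\cdot-1}) \cap \W_1(\Gamma_{\cdot,\cdot})$, namely $(\hat i, \hat j)$. By the first step, $\nu(\hat v)$ must be assigned at some $\texttt{set\_vars}()$ call, which in turn follows a call $\texttt{J-VI}(i, j, F, \J, \Gamma[S])$ at line~\ref{algo:solve:l11} of Algorithm~\ref{algo:solve_mpg}, and the assigned value satisfies $\nu(\hat v) = i + F[j-1]$ by line~\ref{proc:set:l3}. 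Since this expression is unique in $S_\Gamma$, the parameters must be $(i, j) = (\hat i, \hat j)$.

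The main obstacle, and the step I expect to require the most care, is the exhaustiveness claim that $\rho^J$ visits every transition phase --- in particular that neither an EI-jump nor a UA-jump bypasses $(\hat i, \hat j)$. In detail, I would argue that whenever $\texttt{ei-jump}()$ moves $\J.i$ from some $i_0$ to $i_1 > i_0 + 1$, no transition occurs at any $i \in (i_0, i_1)$: the only arcs whose compatibility status could change between phases $(i, s-1)$ and $(i+1, s-1)$ with $f^{\text{c}}$ pinned to $\{0, \top\}$ are those with $w(u,v) = i+1$, and such arcs are tracked in $L_\omega$ and, by the pre-processing sort at line~\ref{proc:jumper_init:l12} of $\texttt{init\_jumper}()$, would have been popped and processed by $\texttt{repair}()$ inside $\texttt{ei-jump}()$ itself before $\J.i$ could advance past them. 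Combined with the unit-increment advance of $\texttt{ua-jumps}()$ at line~\ref{subproc:ua-jumps:l4} and the inner $j$-loop at line~\ref{algo:solve:l11}, every $(i, j) \in [W^-, W^+] \times [1, s-1]$ at which some vertex transitions is visited before the algorithm halts; applied to $(\hat i, \hat j)$, this yields the claim.
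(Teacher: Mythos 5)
Your proposal is correct in substance, but it is organized quite differently from the paper's proof, and the difference is worth spelling out. The paper argues directly by contradiction: if $\texttt{J-VI}(\hat{i},\hat{j},\ldots)$ is never invoked at line~\ref{algo:solve:l11}, then (since $j$ only increments and $i$ only advances via jumps) some $\texttt{ei-jump}()$ or $\texttt{ua-jumps}()$ must have carried $i$ past $\hat{i}$; in the first case Item~2 of Proposition~\ref{prop:ei-jump_correctness} forces $f^*_{\hat{i},\hat{j}-1}(\hat{v})=f^*_{\hat{i},\hat{j}}(\hat{v})$, and in the second case Item~2 of Proposition~\ref{prop:ua-jumps_correctness} forces $\W_0(\Gamma_{\hat{i}-1,s-1})\cap\W_1(\Gamma_{\hat{i},s-1})=\emptyset$; both contradict $\hat{v}\in\W_0(\Gamma_{\hat{i},\hat{j}-1})\cap\W_1(\Gamma_{\hat{i},\hat{j}})$ (Theorem~\ref{Thm:transition_opt_values}, Proposition~\ref{prop:least_energy_prog_measure}, and monotonicity of the winning regions along $\rho$). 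Your crux --- the ``exhaustiveness'' claim that no EI-Jump or UA-Jump bypasses a transition phase --- is exactly this same argument, resting on the same two propositions, so the hard part is identical. What differs is the wrapper: you first establish the global statement that every vertex eventually receives $\nu(v)=\val{\Gamma}{v}$, and then recover the specific indices $(\hat{i},\hat{j})$ from the uniqueness of the representation of elements of $S_\Gamma$ as $i+F_{j-1}$ with $F_{j-1}\in[0,1)$. That uniqueness step is sound, but note that it inverts the paper's dependency structure: the paper proves this lemma \emph{before} (and uses it inside) the final correctness theorem, precisely in the direction showing that every value is eventually detected; your plan essentially proves that theorem first. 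This is not circular, because your exhaustiveness induction supplies the missing ingredient independently, but it makes the proof heavier than necessary --- once you know no transition phase is skipped, the contradiction with $\hat{v}$'s own transition at $(\hat{i},\hat{j})$ already finishes the proof without appealing to global value-correctness or to uniqueness of the Farey representation.
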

\begin{proof}
For the sake of contradiction, for any $S\subseteq V$, assume that $\texttt{J-VI}(\hat{i},\hat{j}, F, \J, \Gamma[S])$ 
is never invoked at line~\ref{algo:solve:l11} of Algorithm~\ref{algo:solve_mpg}.
At each iteration of the main \texttt{while} loop of Algorithm~\ref{algo:solve_mpg} 
(lines~\ref{algo:solve:l6}-\ref{algo:solve:l14}), $j$ is incremented (line~\ref{algo:solve:l14}); 
meanwhile, the value of $i$ stands still until (eventually) $\texttt{ei-jump}()$ 
and (possibly) $\texttt{ua-jumps}()$ increase it (also resetting $j\leftarrow 1$). 
Therefore, since $\texttt{J-VI}(\hat{i}, \hat{j}, F, \J, \Gamma[S])$ is never invoked at line~\ref{algo:solve:l11}, 
there are $i_0\in [W^-,W^+]$ and $j_0\in [1,s-1]$, where $(i_0, j_0) < (\hat{i},\hat{j})$, such that one of the following two hold: 
\begin{itemize}
\item Either $\texttt{ei-jump}(i_0, \J)$ (line~\ref{algo:solve:l7}) is invoked and, 
	when it halts say at step $h$, it holds $\J.i^h>\hat{i}$. 

In that case, by Item~1 of Proposition~\ref{prop:all_pre_conditions_hold} and Item~2 of Proposition~\ref{prop:ei-jump_correctness}, 
$f^*_{\hat{i},\hat{j}-1}(\hat{v})=f^*_{\hat{i},\hat{j}}(\hat{v})$. 
On the other hand, since $\val{\Gamma}{\hat{v}}=\hat{i}-F_{\hat{j}-1}$, 
then $\hat{v}\in \W_0(\Gamma_{\hat{i},\hat{j}-1}) \cap \W_1(\Gamma_{\hat{i},\hat{j}})$ by Theorem~\ref{Thm:transition_opt_values}; 
so, by Propositions~\ref{prop:least_energy_prog_measure}, 
$f^*_{\hat{i},\hat{j}-1}(\hat{v})\neq\top$ and $f^*_{\hat{i},\hat{j}}(\hat{v})=\top$. 
So, $\top\neq f^*_{\hat{i},\hat{j}-1}(\hat{v})=\top$; this is absurd. 

\item Or $\texttt{ua-jumps}(i_0, s, F, \J, \Gamma)$ (line~\ref{algo:solve:l9}) is invoked and, 
	when it halts say at step $h$, $J.i^h>\hat{i}$.

In that case, during the execution of $\texttt{ua-jumps}(i_0, s, F, \J, \Gamma)$, at some step $\hat\iota$, it is invoked 
$\texttt{J-VI}(\hat{i}, s-1, F, \J,\Gamma)$ (line~\ref{subproc:ua-jumps:l2} of $\texttt{ua-jumps}()$); 
and when it halts, say at step $\hat\iota_h$, by Item~2 of Proposition~\ref{prop:ua-jumps_correctness} 
and by line~6 of $\texttt{ua-jumps}()$, then 
$L^{\hat\iota_h}_{\top}=\W_0(\Gamma_{\hat{i}-1,s-1}) \cap \W_1(\Gamma_{\hat{i},s-1})=\emptyset$. 
Still, $\val{\Gamma}{\hat{v}}=\hat{i}-F_{\hat{j}-1}$, 
then $\hat{v}\in \W_0(\Gamma_{\hat{i},\hat{j}-1}) \cap \W_1(\Gamma_{\hat{i},\hat{j}})$ by Theorem~\ref{Thm:transition_opt_values}.
Since $\rho=\{w_{i,j}\}_{i,j}$ is monotone decreasing, 
then $\W_0(\Gamma_{\hat{i},\hat{j}-1})\subseteq \W_0(\Gamma_{\hat{i}-1,s-1})$ and 
$\W_1(\Gamma_{\hat{i},\hat{j}})\subseteq \W_1(\Gamma_{\hat{i},s-1})$.
Then, $ \hat{v}\in\W_0(\Gamma_{\hat{i},\hat{j}-1}) \cap \W_1(\Gamma_{\hat{i},\hat{j}}) 
\subseteq \W_0(\Gamma_{\hat{i}-1,s-1}) \cap \W_1(\Gamma_{\hat{i},s-1})=\emptyset$, but this is absurd.
\end{itemize} 
In either case, we arrive at some contradiction. 

Therefore, eventually, Algorithm~\ref{algo:solve_mpg} invokes $\texttt{J-VI}()$ 
at line~\ref{algo:solve:l11} on input $(\hat{i},\hat{j})$.
\end{proof}

\begin{Thm}
Given any input \MPG $\Gamma=(V,E,w,\langle V_0, V_1\rangle)$, Algorithm~\ref{algo:solve_mpg} halts in finite time. 

If $(\W_0, \W_1, \nu, \sigma^*_0)$ is returned, then: $\W_0$ is the winning set of Player~0 in $\Gamma$, 
$\W_1$ is that of Player~1, $\forall^{v\in V}\nu(v)=\val{\Gamma}{v}$, $\sigma^*_0$ 
is an optimal positional strategy for Player~0 in $\Gamma$. 
\end{Thm}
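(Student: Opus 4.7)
The plan is to split into three parts: (a) termination; (b) correctness of $\nu$, $\W_0$, $\W_1$; and (c) optimality of $\sigma^*_0$. All three rely on the propositions already developed in this section, notably Proposition~\ref{prop:all_pre_conditions_hold} (which guarantees the required pre-conditions) together with Propositions~\ref{prop:saint_weak_coherency}, \ref{prop:ei-jump_correctness}, \ref{prop:ua-jumps_correctness} and Lemma~\ref{lem:eventually_invoke}.

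For termination, Proposition~\ref{prop:ei-jump_halts}, Item~(4) of Proposition~\ref{prop:ua-jumps_correctness}, and Propositions~\ref{prop:correctness} and~\ref{prop:saint_weak_coherency} give that every invocation of $\texttt{ei-jump}()$, $\texttt{ua-jumps}()$ and $\texttt{J-VI}()$ halts in finite time. It then suffices to bound the number of iterations of the main \texttt{while} loop at lines~\ref{algo:solve:l6}--\ref{algo:solve:l14}. The key observation is the invariant \emph{Inv-EI} discussed in the outline of Algorithm~\ref{algo:solve_mpg}: each $\texttt{J-VI}()$ invocation performs at least one successful application of the energy-lifting operator $\delta$, hence strictly lifts some $f^{\texttt{c}}(v)$. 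Since non-$\top$ energy-levels are bounded by $(|V|-1)|V|W$ per vertex, the total number of successful lifts (and hence of $\texttt{J-VI}()$ invocations) is finite. Once this budget is exhausted, any subsequent $\texttt{ei-jump}()$ finds $L^{\text{inc}}=\emptyset$, and since $L_\omega$ is monotonically consumed and never refilled, it eventually becomes empty as well; the test at line~\ref{algo:solve:l8} then triggers the \texttt{return}.

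For correctness of $\nu$, $\W_0$ and $\W_1$, fix $\hat v\in V$ and, by Theorem~\ref{Thm:transition_opt_values}, write $\val{\Gamma}{\hat v}=\hat i+F_{\hat j-1}$ for some $\hat i\in [W^-,W^+]$ and $\hat j\in [1,s-1]$. By Lemma~\ref{lem:eventually_invoke}, $\texttt{J-VI}(\hat i,\hat j, F, \J, \Gamma[S])$ is eventually invoked at line~\ref{algo:solve:l11} for some $S\subseteq V$. Proposition~\ref{prop:all_pre_conditions_hold} ensures that (\emph{PC-1}), (\emph{w-PC-2}), (\emph{w-PC-3}) hold, so Proposition~\ref{prop:saint_weak_coherency} applies: on halting, $f^{\texttt{c}}$ is the least-SEPM of $\Gamma[S]_{\hat i,\hat j}$ and $L_\top=V_{f^{\texttt{c}:\iota}}\setminus V_{f^{\texttt{c}:h}}$. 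Via Proposition~\ref{prop:least_energy_prog_measure} this coincides with $\W_0(\Gamma[S]_{\hat i,\hat j-1})\cap \W_1(\Gamma[S]_{\hat i,\hat j})$, and Theorem~\ref{Thm:transition_opt_values} then forces $\hat v\in L_\top$. The subsequent invocation of $\texttt{set\_vars}()$ correctly assigns $\nu(\hat v)=\hat i+F[\hat j-1]=\val{\Gamma}{\hat v}$ at line~\ref{proc:set:l3} and places $\hat v$ into the correct winning set at line~\ref{proc:set:l4}.

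For optimality of $\sigma^*_0$, the key observation is that, just before $\texttt{set\_vars}()$ executes, $\J.f$ still holds the energy-levels from the previous Scan-Phase, namely the least-SEPM of $\Gamma[S]_{\hat i,\hat j-1}\cong \Gamma^{w-\val{\Gamma}{\hat v}}[S]$, since the lifts performed by the current $\texttt{J-VI}()$ were routed into $L_f$ rather than into $\J.f$. For $\hat v\in V_0\cap L_\top$, the loop at lines~\ref{proc:set:l6}--\ref{proc:set:l8} thus finds some $v\in N_\Gamma^{\text{out}}(\hat v)$ with $(\hat v,v)$ compatible \wrt this SEPM; such a $v$ exists because $\hat v\in \W_0(\Gamma[S]_{\hat i,\hat j-1})$ and the least-SEPM satisfies $f^*_{\hat v}(\hat v)\ne\top$ by Remark~\ref{Rem:pos_opt_strategy}. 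Uniformly assembling these choices across all value classes and appealing to Theorem~\ref{Thm:pos_opt_strategy} then yields an optimal positional strategy for Player~$0$ in $\Gamma$. The subtlest point, and the one I expect to require most care, is arguing that a compatible arc chosen on the sub-arena $\Gamma[S]$ lifts to an optimal choice in the whole game $\Gamma$; this is where Theorem~\ref{thm:ergodic_partition} intervenes, guaranteeing that Player~$0$ never has an outgoing arc crossing from a lower value class to a higher one, so optimal behavior decomposes along value classes and the choices made on the sub-arenas are globally optimal.
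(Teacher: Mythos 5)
Your proposal follows the same overall architecture as the paper's proof: finiteness of each subprocedure call, then forward correctness of $\nu,\W_0,\W_1$ via Lemma~\ref{lem:eventually_invoke}, Proposition~\ref{prop:all_pre_conditions_hold}, Proposition~\ref{prop:saint_weak_coherency} and Theorem~\ref{Thm:transition_opt_values}, and optimality of $\sigma^*_0$ via the preserved previous-phase energy-levels in $\J.f$ together with Theorem~\ref{Thm:pos_opt_strategy}. The one place where you genuinely diverge is termination: the paper bounds the number of $\texttt{J-VI}()$ invocations by observing that the index sequence $(i,j)$ of the invocations at line~\ref{algo:solve:l11} is strictly increasing over the finite set $[W^-,W^+]\times[1,s-1]$, and that once the least-SEPM becomes identically $\top$ the halting test at line~\ref{algo:solve:l8} fires; you instead charge each $\texttt{J-VI}()$ to at least one successful application of $\delta$ (the invariant \emph{Inv-EI}) and appeal to the boundedness of energy-levels. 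That route works, but as stated it has a small hole: during $\texttt{backtrack\_ua-jump}()$ the energy-levels of the vertices in $L_\top$ are deliberately rolled back (from $\top$ down to $f^*_{w'_{i-1,s-1}}$), so $f^{\texttt{c}}$ is \emph{not} globally monotone and ``bounded plus strictly increasing'' does not by itself bound the number of lifts. You need the additional observation (which the paper makes when stating \emph{Inv-EI}) that backtracking occurs at most $|V|$ times, since each backtracking step is followed by assigning final values to the nonempty set $L_\top$; with that, the total roll-back is bounded and your budget argument closes. Your extra care on the strategy part --- flagging that a compatible arc chosen on $\Gamma[S]$ must lift to a globally optimal choice, resolved by Theorem~\ref{thm:ergodic_partition} and the uniform assembly already built into the statement of Theorem~\ref{Thm:pos_opt_strategy} --- is more explicit than the paper, which simply cites Theorem~\ref{Thm:pos_opt_strategy} at that point.
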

\begin{proof} 
Firstly, we argue that Algorithm~\ref{algo:solve_mpg} halts in finite time. 
Recall, by Propositions~[\ref{prop:saint_weak_coherency}, \ref{prop:ei-jump_correctness}, 
	\ref{prop:ua-jumps_correctness}, \ref{prop:all_pre_conditions_hold}], 
it holds that any invocation of $\texttt{J-VI}()$, $\texttt{ei-jump}()$, $\texttt{ua-jumps}()$ (respectively) halts in finite time.
It is easy to check at this point that (by lines~\ref{algo:solve:l14} of Algorithm~\ref{algo:solve_mpg}, 
line~\ref{subproc:energy_jump:l4} and \ref{subproc:energy_jump:l8} of $\texttt{ei-jump}()$, 
line~\ref{subproc:ua-jumps:l5} of $\texttt{ua-jumps}()$, and since $L_{\omega}$ was sorted in increasing order),  
whenever $\texttt{J-VI}()$ is invoked at line~\ref{algo:solve:l11} of Algorithm~\ref{algo:solve_mpg} -- at 
any two consequential steps $\iota_0,\iota_1$ (\ie such that $\iota_0<\iota_1$) -- 
then $(i^{\iota_0}, j^{\iota_0} ) < (i^{\iota_1}, j^{\iota_1})$. Also, 
by Propositions~\ref{prop:saint_weak_coherency}-\ref{prop:all_pre_conditions_hold}, 
whenever $\texttt{J-VI}()$ is invoked at line~\ref{algo:solve:l11} of Algorithm~\ref{algo:solve_mpg}, 
if it halts say at step $\iota_h$, then $f^{\text{c}:\iota_h}$ is the least-SEPM of $\Gamma_{i^{\iota_h},j^{\iota_h}}$; 
so, eventually, say when $(i^{\hat\iota_h},j^{\hat\iota_h})$ are sufficiently large, 
then $\forall^{u\in V}\, f^{\text{c}:\hat\iota_h}(u)=\top$; and, by Item~3 of Proposition~\ref{prop:saint_weak_coherency}, 
${L^{\text{inc}}}^{\hat\iota_h}=\{v\in V\mid 0<f^{\text{c}:\hat\iota_h}\neq \top\}$, so, ${L^{\text{inc}}}^{\hat\iota_h}=\emptyset$. 
Consider the first invocation of $\texttt{ei-jump}()$ (line~\ref{algo:solve:l7} 
of Algorithm~\ref{algo:solve_mpg}) that is made soon after this $\hat\iota_h$, and say it halts at step $h$. 
By Item~3 of Proposition~\ref{prop:ei-jump_correctness}, ${L^{\text{inc}}}^h=\text{Inc}(f^{\text{c}:h}, \J.i^h, 1)$.
Since $\forall^{u\in V}\, f^{\text{c}:\hat\iota_h}(u)=f^{\text{c}:h}(u)=\top$, 
then $\text{Inc}(f^{\text{c}:h}, \J.i^h, 1)=\emptyset$. Therefore, ${L^{\text{inc}}}^h=\emptyset$.
Therefore, Algorithm~\ref{algo:solve_mpg} halts at line~\ref{algo:solve:l8} soon after $h$.

Secondly, we argue that Algorithm~\ref{algo:solve_mpg} returns $(\W_0, \W_1, \nu, \sigma^*_0)$ correctly. 

On one side, $\W_0, \W_1, \nu, \sigma^*_0$ are accessed only when $\texttt{set\_vars}()$ is invoked (line~\ref{algo:solve:l12} of  
Algorithm~\ref{algo:solve_mpg}). Just before that, at line~\ref{algo:solve:l11}, some $\texttt{J-VI}()$ must have been invoked; 
say it halts at step $h$. By Items~1 and 2 of Proposition~\ref{prop:ua-jumps_correctness},  
$L^{h}_\top = \W_0(\Gamma_{i^{h},j^{h}-1}) \cap \W_1(\Gamma_{i^{h},j^{h}})$. 
Therefore, by Theorem~\ref{Thm:transition_opt_values}, $\nu$ is assigned correctly; so, $\W_0,\W_1$ are also assigned correctly. 
At this point, by Theorem~\ref{Thm:pos_opt_strategy}, also $\sigma^*_0$ is assigned correctly.

Conversely, let $\hat{v}\in V$ and assume $\val{\Gamma}{v}=\hat{i}-F_{\hat{j}-1}$ 
for some $\hat{i}\in [W^-,W^+]$ and $\hat{j}\in [1,s-1]$. By Lemma~\ref{lem:eventually_invoke}, eventually, 
Algorithm~\ref{algo:solve_mpg} invokes $\texttt{J-VI}(\hat{i},\hat{j}, F, \J, \Gamma)$ at line~\ref{algo:solve:l11}. 
By Items~1 and 2 of Proposition~\ref{prop:ua-jumps_correctness}, when $\texttt{J-VI}(\hat{i},\hat{j}, F, \J, \Gamma)$ halts, 
say at step $h$, it holds that $L^{h}_\top = \W_0(\Gamma_{\hat{i},\hat{j}-1}) \cap \W_1(\Gamma_{\hat{i},\hat{j}})$. 
Therefore, soon after at line~\ref{algo:solve:l12}, $\texttt{set\_vars}()$ assigns to $\W_0,\W_1,\nu,\sigma^*_0$ a correct state.
\end{proof}

\subsection{Complexity of Algorithm~\ref{algo:solve_mpg}}
The complexity of Algorithm~\ref{algo:solve_mpg} follows, essentially, from the fact that [\emph{Inv-EI}] is satisfed.
\begin{Prop} 
Algorithm~\ref{algo:solve_mpg} satisfies [\emph{Inv-EI}]: 
whenever a Scan-Phase is executed (each time that a Value-Iteration is invoked), 
an energy-level $f(v)$ strictly increases for at least one $v\in V$. 
So, the energy-lifting operator $\delta$ is applied (successfully) at least once per each $\texttt{J-VI}()$. 
\end{Prop}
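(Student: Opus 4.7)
The plan is to reduce [\emph{Inv-EI}] to the claim that $L^{\text{inc}}$ is non-empty at every invocation of $\texttt{J-VI}()$ by Algorithm~\ref{algo:solve_mpg}. Indeed, by Proposition~\ref{prop:all_pre_conditions_hold} the pre-condition (\emph{w-PC-2}), \ie $L^{\text{inc}}\subseteq \text{Inc}(f^{\text{c}},i,j)$, holds at every such invocation; so the first vertex $v$ popped from $L^{\text{inc}}$ at line~\ref{algo:jvalue:l2} is inconsistent \wrt $f^{\text{c}}$ in $\Gamma_{i,j}$, and by definition of $\delta$ (see $\texttt{apply\_}\delta()$ in SubProcedure~\ref{algo:j-value-iteration}) the subsequent call at line~\ref{algo:jvalue:l3} must strictly increase $f^{\text{c}}(v)$, delivering [\emph{Inv-EI}].

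To prove that $L^{\text{inc}}\neq\emptyset$ at invocation I case-split on the invocation site. When $\texttt{J-VI}()$ is called at line~\ref{algo:solve:l11} of Algorithm~\ref{algo:solve_mpg} after $\texttt{ei-jump}()$ returned $\texttt{F}$ at line~\ref{algo:solve:l7}, non-emptiness is immediate from inspection of SubProcedure~\ref{subproc:energy_jumps}: the else branch fires only when $L^{\text{inc}}\neq\emptyset$ already and does not alter $L^{\text{inc}}$. When $\texttt{J-VI}()$ is invoked at line~\ref{subproc:ua-jumps:l2} of $\texttt{ua-jumps}()$, the first iteration of the repeat loop inherits $L^{\text{inc}}\neq\emptyset$ from the test at line~\ref{algo:solve:l8} of the main procedure; for subsequent iterations I argue by induction: the preceding $\texttt{J-VI}()$ applied $\delta$ to at least one vertex $v$ by the inductive hypothesis, strictly raising $f^{\text{c}}(v)>0$; Items~3 and 4 of Proposition~\ref{prop:saint_weak_coherency} place $v$ at halting time either in $L^{\text{inc}}$ (if $f^{\text{c}}(v)<\top$) or in $L_\top$ (if $f^{\text{c}}(v)=\top$); since the repeat loop continued, $L_\top=\emptyset$, so $v\in L^{\text{inc}}$, and $\texttt{rejoin\_ua-jump}()$ at line~\ref{subproc:ua-jumps:l5} never removes anything from $L^{\text{inc}}$.

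The main obstacle is the remaining case: $\texttt{J-VI}()$ invoked at line~\ref{algo:solve:l11} right after $\texttt{ua-jumps}()$ returns, where $L^{\text{inc}}$ has just been populated from scratch by $\texttt{backtrack\_ua-jump}()$ scanning $S=L_\top\neq\emptyset$. Here I would argue that the initial $f^{\text{c}}|_S$, equal to $D_1\cdot f^*_{w'_{i-1,s-1}}|_S$, cannot coincide with the least-SEPM $f^*_{w'_{i,1}}$ of the sub-arena $\Gamma[S]_{i,1}$: by (PC-1) one has $f^{\text{c}}|_S\preceq f^*_{w'_{i,1}}|_S$, and strict precedence at some $u\in S$ is forced by combining Theorem~\ref{Thm:transition_opt_values} with Proposition~\ref{prop:least_energy_prog_measure}, since any $u\in L_\top$ with $\val{\Gamma}{u}=i$ satisfies $f^*_{w'_{i,1}}(u)=\top$ while $f^{\text{c}}(u)$ remains finite by construction (the complementary sub-case in which $L_\top$ contains no vertex of value exactly $i$ is handled analogously via the strict drop of scaled weights $w'_{i,1}<w'_{i-1,s-1}$ across at least one arc internal to $S$). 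Consequently, some $u\in S$ is inconsistent in $\Gamma[S]_{i,1}$ \wrt $f^{\text{c}}$, and the pseudocode of $\texttt{backtrack\_ua-jump}()$ at lines~\ref{subproc:ua-jumps_backtrack:l5}--\ref{subproc:ua-jumps_backtrack:l17}, together with the coherency yielded by Items~[3,(b),(iii)--(iv)] of Proposition~\ref{prop:ua-jumps_correctness}, will detect such a $u$ and insert it into $L^{\text{inc}}$.
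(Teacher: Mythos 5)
Your proof follows the same two--step reduction as the paper's own argument: first establish that $L^{\text{inc}}\neq\emptyset$ at every invocation of $\texttt{J-VI}()$, then use the inclusion $L^{\text{inc}}\subseteq \text{Inc}(f^{\text{c}},i,j)$ (via Proposition~\ref{prop:all_pre_conditions_hold}, Item~3 of Proposition~\ref{prop:saint_weak_coherency} and Lemma~\ref{lem:jvalue_init_correct}) to conclude that the very first pop already triggers a strict lift. Your handling of the inclusion and of the first two invocation sites matches the paper, which for non-emptiness merely points at the control flow of $\texttt{ei-jump}()$, $\texttt{ua-jumps}()$ and line~\ref{algo:solve:l8}; your induction inside the repeat-loop of $\texttt{ua-jumps}()$, using Items~3 and~4 of Proposition~\ref{prop:saint_weak_coherency} to place the lifted vertex in $L^{\text{inc}}\cup L_\top$, is in fact more explicit than what the authors wrote.

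The one genuine soft spot is the post-backtracking invocation. Your argument for the sub-case in which some $u\in L_\top$ has $\val{\Gamma}{u}=i$ is sound. But the complementary sub-case is \emph{not} ``handled analogously via the strict drop of scaled weights'': a uniform strict decrease of all weights does not by itself force the least-SEPM to increase anywhere (a single Player-0 vertex with a self-loop of weight $5$ has least-SEPM $0$ both before and after the weight drops to $4$). What actually rescues this sub-case is an integrality argument. If the restored levels $\tilde f = D_1\cdot f^*_{w'_{i-1,s-1}}|_S$ were already a SEPM of $\Gamma[S]_{i,1}$, then following $\tilde f$-compatible arcs of Player~0 against an optimal positional $\sigma_1$ leads to a cycle $C$ in $\Gamma[S]$ whose every arc is compatible with $\tilde f$ and whose mean lies in $[i,i+1)$; the linear slacks of $f^*_{w'_{i-1,s-1}}$ around $C$ \wrt the integer weights $w-i$ are non-negative integers summing to $w(C)-i|C| = |C|(\nu_C-i) < |C|$, so at least one arc of $C$ has slack exactly $0$, and after multiplying by $D_1$ and subtracting $N_1=1$ that arc becomes incompatible with $\tilde f$ in $\Gamma[S]_{i,1}$ --- a contradiction. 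To be fair, the paper's own proof is even terser at this exact point (it asserts non-emptiness of $L^{\text{inc}}$ purely from the cited code lines), so you attempted more than the authors did; but as written your ``analogously'' step would not survive scrutiny and needs the above, or an equivalent, argument.
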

\begin{proof}
By lines~\ref{subproc:energy_jump:l1} and \ref{subproc:energy_jump:l9} 
of $\texttt{ei-jump}()$ (SubProcedure~\ref{subproc:energy_jumps}), lines~\ref{subproc:ua-jumps:l1}-6 of $\texttt{ua-jumps}()$, 
and line~\ref{algo:solve:l8} of Algorithm~\ref{algo:solve_mpg}, 
whenever $\texttt{J-VI}()$ is invoked either at line~\ref{algo:jvalue:l11} of Algorithm~\ref{algo:solve_mpg} or 
at line~\ref{subproc:ua-jumps:l2} of $\texttt{ua-jumps}()$ (SubProcedure~\ref{subproc:ua-jumps}), 
say at step $\iota$, then ${L^{\text{inc}}}^{\iota}\neq\emptyset$. Moreover, 
by Proposition~\ref{prop:all_pre_conditions_hold}, by Item~3 of Propositions~\ref{prop:saint_weak_coherency} 
and Lemma~\ref{lem:jvalue_init_correct}, ${L^{\text{inc}}}^{\iota}\subseteq \text{Inc}(f^{\text{c}:\iota}, i^\iota, j^\iota)$.
Therefore, during each $\texttt{J-VI}()$ that is possibly invoked by Algorithm~\ref{algo:solve_mpg}, 
at least one application of $\delta$ is performed (line~\ref{algo:jvalue:l2}-\ref{algo:jvalue:l3} 
of $\texttt{J-VI()}$) (because ${L^{\text{inc}}}^{\iota}\neq\emptyset$) and 
every single application of $\delta(f^{\text{c}}, v)$ that is made during $\texttt{J-VI}()$, say at step $\hat\iota$, 
for any $v\in V$, really increases $f^{\text{c}:\hat\iota}(v)$ 
(because ${L^{\text{inc}}}^{\hat\iota}\subseteq \text{Inc}(f^{\text{c}:\hat\iota}, i^{\hat\iota}, j^{\hat\iota})$).
\end{proof}

\begin{Thm}
Given an input \MPG $\Gamma=(V,E,w,\langle V_0, V_1\rangle)$, Algorithm~\ref{algo:solve_mpg} halts within the following time bound: 
\[ O(|E|\log|V|) + \Theta\Big( \sum_{v\in V} \texttt{deg}_{\Gamma}(v)\cdot\ell_{\Gamma}^1(v)\Big) = O(|V|^2|E|W), \]
The working space is $\Theta(|V|+|E|)$. 
\end{Thm}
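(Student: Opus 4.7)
The plan is to aggregate the running times of the three main contributors — the initialization, the jumps (EI and UA), and the sequence of \texttt{J-VI} invocations — and then verify the announced total bound via the invariant [\emph{Inv-EI}]. First, by \texttt{init\_jumper}() (SubProcedure~\ref{proc:jumper_init}), the one-shot initialization phase costs $O(|E|\log|V|)$ time (dominated by the sort of $L_{\omega}$ at line~\ref{proc:jumper_init:l12}) and uses $\Theta(|V|+|E|)$ space. I would then account for the EI-Jumps via Proposition~\ref{prop:ei-jump_halts}: the \emph{cumulative} cost of all invocations of $\texttt{ei-jump}()$ fits in $\Theta(t_{\ell_{\ref{algo:solve:l7}}}+|E|)$, where $t_{\ell_{\ref{algo:solve:l7}}}$ is the number of entries of the main \texttt{while} loop; since each entry is paid for by the subsequent \texttt{J-VI} (or by termination), this is subsumed into the lift budget below.

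For the UA-Jumps, each call to $\texttt{rejoin\_ua-jump}()$ costs $O(|L_{\alpha}|)$ for a list $L_{\alpha}$ pulled off $L_{\omega}$, and the arcs of $E$ are paid for once in total (by the same argument as in Proposition~\ref{prop:ei-jump_halts}, since $L_{\omega}$ is consumed monotonically). By Propositions~\ref{prop:all_pre_conditions_hold} and \ref{prop:ua-jumps_correctness}, each \texttt{backtrack\_ua-jump}() step is triggered when $L_{\top}\neq\emptyset$, and this can happen at most $|V|$ times overall because at least one vertex gets its value assigned by the subsequent $\texttt{set\_vars}()$; each backtracking costs $O\big(\sum_{v\in L_\top}\texttt{deg}_{\Gamma[S]}(v)\big) = O(|E|)$, for a $O(|V||E|)$ aggregate which is subsumed by the main bound.

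The heart of the argument, and the main obstacle, is the amortization over all invocations of $\texttt{J-VI}()$. By Proposition~\ref{prop:correctness} (as extended by Proposition~\ref{prop:saint_weak_coherency}), a single call on input $(i,j,F,\J,\Gamma[S])$ costs $\Theta\big(\sum_{v\in V}\texttt{deg}_{\Gamma}(v)\cdot\ell^1_{\Gamma_{i,j}}(v)\big)$, where $\ell^1_{\Gamma_{i,j}}(v)$ counts the successful applications of $\delta(\cdot,v)$ during that call. The delicate point is that across successive \texttt{J-VI} invocations the current energy-levels $f^{\text{c}}$ are reused rather than reset: by Proposition~\ref{prop:all_pre_conditions_hold} (for the normal stepping) and Item~[3,(b)] of Proposition~\ref{prop:ua-jumps_correctness} (across the UA-Jump's backtracking step), the preserved levels are always $\sqsubseteq$-bounded by the least-SEPM of the next reweighted \EG, so by the $\sqsubseteq$-monotonicity of $\delta$ every single application of $\delta(\cdot, v)$ counted anywhere in the execution strictly increases $f^{\text{c}}(v)$. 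Summing the per-call bound thus telescopes into the global $\Theta\big(\sum_{v\in V}\texttt{deg}_{\Gamma}(v)\cdot \ell^1_{\Gamma}(v)\big)$. Furthermore, the [\emph{Inv-EI}] invariant (just proved) ensures that every \texttt{J-VI} invocation consumes at least one application of $\delta$, so the $O(|E|)$ bookkeeping paid inside each invocation (for lines~\ref{algo:jvalue:l11}--\ref{algo:jvalue:l18} when $L^{\text{inc}}$ is empty on exit) is already absorbed into $\Theta(\texttt{deg}_\Gamma(v)\cdot\ell^1_\Gamma(v))$ for that $v$ whose level was lifted.

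Finally, the worst-case bound: by Proposition~\ref{prop:correctness}, $\ell^1_\Gamma(v)\leq (|V|-1)|V|W$ for every $v\in V$; since $\sum_{v\in V}\texttt{deg}_\Gamma(v)=2|E|$, the aggregate simplifies to $O(|V|^2|E|W)$, which dominates the $O(|E|\log|V|)$ initialization term and the $O(|V||E|)$ backtracking term. For the space, I would just collate the space bounds from Proposition~\ref{prop:correctness} and Proposition~\ref{prop:ei-jump_halts}: the Jumper $\J$ holds $\Theta(|V|)$ array-lists over $V$ (all without repetitions) plus $\J.\texttt{cmp}$ and $L_{\omega}$ of size $\Theta(|E|)$, and the \MPG\ itself takes $\Theta(|V|+|E|)$; the Farey terms are generated on the fly in $O(1)$ space via the two-term recurrence. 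Altogether, the working space is $\Theta(|V|+|E|)$, completing the proof.
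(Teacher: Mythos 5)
Your proposal is correct and follows essentially the same route as the paper's own (much terser) proof: a one-shot $O(|E|\log|V|)$ charge for sorting $L_{\omega}$, an $O(|E|)$ aggregate charge for all $\texttt{repair}()$/jump bookkeeping, and the observation that all remaining work between consecutive applications of $\delta$ can be charged to those applications at $\Theta(\texttt{deg}_\Gamma(v))$ each, yielding $\Theta\big(\sum_{v}\texttt{deg}_\Gamma(v)\cdot\ell^1_\Gamma(v)\big)$ via [\emph{Inv-EI}]. Your treatment is merely more explicit about how the UA-Jump backtracking cost is absorbed (the tight per-backtracking bound $O\big(\sum_{v\in L_\top}\texttt{deg}(v)\big)$ you state is the one that actually does the work, since each $v\in L_\top$ was lifted at least once in the preceding $\texttt{J-VI}()$; the looser $O(|V||E|)$ aggregate alone would not suffice).
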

\begin{proof} 
The initialization of $L_{\omega}$ takes $O(|E|\log|V|)$ time, \ie the cost for sorting $\{w_e\mid e\in W\}$. 
Each single application of $\delta$, which is possibly done during any execution 
of $\texttt{J-VI}()$ throughout Algorithm~\ref{algo:solve_mpg}, it takes time $\Theta(\texttt{deg}_{\Gamma}(v))$. 
So, the total aggregate time spent for all applications of $\delta$ in Algorithm~\ref{algo:solve_mpg} is 
$\Theta\big(\sum_{v\in V} \texttt{deg}_{\Gamma}(v)\cdot\ell_{\Gamma}^1(v)\big)$. 
It is not difficult to check from the description of Algorithm~\ref{algo:solve_mpg}, 
at this point, that the time spent between any two subsequent applications of $\delta$ can increase the total time amount 
$\sum_{v\in V} \texttt{deg}_{\Gamma}(v)\cdot\ell_{\Gamma}^1(v)$ of Algorithm~\ref{algo:solve_mpg} only by a constant factor. 
Next, notice that the aggregate total cost of all the invocations of $\texttt{repair}()$ is $O(|E|)$. 
Also recall that in Section~\ref{section:values} it was shown how to generate $\F_{|V|}$ iteratively, 
one term after another, in $O(1)$ time-delay and $O(1)$ total space,~\cite{PaPa09}. Also it is easy to check, at this point, 
that Algorithm~\ref{algo:solve_mpg} works with $\Theta(|V|+|E|)$ space.
\end{proof}

\subsection{An Experimental Evaluation of Algorithm~1}\label{subsect:experiments}

This section describes an empirical evaluation of Algorithm~\ref{algo:solve_mpg}.
All algorithms and procedures employed in this practical evaluation have been implemented in C/C++ 
and executed on a Linux machine having the following characteristics:

-- Intel Core i5-4278U CPU @ 2.60GHz x2; 

-- 3.8GB RAM; 

-- Ubuntu 15.10 Operating System.

\begin{figure}[h!]
	\centering
	\subfloat[Test~1, $\mu$, Algo-0.]{\label{Table:Test1-a}
% 		\centering
		\begin{tabular}[b]{| r | r | r |}
			\hline
		 	\multicolumn{1}{|c|}{$|V|$}  & \multicolumn{1}{c|}{$\mu$ (sec)} & \multicolumn{1}{c|}{$\sigma$} \\
			\hline
			 	      20  &     0.69  &  0.21   \\
				      25  &    1.69   &  0.43   \\
				      30  &    4.37   &  1.47   \\
				      35  &    8.77   &  3.79   \\
				      40  &    19.95  &  6.99   \\
				      45  &    35.06  &  12.0   \\ 
				      50  &    57.10  &  18.9   \\
			\hline
		\end{tabular}
	}
	\qquad
	\subfloat[Test~1, $\mu$, Algo-1.]{\label{Table:Test1-b}
% 		\centering
		\begin{tabular}[b]{| r | r | r |}
			\hline
		 	\multicolumn{1}{|c|}{$|V|$}  & \multicolumn{1}{c|}{$\mu$ (sec)} & \multicolumn{1}{c|}{$\sigma$} \\
			\hline
				      20 &  0.09  &  0.05    \\
				      25 &  0.13  &  0.07    \\
				      30 &  0.30  &  0.24     \\
				      35 &  0.52  &  0.39     \\
				      40 &  1.18  &  0.77     \\
				      45 &  1.83  &  1.37     \\ 
				      50 &  2.56  &  2.59     \\
				      60 &  5.30  &  6.08     \\
				      70 &  9.57  &  8.83     \\
				      %80 &  14.74 &  13.95    \\
			\hline
		\end{tabular}
	}
	\qquad
	\subfloat[Interpolation of average execution times in Test~1 for Algo-0 (red, mark=o) and Algo-1 (blue, mark=x).]{\label{SubFig:Test1}
 	    \begin{tikzpicture}[scale=0.54]%,domain=0:4]
			\begin{axis}[legend pos=north east, 
				xlabel={$n$},
				scaled x ticks=false,
				minor x tick num=1,
				ylabel={Time},
				y unit=s,
 				/pgfplots/ylabel near ticks,
 				/pgfplots/xlabel near ticks,
				xlabel style={ at={(ticklabel cs:1)}, anchor=south west}
				]
% 				\addplot[red,mark=+,domain=1e5:10e5,samples=10] {((2*x)*x/10000000000)};
% 				\addlegendentry{Theoretical time bound}
				\addplot[red, mark=o, 
						error bars/.cd,%make error bars/ implicit for the following commands
						y dir=both, y explicit, 
						%x dir=both, x explicit, 
						%error bar style={color=mapped color}
					]
				    	table[row sep=crcr, x=x,y=y,y error=yerr] {
				        x       y       	yerr        \\
				      20      0.699   0.219   \\
				      25      1.696   0.430  \\
				      30      4.377   1.475 \\
				      35      8.772   3.797 \\
				      40      19.95   6.998  \\
				      45      35.06   12.05   \\ 
				      50      57.10   18.99  \\
				}; 
				\addlegendentry{Avg Execution Time of Algo.~0}
				\addplot[blue, mark=x, 
						error bars/.cd,%make error bars/ implicit for the following commands
						y dir=both, y explicit, 
						%x dir=both, x explicit, 
						%error bar style={color=mapped color}
					]
				    	table[row sep=crcr, x=x,y=y,y error=yerr] {
				        x       y       	yerr        \\
				      20     0.090  		0.05  \\
				      25     0.135  		0.076  \\
				      30     0.309 		 0.24\\
				      35     0.524 		 0.39 \\
				      40     1.183 		 0.77 \\
				      45     1.839 		1.37\\ 
				      50     2.568 		 2.59\\
				      60     5.30           6.08     \\
				      70     9.57           8.83     \\
				   %   80     14.74          13.95    \\
				}; 
			\addlegendentry{Avg. Execution Time of Algo.~1}
			\end{axis}
		\end{tikzpicture}
	}
\caption{Results of Test~1 on Average Execution Time}\label{fig:test1}
\end{figure}
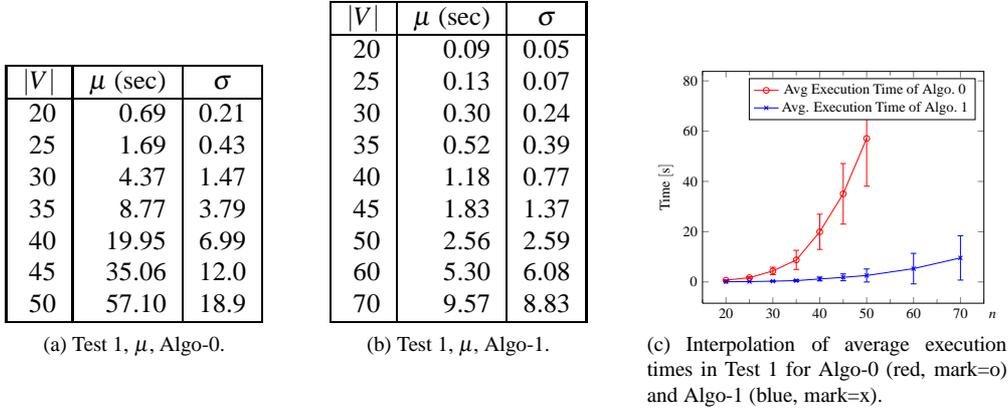

\begin{figure}[h!]
	\centering
	\subfloat[Test~1, $\ell^0_\Gamma$, Algo-0.]{\label{Table:Test1-c}
% 		\centering
		\begin{tabular}[b]{| r | r | r |}
			\hline
		 	\multicolumn{1}{|c|}{$|V|$}  & \multicolumn{1}{c|}{$\mu$ ($\ell_\Gamma^0$)} & \multicolumn{1}{c|}{$\sigma$} \\
			\hline
			 	      20  &     3.71E+06    &   1.05E+06     \\
				      25  &     7.67E+06    &   2.09E+06     \\
				      30  &     1.69E+07    &   5.35E+06     \\
				      35  &     2.62E+07    &   9.52E+06     \\
				      40  &     5.22E+07    &   1.84E+07     \\
				      45  &     8.12E+07    &   3.16E+07     \\ 
				      50  &     1.07E+08    &   2.99E+07     \\
			\hline
		\end{tabular}
	}
	\qquad
	\subfloat[Test~1, $\ell^1_\Gamma$, Algo-1.]{\label{Table:Test1-d}
% 		\centering
		\begin{tabular}[b]{| r | r | r |}
			\hline
		 	\multicolumn{1}{|c|}{$|V|$}  & \multicolumn{1}{c|}{$\mu$ ($\ell^1_\Gamma$)} & \multicolumn{1}{c|}{$\sigma$} \\
			\hline
				      20 &      3.53E+05  &  2.04E+05        \\
				      25 &      4.45E+05  &  2.17E+05        \\
				      30 &      7.92E+05  &  5.83E+05        \\
				      35 &      9.91E+05  &  7.53E+05        \\
				      40 &      2.13E+06  &  1.34E+06        \\
				      45 &      2.88E+06  &  1.98E+06        \\ 
				      50 &      3.08E+06  &  2.72E+06        \\
				      60 &	5.28E+06  &  5.82E+06        \\
				      70 &      8.66E+06  &  7.71E+06	     \\
				   %   80 &      1.06E+07   &  9.44E+06        \\ 
			\hline
		\end{tabular}
	}
	\qquad
	\subfloat[Interpolation of avgerage values of $\ell^0_\Gamma, \ell^1_\Gamma$ in Test~1 for Algo-0 (orange, mark=o) and Algo-1 (cyan, mark=x).]{\label{SubFig:Test1}
 	    \begin{tikzpicture}[scale=0.6]%,domain=0:4]
			\begin{axis}[legend pos=north west, 
				xlabel={$n$},
				scaled x ticks=false,
				minor x tick num=1,
				ylabel={$\ell^{0,1}_\Gamma$},
 			       % ymode=log,	
				/pgfplots/ylabel near ticks,
 				/pgfplots/xlabel near ticks,
				xlabel style={ at={(ticklabel cs:1)}, anchor=south west}
				]
% 				\addplot[red,mark=+,domain=1e5:10e5,samples=10] {((2*x)*x/10000000000)};
% 				\addlegendentry{Theoretical time bound}
				\addplot[orange, mark=o, 
						error bars/.cd,%make error bars/ implicit for the following commands
						y dir=both, y explicit, 
						%x dir=both, x explicit, 
						%error bar style={color=mapped color}
					]
				    	table[row sep=crcr, x=x,y=y,y error=yerr] {
				        x       y       	yerr      \\
				      20      3713741.583    1045337.448  \\
				      25      7667912.045    2091357.946  \\
				      30       16907960.26   5345047.564  \\
				      35       26232493.27   9520624.638  \\
				      40       52211621.39   18351036.68  \\
				      45       81243895.04   31583603.27  \\ 
				      50      107116091.7    29914778.19  \\
				}; 
				\addlegendentry{Avg $\ell^0_{\Gamma}$ of Algo.~0}
				\addplot[cyan, mark=x, 
						error bars/.cd,%make error bars/ implicit for the following commands
						y dir=both, y explicit, 
						%x dir=both, x explicit, 
						%error bar style={color=mapped color}
					]
				    	table[row sep=crcr, x=x,y=y,y error=yerr] {
				        x       y       	yerr         \\
				      20     352691.875      204321.9423     \\
				      25     445189.9545     217162.8688     \\
				      30     791978.7391     582729.0047     \\
				      35     990700.6364     752976.0787     \\
				      40     2133071.87      1342249.076     \\
				      45     2879754.792     1975571.826     \\ 
				      50     3078611.083     2724718.282     \\
				      60 	5280000      5820000        \\
				      70       8660000       7710000 	     \\
				   %   80      10600000       9440000        \\ 
				}; 
			\addlegendentry{Avg. $\ell^1_{\Gamma}$ of Algo.~1}
			\end{axis}
		\end{tikzpicture}
	}
\caption{Results of Test~1 on $\ell^0_\Gamma, \ell^1_\Gamma$}\label{fig:test1-ell}
\end{figure}

\begin{figure}[h!]
	\centering
	\subfloat[Test~2, $\mu$, Algo-0.]{\label{Table:Test2-a}
% 		\centering
		\begin{tabular}[b]{| r | r | r |}
			\hline
		 	\multicolumn{1}{|c|}{$W$}  & \multicolumn{1}{c|}{$\mu$ (sec)} & \multicolumn{1}{c|}{$\sigma$} \\
			\hline
				      50   &  0.97   & 0.33  \\
			 	      100  &  1.93   & 0.72  \\
				      150  &  2.66   & 0.80  \\
				      200  &  3.77  & 1.11  \\
				      250  &  5.00  & 1.55 \\
				      300  &  5.63  & 1.53 \\
				      350  &  7.38  & 2.34 \\ 
			\hline
		\end{tabular}
	}
	\qquad
	\subfloat[Test~2, $\mu$, Algo-1.]{\label{Table:Test2-b}
% 		\centering
		\begin{tabular}[b]{| r | r | r |}
			\hline
		 	\multicolumn{1}{|c|}{$W$}  & \multicolumn{1}{c|}{$\mu$ (sec)} & \multicolumn{1}{c|}{$\sigma$} \\
			\hline
				     50   &  0.09  & 0.07   \\
				     100  &  0.22  & 0.18  \\
				     150  &  0.23  & 0.12   \\
				     200  &  0.31  & 0.16 \\
				     250  &  0.50  & 0.36    \\
				     300  &  0.42  & 0.30  \\
				     350  &  0.55  & 0.34   \\ 
			\hline
		\end{tabular}
	}
	\qquad
	\subfloat[Interpolation of average execution times in Test~2 for Algo-0 (red, mark=o) and Algo-1 (blue, mark=x).]{\label{SubFig:Test2}
 	    \begin{tikzpicture}[scale=0.53]%,domain=0:4]
			\begin{axis}[legend pos=north east, 
				xlabel={$W$},
				scaled x ticks=false,
				minor x tick num=1,
				ylabel={Time},
				y unit=s,
 				/pgfplots/ylabel near ticks,
 				/pgfplots/xlabel near ticks,
				xlabel style={ at={(ticklabel cs:1)}, anchor=south west}
				]
% 				\addplot[red,mark=+,domain=1e5:10e5,samples=10] {((2*x)*x/10000000000)};
% 				\addlegendentry{Theoretical time bound}
				\addplot[red, mark=o, 
						error bars/.cd,%make error bars/ implicit for the following commands
						y dir=both, y explicit, 
						%x dir=both, x explicit, 
						%error bar style={color=mapped color}
					]
				    	table[row sep=crcr, x=x,y=y,y error=yerr] {
				        x       y     yerr  \\
					50     0.97   0.33  \\			
                                        100    1.93   0.72  \\
                                        150    2.66   0.80  \\
                                        200    3.77   1.11  \\
                                        250    5.00   1.55  \\
                                        300    5.63   1.53  \\
                                        350    7.38   2.34  \\ 
				}; 
				\addlegendentry{Avg Execution Time of Algo.~0}
				\addplot[blue, mark=x, 
						error bars/.cd,%make error bars/ implicit for the following commands
						y dir=both, y explicit, 
						%x dir=both, x explicit, 
						%error bar style={color=mapped color}
					]
				    	table[row sep=crcr, x=x,y=y,y error=yerr] {
				        x       y       	yerr   \\
				      50     0.09   0.07   \\  
				      100    0.22   0.18   \\   
				      150    0.23   0.12   \\  
				      200    0.31   0.16   \\    
				      250    0.50   0.36    \\ 
				      300    0.42   0.30   \\    
				      350    0.55   0.34   \\  
				}; 
			\addlegendentry{Avg. Execution Time of Algo.~1}
			\end{axis}
		\end{tikzpicture}
	}
\caption{Results of Test~2 on Average Execution Time}\label{fig:test2}
\end{figure}

\begin{figure}[h!]
	\centering
	\subfloat[Test~2, $\ell^0_\Gamma$, Algo-0.]{\label{Table:Test2-c}
% 		\centering
		\begin{tabular}[b]{| r | r | r |}
			\hline
		 	\multicolumn{1}{|c|}{$W$}  & \multicolumn{1}{c|}{$\mu$ ($\ell_\Gamma^0$)} & \multicolumn{1}{c|}{$\sigma$} \\
			\hline
			 	         50   &  4.24E+06   & 1.18E+06  \\
				         100  &  8.05E+06   & 2.77E+06   \\
				         150  &  1.28E+07    & 3.57E+06  \\
				         200  &  1.81E+07   & 4.98E+06   \\
				         250  &  2.35E+07   & 7.13E+06    \\
				         300  &  2.77E+07   & 6.29E+06     \\   
				         350  &   3.54E+07  & 1.22E+07    \\ 
			\hline
		\end{tabular}
	}
	\qquad
	\subfloat[Test~2, $\ell^1_\Gamma$, Algo-1.]{\label{Table:Test2-d}
% 		\centering
		\begin{tabular}[b]{| r | r | r |}
			\hline
		 	\multicolumn{1}{|c|}{$W$}  & \multicolumn{1}{c|}{$\mu$ ($\ell^1_\Gamma$)} & \multicolumn{1}{c|}{$\sigma$} \\
			\hline
					 50   &    2.84E+05  & 2.02E+05    \\
				         100  &    6.21E+05  & 4.62E+05    \\
				         150  &    7.75E+05  & 4.24E+05    \\
				         200  &    1.08E+06  & 5.22E+05    \\
				         250  &    1.62E+06 & 1.10E+06    \\
				         300  &    1.38E+06  & 8.07E+05    \\   
				         350  &    1.84E+06  & 1.09E+06    \\ 
			\hline
		\end{tabular}
	}
	\qquad
	\subfloat[Interpolation of avgerage values of $\ell^0_\Gamma, \ell^1_\Gamma$ in Test~2 
				for Algo-0 (orange, mark=o) and Algo-1 (cyan, mark=x).]{\label{SubFig:Test2}
 	    \begin{tikzpicture}[scale=0.6]%,domain=0:4]
			\begin{axis}[legend pos=north west, 
				xlabel={$W$},
				scaled x ticks=false,
				minor x tick num=1,
				ylabel={$\ell^{0,1}_\Gamma$},
 			       % ymode=log,	
				/pgfplots/ylabel near ticks,
 				/pgfplots/xlabel near ticks,
				xlabel style={ at={(ticklabel cs:1)}, anchor=south west}
				]
% 				\addplot[red,mark=+,domain=1e5:10e5,samples=10] {((2*x)*x/10000000000)};
% 				\addlegendentry{Theoretical time bound}
				\addplot[orange, mark=o, 
						error bars/.cd,%make error bars/ implicit for the following commands
						y dir=both, y explicit, 
						%x dir=both, x explicit, 
						%error bar style={color=mapped color}
					]
				    	table[row sep=crcr, x=x,y=y,y error=yerr] {
				        x       y           yerr  \\
					 50    4240000    1180000   \\
				         100   8050000    2770000    \\
				         150   12800000   3570000    \\
				         200   18100000   4980000    \\
				         250   23500000   7130000    \\
				         300   27700000   6290000    \\   
				         350   35400000   12200000   \\
				}; 
				\addlegendentry{Avg $\ell^0_{\Gamma}$ of Algo.~0}
				\addplot[cyan, mark=x, 
						error bars/.cd,%make error bars/ implicit for the following commands
						y dir=both, y explicit, 
						%x dir=both, x explicit, 
						%error bar style={color=mapped color}
					]
				    	table[row sep=crcr, x=x,y=y,y error=yerr] {
				        x       y       	yerr     \\
					 50     284000     202000      \\
				         100    621000     462000      \\  		
                        	         150    775000     424000      \\
                        	         200    1080000    522000      \\
                        	         250    1620000    1100000      \\
	                	         300    1380000    807000      \\   
				         350    1840000    1090000     \\   	
			}; 
			\addlegendentry{Avg. $\ell^1_{\Gamma}$ of Algo.~1}
			\end{axis}
		\end{tikzpicture}
	}
\caption{Results of Test~2 on $\ell^0_\Gamma, \ell^1_\Gamma$}\label{fig:test2-ell}
\end{figure}

Source codes and scripts are (will be soon, \wrt time of submission) fully available online.

The main goal of this experiment was: (i) to determine the average computation time 
	of Algorithm~\ref{algo:solve_mpg}, with respect to randomly-generated \MPG{s}, 
in order to give an idea of the practical behavior it; (ii) to offer an experimental comparison between 
Algorithm~\ref{algo:solve_mpg} and the algorithm which is offered in~\cite{CR16}, \ie Algorithm~0, 
in order to give evidence and experimental confirmation of the algorithmic improvements made over~\cite{CR16}.
Here we propose a summary of the obtained results presenting a brief report about, Test~1, Test~2.

In all of our tests, in order to generate a suitable dataset of \MPG{s}, 
	our choice has been to use the \texttt{randomgame} procedure of \texttt{pgsolver} suite~\citep{pgsolver}, 
that can produce random arenas instances for any given number of nodes. 
We exploited \texttt{randomgame} as follows:
\begin{enumerate} 
	\item First, \texttt{randomgame} was used to generate random directed graphs, 
		with out-degree taken uniformly at random in $[1,|V|]$ ;
	\item Then, the resulting graphs were translated into \MPG{s} by 
		weighting each arc with an integer randomly chosen in the interval $[-W,W]$, 
		where $W$ was chosen accordingly to the test type;
\end{enumerate}
With such settings, the resulting \MPG{s} are characterized by $|V|$ and $W$.

In Test~1 the average computation time was determined for different orders of $|V|$.
For each $n\in \{20, 25, 30, 35, 40, 45, 50\}$, $25$ \MPG{s} instances 
with maximum weight $W=100$ were generated by \texttt{randomgame}. 
Each instance had been solved both with Algorithm~0 and Algorithm~1. 
In addition, to experiment a little further on Algorithm~\ref{algo:solve_mpg}, 
for each $n\in \{60, 70\}$, $25$ \MPG{s} instances with maximum 
weight (fixed to) $W=100$ were also generated by \texttt{randomgame} and solved only with Algorithm~1. 
The results of the test are summarized in \figref{fig:test1}, 
where each execution mean time is depicted as a 
point with a vertical bar representing its confidence interval determined according to its std-dev. 
As shown by \figref{fig:test1}, 
Test~1 gives experimental evidence of the supremacy of Algorithm~\ref{algo:solve_mpg} over Algorithm~0.
In order to provide a better insight on the behavior of the algorithms, 
a comparison between the values of $\ell^0_\Gamma$ and $\ell^1_\Gamma$ is offered in \figref{fig:test1-ell}.
Test~1 confirms that $\ell^1_\Gamma \ll \ell^0_\Gamma$ 
(by a factor $\geq 10^2$ when $|V|\geq 50$) on randomly generated \MPG{s}. 
The numerical results of Table~\ref{Table:Test1-a}-\ref{Table:Test1-b} 
suggest that the std-dev of both the avgerage  
running time of Algorithm~1 and of $\ell^1_\Gamma$ is greater 
(in proportion) than that of Algorithm~0 and $\ell^0_\Gamma$; 
but thinking about it this actually turns out to be a benefit: 
as a certain proportion of \MPG{s} instances can now exhibit quite a smaller value of $\ell^1_\Gamma$, 
then the running time improves, but the std-dev fluctuates more meanwhile.
 
In Test~2 the average computation time was determined for different orders of $W$. 
For each $W\in \{50, 100, 150, 200, 250, 300, 350\}$, $25$ \MPG{s} instances with maximum weight $W$, 
and $|V|=25$ (fixed), were generated by \texttt{randomgame}. 
Each instance had been solved both with Algorithm~0 and Algorithm~1. 
The results of the test are summarized in \figref{fig:test2} 
and \figref{fig:test2-ell}, where each execution mean time 
and $\ell^{0,1}_\Gamma$ is depicted as a point with a vertical bar representing 
its confidence interval determined according to its std-dev. 

In summary our experiments suggest that, even in practice, Algorithm~\ref{algo:solve_mpg} 
is significantly faster than the Algorithm~0 devised in~\citep{CR15, CR16}.

\section{An Energy-Lattice Decomposition of $\texttt{opt}_\Gamma\Sigma^M_0$}\label{sect:energy}
Recall the example arena $\Gamma_{\text{ex}}$ shown in \figref{fig:ex1_arena}. 
It is easy to see that $\forall^{v\in V} \val{\Gamma_{\text{ex}}}{v}=-1$. 
Indeed, $\Gamma_{\text{ex}}$ contains only two cycles, \ie $C_L=[A,B,C,D]$ and $C_R=[F,G]$, 
also notice that $w(C_L)/C_L=w(C_R)/C_R=-1$. The least-SEPM $f^*$ of the reweighted \EG $\Gamma_{\text{ex}}^{w+1}$ 
can be computed by running a Value Iteration~\citep{brim2011faster}. 
Taking into account the reweighting $w\leadsto w+1$, as in \figref{fig:ex1_reweighted_leastSEPM}: 
$f^*(A)=f^*(E)=f^*(G)=0$, $f^*(B)=f^*(D)=f^*(F)=4$, and $f^*(C)=8$. 
\begin{figure}[!h] \center
\begin{tikzpicture}[scale=.6, arrows={-triangle 45}, node distance=1.5 and 2]
 		\node[node, thick, color=red, label={$\sizedcircled{.15ex}{0}$}] (E) {$E$};
		\node[node, thick, color=blue, fill=blue!20, left=of E, 
					label={above right:$\sizedcircled{.15ex}{8}$}] (C) {$C$};
		\node[node, thick, color=red, above=of C, xshift=-8.5ex, yshift=-5ex, 
					label={$\sizedcircled{.15ex}{4}$}] (B) {$B$};
		\node[node, thick, color=blue, fill=blue!20, left=of C, 
					label={left:$\sizedcircled{.15ex}{0}$}] (A) {$A$};
		\node[node, thick, color=red, below=of C, xshift=-8.5ex, yshift=5ex, 
					label={below:$\sizedcircled{.15ex}{4}$}] (D) {$D$}; 	
		\node[node, thick, color=blue, fill=blue!20, right=of E, 
					label={$\sizedcircled{.15ex}{4}$}] (F) {$F$};
		\node[node, thick, color=red, right=of F, 
					label={right:$\sizedcircled{.15ex}{0}$}] (G) {$G$};	
		%arcs
		\draw[] (E) to [bend left=0] node[above] {$+1$} (C);
		\draw[] (E) to [bend left=0] node[above] {$+1$} (F);
		\draw[color=red, thick] (E) to [bend left=22] 
					node[above left, xshift=-4ex] {$+1$} (A.south east);
		\draw[color=red, thick] (E) to [bend left=50] 
					node[above left, xshift=-3ex, yshift=-1.5ex] {$+1$} (G.north);
		\draw[] (A) to [bend left=40] node[left] {$+4$} (B);
		\draw[color=red, thick] (B) to [bend left=40] node[xshift=2ex, yshift=1ex] {$+4$} (C);
		\draw[] (C) to [bend left=40] node[xshift=2ex, yshift=0ex] {$-4$} (D);
		\draw[color=red, thick] (D) to [bend left=40] node[xshift=-2ex, yshift=-1ex] {$-4$} (A);
		\draw[] (F) to [bend left=40] node[above, yshift=-.75ex] {$-4$} (G);
		\draw[color=red, thick] (G) to [bend left=40] node[below, yshift=.75ex] {$+4$} (F);
\end{tikzpicture}
\caption{The least-SEPM $f^*$ of $\Gamma_{\text{ex}}^{w+1}$ (energy-levels are depicted in circled boldface). 
All and only those arcs of Player~0 that are compatible with $f^*$ 
are $(B,C), (D,A), (E,A), (E,G), (G,F)$ (thick red arcs).}\label{fig:ex1_reweighted_leastSEPM}
\end{figure}
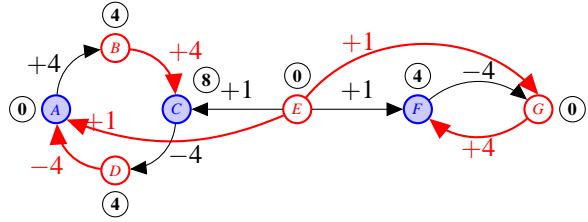

So $\Gamma_{\text{ex}}$ (\figref{fig:ex1_reweighted_leastSEPM}) implies the following.
\begin{Prop}\label{prop:counter_example}
The converse statement of Theorem~\ref{Thm:pos_opt_strategy} does not hold. 
There exist infinitely many \MPG{s} $\Gamma$ having at least one $\sigma_0\in\texttt{opt}_{\Gamma}\Sigma^M_0$ 
which is not compatible with the least-SEPM of $\Gamma$.
\end{Prop}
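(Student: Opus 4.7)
The plan is to exhibit $\Gamma_{\text{ex}}$ (from Figure~\ref{fig:ex1_arena} together with the computation in Figure~\ref{fig:ex1_reweighted_leastSEPM}) as an explicit counter-example, and then lift it to an infinite family by a trivial amplification.

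First I would formally verify the two numerical facts that the paragraph before the proposition only asserts informally. Namely, (i) $\val{\Gamma_{\text{ex}}}{v}=-1$ for every $v\in V$ and (ii) the least-SEPM $f^*$ of $\Gamma_{\text{ex}}^{w+1}$ is the one pictured, whose Player-0 compatible arcs are exactly $\{(B,C), (D,A), (E,A), (E,G), (G,F)\}$. For (i), the graph contains only the two cycles $C_L=A\,B\,C\,D$ and $C_R=F\,G$, both of mean weight $-1$; by Proposition~\ref{prop:reachable_cycle} (and its symmetric counterpart for Player~1) every starting vertex has value $-1$. For (ii), a straightforward Value-Iteration on $\Gamma_{\text{ex}}^{w+1}$ gives the displayed $f^*$, and compatibility is then checked arc-by-arc against the definition $f^*(u)\succeq f^*(v)\ominus(w(u,v)+1)$.

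Next I would exhibit an optimal positional $\sigma_0$ that is \emph{not} compatible with $f^*$. The only Player-0 vertex with a genuine choice is $E$ (the others, $B,D,G$, have out-degree one in $V_0$'s fragment). Set $\sigma_0(B)=C$, $\sigma_0(D)=A$, $\sigma_0(G)=F$, and $\sigma_0(E)=C$. The arc $(E,C)$ is \emph{not} in the compatible set above, so $\sigma_0$ is incompatible with $f^*$. Yet $\sigma_0$ is optimal: in $G(\sigma_0,\Gamma_{\text{ex}})$ every cycle reachable from any vertex is either $C_L$ or $C_R$, both of mean weight $-1 = \val{\Gamma_{\text{ex}}}{v}$, so by Proposition~\ref{prop:reachable_cycle} one has $\texttt{val}^{\sigma_0}(v)\geq -1 = \val{\Gamma_{\text{ex}}}{v}$ for every $v\in V$, i.e.\ $\sigma_0\in \texttt{opt}_{\Gamma_{\text{ex}}}\Sigma^M_0$. (The alternative choice $\sigma_0(E)=F$ works equally well and gives a second witness.)

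Finally, to produce infinitely many such \MPG{s}, I would take for each $k\geq 1$ either (a) the disjoint union of $k$ copies of $\Gamma_{\text{ex}}$, or (b) the arena $\Gamma_{\text{ex}}^{(k)}$ obtained by scaling all weights by $k$ (which leaves values, winning regions, and the combinatorial structure of the least-SEPM intact up to scaling, hence preserves the incompatibility of $(E,C)$). Either family consists of pairwise non-isomorphic \MPG{s}, each inheriting from $\Gamma_{\text{ex}}$ an optimal positional strategy that is not compatible with its least-SEPM. I do not anticipate a genuine obstacle here: once the least-SEPM of $\Gamma_{\text{ex}}^{w+1}$ is pinned down and the existence of cycles of mean $-1$ in $G(\sigma_0,\Gamma_{\text{ex}})$ is observed, the proposition follows; the only step requiring a bit of care is the appeal to Proposition~\ref{prop:reachable_cycle} to turn the cycle-mean condition into the value guarantee $\texttt{val}^{\sigma_0}(v)\geq \val{\Gamma_{\text{ex}}}{v}$.
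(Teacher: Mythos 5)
Your proposal is correct and follows essentially the same route as the paper: both use $\Gamma_{\text{ex}}$ with the least-SEPM $f^*$ of $\Gamma_{\text{ex}}^{w+1}$, observe that every choice at $E$ is optimal while $(E,C)$ and $(E,F)$ are incompatible with $f^*$, and then note that the example extends to infinitely many arenas (the paper picks $\sigma_0(E)=F$ where you pick $\sigma_0(E)=C$, but either witness works). Your write-up is in fact somewhat more explicit than the paper's, which simply asserts the optimality of every arc out of $E$ rather than spelling out the appeal to Proposition~\ref{prop:reachable_cycle}.
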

\begin{proof} Consider the $\Gamma_{\text{ex}}$ of \figref{fig:ex1_reweighted_leastSEPM}, 
and the least-SEPM $f^*$ of the \EG $\Gamma_{\text{ex}}^{w+1}$. The only vertex at which Player~0 really has a choice is $E$. 
Every arc going out of $E$ is optimal in the \MPG $\Gamma_{\text{ex}}$: 
whatever arc $(E,X)\in\E$ (for any $X\in \{A,C,F,G\}$) Player~0 chooses at $E$, the resulting payoff 
equals $\val{\Gamma_{\text{ex}}}{E}=-1$. Let $f^*$ be the least-SEPM of $f^*$ in $\Gamma_{\text{ex}}^{w+1}$. 
Observe, $(E,C)$ and $(E,F)$ are not compatible with $f^*$ in $\Gamma_{\text{ex}}^{w+1}$, only $(E,A)$ and $(E,G)$ are. 
For instance, the positional strategy $\sigma_0\in\Sigma^M_0$ defined 
as $\sigma_0(E)\triangleq F$, $\sigma_0(B)\triangleq C$, $\sigma_0(D)\triangleq A$, $\sigma_0(G)\triangleq F$ ensures a payoff 
$\forall^{v\in V}\val{\Gamma_{\text{ex}}}{v}=-1$, but it is not compatible with the least-SEPM $f^*$ of 
$\Gamma_{\text{ex}}^{w+1}$ (because $f^*(E) = 0 < 3 = f^*(F) \ominus w(E,F)$). 
It is easy to turn the $\Gamma_{\text{ex}}$ of \figref{fig:ex1_reweighted_leastSEPM} 
into a family on infinitely many similar examples. 
\end{proof}

We now aim at strengthening the relationship between $\text{opt}_{\Gamma}\Sigma^M_0$ 
and the Energy-Lattice $\mathcal{E}_\Gamma$. For this, we assume \textit{wlog} 
$\exists^{\nu\in\Q} \forall^{v\in V} \val{\Gamma}{v}=\nu$. This follows from Theorem~\ref{thm:ergodic_partition}, 
which allows one to partition $\Gamma$ into several domains $\Gamma_i \triangleq \Gamma_{|_{C_i}}$   
each one satisfying: $\exists^{\nu_i\in\Q}\forall^{v\in C_i} \val{\Gamma_i}{v}=\nu_i$.
By Theorem~\ref{thm:ergodic_partition} we can study $\texttt{opt}_{\Gamma_i}\Sigma^M_0$, 
independently \wrt $\texttt{opt}_{\Gamma_j}\Sigma^M_0$ for $j\neq i$. 

We say that an \MPG $\Gamma$ is \emph{$\nu$-valued} \textit{iff} $\exists^{\nu\in\Q}\forall^{v\in V} \val{\Gamma}{v}=\nu$.

Given an \MPG $\Gamma$ and $\sigma_0\in\Sigma^M_0(\Gamma)$, recall, 
$G(\Gamma,\sigma_0)\triangleq (V, E', w')$ is obtained from $G^\Gamma$ 
by deleting all and only those arcs that are not part of $\sigma_0$, 
\ie \[E' \triangleq \big\{(u,v)\in E \mid u\in V_0 \text{ and } 
		v=\sigma_0(u)\big\} \cup \big\{(u,v)\in E \mid u\in V_1\big\}, \] 
where each $e\in E'$ is weighted as in $\Gamma$, \ie $w':E'\rightarrow \Z:e\mapsto w_e$. 

When $G=(V,E,w)$ is a weighted directed graph, a \emph{feasible-potential (FP)} for $G$ 
is any map $\pi:V\rightarrow \C_{G}$ such that 
$\forall^{u\in V}\forall^{v\in N^{\text{out}}(u)} \pi(u)\succeq \pi(v)\ominus w(u,v)$. 
The \emph{least}-FP $\pi^*=\pi^*_{G}$ is the (unique) FP such that, for any other FP $\pi$, 
it holds $\forall^{v\in V} \pi^*(v)\preceq \pi(v)$. Given $G$, 
the Bellman-Ford algorithm can be used to produce $\pi^*_{G}$ in $O(|V| |E|)$ time. 
Let $\pi^*_{G(\Gamma, \sigma_0)}$ be the \emph{least-FP} of $G(\Gamma, \sigma_0)$. 
Notice, for every $\sigma_0\in\Sigma_0^M$, the least-FP $\pi^*_{G(\Gamma, \sigma_0)}$ 
is actually a SEPM for the \EG $\Gamma$; still it can differ from the least-SEPM of $\Gamma$, due to $\sigma_0$. 
We consider the following family of strategies.

\begin{Def}[$\Delta^M_0(f, \Gamma)$-Strategies] 
Let $\Gamma=\langle V, E, w, (V_0, V_1) \rangle$ and let $f:V\rightarrow\mathcal{C}_{\Gamma}$ be a SEPM for the \EG $\Gamma$. 
Let $\Delta_0^M(f, \Gamma)\subseteq \Sigma^M_0(\Gamma)$ be the family of all 
and only those positional strategies of Player~0 in $\Gamma$ 
such that $\pi^{*}_{G(\Gamma,\sigma_0)}$ coincides with $f$ pointwisely, 
\ie \[\Delta_0^M(f, \Gamma)\triangleq\Big\{\sigma_0\in\Sigma_{0}^M(\Gamma) 
	\left|\right. \forall^{ v\in V } \, \pi^*_{G(\Gamma,\sigma_0)}(v)=f(v)\Big\}.\]
\end{Def}  

We now aim at exploring further on the relationship between $\mathcal{E}_\Gamma$ and 
$\texttt{opt}_{\Gamma}\Sigma^M_0$, via $\Delta_0^M(f, \Gamma)$. 
\begin{Def}[The Energy-Lattice of $\texttt{opt}_{\Gamma}\Sigma^M_0$] 
Let $\Gamma$ be a $\nu$-valued \MPG. Let $\mathcal{X} \subseteq \mathcal{E}_{\Gamma^{w-\nu}}$ be a sub-lattice 
of SEPM{s} of the reweighted \EG $\Gamma^{w-\nu}$. 

We say that $\mathcal{X}$ is an ``\emph{Energy-Lattice} of $\texttt{opt}_{\Gamma}\Sigma^M_0$" 
\textit{iff} $\forall^{f\in\mathcal{X}} \Delta_0^M(f, \Gamma^{w-\nu})\neq\emptyset$ and the following disjoint-set 
decomposition holds: \[\displaystyle \texttt{opt}_{\Gamma}\Sigma^M_0 = 
				\bigsqcup_{f\in\mathcal{X}} \Delta_0^M(f, \Gamma^{w-\nu}).\] 
\end{Def} 

\begin{Lem}\label{lem:pre_main_thm} 
Let $\Gamma$ be a $\nu$-valued \MPG, and let $\sigma^*_0\in\texttt{opt}_{\Gamma}\Sigma^M_0$. Then,  
$G(\Gamma^{w-\nu},\sigma^*_0)$ is \emph{conservative} (\ie it contains \emph{no} negative cycle).
\end{Lem}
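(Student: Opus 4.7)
\medskip

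The plan is to argue by contradiction: assume that $G(\Gamma^{w-\nu},\sigma^*_0)$ contains a negative cycle $C$, and derive that $\sigma^*_0$ fails to secure $\nu$ from some vertex of $\Gamma$, contradicting its optimality. The bridge between the two viewpoints is Proposition~\ref{prop:reachable_cycle} (and its symmetric analogue for Player~1), which characterizes $\texttt{val}^{\sigma_0}(v)$ in terms of the average weights of cycles reachable in the projection graph $G(\Gamma,\sigma_0)$.

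First, I would observe that the projection $G(\Gamma^{w-\nu},\sigma^*_0)$ and $G(\Gamma,\sigma^*_0)$ have the same underlying digraph (the same arcs are removed by $\sigma^*_0$); only the weight function differs, passing from $w$ to $w-\nu$ (up to the integer rescaling by the denominator of $\nu$ described in the preliminaries, which does not affect the sign of any cycle). Consequently, if $C$ is a cycle in the projection graph, then its weight satisfies
\[
\frac{(w-\nu)(C)}{|C|} \;=\; \frac{w(C)}{|C|} - \nu.
\]
Hence a negative cycle $C$ in $G(\Gamma^{w-\nu},\sigma^*_0)$ is precisely a cycle of $G(\Gamma,\sigma^*_0)$ with strict average weight $w(C)/|C|<\nu$.

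Next, pick any vertex $v\in C$; clearly $C$ is reachable from $v$ in $G(\Gamma,\sigma^*_0)$. Choose a rational threshold $\nu''$ with $w(C)/|C|<\nu''<\nu$. Since $C$ is a cycle reachable from $v$ whose average weight is not greater than $\nu''$, Proposition~\ref{prop:reachable_cycle} (in its contrapositive form) yields $\texttt{val}^{\sigma^*_0}(v)\leq\nu''<\nu$. But $\sigma^*_0\in\texttt{opt}_{\Gamma}\Sigma^M_0$ together with the assumption that $\Gamma$ is $\nu$-valued gives $\texttt{val}^{\sigma^*_0}(v)=\texttt{val}^{\Gamma}(v)=\nu$, a contradiction. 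Therefore no such negative cycle $C$ can exist, and $G(\Gamma^{w-\nu},\sigma^*_0)$ is conservative.

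I do not expect any serious obstacle here: the only point that requires mild care is the bookkeeping around the rational reweighting $w\leadsto w-\nu$ versus its integer rescaled form $D\cdot w - N$ with $\nu=N/D$, but as emphasized in Section~\ref{sect:background} this rescaling preserves signs of cycles and winning regions, so the argument carries over verbatim. The core of the proof is thus a one-step application of the cycle characterization of secured values to the negative-cycle hypothesis.
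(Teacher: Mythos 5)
Your proof is correct and is essentially the paper's own argument in contrapositive form: the paper directly shows every cycle of $G(\Gamma^{w-\nu},\sigma^*_0)$ satisfies $w(C)/|C|\geq\nu$ via Proposition~\ref{prop:reachable_cycle} and then subtracts $\nu$, whereas you assume a negative cycle and use the same proposition to contradict optimality. The same key lemma and the same arithmetic are involved, so no substantive difference.
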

\begin{proof}
Let $C \triangleq (v_1\ldots, v_{k}, v_1)$ by any cycle in $G(\Gamma^{w-\nu},\sigma^*_0)$. 
Since we have $\sigma^*_0\in\texttt{opt}_\Gamma\Sigma^M_0$ and $\forall^{v\in V} \val{\Gamma}{v}=\nu$, thus  
$w(C)/k=\frac{1}{k}\sum_{i=1}^{k} w(v_i, v_{i+1}) 
	\geq \nu$ (for $v_{k+1}\triangleq v_1$) by Proposition~\ref{prop:reachable_cycle}, 
so that, assuming $w'\triangleq w-\nu$, then: $w'(C)/k=\frac{1}{k}\sum_{i=1}^{k} 
	\big(w(v_i, v_{i+1})-\nu\big) = w(C)/k - \nu \geq \nu - \nu = 0$. 
\end{proof}

Some aspects of the following Proposition~\ref{prop:energy_existence} rely heavily on Theorem~\ref{Thm:pos_opt_strategy}: 
the compatibility relation comes again into play. 
Moreover, we observe that Proposition~\ref{prop:energy_existence} is equivalent to the following fact, 
which provides a sufficient condition for a positional strategy to be optimal.
Consider a $\nu$-valued \MPG $\Gamma$, for some $\nu\in \Q$, 
and let $\sigma^*_0\in\texttt{opt}_{\Gamma}\Sigma^M_0$.
Let $\hat{\sigma}_0\in\Sigma^M_0(\Gamma)$ be any (not necessarily optimal) positional strategy for Player~0 in the \MPG $\Gamma$.  
Suppose the following holds: \[\forall^{v\in V} \pi^*_{G(\Gamma^{w-\nu}, \hat{\sigma}_0)}(v)=\pi^*_{G(\Gamma^{w-\nu},\sigma^*_0)}(v).\] 
Then, by Proposition~\ref{prop:energy_existence}, $\hat{\sigma_0}$ 
	is an optimal positional strategy for Player~0 in the \MPG $\Gamma$.

We are thus relying on the same \emph{compatibility} relation between $\Sigma^M_0$ and SEPM{s} in reweighted \EG{s} 
which was at the \emph{base} of Theorem~\ref{Thm:pos_opt_strategy}, aiming at extending Theorem~\ref{Thm:pos_opt_strategy} 
so to describe the whole $\texttt{opt}_{\Gamma}\Sigma^M_0$ (and not just the join part of it).

\begin{Prop}\label{prop:energy_existence}
Let the \MPG $\Gamma$ be $\nu$-valued, for some $\nu\in \Q$. 

There is at least one Energy-Lattice of $\texttt{opt}_{\Gamma}\Sigma^M_0$:
\[\mathcal{B}\triangleq \{ \pi^*_{G(\Gamma^{w-\nu},\sigma_0)} 
	\mid {\sigma_0\in \texttt{opt}_{\Gamma}\Sigma_0^M}\}.\]
\end{Prop}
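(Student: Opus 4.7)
The plan is to verify in turn the three clauses of the Energy-Lattice definition applied to the candidate $\mathcal{B}$. For the inclusion $\mathcal{B}\subseteq\mathcal{E}_{\Gamma^{w-\nu}}$, Lemma~\ref{lem:pre_main_thm} first guarantees that for every $\sigma_0\in\texttt{opt}_\Gamma\Sigma^M_0$ the graph $G(\Gamma^{w-\nu},\sigma_0)$ is conservative, so its least feasible potential $\pi^*_{G(\Gamma^{w-\nu},\sigma_0)}$ is well-defined and finite on $V$ (computable, for instance, by Bellman--Ford). I then verify directly that $\pi^*_{G(\Gamma^{w-\nu},\sigma_0)}$ is a SEPM of the reweighted \EG $\Gamma^{w-\nu}$: for $v\in V_0$, the arc $(v,\sigma_0(v))$ of $G(\Gamma^{w-\nu},\sigma_0)$ witnesses the existential SEPM inequality at $v$; for $v\in V_1$ every out-arc survives the projection, so the feasible-potential inequality at $v$ in $G(\Gamma^{w-\nu},\sigma_0)$ is precisely the universal SEPM inequality required at $v$. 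By construction, each witness $\sigma_0$ of an element $f\in\mathcal{B}$ already lies in $\Delta_0^M(f,\Gamma^{w-\nu})$, so every such class is non-empty.

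The disjoint-set decomposition $\texttt{opt}_\Gamma\Sigma^M_0=\bigsqcup_{f\in\mathcal{B}}\Delta_0^M(f,\Gamma^{w-\nu})$ then follows from two observations. Disjointness is immediate because each strategy $\sigma_0$ uniquely determines $\pi^*_{G(\Gamma^{w-\nu},\sigma_0)}$, so $\sigma_0$ lies in exactly one class, the one indexed by its own least feasible potential. The ``$\subseteq$'' direction is automatic from paragraph~1. For the ``$\supseteq$'' direction, let $\hat\sigma_0\in\Delta_0^M(f,\Gamma^{w-\nu})$ with $f\in\mathcal{B}$: since $f=\pi^*_{G(\Gamma^{w-\nu},\hat\sigma_0)}$ is finite-valued everywhere, the graph $G(\Gamma^{w-\nu},\hat\sigma_0)$ is conservative, so every cycle reachable from any $v$ in the projection graph $G^{\Gamma}_{\hat\sigma_0}$ has average weight at least $\nu$. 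The ``$\geq$'' form of Proposition~\ref{prop:reachable_cycle}---a direct symmetric consequence of the positional determinacy of \MPG{s}---then yields $\texttt{val}^{\hat\sigma_0}(v)\geq \nu=\val{\Gamma}{v}$ for every $v$, hence $\hat\sigma_0\in\texttt{opt}_\Gamma\Sigma^M_0$.

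The main obstacle is verifying that $\mathcal{B}$ is closed under the meet and join operations of $\mathcal{E}_{\Gamma^{w-\nu}}$. The plan is a \emph{gluing} argument: for $f,g\in\mathcal{B}$ with witnesses $\sigma^f_0,\sigma^g_0\in\texttt{opt}_\Gamma\Sigma^M_0$, form the min-glued positional strategy $\sigma^\wedge_0(v)\triangleq \sigma^f_0(v)$ when $f(v)\preceq g(v)$ and $\sigma^\wedge_0(v)\triangleq \sigma^g_0(v)$ otherwise, and symmetrically a max-glued $\sigma^\vee_0$. A short case analysis on the two gluing branches, using the SEPM inequalities of $f$ and $g$, the monotonicity of $\ominus$ in its first argument, and $\min(f,g)\preceq f,g$, shows that $h\triangleq\min(f,g)$ is a feasible potential of $G(\Gamma^{w-\nu},\sigma^\wedge_0)$. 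Hence that graph is conservative, paragraph~2 gives $\sigma^\wedge_0\in\texttt{opt}_\Gamma\Sigma^M_0$, so $\pi^*_{G(\Gamma^{w-\nu},\sigma^\wedge_0)}\in\mathcal{B}$ and $\pi^*_{G(\Gamma^{w-\nu},\sigma^\wedge_0)}\preceq h$. The delicate step is to identify this least feasible potential with the meet of $f$ and $g$ \emph{inside} the ambient lattice; I would do this by a back-propagation on $G(\Gamma^{w-\nu},\sigma^\wedge_0)$, tracing each potential value from the conservative cycles backwards along the glued arcs and combining the equalities attained by $f$ on $\sigma^f_0$-arcs and by $g$ on $\sigma^g_0$-arcs with the minimality of the feasible potential to pin down the meet on every vertex. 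The join closure is handled dually via $\sigma^\vee_0$, yielding a canonical element of $\mathcal{B}$ above $f$ and $g$ and realizing their join.
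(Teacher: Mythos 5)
Your first two paragraphs establish everything the paper's own proof establishes, and they are correct. The paper declares the inclusion $\bigsqcup_{f\in\mathcal{B}}\Delta^M_0(f,\Gamma^{w-\nu})\subseteq\texttt{opt}_\Gamma\Sigma^M_0$ to be the only non-trivial point and obtains it by noting that any $\hat\sigma_0\in\Delta^M_0(\hat f,\Gamma^{w-\nu})$ is compatible with the SEPM $\hat f$, that $V_{\hat f}=V$ via Lemma~\ref{lem:pre_main_thm}, and then invoking Theorem~\ref{Thm:pos_opt_strategy}. You reach the same conclusion by observing that $G(\Gamma^{w-\nu},\hat\sigma_0)$ admits the finite feasible potential $\hat f$, hence is conservative, hence every cycle of $G^{\Gamma}_{\hat\sigma_0}$ has mean weight at least $\nu$; this is exactly the paper's later Lemma~\ref{lem:conservative_implies_optimal} proved inline, and is an acceptable, slightly more self-contained substitute. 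The SEPM verification and the disjointness/covering bookkeeping are also fine.

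The genuine gap is in your third paragraph: $\mathcal{B}$ is in general \emph{not} closed under the pointwise meet of $\mathcal{E}_{\Gamma^{w-\nu}}$, so no back-propagation can identify $\pi^*_{G(\Gamma^{w-\nu},\sigma^\wedge_0)}$ with $\min(f,g)$. Take $\nu=0$, all vertices in $V_0$, vertices $v,v',a,b,c,z$, arcs $v\to v'$ (weight $0$), $v\to c$ ($-7$), $v'\to a$ ($-5$), $v'\to b$ ($-2$), and $a\to z$, $b\to z$, $c\to z$, $z\to z$ all of weight $0$. The only cycle has mean $0$, so every positional strategy is optimal. The strategy $(v\mapsto v',\,v'\mapsto a)$ gives the potential $(5,5)$ on $(v,v')$, and $(v\mapsto c,\,v'\mapsto b)$ gives $(7,2)$, so their pointwise min is $(5,2)$; but the four positional strategies yield exactly $(5,5),(2,2),(7,5),(7,2)$, so $(5,2)\notin\mathcal{B}$. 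Your glued strategy $(v\mapsto v',\,v'\mapsto b)$ has least feasible potential $(2,2)$, strictly below $\min(f,g)$ at $v$: the $\sigma^f$-arc out of $v$ now feeds the cheaper $\sigma^g$-continuation at $v'$, and minimality of the feasible potential propagates the decrease backwards along that arc. What your gluing does prove is that $\mathcal{B}$ contains an element below (dually, above) any $f,g\in\mathcal{B}$, so $\mathcal{B}$ is a lattice under the induced order --- just not a sub-lattice closed under the ambient pointwise operations. Note that the paper's proof does not attempt any lattice-closure verification; it treats the decomposition as the whole content of the proposition, so the correct fix is to delete the identification claim rather than to repair it.
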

\begin{proof}
The only non-trivial point to check being: 
$ \bigsqcup_{f\in\mathcal{B}} \Delta^M_0(f, \Gamma^{w-\nu}) \subseteq \texttt{opt}_\Gamma\Sigma^M_0 $. 

For this, we shall rely on Theorem~\ref{Thm:pos_opt_strategy}. 
Let $\hat{f}\in \mathcal{B}$ and $\hat{\sigma}_0\in \Delta^M_0(\hat{f}, \Gamma^{w-\nu})$ be fixed (arbitrarily).
Since $\hat{f}\in\mathcal{B}$, then 
$\hat{f}=\pi^*_{G(\Gamma^{w-\nu}, \sigma^*_0)}$ for some $\sigma^*_0\in\texttt{opt}_\Gamma\Sigma^M_0$.
Therefore, the following holds: 
	\[ \pi^*_{G(\Gamma^{w-\nu}, \hat{\sigma}_0)} = \hat{f} = \pi^*_{G(\Gamma^{w-\nu}, \sigma^*_0)}. \]
Clearly, $\hat{\sigma}_0$ is compatible 
with $\hat{f}$ in the \EG $\Gamma^{w-\nu}$, because $\hat{f}=\pi^*_{G(\Gamma^{w-\nu}, \hat{\sigma}_0)}$.
By Lemma~\ref{lem:pre_main_thm}, since $\sigma^*_0$ is optimal, 
then $G(\Gamma^{w-\nu}, \sigma^*_0)$ is conservative. Therefore: 
\[V_{\hat{f}}=V_{\pi^*_{G(\Gamma^{w-\nu}, \sigma^*_0)}}=V.\] 
Notice, $\hat{\sigma}_0$ satisfies exactly the hypotheses required by Theorem~\ref{Thm:pos_opt_strategy}. 
Therefore, $\hat{\sigma}_0\in\texttt{opt}_\Gamma\Sigma^M_0$. 
This proves (*).% (like the \emph{Baron Munchausen}). 
This also shows $\texttt{opt}_{\Gamma}\Sigma^M_0 = 
	\bigsqcup_{f\in\mathcal{B}} \Delta^M_0(f, \Gamma^{w-\nu})$, and concludes the proof. 
\end{proof}

\begin{Prop}\label{prop:energy_uniqueness}
Let the \MPG $\Gamma$ be $\nu$-valued, for some $\nu\in \Q$. Let $\mathcal{B}_1$ and 
$\mathcal{B}_2$ be two Energy-Lattices for $\texttt{opt}_{\Gamma}\Sigma^M_0$. 
Then, $\mathcal{B}_1=\mathcal{B}_2$.
\end{Prop}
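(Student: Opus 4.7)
The plan is to exploit the fact that the map $\sigma_0 \mapsto \pi^*_{G(\Gamma^{w-\nu}, \sigma_0)}$ is single-valued, so each positional strategy $\sigma_0$ picks out exactly one $f$ such that $\sigma_0 \in \Delta^M_0(f, \Gamma^{w-\nu})$ (namely $f = \pi^*_{G(\Gamma^{w-\nu}, \sigma_0)}$). This observation alone forces the decomposition of $\texttt{opt}_{\Gamma}\Sigma^M_0$ into $\Delta^M_0(\cdot, \Gamma^{w-\nu})$-classes to be canonical, from which the uniqueness of any Energy-Lattice follows.

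Concretely, I would proceed as follows. First, I would fix an arbitrary $f \in \mathcal{B}_1$. By the first clause in the definition of an Energy-Lattice, $\Delta^M_0(f, \Gamma^{w-\nu}) \neq \emptyset$, so I can pick some $\sigma_0 \in \Delta^M_0(f, \Gamma^{w-\nu})$. By the disjoint-union clause applied to $\mathcal{B}_1$, this strategy lies in $\texttt{opt}_{\Gamma}\Sigma^M_0$. Next, applying the disjoint-union clause to $\mathcal{B}_2$, there exists a (unique) $g \in \mathcal{B}_2$ with $\sigma_0 \in \Delta^M_0(g, \Gamma^{w-\nu})$. Unfolding the definition of $\Delta^M_0$ in both memberships, one obtains $f(v) = \pi^*_{G(\Gamma^{w-\nu}, \sigma_0)}(v) = g(v)$ for every $v \in V$, i.e.\ $f = g \in \mathcal{B}_2$. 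Hence $\mathcal{B}_1 \subseteq \mathcal{B}_2$, and by the symmetric argument $\mathcal{B}_2 \subseteq \mathcal{B}_1$, giving $\mathcal{B}_1 = \mathcal{B}_2$.

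There is really no serious obstacle: the whole argument is driven by the functional (and hence single-valued) nature of $\sigma_0 \mapsto \pi^*_{G(\Gamma^{w-\nu}, \sigma_0)}$, which is built into the very definition of $\Delta^M_0(f, \Gamma^{w-\nu})$. The only subtle point worth flagging is that one must invoke the non-emptiness condition $\Delta^M_0(f, \Gamma^{w-\nu}) \neq \emptyset$ from the definition of Energy-Lattice to guarantee a witness $\sigma_0$ in the first place; without this clause, one could imagine padding an Energy-Lattice with ``phantom'' SEPMs that have no strategy representative, and uniqueness would fail. With the non-emptiness clause in place, the argument above gives a one-line bijection between elements of any Energy-Lattice and the equivalence classes of $\texttt{opt}_{\Gamma}\Sigma^M_0$ under $\sigma_0 \sim \sigma_0' \iff \pi^*_{G(\Gamma^{w-\nu}, \sigma_0)} = \pi^*_{G(\Gamma^{w-\nu}, \sigma_0')}$, and uniqueness of this quotient structure yields the claim.
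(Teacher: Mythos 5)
Your argument is correct and is essentially the same as the paper's: fix $f \in \mathcal{B}_1$, use the non-emptiness clause to pick a witness $\sigma_0 \in \Delta^M_0(f, \Gamma^{w-\nu}) \subseteq \texttt{opt}_{\Gamma}\Sigma^M_0$, locate the unique $g \in \mathcal{B}_2$ whose class contains $\sigma_0$, and conclude $f = \pi^*_{G(\Gamma^{w-\nu},\sigma_0)} = g$, finishing by symmetry. No gaps; your remark about why the non-emptiness clause is indispensable is a correct and worthwhile observation.
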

\begin{proof}
By symmetry, it is sufficient to prove that $\mathcal{B}_1\subseteq \mathcal{B}_2$. Let $f_1\in\mathcal{B}_1$ be fixed (arbitrarily). 
Then, $f_1=\pi^*_{G(\Gamma^{w-\nu}, \hat{\sigma}_0)}$ 
for some $\hat{\sigma}_0\in\texttt{opt}_{\Gamma}\Sigma^M_0$. 
Since $\hat{\sigma}_0\in\texttt{opt}_{\Gamma}\Sigma^M_0$ and since $\mathcal{B}_2$ is an Energy-Lattices, 
there exists $f_2\in \mathcal{B}_2$ such that $\hat{\sigma}_0\in\Delta^M_0(f_2, \Gamma^{w-\nu})$, which implies  
$\pi^*_{G(\Gamma^{w-\nu},\hat{\sigma}_0)}=f_2$. Thus, $f_1= \pi^*_{G(\Gamma^{w-\nu},\hat{\sigma}_0)} = f_2$. 
This implies $f_1\in \mathcal{B}_2$. 
\end{proof}

The next theorem summarizes the main point of this section. 
\begin{Thm}\label{thm:main_energystructure}
Let $\Gamma$ be a $\nu$-valued \MPG, for some $\nu\in \Q$. 
Then, $\mathcal{B}^*_\Gamma\triangleq \{ \pi^*_{G(\Gamma^{w-\nu},\sigma_0)} 
				\mid {\sigma_0\in \texttt{opt}_{\Gamma}\Sigma_0^M}\}$ 
	is the unique Energy-Lattice of $\texttt{opt}_\Gamma\Sigma^M_0$.
\end{Thm}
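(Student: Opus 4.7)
The plan is simply to observe that the statement is nothing but the conjunction of Proposition~\ref{prop:energy_existence} and Proposition~\ref{prop:energy_uniqueness}, which together provide existence and uniqueness, respectively. So first I would invoke Proposition~\ref{prop:energy_existence}, applied to the $\nu$-valued \MPG $\Gamma$, to conclude that the family $\mathcal{B}^*_\Gamma \triangleq \{ \pi^*_{G(\Gamma^{w-\nu},\sigma_0)} \mid \sigma_0 \in \texttt{opt}_\Gamma\Sigma^M_0 \}$ is indeed an Energy-Lattice of $\texttt{opt}_\Gamma\Sigma^M_0$; in particular, each $f \in \mathcal{B}^*_\Gamma$ satisfies $\Delta^M_0(f,\Gamma^{w-\nu}) \neq \emptyset$, and the disjoint-set decomposition $\texttt{opt}_{\Gamma}\Sigma^M_0 = \bigsqcup_{f\in \mathcal{B}^*_\Gamma} \Delta^M_0(f, \Gamma^{w-\nu})$ holds.

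Next I would invoke Proposition~\ref{prop:energy_uniqueness}, which states that any two Energy-Lattices $\mathcal{B}_1, \mathcal{B}_2$ of $\texttt{opt}_\Gamma\Sigma^M_0$ must be equal. Instantiating $\mathcal{B}_1 = \mathcal{B}^*_\Gamma$ and letting $\mathcal{B}_2$ denote any other Energy-Lattice of $\texttt{opt}_\Gamma\Sigma^M_0$, the proposition yields $\mathcal{B}_2 = \mathcal{B}^*_\Gamma$. Combining the two steps gives existence and uniqueness, \ie $\mathcal{B}^*_\Gamma$ is \emph{the} unique Energy-Lattice of $\texttt{opt}_\Gamma\Sigma^M_0$, as claimed.

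Because the substantive content has been done in the two preceding propositions, there is no genuine obstacle in this final theorem; the proof reduces to a one-line composition. The only hidden subtlety worth flagging is the tacit sub-lattice requirement encoded in the definition of Energy-Lattice (\ie $\mathcal{B}^*_\Gamma \subseteq \mathcal{E}_{\Gamma^{w-\nu}}$ being closed under meet and join in $\mathcal{E}_{\Gamma^{w-\nu}}$): this is part of what Proposition~\ref{prop:energy_existence} already delivers, via the fact that each $\pi^*_{G(\Gamma^{w-\nu},\sigma_0)}$ is a SEPM of $\Gamma^{w-\nu}$ thanks to the conservativity of $G(\Gamma^{w-\nu},\sigma_0)$ established in Lemma~\ref{lem:pre_main_thm}. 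Thus no additional work is needed and the theorem follows immediately.
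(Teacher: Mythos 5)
Your proof is correct and matches the paper's own argument exactly: the theorem is proved there, as here, by combining Proposition~\ref{prop:energy_existence} (existence) with Proposition~\ref{prop:energy_uniqueness} (uniqueness). Nothing further is needed.
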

\begin{proof}
By Proposition~\ref{prop:energy_existence} and Proposition~\ref{prop:energy_uniqueness}.
\end{proof}

\begin{Exp}\label{exp1} 
Consider the \MPG $\Gamma_{\text{ex}}$, as defined in \figref{fig:ex1_arena}. Then,  
$\mathcal{B}^*_{\Gamma_{\texttt{ex}}}=\{f^*, f_1, f_2\}$, 
where $f^*$ is the least-SEPM of the reweighted \EG $\Gamma^{w+1}_{\texttt{ex}}$, 
and where the following holds: 
$f_1(A)=f_2(A)=f^*(A)=0$; $f_1(B)=f_2(B)=f^*(B)=4$; $f_1(C)=f_2(C)=f^*(C)=8$; $f_1(D)=f_2(D)=f^*(D)=4$;
$f_1(F)=f_2(F)=f^*(F)=4$; $f_1(G)=f_2(G)=f^*(G)=0$; finally, $f^*(E)=0$, $f_1(E)=3$, $f_2(E)=7$. 
An illustration of $f_1$ is offered in \figref{fig:ex1_reweighted_f1SEPM} (energy-levels are depicted in circled boldface). 
whereas $f_2$ is depicted in \figref{fig:ex1_reweighted_f2SEPM}.
Notice that $f^*(v)\leq f_1(v)\leq f_2(v)$ for every $v\in V$, 
and this ordering relation is illustrated in \figref{fig:ex1_ordering}.
\end{Exp}

\begin{figure}[!h]
\center
\subfloat[The extremal-SEPM $f_1$ of $\Gamma_{\text{ex}}^{w+1}$]{\label{fig:ex1_reweighted_f1SEPM}
\begin{tikzpicture}[scale=.57, arrows={-triangle 45}, node distance=1.5 and 1.5]
 		\node[node, thick, color=red, label={$\sizedcircled{.15ex}{3}$}] (E) {$E$};
		\node[node, thick, color=blue, fill=blue!20, left=of E, label={above right:$\sizedcircled{.15ex}{8}$}] (C) {$C$};
		\node[node, thick, color=red, above=of C, xshift=-8.5ex, yshift=-5ex, label={$\sizedcircled{.15ex}{4}$}] (B) {$B$};
		\node[node, thick, color=blue, fill=blue!20, left=of C, xshift=-4ex, label={left:$\sizedcircled{.15ex}{0}$}] (A) {$A$};
		\node[node, thick, color=red, below=of C, xshift=-8.5ex, yshift=5ex, label={below:$\sizedcircled{.15ex}{4}$}] (D) {$D$}; 	
		\node[node, thick, color=blue, fill=blue!20, right=of E, label={$\sizedcircled{.15ex}{4}$}] (F) {$F$};
		\node[node, thick, color=red, right=of F, label={right:$\sizedcircled{.15ex}{0}$}] (G) {$G$};	
		%arcs
		\draw[] (E) to [bend left=0] node[above] {$+1$} (C);
		\draw[color=red, thick] (E) to [bend left=0] node[above] {$+1$} (F);
		\draw[dotted] (E) to [bend left=22] node[above left, xshift=-4ex] {$+1$} (A.south east);
		\draw[dotted] (E) to [bend left=50] node[above left, xshift=-4ex] {$+1$} (G.north);
		\draw[] (A) to [bend left=40] node[left] {$+4$} (B);
		\draw[color=red, thick] (B) to [bend left=40] node[xshift=2ex, yshift=1ex] {$+4$} (C);
		\draw[] (C) to [bend left=40] node[xshift=2ex, yshift=0ex] {$-4$} (D);
		\draw[color=red, thick] (D) to [bend left=40] node[xshift=-2ex, yshift=-1ex] {$-4$} (A);
		\draw[] (F) to [bend left=40] node[above, xshift=-1ex, yshift=2ex] {$-4$} (G);
		\draw[color=red, thick] (G) to [bend left=40] node[below] {$+4$} (F);
\end{tikzpicture}
}
\subfloat[The extremal-SEPM $f_2$ of $\Gamma_{\text{ex}}^{w+1}$.]{\label{fig:ex1_reweighted_f2SEPM}
\begin{tikzpicture}[scale=.57, arrows={-triangle 45}, node distance=1.5 and 1.5]
 		\node[node, thick, color=red, label={$\sizedcircled{.15ex}{7}$}] (E) {$E$};
		\node[node, thick, color=blue, fill=blue!20, left=of E, label={above right:$\sizedcircled{.15ex}{8}$}] (C) {$C$};
		\node[node, thick, color=red, above=of C, xshift=-8.5ex, yshift=-5ex, label={$\sizedcircled{.15ex}{4}$}] (B) {$B$};
		\node[node, thick, color=blue, fill=blue!20, left=of C, xshift=-4ex, label={left:$\sizedcircled{.15ex}{0}$}] (A) {$A$};
		\node[node, thick, color=red, below=of C, xshift=-8.5ex, yshift=5ex, label={below:$\sizedcircled{.15ex}{4}$}] (D) {$D$}; 	
		\node[node, thick, color=blue, fill=blue!20, right=of E, label={$\sizedcircled{.15ex}{4}$}] (F) {$F$};
		\node[node, thick, color=red, right=of F, label={right:$\sizedcircled{.15ex}{0}$}] (G) {$G$};	
		%arcs
		\draw[color=red, thick] (E) to [bend left=0] node[above] {$+1$} (C);
		\draw[dotted] (E) to [bend left=0] node[above] {$+1$} (F);
		\draw[dotted] (E) to [bend left=22] node[above left, xshift=-4ex] {$+1$} (A.south east);
		\draw[dotted] (E) to [bend left=50] node[above left, xshift=-4ex] {$+1$} (G.north);
		\draw[] (A) to [bend left=40] node[left] {$+4$} (B);
		\draw[color=red, thick] (B) to [bend left=40] node[xshift=2ex, yshift=1ex] {$+4$} (C);
		\draw[] (C) to [bend left=40] node[xshift=2ex, yshift=0ex] {$-4$} (D);
		\draw[color=red, thick] (D) to [bend left=40] node[xshift=-2ex, yshift=-1ex] {$-4$} (A);
		\draw[] (F) to [bend left=40] node[above, xshift=-1ex, yshift=2ex] {$-4$} (G);
		\draw[color=red, thick] (G) to [bend left=40] node[below] {$+4$} (F);
\end{tikzpicture}
}
\end{figure}

\begin{Def}[Extremal-SEPM]
Each element $f\in\mathcal{B}^*_{\Gamma}$ is said to be an \emph{extremal}-SEPM. 
\end{Def}

The next lemma is the converse of Lemma~\ref{lem:pre_main_thm}.
\begin{Lem}\label{lem:conservative_implies_optimal}
Let the \MPG $\Gamma$ be $\nu$-valued, for some $\nu\in \Q$. 
Consider any $\sigma_0\in\Sigma_0^M(\Gamma)$, and assume that $G(\Gamma^{w-\nu}, \sigma_0)$ is conservative. 
Then, $\sigma_0\in\texttt{opt}_\Gamma\Sigma_0^M$. 
\end{Lem}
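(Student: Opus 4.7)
The plan is to show optimality by analyzing the average weights of the cycles in the projected graph $G(\Gamma, \sigma_0) = G^{\Gamma}_{\sigma_0}$, and then invoke Proposition~\ref{prop:reachable_cycle}. The key observation is that the reweighting $w \leadsto w - \nu$ translates cycle average weights by exactly $\nu$: for every cycle $C = v_0 v_1 \ldots v_k$ in $G^{\Gamma}_{\sigma_0}$ (which is the same underlying graph as $G(\Gamma^{w-\nu}, \sigma_0)$, only differently weighted), one has
\[ \frac{(w-\nu)(C)}{|C|} = \frac{w(C)}{|C|} - \nu. \]

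First I would rephrase the conservativity hypothesis in terms of cycle averages: since $G(\Gamma^{w-\nu}, \sigma_0)$ contains no negative cycle, every cycle $C$ of $G^{\Gamma}_{\sigma_0}$ satisfies $(w-\nu)(C) \geq 0$, hence $w(C)/|C| \geq \nu$. In particular, for any rational $\nu' < \nu$, every cycle $C$ of $G^{\Gamma}_{\sigma_0}$ satisfies $w(C)/|C| > \nu'$, so that \emph{a fortiori} every cycle reachable from any $v \in V$ in $G^{\Gamma}_{\sigma_0}$ has average weight strictly greater than $\nu'$.

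Next I would apply Proposition~\ref{prop:reachable_cycle} with threshold $\nu'$: for every $v \in V$, $\texttt{val}^{\sigma_0}(v) > \nu'$. Since this holds for every rational $\nu' < \nu$, taking the supremum yields $\texttt{val}^{\sigma_0}(v) \geq \nu$ for all $v \in V$. Because $\Gamma$ is $\nu$-valued, $\val{\Gamma}{v} = \nu$, and therefore $\texttt{val}^{\sigma_0}(v) \geq \val{\Gamma}{v}$ for every $v \in V$, which is precisely the definition of $\sigma_0 \in \texttt{opt}_\Gamma \Sigma_0^M$.

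There is no real obstacle here; the only mild subtlety is the strict-vs-weak inequality in Proposition~\ref{prop:reachable_cycle}, which I handle by the usual density/limit argument on rational thresholds $\nu' \nearrow \nu$. This lemma together with Lemma~\ref{lem:pre_main_thm} will then give a clean characterization: $\sigma_0 \in \texttt{opt}_\Gamma \Sigma_0^M$ iff $G(\Gamma^{w-\nu}, \sigma_0)$ is conservative, which is the key structural fact underlying the definition of the Energy-Lattice $\mathcal{B}^*_\Gamma$.
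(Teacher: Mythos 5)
Your proof is correct and follows essentially the same route as the paper's: both establish that every cycle $C$ of $G^{\Gamma}_{\sigma_0}$ has average weight $w(C)/|C| \geq \nu$ (by undoing the reweighting on the conservativity hypothesis) and then conclude via Proposition~\ref{prop:reachable_cycle}. Your density argument on rational thresholds $\nu' \nearrow \nu$ to bridge the strict inequality in Proposition~\ref{prop:reachable_cycle} with the weak bound $w(C)/|C|\geq\nu$ is a point the paper's proof silently glosses over, so your version is, if anything, slightly more careful.
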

\begin{proof}
Let $C=(v_1, \ldots, v_{\ell}v_1)$ any cycle in $G(\Gamma, \sigma_0)$. 
Then, the following holds (if $v_{\ell+1}=v_1$):
$\frac{w(C)}{\ell}  = \frac{1}{\ell}\sum_{i=1}^{\ell} w(v_i, v_{i+1})  
		   = \nu + \frac{1}{\ell}\sum_{i=1}^{\ell} \Big( w(v_i, v_{i+1})-\nu \Big) \geq \nu$, 
where $\frac{1}{\ell}\sum_{i=1}^{\ell} \big( w(v_i, v_{i+1})-\nu \big) \geq 0$ 
holds because $G(\Gamma^{w-\nu}, \sigma_0)$ is conservative.
By Proposition~\ref{prop:reachable_cycle}, since $w(C)/\ell \geq \nu$ for 
	every cycle $C$ in $G^{\Gamma}_{\sigma_0}$, then $\sigma_0\in\texttt{opt}_\Gamma\Sigma_0^M$.
\end{proof}

\begin{figure}[!h]
\center
\begin{tikzpicture}[scale=.65, node distance=1 and 1]
%%%%%%%%%%%%%%%%%%%%%%%%%%%%%%%%%%%%%%%%%%%%%%%%%%%%%%%%%%%%%%%%%%%%%%%%%%%%%%%%%%%%%%%%%%%%%%%%%%%%%%%%%%%%%%%%%%%%%%%%%%%%
 		\node[node, thick, color=red, label={$\sizedcircled{.75}{7}$}] (E2) {$E$};
\node[node, thick, color=blue, fill=blue!20,  left=of E2, label={left, xshift=.5ex, yshift=1ex:$\sizedcircled{.6}{8}$}] (C2) {$C$};
		\node[node, thick, color=red, above=of C2, xshift=-4.75ex, yshift=-4ex, label={$\sizedcircled{.6}{4}$}] (B2) {$B$};
		\node[node, thick, color=blue, fill=blue!20, left=of C2, label={left:$\sizedcircled{.6}{0}$}] (A2) {$A$};
		\node[node, thick, color=red, below=of C2, xshift=-4.75ex, yshift=4ex, label={below:$\sizedcircled{.6}{4}$}] (D2) {$D$}; 	
		\node[node, thick, color=blue, fill=blue!20, right=of E2, label={$\sizedcircled{.6}{4}$}] (F2) {$F$};
		\node[node, thick, color=red, right=of F2, label={right:$\sizedcircled{.6}{0}$}] (G2) {$G$};	
%%%%%%%%%%%%%%%%%%%%%%%%%%%%%%%%%%%%%%%%%%%%%%%%%%%%%%%%%%% arcs %%%%%%%%%%%%%%%%%%%%%%%%%%%%%%%%%%%%%%%%%%%%%%%%%%%%%%%%%%%%%%
		\draw[->, color=red, thick] (E2) to [bend left=0] node[above, yshift=-.5ex] {$+1$} (C2); 
		\draw[->, dotted] (E2) to [bend left=0] node[above, yshift=-.5ex] {$+1$} (F2);
		\draw[->, dotted] (E2) to [bend left=22] node[above left, xshift=-2ex, yshift=-1.25ex] {$+1$} (A2.south east);
		\draw[->, dotted] (E2) to [bend left=60] node[above left, xshift=0ex, yshift=-1ex] {$+1$} (G2.north);
		\draw[->] (A2) to [bend left=40] node[left] {$+4$} (B2);
		\draw[->,color=red, thick] (B2) to [bend left=40] node[xshift=1.5ex, yshift=1ex] {$+4$} (C2);
		\draw[->] (C2) to [bend left=40] node[xshift=1ex, yshift=-1ex] {$-4$} (D2);
		\draw[->,color=red, thick] (D2) to [bend left=40] node[xshift=-2ex, yshift=-1ex] {$-4$} (A2);
		\draw[->] (F2) to [bend left=40] node[above, yshift=-.5ex] {$-4$} (G2);
		\draw[->,color=red, thick] (G2) to [bend left=40, yshift=1.5ex] node[below] {$+4$} (F2);
%%%%%%%%%%%%%%%%%%%%%%%%%%%%%%%%%%%%%%%%%%%%%%%%%%%%%%%%%%%%%%%%%%%%%%%%%%%%%%%%%%%%%%%%%%%%%%%%%%%%%%%%%%%%%%%%%%%%%%%%%%%%%
%%%%%%%%%%%%%%%%%%%%%%%%%%%%%%%%%%%%%%%%%%%%%%%%%%%%%%%%%%%%%%%%%%%%%%%%%%%%%%%%%%%%%%%%%%%%%%%%%%%%%%%%%%%%%%%%%%%%%%%%%%%%%
		\node[node, thick, right=of G2, xshift=5ex] (fA) {$f_2$};
		\node[node, thick, below=of fA, yshift=-20ex] (fB) {$f_1$};
%%%%%%%%%%%%%%%%%%%%%%%%%%%%%%%%%%%%%%%%%%%%%%%%%%%%%%%%%%%%%%%%%%%%%%%%%%%%%%%%%%%%%%%%%%%%%%%%%%%%%%%%%%%%%%%%%%%%%%%%%%%%%
%%%%%%%%%%%%%%%%%%%%%%%%%%%%%%%%%%%%%%%%%%%%%%%%%%%%%%%%%%%%%%%%%%%%%%%%%%%%%%%%%%%%%%%%%%%%%%%%%%%%%%%%%%%%%%%%%%%%%%%%%%%%%
 		\node[node, thick, left=of fB, xshift=-25ex, color=red, label={$\sizedcircled{.75}{3}$}] (E1) {$E$};
\node[node, thick, color=blue, fill=blue!20, left=of E1, label={left, xshift=.5ex, yshift=1ex:$\sizedcircled{.6}{8}$}] (C1) {$C$};
		\node[node, thick, color=red, above=of C1, xshift=-4.75ex, yshift=-4ex, label={$\sizedcircled{.6}{4}$}] (B1) {$B$};
		\node[node, thick, color=blue, fill=blue!20, left=of C1, label={left:$\sizedcircled{.6}{0}$}] (A1) {$A$};
		\node[node, thick, color=red, below=of C1, xshift=-4.75ex, yshift=4ex, label={below:$\sizedcircled{.6}{4}$}] (D1) {$D$}; 	
		\node[node, thick, color=blue, fill=blue!20, right=of E1, label={$\sizedcircled{.6}{4}$}] (F1) {$F$};
		\node[node, thick, color=red, right=of F1, label={right:$\sizedcircled{.6}{0}$}] (G1) {$G$};	
%%%%%%%%%%%%%%%%%%%%%%%%%%%%%%%%%%%%%%%%%%%%%%%%%%%%%% arcs %%%%%%%%%%%%%%%%%%%%%%%%%%%%%%%%%%%%%%%%%%%%%%%%%%%%%%%%%%%%%%%%%%
		\draw[->] (E1) to [bend left=0] node[above, yshift=-.5ex] {$+1$} (C1); 
		\draw[->, color=red, thick] (E1) to [bend left=0] node[above, yshift=-.5ex] {$+1$} (F1);
		\draw[->, dotted] (E1) to [bend left=22] node[above left, xshift=-2ex, yshift=-1.25ex] {$+1$} (A1.south east);
		\draw[->, dotted] (E1) to [bend left=60] node[above left, xshift=0ex, yshift=-1ex] {$+1$} (G1.north);
		\draw[->] (A1) to [bend left=40] node[left] {$+4$} (B1);
		\draw[->,color=red, thick] (B1) to [bend left=40] node[xshift=1.5ex, yshift=1ex] {$+4$} (C1);
		\draw[->] (C1) to [bend left=40] node[xshift=1ex, yshift=-1ex] {$-4$} (D1);
		\draw[->,color=red, thick] (D1) to [bend left=40] node[xshift=-2ex, yshift=-1ex] {$-4$} (A1);
		\draw[->] (F1) to [bend left=40] node[above, yshift=-.5ex] {$-4$} (G1);
		\draw[->,color=red, thick] (G1) to [bend left=40, yshift=1.5ex] node[below] {$+4$} (F1);
		\node[node, thick, below=of fB, yshift=-20ex] (fC) {$f^*$};
%%%%%%%%%%%%%%%%%%%%%%%%%%%%%%%%%%%%%%%%%%%%%%%%%%%%%%%%%%%%%%%%%%%%%%%%%%%%%%%%%%%%%%%%%%%%%%%%%%%%%%%%%%%%%%%%%%%%%%%%%%%%%%
%%%%%%%%%%%%%%%%%%%%%%%%%%%%%%%%%%%%%%%%%%%%%%%%%%%%%%%%%%%%%%%%%%%%%%%%%%%%%%%%%%%%%%%%%%%%%%%%%%%%%%%%%%%%%%%%%%%%%%%%%%%%%%
 		\node[node, thick, left=of fC, xshift=-25ex, color=red, label={$\sizedcircled{.75}{0}$}] (E0) {$E$};
\node[node, thick, color=blue, fill=blue!20, left=of E0, label={left, xshift=.5ex, yshift=1ex:$\sizedcircled{.6}{8}$}] (C0) {$C$};
		\node[node, thick, color=red, above=of C0, xshift=-4.75ex, yshift=-4ex, label={$\sizedcircled{.6}{4}$}] (B0) {$B$};
		\node[node, thick, color=blue, fill=blue!20, left=of C0, label={left:$\sizedcircled{.6}{0}$}] (A0) {$A$};
		\node[node, thick, color=red, below=of C0, xshift=-4.75ex, yshift=4ex, label={below:$\sizedcircled{.6}{4}$}] (D0) {$D$}; 	
		\node[node, thick, color=blue, fill=blue!20, right=of E0, label={$\sizedcircled{.6}{4}$}] (F0) {$F$};
		\node[node, thick, color=red, right=of F0, label={right:$\sizedcircled{.6}{0}$}] (G0) {$G$};	
%%%%%%%%%%%%%%%%%%%%%%%%%%%%%%%%%%%%%%%%%%%%%%%%%%%%%%%%%%% arcs %%%%%%%%%%%%%%%%%%%%%%%%%%%%%%%%%%%%%%%%%%%%%%%%%%%%%%%%%%%%%%%%%
	\draw[->] (E0) to [bend left=0] node[above, yshift=-.5ex] {$+1$} (C0); 
	\draw[->] (E0) to [bend left=0] node[above, yshift=-.5ex] {$+1$} (F0);
	\draw[->, color=red, thick] (E0) to [bend left=22] node[above left, xshift=-2ex, yshift=-1.25ex] {$+1$} (A0.south east);
	\draw[->, color=red, thick] (E0) to [bend left=60] node[above left, xshift=0ex, yshift=-1ex] {$+1$} (G0.north);
	\draw[->] (A0) to [bend left=40] node[left] {$+4$} (B0);
	\draw[->,color=red, thick] (B0) to [bend left=40] node[xshift=1.5ex, yshift=1ex] {$+4$} (C0);
	\draw[->] (C0) to [bend left=40] node[xshift=1ex, yshift=-1ex] {$-4$} (D0);
	\draw[->,color=red, thick] (D0) to [bend left=40] node[xshift=-2ex, yshift=-1ex] {$-4$} (A0);
	\draw[->] (F0) to [bend left=40] node[above, yshift=-.5ex] {$-4$} (G0);
	\draw[->,color=red, thick] (G0) to [bend left=40, yshift=1.5ex] node[below] {$+4$} (F0);
%%%%%%%%%%%%%%%%%%%%%%%%%%%%%%%%%%%%%%%%%%%%% arcs %%%%%%%%%%%%%%%%%%%%%%%%%%%%%%%%%%%%%%%%%%%%%%%%%%%%%%%%%%%%%%%%%%%%%%%%%%%%%%
		\draw[thick] (fA) to [] node[] {} (fB);
		\draw[thick] (fB) to [] node[] {} (fC);

		\node[scale=0.75, right=of fC, yshift=0ex, rectangle split, rectangle split parts=6, draw, 
			minimum width=4cm, font=\small, rectangle split part align={center}] (t1)
  		{            
		  {$\Delta^M_0(f^*, \Gamma^{w-\nu})=\{\sigma^{(1)}_0, \sigma^{(2)}_0\}$}
		     \nodepart{two}
       		             $\sigma^{(1)}_0(B)=C$ \hspace*{4ex} $\sigma^{(2)}_0(B)=C$ 
		     \nodepart{three}
       	  	             $\sigma^{(1)}_0(D)=A$ \hspace*{4ex} $\sigma^{(2)}_0(D)=A$
	     	     \nodepart{four}
                             \textcolor{red}{$\sigma^{(1)}_0(E)=A$} \hspace*{4ex} \textcolor{red}{$\sigma^{(2)}_0(E)=G$} 
	 	     \nodepart{five}
                  	     $\sigma^{(1)}_0(G)=F$ \hspace*{4ex} $\sigma^{(2)}_0(G)=F$
 	     	     \nodepart{six}
		};
	 	 \draw (t1.text split) -- (t1.two split);
		 \draw (t1.two split) -- (t1.three split);
		 \draw (t1.three split) -- (t1.four split);
		 \draw (t1.four split) -- (t1.five split);
		 \draw (t1.five split) -- (t1.six split);
		 \draw (t1.six split) -- (t1.seven split);
 		 \draw (t1.seven split) -- (t1.eight split);

	\node[right=of fB, scale=0.75, yshift=0ex, rectangle split, rectangle split parts=6, draw, 
			minimum width=4cm, font=\small, rectangle split part align={center}] (t2)
  		  {            
		{$\Delta^M_0(f_1, \Gamma^{w-\nu})=\{\sigma^{(3)}_0\}$}
	     \nodepart{two}
     	              $\sigma^{(3)}_0(B)=C$ 
	     \nodepart{three}
                      $\sigma^{(3)}_0(D)=A$ 
	     \nodepart{four}
                      \textcolor{red}{$\sigma^{(3)}_0(E)=F$}  
	     \nodepart{five}
                      $\sigma^{(3)}_0(G)=F$ 
 	     \nodepart{six}
		};

		\node[right=of fA, yshift=0ex, scale=0.75, rectangle split, rectangle split parts=6, draw, 
			minimum width=4cm, font=\small, rectangle split part align={center}] (t2)
 		  {            
		{$\Delta^M_0(f_1, \Gamma^{w-\nu})=\{\sigma^{(4)}_0\}$}
	     \nodepart{two}
     	              $\sigma^{(4)}_0(B)=C$ 
	     \nodepart{three}
                      $\sigma^{(4)}_0(D)=A$ 
	     \nodepart{four}
                      \textcolor{red}{$\sigma^{(4)}_0(E)=C$}  
	     \nodepart{five}
                      $\sigma^{(4)}_0(G)=F$ 
 	     \nodepart{six}
		};
%    \draw[dotted] (0,0.32) -- (30:2.7cm);	
%   \draw[dotted] (0,-0.32) -- (-30:2.7cm);
%    \draw[dotted] (5,-3.34) -- (7.33, -2.3);
%    \draw[dotted] (0,-4) -- (2.34, -5.05);
%    \draw[dotted] (0,-7.025) -- (2.33, -6);
%    \draw[dotted] (0,-7.725) -- (2.34, -8.76);
\end{tikzpicture}
\caption{The decomposition of $\texttt{opt}_\Gamma\Sigma^M_0$, for the \MPG $\Gamma_{\text{ex}}$,
which corresponds to the Energy-Lattice $\mathcal{B}^*_{\Gamma_{\text{ex}}}=\{f^*, f_1, f_2\}$
(computed in Example~\ref{exp1}). Here, $f^*\leq f_1\leq f_2$. This also brings a lattice 
$\mathcal{D}^*_{\Gamma_{\text{ex}}}$ of 3 sub-games of $\Gamma_{\text{ex}}$. }\label{fig:ex1_ordering}
\end{figure}
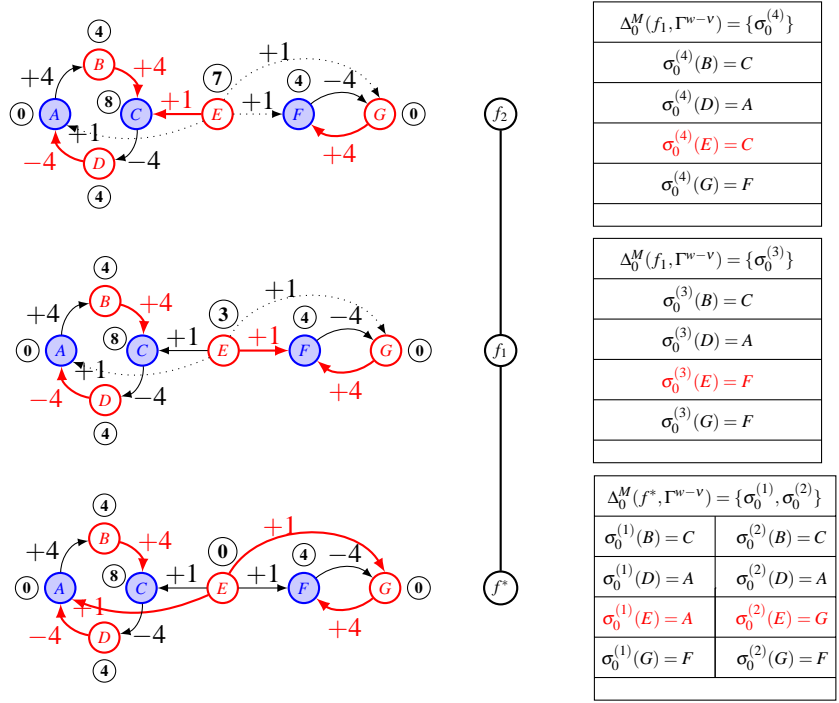

The following proposition asserts some properties of the Extremal-SEPM{s}.
\begin{Prop}\label{prop:extremal}
Let the \MPG $\Gamma$ be $\nu$-valued, for some $\nu\in \Q$. 
Let $\mathcal{B}^*_{\Gamma}$ be the Energy-Lattice of $\texttt{opt}_\Gamma\Sigma^M_0$. 
Moreover, let $f:V\rightarrow \C_{\Gamma}$ be a SEPM for the 
reweighted \EG $\Gamma^{w-\nu}$. Then, the following three properties are equivalent: 
\begin{enumerate}
\item $f\in \mathcal{B}^*_{\Gamma}$; 
\item There exists $\sigma_0\in\texttt{opt}_\Gamma\Sigma^M_0$ such that 
$\pi^*_{G(\Gamma^{w-\nu}, \sigma_0)}(v)=f(v)$ for every $v\in V$.
\item $V_f=\mathcal{W}_0(\Gamma^{w-\nu})=V$ and $\Delta^M_0(f, \Gamma^{w-\nu})\neq\emptyset$;
\end{enumerate}
\end{Prop}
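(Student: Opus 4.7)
The plan is to argue a cycle of implications $(1)\Rightarrow (2)\Rightarrow (3)\Rightarrow (2)\Rightarrow (1)$, where most of the work concentrates on the two implications connecting (2) and (3); the implication $(1)\Leftrightarrow (2)$ will follow directly from the definition of $\mathcal{B}^*_{\Gamma}$ given in Theorem~\ref{thm:main_energystructure}.

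For $(1)\Rightarrow (2)$, I would just unfold the definition $\mathcal{B}^*_{\Gamma}=\{\pi^*_{G(\Gamma^{w-\nu},\sigma_0)}\mid \sigma_0\in\texttt{opt}_{\Gamma}\Sigma^M_0\}$: any $f\in\mathcal{B}^*_{\Gamma}$ is, by construction, of the form $\pi^*_{G(\Gamma^{w-\nu},\sigma_0)}$ for some optimal positional $\sigma_0$, which is precisely what (2) asserts. For $(2)\Rightarrow (3)$, fix $\sigma_0\in\texttt{opt}_{\Gamma}\Sigma^M_0$ with $\pi^*_{G(\Gamma^{w-\nu},\sigma_0)}=f$. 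By Lemma~\ref{lem:pre_main_thm} the projection $G(\Gamma^{w-\nu},\sigma_0)$ is conservative; since the least feasible-potential of a conservative finite weighted digraph is finite on every vertex, $f(v)=\pi^*_{G(\Gamma^{w-\nu},\sigma_0)}(v)\neq\top$ for all $v\in V$, so $V_f=V$. Because $f$ is a SEPM of $\Gamma^{w-\nu}$ with $V_f=V$, Proposition~\ref{prop:least_energy_prog_measure} yields $V=V_f\subseteq\W_0(\Gamma^{w-\nu})\subseteq V$, hence $\W_0(\Gamma^{w-\nu})=V$. Finally, the very choice of $\sigma_0$ witnesses $\sigma_0\in\Delta^M_0(f,\Gamma^{w-\nu})$, so $\Delta^M_0(f,\Gamma^{w-\nu})\neq\emptyset$.

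The decisive step is $(3)\Rightarrow (2)$, which is where compatibility has to be translated back into optimality without assuming (1). Pick any $\sigma_0\in\Delta^M_0(f,\Gamma^{w-\nu})$ given by (3); by definition this means $\pi^*_{G(\Gamma^{w-\nu},\sigma_0)}=f$ pointwise. Since $V_f=V$ by hypothesis, $\pi^*_{G(\Gamma^{w-\nu},\sigma_0)}(v)\neq\top$ for every $v\in V$, and a standard Bellman--Ford argument (the least feasible-potential is finite everywhere if and only if the graph has no negative cycle) shows that $G(\Gamma^{w-\nu},\sigma_0)$ is conservative. Now Lemma~\ref{lem:conservative_implies_optimal} applies and gives $\sigma_0\in\texttt{opt}_{\Gamma}\Sigma^M_0$; this same $\sigma_0$ then witnesses (2). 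The implication $(2)\Rightarrow (1)$ closes the cycle by definition of $\mathcal{B}^*_{\Gamma}$.

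The main obstacle I foresee is not conceptual but notational: one must be careful that the hypothesis $V_f=V$ is actually used (rather than the weaker $V_{f^*}=V$), so that the conservativity of $G(\Gamma^{w-\nu},\sigma_0)$ really follows from $\pi^*_{G(\Gamma^{w-\nu},\sigma_0)}=f$ being finite everywhere. Once this link (finite least-FP on all vertices $\Leftrightarrow$ no negative cycle) is stated explicitly, the rest of the argument is essentially a direct assembly of Lemmata~\ref{lem:pre_main_thm} and \ref{lem:conservative_implies_optimal} together with Proposition~\ref{prop:least_energy_prog_measure} and Theorem~\ref{thm:main_energystructure}.
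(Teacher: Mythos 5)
Your proof is correct and follows essentially the same route as the paper's: both arguments reduce $(1)\Leftrightarrow(2)$ to the definition of $\mathcal{B}^*_{\Gamma}$, use Lemma~\ref{lem:pre_main_thm} together with the fact that the least feasible-potential is finite everywhere exactly when the projection graph is conservative to get $V_f=V$, and invoke Lemma~\ref{lem:conservative_implies_optimal} for the converse direction. The only cosmetic difference is that you derive $\W_0(\Gamma^{w-\nu})=V$ from $V_f=V$ via Proposition~\ref{prop:least_energy_prog_measure}, whereas the paper gets it directly from the $\nu$-valued hypothesis via Proposition~\ref{prop:relation_MPG_EG}; both are immediate.
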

\begin{itemize}
\item[($1\iff 2$)] Indeed, $\mathcal{B}^*_{\Gamma} = 
	\{ \pi^*_{G(\Gamma^{w-\nu}, \sigma_0)}\mid {\sigma_0\in \texttt{opt}_{\Gamma}\Sigma_0^M}\} $. 

\item[($1\Rightarrow 3$)] Assume $f\in \mathcal{B}^*_{\Gamma}$. 
Since ($1\iff 2$), there exist $\sigma_0\in\texttt{opt}_\Gamma\Sigma_0^M$ 
such that $\pi^*_{G(\Gamma^{w-\nu}, \sigma_0)}=f$. Thus, $\sigma_0\in\Delta^M_0(f, \Gamma^{w-\nu})$, 
	so that $\Delta^M_0(f, \Gamma^{w-\nu})\neq\emptyset$.
We claim $V_f=\mathcal{W}_0(\Gamma^{w-\nu})=V$. Since $\forall (v\in V)\, \val{\Gamma}{v}=\nu$, 
then $\mathcal{W}_0(\Gamma^{w-\nu})=V$ by Proposition~\ref{prop:relation_MPG_EG}.
Next, $G(\Gamma^{w-\nu}, \sigma_0)$ is conservative by Lemma~\ref{lem:pre_main_thm}. 
Since $G(\Gamma^{w-\nu}, \sigma_0)$ is conservative and $f=\pi^*_{G(\Gamma^{w-\nu}, \sigma_0)}$, 
then $V_f=V$. Therefore, $V_f=\mathcal{W}_0(\Gamma^{w-\nu})=V$. 

\item[($1\Leftarrow 3$)]
Since $\Delta^M_0(f, \Gamma^{w-\nu})\neq\emptyset$, pick some $\sigma_0\in\Delta^M_0(f, \Gamma^{w-\nu})$; 
so, $f=\pi^*_{G(\Gamma^{w-\nu}, \sigma_0)}$. Since $V_f=V$ and $f=\pi^*_{G(\Gamma^{w-\nu}, \sigma_0)}$, 
then $G(\Gamma^{w-\nu}, \sigma_0)$ is conservative. 
Since $G(\Gamma^{w-\nu}, \sigma_0)$ is conservative, 
then $\sigma_0\in\texttt{opt}_\Gamma\Sigma_0^M$ by Lemma~\ref{lem:conservative_implies_optimal}.
Since $f=\pi^*_{G^*}$ and $\sigma_0\in\texttt{opt}_\Gamma\Sigma_0^M$, 
then $f\in\mathcal{B}^*_{\Gamma}$ (as $2\Rightarrow 1$).
\end{itemize}

\section{A Recursive Enumeration of $\mathcal{B}^*_\Gamma$ 
	and $\texttt{opt}_\Gamma\big(\Sigma_{0}^M\big)$}\label{sect:recursive_enumeration}

An enumeration algorithm for a set $S$ provides an exhaustive listing of all the elements of $S$ (without repetitions). 
As mentioned in Section~\ref{sect:energy}, by Theorem~\ref{thm:ergodic_partition}, 
no loss of generality occurs if we assume $\Gamma$ to be $\nu$-valued for some $\nu\in\Q$.
One run of Algorithm~\ref{algo:solve_mpg} allows one to partition an \MPG $\Gamma$, 
into several domains $\Gamma_{i}\triangleq \Gamma_{|_{C_i}}$ each one being $\nu_{i}$-valued for $\nu_{i}\in S_\Gamma$; 
in $O(|V|^2|E|W)$ time and linear space. Still, by Proposition~\ref{prop:counter_example}, 
Theorem~\ref{Thm:pos_opt_strategy} is not sufficient for enumerating the whole 
$\texttt{opt}_\Gamma(\Sigma^M_0)$ by means of Algorithm~\ref{algo:solve_mpg}; 
it is enough only for $\Delta^M_0(f^*_\nu,\Gamma^{w-\nu})$ where $f^*_\nu$ is the least-SEPM of $\Gamma^{w-\nu}$, 
which is just the ``join" component of $\texttt{opt}_\Gamma(\Sigma^M_0)$.
However, we now have a more detailed description of $\texttt{opt}_\Gamma \Sigma^M_0$ in terms $\mathcal{B}^*_\Gamma$, thanks to Theorem~\ref{thm:main_energystructure}. 

This section offers a recursive enumeration of all the extremal-SEPM{s}, \ie $\mathcal{B}^*_\Gamma$, 
and for computing the corresponding partitioning of $\texttt{opt}_\Gamma\big(\Sigma_{0}^M\big)$. 
In order to avoid duplicate elements in the enumeration, the algorithm needs to store a lattice $\mathcal{B}^*_{\Gamma}$ of sub-games of $\Gamma$, which is related to $\mathcal{X}^*_\Gamma$. \emph{($T_{\Gamma}$)}. 
We shall assume to dispose of a data-structure $T_{\Gamma}$ supporting the following operations, 
given a sub-arena $\Gamma'$ of $\Gamma$: $\texttt{insert}(\Gamma', T_{\Gamma})$ stores $\Gamma'$ into $T_{\Gamma}$; 
$\texttt{contains}(\Gamma', T_{\Gamma})$ returns $\texttt{T}$ \textit{iff} $\Gamma'$ is in $T_{\Gamma}$ and $\texttt{F}$ otherwise. 
A simple implementation of $T_{\Gamma}$ goes by indexing $N^{\text{out}}_{\Gamma'}(v)$ for each $v\in V$.
This runs in $O(|V|^2)$ time, consuming $O(|E|)$ space per stored element. 
The same approach can be used to store and retrieve SEPM{s} in $O(|V|^2)$ time and $O(|V|)$ space.

The listing procedure is named $\texttt{enum}()$, it takes in input a $\nu$-valued \MPG $\Gamma$; going as follows.
\begin{enumerate}
\item Compute the least-SEPM $f^*$ of $\Gamma$, and \texttt{print} $\Gamma$ to output. 
	Theorem~\ref{Thm:pos_opt_strategy} can be employed at this 
	stage for enumerating $\Delta^M_0(f^*, \Gamma^{w-\nu})$:
	indeed, these are all and only those positional strategies 
	lying in the \emph{Cartesian} product of all the arcs $(u,v)\in E$ 
	that are \emph{compatible} with $f^*$ in $\Gamma^{w-\nu}$ (because $f^*$ is the least-SEPM of $\Gamma$).
\item Let $S_t\leftarrow\emptyset$ be an empty stack.
\item For each $\hat{u}\in V_0$, do the following:
	\begin{itemize}
		\item Compute $E_{\hat{u}}\leftarrow \{(\hat{u},v)\in E\mid f^*(\hat{u}) 
			\prec f^*(v)\ominus(w(\hat{u},v)-\nu)\}$; If $E_{\hat{u}}\neq \emptyset$, then: 
		\begin{itemize} 
		\item Let $E'\leftarrow E_{\hat{u}} \cup \{ (u,v)\in E\mid u\neq \hat{u}\}$ 
			and $\Gamma'\leftarrow (V,E',w,\langle V_0, V_1\rangle)$.
		\item If $\texttt{contains}(\Gamma', T_{\Gamma})=\texttt{F}$, do the following: 
		\begin{itemize} 
			\item Compute the least-SEPM ${f'}^*$ of ${\Gamma'}^{w-\nu}$;			 
			\item If $V_{{f'}^*}=V$: 

				-- Push $\hat{u}$ on top of $S_t$ and $\texttt{insert}(\Gamma', T_{\Gamma})$.

				-- If $\texttt{contains}({f'}^*, T_{\Gamma})=\texttt{F}$, 
					then $\texttt{insert}({f'}^*, T_{\Gamma})$ and \texttt{print} ${f'}^*$.
		\end{itemize} 
		\end{itemize}
	\end{itemize}
\item While $S_t\neq\emptyset$:
	\begin{itemize}
	\item \texttt{pop} $\hat{u}$ from $S_t$; 
		Let $E_{\hat{u}}\leftarrow \{(\hat{u},v)\in E\mid f^*(\hat{u}) \prec f^*(v)\ominus(w(\hat{u},v)-\nu)\}$, 
		and $E'\leftarrow E_{\hat{u}} \cup \{ (u,v)\in E\mid u\neq \hat{u}\}$, 
		and $\Gamma'\leftarrow (V,E',w,\langle V_0, V_1\rangle)$; 
	\item Make a \emph{Recursive} call to $\texttt{enum}()$ on input $\Gamma'$.
	\end{itemize}
\end{enumerate}
Down the recursion tree, when computing least-SEPMs, children Value-Iterations can amortize by 
starting from the energy-levels of the parent. The lattice of sub-games $\mathcal{B}^*_{\Gamma}$ 
comprises all and only those sub-games $\Gamma'\subseteq \Gamma$ that are eventually inserted 
into $T_{\Gamma}$ at Step~(3) of $\texttt{enum}()$; these are called the \emph{basic} sub-games of $\Gamma$. 
The correctness of $\texttt{enum}()$ follows by Theorem~\ref{thm:main_energystructure} 
and Theorem~\ref{Thm:pos_opt_strategy}. In summary, we obtain Theorem~\ref{thm:listing_algo}. 
\begin{Thm}\label{thm:listing_algo} There is a recursive algorithm for enumerating (w/o repetitions) 
all the elements of $\mathcal{B}^*_{\Gamma}$, on any input \MPG $\Gamma$, with time \emph{delay} $O(|V|^3 |E|\, W)$. 
For this, the algorithm employs $O(|E||V|)+\Theta\big(|E||\mathcal{B}^*_{\Gamma}|)\big)$ working space. 
The algorithm enumerates $\mathcal{X}^*_\Gamma$ (w/o repetitions) in 
$O\big(|V|^3|E|W|\mathcal{B}^*_{\Gamma}|\big)$ \emph{total} time, and 
$O(|V||E|)+\Theta\big(|E||\mathcal{B}^*_{\Gamma}|\big)$ space.
\end{Thm}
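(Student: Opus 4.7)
The plan is to verify the $\texttt{enum}()$ procedure described just above the statement, by proving soundness, completeness, and non-repetition, and then analysing complexity. Soundness is essentially a repackaging of Proposition~\ref{prop:extremal}: whenever Step~(3) inserts a sub-game $\Gamma'$ because $V_{{f'}^*}=V$, pick any $\sigma_0 \in \Delta_0^M({f'}^*, \Gamma'^{w-\nu})$, which is non-empty because ${f'}^*$ is the least-SEPM of $\Gamma'^{w-\nu}$, and view it as a positional strategy of Player~0 in $\Gamma$ since $\Gamma' \subseteq \Gamma$. Because $\sigma_0$ acts only on $V_0$ and $\Gamma'$ only restricts the out-neighbourhoods of $V_0$-vertices, the projection graphs $G(\Gamma^{w-\nu}, \sigma_0)$ and $G(\Gamma'^{w-\nu}, \sigma_0)$ coincide; conservativity of the latter (inherited from $V_{{f'}^*}=V$) plus Lemma~\ref{lem:conservative_implies_optimal} yield $\sigma_0 \in \texttt{opt}_\Gamma \Sigma^M_0$, and Theorem~\ref{thm:main_energystructure} then places ${f'}^*$ into $\mathcal{X}^*_\Gamma$ and $\Gamma'$ into $\mathcal{B}^*_\Gamma$. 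Non-repetition is automatic, since $T_\Gamma$ filters out already-seen sub-games and already-printed extremal-SEPMs.

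The main obstacle is completeness. The plan is an induction on $|E|$ of the sub-game currently processed, sustained by the following key lemma: if $\Gamma'$ is a basic sub-game of $\Gamma$ with least-SEPM $g$ on $\Gamma'^{w-\nu}$, and $f \in \mathcal{X}^*_\Gamma$ satisfies $g \sqsubseteq f$ strictly, then there exist $\hat u \in V_0$ and $\sigma_0 \in \Delta_0^M(f, \Gamma'^{w-\nu})$ such that the arc $(\hat u, \sigma_0(\hat u))$ is \emph{incompatible} with $g$ in $\Gamma'^{w-\nu}$. Indeed, if every arc selected by $\sigma_0$ at $V_0$ were compatible with $g$, then $g$ would be a feasible potential for the projection graph $G(\Gamma'^{w-\nu}, \sigma_0)$, which would force $f = \pi^*_{G(\Gamma'^{w-\nu}, \sigma_0)} \sqsubseteq g$ and hence $f = g$, against strictness. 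The vertex $\hat u$ produced by this lemma is precisely one of those inspected at Step~(3); the resulting sub-game $\Gamma''$ has strictly fewer out-arcs at $\hat u$, $\sigma_0 \in \Sigma^M_0(\Gamma'')$, and, by the soundness argument re-applied inside $\Gamma''$, $f$ still belongs to $\mathcal{X}^*_{\Gamma''}$. The inductive hypothesis on $\Gamma''$ then yields $f$. The base case $f = g$ is resolved directly at the top of the current call, before the inner loop on $\hat u$.

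For the complexity bounds, each recursive call on a sub-game $\Gamma'$ performs one Value-Iteration on $\Gamma'^{w-\nu}$ at cost $O(|V||E|W)$ by Theorem~\ref{Thm:VI}, amortised by seeding the counters from the parent's energy-levels (which pointwise dominate the child's least-SEPM, since restricting arcs can only raise least-SEPMs); up to $|V_0|$ inner iterations, each involving one more Value-Iteration for a candidate child $\Gamma''$ and an $O(|V|^2)$ look-up in $T_\Gamma$; and one output operation. This totals $O(|V|^2 |E| W)$ per call. Since the descent has depth at most $|V|$ (each step strictly shrinks some out-neighbourhood, and the total strict shrinkage along any root-to-leaf path is bounded by the sum of out-degrees), the delay between consecutive outputs does not exceed $O(|V|^3 |E| W)$, and the total time to enumerate $\mathcal{X}^*_\Gamma$ is $O(|V|^3 |E| W |\mathcal{B}^*_\Gamma|)$. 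The working space splits as $O(|V| |E|)$ for the recursion stack (one sub-game per frame) plus $\Theta(|E| \cdot |\mathcal{B}^*_\Gamma|)$ for $T_\Gamma$ itself.
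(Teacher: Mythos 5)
Your proposal is correct in substance, and it is considerably more explicit than what the paper actually writes: the paper's entire justification is the sentence ``The correctness of $\texttt{enum}()$ follows by Theorem~\ref{thm:main_energystructure} and Theorem~\ref{Thm:pos_opt_strategy}.'' What you add, and what the paper leaves implicit, is the completeness argument. Your key lemma --- that any $f\in\mathcal{X}^*_\Gamma$ with $g\sqsubsetneq f$ realizable in a basic sub-game $\Gamma'$ must be realized by a strategy using an arc \emph{incompatible} with the least-SEPM $g$ at some $\hat u\in V_0$, since otherwise $g$ would be a feasible potential of the projection graph and would force $f\sqsubseteq g$ --- is exactly the missing idea that makes the descent through the sets $E_{\hat u}$ exhaustive, and your soundness direction (coinciding projection graphs, conservativity from $V_{{f'}^*}=V$, then Lemma~\ref{lem:conservative_implies_optimal} and Theorem~\ref{thm:main_energystructure}) is the intended reading of the paper's appeal to those two theorems. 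One small point worth making explicit in your induction: if $\texttt{contains}(\Gamma'',T_\Gamma)=\texttt{T}$ the current branch skips $\Gamma''$, so the induction should be phrased over the global set of processed sub-games rather than a single recursion path; this is routine but needed for non-repetition and completeness to coexist.

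Two quibbles in the complexity accounting, which happen to cancel. First, a single Value-Iteration on $\Gamma'^{w-\nu}$ costs $O(|V|^2|E|W)$, not $O(|V||E|W)$: for $\nu=N/D$ with $D\le|V|$ the weights are rescaled by $D$, so the codomain of the SEPMs grows by a factor $|V|$ (cf.\ the proof of Proposition~\ref{prop:correctness}). Second, your justification of the recursion depth --- ``the total strict shrinkage along any root-to-leaf path is bounded by the sum of out-degrees'' --- bounds the depth by $|E|$, not by $|V|$; the same vertex $\hat u$ can be restricted many times along one path, once per strict increase of the least-SEPM at $\hat u$. As written, $O(|V|^2|E|W)$ per call times depth $|E|$ would give a delay of $O(|V|^2|E|^2W)$, which is not the claimed bound. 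The repair is to drop the depth argument entirely: every recursive call is made on a sub-game that passed the $\texttt{contains}$ test and is printed at its own Step~(1), so each call emits a new element of $\mathcal{B}^*_\Gamma$, and the delay is simply one call's cost, $O(|V_0|)\cdot O(|V|^2|E|W)=O(|V|^3|E|W)$, with total time $O(|V|^3|E|W\,|\mathcal{B}^*_\Gamma|)$ and the stated space bounds.
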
 (Say $O(f(n))$ time \emph{delay} when the time spent between any two consecutives is $O(f(n))$.)

To conclude we observe that $\mathcal{B}^*_{\Gamma}$ and $\mathcal{X}^*_\Gamma$ are not isomorphic as lattices, 
not even as sets (the cardinality of $\mathcal{B}^*_{\Gamma}$ can be greater that that of $\mathcal{X}^*_\Gamma$).
Indeed, there is a surjective antitone mapping $\varphi_\Gamma$ from $\mathcal{B}^*_{\Gamma}$ onto $\mathcal{X}^*_\Gamma$, 
(\ie $\varphi_\Gamma$ sends $\Gamma'\in \mathcal{B}^*_{\Gamma}$ to its least-SEPM $f^*_{\Gamma'}\in \mathcal{X}^*_\Gamma$); 
still, we can construct examples of \MPG{s} such that 
$|\mathcal{B}^*_{\Gamma}| > |\mathcal{X}^*_\Gamma|$, \ie $\varphi_\Gamma$ is not into (in case of degeneracy).

\section{Conclusion}\label{sect:conclusions}
We offered a faster $O(|E|\log |V|)+\Theta(\sum_{v\in V} \texttt{deg}_{\Gamma}(v)\cdot\ell(v))=O(|V|^2|E|W)$ 
time energy algorithm for the Value Problem and Optimal Strategy Synthesis in \MPG{s}. 
The result was achieved by introducing a novel scheme based on so called \emph{Energy-Increasing} and \emph{Unitary-Advance} Jumps. 

In addition, we observed a unique complete decomposition of $\texttt{opt}_{\Gamma}\Sigma^M_0$ 
in terms of extremal-SEPM{s} in reweighted \EG{s}, offering a pseudo-polynomial total-time recursive algorithm for enumerating 
(w/o repetitions) all the elements of $\mathcal{X}^*_\Gamma$, \ie all the extremal-SEPMs, 
and for computing the components of the corresponding partitioning $\mathcal{B}^*_{\Gamma}$ of $\texttt{opt}_{\Gamma}\Sigma^M_0$.

It would be interesting to study further properties enjoyed by $\mathcal{B}^*_{\Gamma}$ and $\mathcal{X}^*_\Gamma$; 
also, we ask for more efficient algorithms for enumerating $\mathcal{X}^*_\Gamma$, 
\eg we ask for pseudo-polynomial time delay and polynomial space enumerations. 
We also ask whether the least-SEPM of reweighted \EG{s} of the kind $\Gamma^{w-q}$, 
for $q\in S_\Gamma$, can be computed in $O(|V||E|W)$ time, instead of $O(|V|^2|E|W)$: together with Algorithm~\ref{algo:solve_mpg}, 
this could lead to an improved time complexity upper bound for \MPG{s} (\ie matching the time spent for solving \EG{s}). 
To conclude, it would be very interesting to adapt Algorithm~\ref{algo:solve_mpg} to 
work with the strategy-improvement framework, instead of the value-iteration, 
as it seems to exhibit a faster converge in practice. 

Many questions remain open on this way.

\paragraph*{Acknowledgments} 
This work was supported by \emph{Department of Computer Science, University of Verona, Verona, Italy}, 
under PhD grant “Computational Mathematics and Biology”, on a co-tutelle agreement with 
\emph{LIGM, Universit\'e Paris-Est in Marne-la-Vall\'ee, Paris, France}.

\section*{References}
\bibliographystyle{elsarticle-num-names}
\bibliography{biblio}

% -----------------------------------------------------------------------------
% APPENDICES ------------------------------------------------------------------
% -----------------------------------------------------------------------------
% -----------------------------------------------------------------------------
%\clearpage
%\label{app:omitted-proofs}
%\section{\appendixname: Omitted Proofs}
%\label{sect:omitted_proofs}
%\input{Omitted-Proofs}
%------------------------------------------------------------------------------

\end{document}